\documentclass{lmcs}
\pdfoutput=1

\usepackage{lastpage}
\usepackage[utf8]{inputenc}
\lmcsdoi{19}{4}{39}
\lmcsheading{}{\pageref{LastPage}}{}{}%
{Oct.~18,~2022}{Dec.~25,~2023}{}

\pdfoutput=1

\keywords{concurrency control, robustness, complexity}

\usepackage{hyperref}
\usepackage{booktabs}
\usepackage{amsmath}
\usepackage{amsthm}
\usepackage{xcolor}
\usepackage{adjustbox}
\usepackage{calc}
\usepackage{tikz}
\usetikzlibrary{decorations.pathreplacing,angles,quotes,positioning,shapes.misc,decorations.text}
\usepackage{xspace}
\usepackage{url}

\usepackage{wasysym}

\usepackage[normalem]{ulem} 

\usepackage[ruled,vlined,noend]{algorithm2e}
\usepackage{listings}
\usepackage{caption}

\usepackage{verbatimbox}

\renewcommand{\epsilon}{\varepsilon}



\newcommand{\db}{\textbf{D}}
\newcommand{\type}[1]{\text{type}(#1)}
\newcommand{\schRel}{\textsf{Rels}}
\newcommand{\schFunc}{\textsf{Funcs}}

\newcommand{\objectsRes}[1]{\objects_{#1}}

\newcommand{\Cset}{\ensuremath{{\Gamma}}}
\newcommand{\Attr}[1]{\ensuremath{\text{Attr}(#1)}}
\newcommand{\myvar}[1]{\ensuremath{var}(#1)}

\newcommand{\classC}{\ensuremath{\mathcal{C}}}
\newcommand{\basictemplates}{\textbf{VarTemp}\xspace}
\newcommand{\equaltemplates}{\textbf{EqTemp}\xspace}
\newcommand{\mtbtemplates}{\textbf{MTBTemp}\xspace}
\newcommand{\acyctemplates}{\textbf{AcycTemp}\xspace}

\newcommand{\acycrestemplates}{\textbf{AcycResTemp}\xspace}

\newcommand{\mydef}{\mathrel{:=}}

\newcommand{\eg}{\textit{e.g.}}
\newcommand{\ssize}[1]{|#1|}

\newcommand{\trans}[1][i]{T_{#1}}

\newcommand{\transset}{{\mathcal{T}}}
\newcommand{\schedule}{s}
\newcommand{\objects}{\textbf{Tuples}}

\newcommand{\variables}{\mathbf{Var}}
\newcommand{\templ}[1][i]{\tau_{#1}}
\newcommand{\workload}{\mathcal{P}}


\newcommand{\cg}[1]{CG(#1)}

\newcommand{\multitree}{multi-tree\xspace}


\newcommand{\prefix}[2]{{\normalfont\textsf{prefix}}_{#2}(#1)}

\newcommand{\postfix}[2]{{\normalfont\textsf{postfix}}_{#2}(#1)}




\newcommand{\restricted}{restricted\xspace}


\newcommand{\mvrc}{RC\xspace}
\newcommand{\MVRC}{\mvrc}
\newcommand{\rc}{\mvrc}

\newcommand{\readun}{{\normalfont\textsc{read uncommitted}}\xspace}
\newcommand{\readcom}{{\normalfont\textsc{read committed}}\xspace}

\newcommand{\snapshot}{{\normalfont\textsc{snapshot isolation}}\xspace}

\newcommand{\isolationlevel}{\text{$\mathcal{I}$}}


\newcommand{\NLOGSPACE}{{\sc nlogspace}\xspace}
\newcommand{\LOGSPACE}{{\sc logspace}\xspace}

\newcommand{\PTIME}{{\sc ptime}\xspace}

\newcommand{\coNP}{{\rm co{\sc np}}\xspace}

\newcommand{\EXPSPACE}{{\sc expspace}\xspace}
\newcommand{\NEXPSPACE}{{\sc nexpspace}\xspace}
\newcommand{\PSPACE}{{\sc pspace}\xspace}

\newcommand{\EXPTIME}{{\sc exptime}\xspace}


\newcommand{\RobustnessTemp}[1]{{\sc t-robustness}(#1)\xspace}
\newcommand{\RobustnessTempTwo}[2]{{\sc t-robustness}(#1,#2)\xspace}

\newcounter{conditioncounter}

\newcommand{\new}[1]{{#1}}

\newcommand{\myR}{\ensuremath{\mathtt{R}}}
\newcommand{\myW}{\ensuremath{\mathtt{W}}}
\newcommand{\myUP}{\ensuremath{\mathtt{U}}}
\newcommand{\R}[2][i]{\myR_{#1}\mathtt{[#2]}}
\newcommand{\W}[2][i]{\myW_{#1}\mathtt{[#2]}}
\newcommand{\UP}[2][i]{\myUP_{#1}\mathtt{[#2]}}
\newcommand{\CT}[1][i]{\mathtt{C}_{#1}}

\newcommand{\ReadSet}[1]{\ensuremath{\text{ReadSet}(#1)}}
\newcommand{\WriteSet}[1]{\ensuremath{\text{WriteSet}(#1)}}

\newcommand{\mya}{{\tt a}}
\newcommand{\mys}{{\tt s}}
\newcommand{\myc}{{\tt c}}
\newcommand{\x}{\mathtt{t}}
\newcommand{\y}{\mathtt{v}}
\newcommand{\z}{\mathtt{q}}

\newcommand{\dom}[1]{\text{\it dom}(#1)}
\newcommand{\range}[1]{\text{\it range}(#1)}
\newcommand{\domf}{\text{\it dom}}
\newcommand{\rangef}{\text{\it range}}
\newcommand{\myvs}{S}
\newcommand{\vt}{\mathtt{T}}
\newcommand{\myvv}{\mathtt{V}}

\newcommand{\vw}{\mathtt{W}}
\newcommand{\vx}{\mathtt{X}}
\newcommand{\vy}{\mathtt{Y}}
\newcommand{\vz}{\mathtt{Z}}
\newcommand{\vq}{\mathtt{Q}}

\newcommand{\fc}[1]{f_{#1}}
\newcommand{\fas}{\fc{A\to S}}
\newcommand{\fsa}{\fc{S\to A}}
\newcommand{\fac}{\fc{A\to C}}
\newcommand{\fca}{\fc{C\to A}}
\newcommand{\f}{\fas}
\newcommand{\g}{\fac}

\newcommand{\ListAttr}[1]{\ensuremath{\{\text{#1}\}}}
\newcommand{\Account}{\ensuremath{\text{Account}}\xspace}
\newcommand{\Savings}{\ensuremath{\text{Savings}}\xspace}
\newcommand{\Checking}{\ensuremath{\text{Checking}}\xspace}

\newcommand{\CustID}{\ensuremath{\text{CustomerID}}\xspace}

\newcommand{\Balance}{\ensuremath{\text{Balance}}\xspace}

\newcommand{\GoPremium}{\ensuremath{\text{GoPremium}}\xspace}

\newcommand{\Warehouse}{\ensuremath{\text{Warehouse}}\xspace}
\newcommand{\District}{\ensuremath{\text{District}}\xspace}
\newcommand{\Customer}{\ensuremath{\text{Customer}}\xspace}
\newcommand{\Order}{\ensuremath{\text{Order}}\xspace}
\newcommand{\OrderLine}{\ensuremath{\text{OrderLine}}\xspace}
\newcommand{\Stock}{\ensuremath{\text{Stock}}\xspace}
\newcommand{\fdisttowh}{\fc{D\to W}}
\newcommand{\fcusttodist}{\fc{C\to D}}
\newcommand{\fordtocust}{\fc{O\to C}}
\newcommand{\flinetoord}{\fc{L\to O}}
\newcommand{\flinetostock}{\fc{L\to S}}
\newcommand{\fstocktowh}{\fc{S\to W}}

\newcommand{\sstart}{\textit{op}_0}


\newcommand{\tmap}{\mu}
\newcommand{\barmu}{\bar\mu}
\newcommand{\typemapset}{\mathcal{S}}
\newcommand{\typemapfunc}[1][\typemapset]{\varphi_{#1}}


\newcommand{\sgname}{\textit{SG}}
\newcommand{\sg}[1]{\sgname(#1)}

\newcommand{\asgimplies}[2]{\mathrel{\scalebox{0.8}{$\stackrel{\scalebox{0.5}{$#1$}}{\rightsquigarrow}$}_{#2}}}
\newcommand{\varimplies}[2]{\mathrel{\scalebox{0.8}{$\stackrel{\scalebox{0.5}{$#1$}}{\rightsquigarrow}$}_{#2}}}
\newcommand{\varequiv}[1]{\mathrel{\equiv_{#1}}}
\newcommand{\vardet}[2]{\mathrel{\scalebox{0.8}{$\stackrel{\scalebox{0.5}{$#1$}}{\Rightarrow}$}_{#2}}}
\newcommand{\conflictquadruple}{conflicting quadruple}
\newcommand{\conflictquadruples}{conflicting quadruples}
\newcommand{\pconflictquadruple}{potentially conflicting quadruple}
\newcommand{\pconflictquadruples}{potentially conflicting quadruples}
\newcommand{\transcopies}[1]{\text{Trans}(#1)}
\newcommand{\cqsequence}{sequence of \conflictquadruples{}}
\newcommand{\pcqsequence}{sequence of \pconflictquadruples{}}
\newcommand{\contextset}{\mathcal{A}}
\newcommand{\contextfunc}[1][\contextset]{\varphi_{#1}}
\newcommand{\tcontextdomain}[1][]{\textsf{TupleContexts}(#1)}
\newcommand{\contextdomain}[1][]{\textsf{Contexts}(#1)}
\newcommand{\conn}[3]{\ensuremath{#1\,\aquarius_{#3}\,#2}}
\newcommand{\notconn}[3]{\ensuremath{#1\,\not\hspace{-1ex}{\aquarius}_{#3}\,#2}}


\newcommand{\vs}[1][]{\ensuremath{\mathtt{S_\mathsf{#1}}}}
\newcommand{\vi}{\ensuremath{\mathtt{I}}}

\newcommand{\vc}{\ensuremath{\mathtt{C}}}

\newcommand{\vb}{\ensuremath{\mathtt{B}}}
\newcommand{\dominoset}{\mathcal{D}}

\newcommand{\ufunc}[2][f]{{#1}_{\text{#2}}}

\newcommand{\ttinit}{\ensuremath{\text{Split}}}
\newcommand{\ttinitb}{\ensuremath{\text{First}}}
\newcommand{\ttclose}{\ensuremath{\text{Last}}}
\newcommand{\ttdomino}[1]{\ensuremath{\text{Domino}_{\vec{#1}}}}
\renewcommand{\vec}[1]{\mathbf{#1}}

\lstset{
language=Python,
basicstyle=\footnotesize\ttfamily,
numbers=left,
numbersep=5pt,
xleftmargin=20pt,
frame=tb,
framexleftmargin=20pt,
tabsize=2,
breaklines=true
}

\newcommand{\ptrans}{transaction template}   
\newcommand{\ptranss}{transaction templates} 
\newcommand{\Ptranss}{Transaction templates} 
\newcommand{\PTranss}{Transaction Templates} 
\newcommand{\shortptrans}{template}   
\newcommand{\shortptranss}{templates} 
\newcommand{\shortPtrans}{Template} 
\newcommand{\shortPtranss}{Templates} 
\newcommand{\shortPTranss}{Templates} 


\def\eg{{\em e.g.}}


\begin{document}

\title[Robustness against RC for Templates with Functional Constraints]{Robustness against Read Committed for Transaction Templates with Functional Constraints}

\titlecomment{{\lsuper*}The present paper is the full version of~\cite{DBLP:conf/icdt/VandevoortK0N22} and supplies all proofs.}

\author[B.~Vandevoort]{Brecht Vandevoort\lmcsorcid{0000-0001-7212-4625}}[a]
\author[B.~Ketsman]{Bas Ketsman\lmcsorcid{0000-0002-4032-0709}}[b]
\author[C.~Koch]{Christoph Koch\lmcsorcid{0000-0002-9130-7205}}[c]
\author[F.~Neven]{Frank Neven\lmcsorcid{0000-0002-7143-1903}}[a]

\address{UHasselt, Data Science Institute, ACSL, Belgium}	
\email{brecht.vandevoort@uhasselt.be, frank.neven@uhasselt.be}  

\address{Vrije Universiteit Brussel, Belgium}	
\email{bas.ketsman@vub.be}  

\address{\'Ecole Polytechnique F\'ed\'erale de Lausanne, Switzerland}	
\email{christoph.koch@epfl.ch}  





\begin{abstract}
    The popular isolation level Multiversion Read Committed (RC) trades some of the strong guarantees of serializability for increased transaction throughput.
    Sometimes, transaction workloads can be safely executed under RC obtaining serializability at the lower cost of RC.
    Such workloads are said to be robust against RC. Previous work has yielded a tractable procedure for deciding robustness against RC for workloads generated by transaction programs modeled as transaction templates. An important insight of that work is that, by more accurately modeling transaction programs, we are able to recognize larger sets of workloads as robust. 
    In this work, we increase the modeling power of transaction templates by extending them with functional constraints, which are useful for capturing data dependencies like foreign keys. We show that the incorporation of functional constraints can identify more workloads as robust that otherwise would not be. Even though we establish that the robustness problem becomes undecidable in its most general form, we show that various restrictions on functional constraints lead to decidable and even tractable fragments that can be used to model and test for robustness against RC for realistic scenarios.
\end{abstract}

\maketitle

\section{Introduction}
\label{sec:intro}

Many database systems implement several isolation levels, allowing users to trade isolation guarantees for improved performance. The highest, serializability, projects the appearance of a complete absence of concurrency, and thus perfect isolation. Executing transactions concurrently under weaker isolation levels can introduce certain anomalies. Sometimes, 
a transactional workload can be executed at an isolation level lower than serializability without introducing any anomalies. This is a desirable scenario: a lower isolation level, usually implementable with a cheaper concurrency control algorithm, yields the stronger isolation guarantees of serializability 
for free.
This formal property is called robustness \cite{DBLP:conf/pods/Fekete05,DBLP:conf/concur/0002G16}: a set of transactions $\transset$ is called \emph{robust against a given isolation level} if every possible interleaving of the transactions in $\transset$ that is {allowed} under the specified isolation level is serializable.

Robustness received quite a bit of attention in the literature.
Most existing work focuses on Snapshot Isolation (SI)~\cite{Alomari:2008:CSP:1546682.1547288,DBLP:conf/cav/BeillahiBE19,DBLP:conf/pods/Fekete05,DBLP:journals/tods/FeketeLOOS05} or higher isolation levels~\cite{DBLP:conf/concur/BeillahiBE19,DBLP:conf/concur/0002G16,DBLP:conf/concur/Cerone0G15,cerone_et_al:LIPIcs:2017:7794}. 
It is particularly interesting to consider robustness against lower level isolation levels like multi-version Read Committed (referred to as \MVRC from now on). Indeed,  \MVRC is widely available, often the default in database systems (see, e.g.,~\cite{DBLP:journals/pvldb/BailisDFGHS13}), and is generally expected to have better throughput than stronger isolation levels. 

In previous work~\cite{fullversion}, we provided a tractable decision procedure for robustness against \MVRC for workloads generated by transaction programs modeled as transaction templates. The approach is centered on a novel characterization of robustness against \MVRC in the spirit of \cite{DBLP:conf/pods/Fekete05,DBLP:conf/pods/Ketsman0NV20} that improves over the sufficient condition presented in \cite{DBLP:conf/aiccsa/AlomariF15}, and on a formalization of transaction programs, called {\it transaction templates}, facilitating fine-grained reasoning for robustness against \MVRC. 
Conceptually, {transaction templates} as introduced in \cite{fullversion} are functions with parameters, and can, for instance, be derived from stored procedures inside a database system (c.f.\ Figure~\ref{fig:smallbank-abstract-syntax}  for an example).  The abstraction generalizes transactions as usually studied in concurrency control research -- sequences of read and write operations -- by making the objects worked on variable, determined by input parameters. Such parameters are {\it typed} to add additional power to the analysis. They support {\it atomic updates} (that is, a read followed by a write of the same database object, to make a relative change to its value). Furthermore, database objects read and written are considered at the granularity of fields, rather than just entire tuples, decoupling conflicts further and allowing to recognize additional cases that would not be recognizable as robust on the tuple level. 

An important insight obtained from \cite{fullversion} is that more accurate modeling of the workload allows to recognize larger sets of transaction programs as robust. Processing workloads under RC increases the throughput of the transactional database system compared to when executing the workload under SI or serializable SI, so larger robust sets mean better performance of the database system.
In this work, we increase the modeling power of transaction templates by extending them with {\it functional constraints}, which are useful for capturing data dependencies like foreign keys (inclusion dependencies). This appears to be a sweet spot for strengthening modelling power -- as we show in this paper,
it allows us to remain with abstractions that have been well established within database theory, without having to move to general program analysis,
and it pushes the robustness frontier on popular transaction processing benchmarks. Generally speaking, workloads can profit more from richer modelling the larger and more complex they get, so the fact that adding functional constraints yields larger robust sets already on these simple benchmarks suggests that these techniques are practically useful.
Our contributions can be summarized as follows:
\begin{itemize}
\item We argue in Section~\ref{sec:example} through the SmallBank and TPC-C benchmarks that the incorporation of functional constraints can identify more workloads as robust that otherwise would not be, and that they reduce the extent to which changes need to be made to workloads to make them robust against RC.

\item In Section~\ref{sec:undec}, we establish that robustness in its most general form becomes undecidable. The proof is a reduction from PCP and relies on cyclic dependencies between functions allowing to connect data values through an unbounded application of functions. 

\item We consider a fragment in Section~\ref{sec:MTBij} that only allows a very limited form of cyclic dependencies between functions and assumes additional constraints on templates that, together, imply that functions behave as bijections. Robustness against RC can be decided in \NLOGSPACE and this fragment is general enough to model the SmallBank benchmark.

\item In Section~\ref{sec:acyclic}, we obtain an \EXPSPACE decision procedure when the schema graph is acyclic (so, no cyclic dependencies between functions).
Even for small input sizes, such a result is not practical. We provide various restrictions that lower the complexity to \PSPACE and \EXPTIME, and which allow to model the TPC-C benchmark as discussed.
Notice that, for robustness testing, an exponential time decision procedure is considered to be practical as the size of the input is small and robustness is a static property that can be tested offline. 
\end{itemize}

\noindent These contributions should be contrasted with our earlier work~\cite{fullversion}, where we focused on a characterization for robustness against \mvrc for basic \ptranss{} without functional constraints and performed an experimental study to show how the robustness property can improve transaction throughput.

\section{\new{Benchmarks}}
\label{sec:example}

We present a small extension of the SmallBank benchmark~\cite{Alomari:2008:CSP:1546682.1547288} to exemplify the modeling power of transaction templates and discuss how the addition of functional constraints can detect larger sets of transaction templates to be robust. Finally, we discuss in the context of the TPC-C benchmark how the incorporation of functional constraints requires less changes to templates in making them robust.

The SmallBank schema consists of three tables: 
\emph{Account(\underline{Name}, CustomerID, IsPremium)}, 
\emph{Savings(\underline{CustomerID}, Balance, InterestRate)}, and
\emph{Checking(\underline{CustomerID}, Balance)}.
Underlined attributes are primary keys.
The \emph{\Account} table associates customer names with IDs {and keeps track of the premium status (Boolean)}; \emph{CustomerID} is a {\tt UNIQUE} attribute. The other tables contain the balance (numeric value) of the savings and checking accounts of customers identified by their ID. \emph{\Account(CustomerID)} is a foreign key referencing both the columns
\emph{\Savings(CustomerID)} and \emph{\Checking(CustomerID)}.
The interest rate on a savings account is based on a number of parameters, including the account status (premium or not).
The application code can interact with the database through a fixed number of transaction programs:
\begin{itemize}
    \item  Balance($N$): returns the total balance (savings \& checking) for a customer with name $N$.

    \item  DepositChecking($N$,$V$): makes a deposit of amount $V$ 
    on the checking account of the customer with name $N$.

    \item TransactSavings($N$,$V$): makes a deposit or withdrawal $V$ on the savings account of the customer with name $N$.

    \item Amalgamate($N_1$,$N_2$): transfers all the funds from 
    $N_1$ to  
    $N_2$.

    \item WriteCheck($N$,$V$): writes a check $V$ against the account of the customer with name $N$, penalizing if overdrawing.

    \item
    GoPremium($N$): converts the account of the customer with name $N$ to a premium account and updates the interest rate of the corresponding savings account. { This transaction program is an extension w.r.t.~\cite{Alomari:2008:CSP:1546682.1547288}.  }
\end{itemize}
The transaction templates for these programs are presented in Figure~\ref{fig:smallbank-abstract-syntax}.
The corresponding SQL code is given in Appendix~\ref{sec:app:smallbank}.

\begin{figure*}[t]

    \begin{minipage}[c]{0.99\textwidth}
        \centering\small
    
    \begin{minipage}[t]{0.38\textwidth-2ex}
    \Balance:
    \[
    \begin{array}{l}
    \R[]{\vx: \Account\ListAttr{N, C}}\\
    \R[]{\vy: \Savings\ListAttr{C, B}}\\
    \R[]{\vz: \Checking\ListAttr{C, B}}\\
    \vy = \fas(\vx),\ \vx = \fsa(\vy)\\
    \vz = \fac(\vx),\ \vx = \fca(\vz)
    \end{array}
    \]
    \end{minipage}
    \begin{minipage}[t]{0.31\textwidth}
    DepositChecking:
    \[
    \begin{array}{l}
    \R[]{\vx: \Account\ListAttr{N, C}}\\
    \UP[]{\vz: \Checking\ListAttr{C, B}\ListAttr{B}}\\
    \vz = \fac(\vx),\ \vx = \fca(\vz)
    \end{array}
    \]
    \end{minipage}
    \begin{minipage}[t]{0.31\textwidth}
    TransactSavings:
    \[
    \begin{array}{l}
    \R[]{\vx: \Account\ListAttr{N, C}}\\
    \UP[]{\vy: \Savings\ListAttr{C, B}\ListAttr{B}}\\
    \vy = \fas(\vx),\ \vx = \fsa(\vy)
    \end{array}
    \]
    \end{minipage}
    
    \medskip
    
    \begin{minipage}[t]{0.38\textwidth-2ex}
    Amalgamate:
    \[
    \begin{array}{l}
    \R[]{\vx_1: \Account\ListAttr{N, C}}\\
    \R[]{\vx_2: \Account\ListAttr{N, C}}\\
    \UP[]{\vy_1: \Savings\ListAttr{C, B}\ListAttr{B}}\\
    \UP[]{\vz_1: \Checking\ListAttr{C, B}\ListAttr{B}}\\
    \UP[]{\vz_2: \Checking\ListAttr{C, B}\ListAttr{B}}\\
    \vx_1\neq \vx_2,\\
    \vy_1 = \fas(\vx_1),\ \vx_1 = \fsa(\vy_1)\\
    \vy_2 = \fas(\vx_2),\ \vx_2 = \fsa(\vy_2)\\
    \vz_1 = \fac(\vx_1),\ \vx_1 = \fca(\vz_1)\\
    \vz_2 = \fac(\vx_2),\ \vx_2 = \fca(\vz_2)
    \end{array}
    \]
    \end{minipage}
    \begin{minipage}[t]{0.31\textwidth}
    WriteCheck:
    \[
    \begin{array}{l}
    \R[]{\vx: \Account\ListAttr{N, C}}\\
    \R[]{\vy: \Savings\ListAttr{C, B}}\\
    \R[]{\vz: \Checking\ListAttr{C, B}}\\
    \UP[]{\vz: \Checking\ListAttr{C, B}\ListAttr{B}}\\
    \vy = \fas(\vx),\ \vx = \fsa(\vy)\\
    \vz = \fac(\vx),\ \vx = \fca(\vz)
    \end{array}
    \]
    \end{minipage}
    \begin{minipage}[t]{0.31\textwidth}
    GoPremium:
    \[
    \begin{array}{l}
    \UP[]{\vx: \Account\ListAttr{N, C}\ListAttr{I}}\\
    \R[]{\vy: \Savings\ListAttr{C, I}}\\
    \UP[]{\vy:\Savings\ListAttr{C}\ListAttr{I}}\\
    \vy = \fas(\vx),\ \vx = \fsa(\vy)
    \end{array}
    \]
    \end{minipage}

    \caption{Transaction templates for SmallBank.}
    \label{fig:smallbank-abstract-syntax}
\end{minipage}
\end{figure*}

\new{Based on this benchmark, we give an informal description of transaction templates and functional constraints to illustrate their modeling power. More formal definitions can be found in Section~\ref{sec:defs}.}
In short, a transaction template is a sequence of read (\myR), write (\myW) and update (\myUP) statements over typed variables ($\vx$, $\vy$, \ldots) with additional equality and disequality constraints. For instance, $\R[]{\vy :\Savings\{C,I\}}$ in \GoPremium{} indicates that a read operation is performed to a tuple in relation \emph{\Savings} on the attributes \emph{CustomerID} and \emph{InterestRate}. We abbreviate the names of attributes by their first letter to save space. The set $\{C,I\}$ is the read set. 
Write operations have an associated write set while update operations contain a read set followed by a write set: e.g.,
$\UP[]{\vx :\Account\{N,C\}\{I\}}$ in \GoPremium{} first reads the \emph{Name} and \emph{CustomerID} of tuple $\vx$ and then writes to the attribute \emph{InterestRate}. 
To capture the dependencies between tuples induced by the foreign keys, we use two unary functions:  $\f$ maps a tuple of type \emph{\Account} to a tuple of type \emph{\Savings}, while $\g$ maps a tuple of type \emph{\Account} to a tuple of type \emph{\Checking}. As \emph{\Account{(CustomerID)}} is {\tt UNIQUE}, every savings and checking account is associated to a unique \emph{\Account} tuple. This is modelled through the functions $\fca$ and $\fsa$ with an analogous interpretation. Notice that the equality constraints for each template in Figure~\ref{fig:smallbank-abstract-syntax} imply that these functions are bijections and each others inverses.

A transaction $T$ over a database $\bf D$ is an \emph{instantiation} of a transaction template $\tau$ if there is a variable mapping $\mu$ from the variables in $\tau$ to tuples in $\bf D$ that satisfies all the constraints in $\tau$ \new{such that} $\mu(\tau)=T$.  
For instance, consider a database $\bf D$ with tuples $\mya_1,\mya_2,\ldots$ of type \emph{\Account{}}, 
$\mys_1,\mys_2,\ldots$ of type \emph{\Savings{}}, and $\myc_1,\myc_2,\ldots$ of type \emph{\Checking{}} with $\fas^{\bf D}(\mya_i)=\mys_i$, $\fac^{\bf D}(\mya_i)=\myc_i$, $\fsa^{\bf D}(\mys_i)=\mya_i$, $\fca^{\bf D}(\myc_i)=\mya_i$ for each $i$.
Then, for $\mu_1=\{\vx\to \mya_1, \vy\to\mys_1\}$, 
$\mu_1(\GoPremium)=\UP[]{\mya_1}\R[]{\mys_1}\UP[]{\mys_1}$ is an instantiation of GoPremium whereas $\mu_2(\GoPremium)$ with $\mu_2=\{\vx\to \mya_1, \vy\to\mys_2\}$ is not as the functional constraint $\vy=\fas(\vx)$ is not satisfied. Indeed, $\mu_2(\vy)=\mys_2\neq\mys_1=\fas^{\bf D}(\mya_1)=\fas^{\bf D}(\mu_2(\vx))$.
We then say that a set of transactions is \emph{consistent} with a set of templates if every transaction is an instantiation of a transaction template.
\new{More formal definitions are given in Section~\ref{sec:defs}.}

\new{Functional constraints are different from the more usual data consistency constraints}
like key constraints, functional dependencies or denial constraints, \new{etc}. The latter are intended to verify data consistency, whereas the former are intended to verify whether a set of transactions instantiated from \shortptranss{} are indeed consistent with these \shortptranss{}.
The abstraction of functional constraints provides a straightforward mechanism to capture dependencies between tuples implied by e.g.\ foreign key constraints. Consider for example variables $\vx$ and $\vy$ in \GoPremium. Rather than specifying that the value of the attribute \emph{\CustID} in the tuple assigned to $\vx$ should agree with the value of the attribute \emph{\CustID} in the tuple assigned to $\vy$ and combining this information with the defined foreign key from \emph{\Account} to \emph{\Savings} to conclude that two instantiations of \GoPremium that agree on the tuple assigned to $\vx$ should also agree on the tuple assigned to $\vy$, the functional constraint $\vy = \fas(\vx)$ expresses this dependency more directly.
An additional benefit of our abstraction is that this approach is not limited to dependencies implied by foreign keys. For the SmallBank benchmark, for example, we can infer from the fact that \emph{\Account{(CustomerID)}} is {\tt UNIQUE} that each checking and savings account is associated to exactly one \emph{\Account} tuple, even though no foreign key from respectively \emph{\Checking} and \emph{\Savings} to \emph{\Account} is defined in the schema.
\new{Since functional constraints are based on unary functions, they are limited to expressing tuples being implied by a single other tuple (e.g., the \emph{\Savings} tuple being implied by the \emph{\Account} tuple in \GoPremium). More complex relationships where a tuple is implied by the co-occurrence of two or more other tuples cannot be captured by our formalism.}

Our previous work~\cite{fullversion}, which did not consider functional constraints, has shown that \{Am,DC,TS\}, \{Bal,DC\}, and \{Bal,TS\} are maximal robust sets of transaction templates. This means that for any database, for any set of transactions $\transset$ that is consistent with one of the three mentioned sets, any possible interleaving of the transactions in $\transset$ that is allowed under RC is \emph{always} serializable! 
Using the results from Section~\ref{sec:MTBij}, it follows that when functional constraints are taken into account GoPremium can be added to each of these sets as well: \{Am,DC,GP,TS\}, \{Bal,DC,GP\}, \{Bal,TS,GP\} are maximal robust sets.

We argue that incorporating functional constraints is crucial. Indeed, without functional constraints it's easy to show that even the set $\{\GoPremium\}$ is not robust. Consider the schedule  over two instantiations $\trans[1]$ and $\trans[2]$ of \GoPremium, where we use the mappings $\mu_1$
and $\mu_2$ as defined above 
for respectively $\trans[1]$ and $\trans[2]$ (we show the read and write sets to facilitate the discussion):
$$
    \arraycolsep=.1em
    \begin{array}{lcccc}
    \trans[1]:\,  \UP[1]{\mya_1\ListAttr{N,C}\ListAttr{I}} \, \R[1]{\mys_1\ListAttr{C,I}} \, & & \UP[1]{\mys_1\ListAttr{C}\ListAttr{I}} \, \CT[1]\\
    \trans[2]:& \UP[2]{\mya_2\ListAttr{N,C}\ListAttr{I}} \, \R[2]{\mys_1\ListAttr{C,I}} \, \UP[2]{\mys_1\ListAttr{C}\ListAttr{I}} \, \CT[2]&
    \end{array}
$$
The above schedule is allowed under RC as there is no dirty write, but it is not conflict serializable. Indeed, there is a rw-conflict between $\R[1]{\mys_1\ListAttr{C,I}}$ and $\UP[2]{\mys_1\ListAttr{C}\ListAttr{I}}$ as the former reads the attribute $I$ that is written to by the latter, which implies that $T_1$ should occur before $T_2$ in an equivalent serial schedule. But, there is a ww-conflict between $\UP[2]{\mys_1\ListAttr{C}\ListAttr{I}}$ and $\UP[1]{\mys_1\ListAttr{C}\ListAttr{I}}$ as both write to 
the common attribute $I$ implying that $T_2$ should occur before $T_1$ in an equivalent serial schedule. Consequently, the schedule is not serializable.
However, taking functional constraints into account, $\{T_1,T_2\}$ is not consistent with $\{\GoPremium\}$ as $\mu_2(\vy)=\mys_1\neq \mys_2 = \fas(a_2)=\fas(\mu_2(\vx))$ implying that the above schedule is \emph{not} a counterexample for robustness.

The second benchmark is based on the TPC-C benchmark~\cite{TPCC}. We modified the schema and \shortptranss{} to turn all predicate reads into key-based accesses.
The schema consists of six relations:
\begin{itemize}
    \item \emph{Warehouse(\underline{WarehouseID}, Info, YTD)},
    \item \emph{District(\underline{WarehouseID}, \underline{DistrictID}, Info, YTD, NextOrderID)},
    \item \emph{Customer(\underline{WarehouseID}, \underline{DistrictID}, \underline{CustID}, Info, Balance)},
    \item \emph{Order(\underline{WarehouseID}, \underline{DistrictID}, \underline{OrderID}, CustID, Status)},
    \item \emph{OrderLine(\underline{WarehouseID}, \underline{DistrictID}, \underline{OrderID}, \underline{OrderLineID}, ItemID, DeliveryInfo,\phantom{m} Quantity)}, and
    \item \emph{Stock(\underline{WarehouseID}, \underline{ItemID}, Quantity)}.
\end{itemize}
The function names belonging to this schema are given in Table~\ref{table:functionstpcc}.

\begin{table}[t]
\centering
\begin{tabular}{ c c  c } 
    \toprule
    $f$ & $\dom{f}$ & $\range{f}$ \\
    \midrule
    $\fdisttowh$ & $\District$ & $\Warehouse$ \\ 
    $\fcusttodist$ & $\Customer$ & $\District$ \\ 
    $\fordtocust$ & $\Order$ & $\Customer$ \\ 
    $\flinetoord$ & $\OrderLine$ & $\Order$ \\ 
    $\flinetostock$ & $\OrderLine$ & $\Stock$ \\ 
    $\fstocktowh$ & $\Stock$ & $\Warehouse$ \\ 
    \bottomrule
\end{tabular}
\caption{Function names for the TPC-C benchmark schema.}
\label{table:functionstpcc}
\end{table}

We focus on five different \ptranss{}:
\begin{itemize}
    \item NewOrder($W$, $D$, $C$, $I_1$, $Q_1$, $I_2$, $Q_2$, \ldots): creates a new order for the customer identified by $(W,D,C)$. The id for this order is obtained by increasing the \emph{NextOrderID} attribute of the \emph{District} tuple identified by $(W,D)$ by one. Each order consists of a number of items $I_1, I_2, \ldots$ with respectively quantities $Q_1, Q_2, \ldots$. For each of these items, a new \emph{OrderLine} tuple is created and the related stock quantity is decreased.
    \item Payment($W$, $D$, $C$, $A$): represents a customer identified by $(W,D,C)$ paying an amount $A$. This payment is reflected in the database by increasing the balance of this customer by $A$. This amount is furthermore added to the YearToDate (\emph{YTD}) income of both the related \emph{Warehouse} and \emph{District} tuples.
    \item OrderStatus($W$, $D$, $C$, $O$): requests information about the current status of the order identified by $(W,D,O)$. This \ptrans{} collects information of the customer identified by $(W,D,C)$ who created the order, the \emph{Order} tuple itself, and the different \emph{OrderLine} tuples related to this order.
    \item Delivery($W$, $D$, $C$, $O$): delivers the order represented by $(W,D,O)$. The status of the order is updated, as well as the \emph{DeliveryInfo} attribute of each \emph{OrderLine} tuple related to this order. The total price of the order is deduced from the balance of the customer who made this order, identified by $(W,D,C)$.
    \item StockLevel($W$, $I$): returns the current stock level of item $I$ in warehouse $W$.
\end{itemize}
A detailed abstraction of each \ptrans{} is given in Figure~\ref{fig:tpc-c}. To shorten the presentation, we only show two orderlines per order.

\begin{figure}[t]
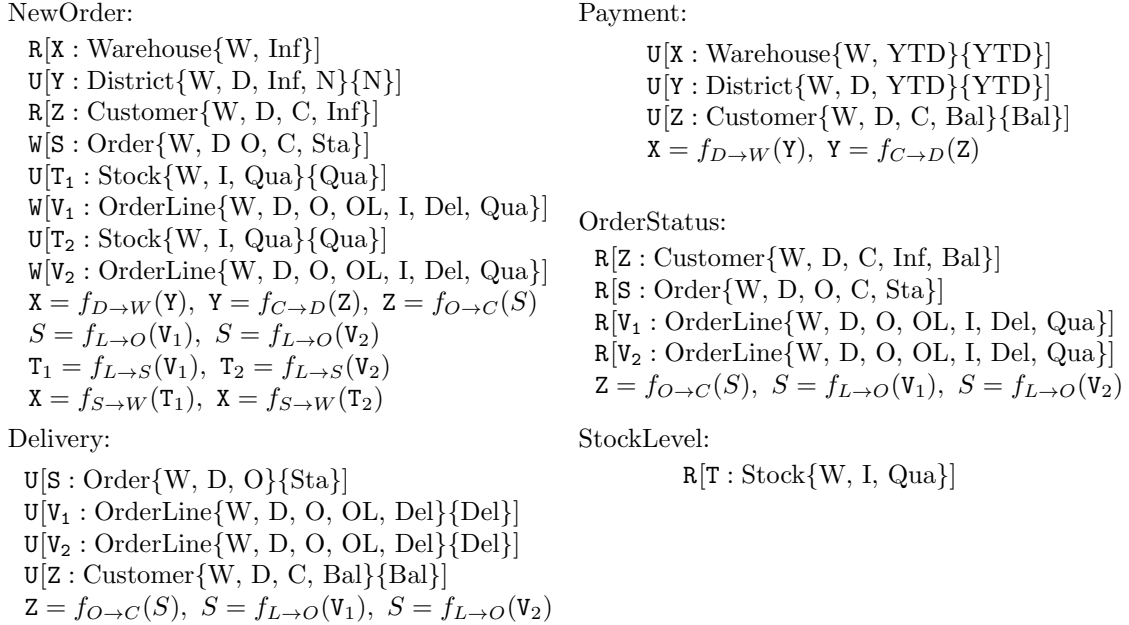

\centering \small

\begin{minipage}[t]{0.49\textwidth}
NewOrder:
\[
\begin{array}{l}
\R[]{\vx: \Warehouse\ListAttr{W, Inf}}\\
\UP[]{\vy: \District\ListAttr{W, D, Inf, N}\ListAttr{N}}\\
\R[]{\vz: \Customer\ListAttr{W, D, C, Inf}}\\
\W[]{\myvs: \Order\ListAttr{W, D O, C, Sta}}\\
\UP[]{\vt_1: \Stock\ListAttr{W, I, Qua}\ListAttr{Qua}}\\
\W[]{\myvv_1: \OrderLine\ListAttr{W, D, O, OL, I, Del, Qua}}\\
\UP[]{\vt_2: \Stock\ListAttr{W, I, Qua}\ListAttr{Qua}}\\
\W[]{\myvv_2: \OrderLine\ListAttr{W, D, O, OL, I, Del, Qua}}\\
\vx = \fdisttowh(\vy),\ \vy = \fcusttodist(\vz),\ \vz = \fordtocust(\myvs)\\
\myvs = \flinetoord(\myvv_1),\ \myvs = \flinetoord(\myvv_2)\\
\vt_1 = \flinetostock(\myvv_1),\ \vt_2 = \flinetostock(\myvv_2)\\
\vx = \fstocktowh(\vt_1),\ \vx = \fstocktowh(\vt_2)
\end{array}
\]
\end{minipage}
\begin{minipage}[t]{0.49\textwidth}
\begin{minipage}[t]{\textwidth}    
Payment:
\[
\begin{array}{l}
\UP[]{\vx: \Warehouse\ListAttr{W, YTD}\ListAttr{YTD}}\\
\UP[]{\vy: \District\ListAttr{W, D, YTD}\ListAttr{YTD}}\\
\UP[]{\vz: \Customer\ListAttr{W, D, C, Bal}\ListAttr{Bal}}\\
\vx = \fdisttowh(\vy),\ \vy = \fcusttodist(\vz)\\
\phantom{\myvs = \flinetoord(\myvv_1),\ \myvs = \flinetoord(\myvv_2)}
\end{array}
\]
\end{minipage}

\medskip

\begin{minipage}[t]{\textwidth}
OrderStatus:
\[
\begin{array}{l}
\R[]{\vz: \Customer\ListAttr{W, D, C, Inf, Bal}}\\
\R[]{\myvs: \Order\ListAttr{W, D, O, C, Sta}}\\
\R[]{\myvv_1: \OrderLine\ListAttr{W, D, O, OL, I, Del, Qua}}\\
\R[]{\myvv_2: \OrderLine\ListAttr{W, D, O, OL, I, Del, Qua}}\\
\vz = \fordtocust(\myvs),\ \myvs = \flinetoord(\myvv_1),\ \myvs = \flinetoord(\myvv_2)
\end{array}
\]
\end{minipage}
\end{minipage}

\medskip

\begin{minipage}[t]{0.49\textwidth}
Delivery:
\[
\begin{array}{l}
\UP[]{\myvs: \Order\ListAttr{W, D, O}\ListAttr{Sta}}\\
\UP[]{\myvv_1: \OrderLine\ListAttr{W, D, O, OL, Del}\ListAttr{Del}}\\
\UP[]{\myvv_2: \OrderLine\ListAttr{W, D, O, OL, Del}\ListAttr{Del}}\\
\UP[]{\vz: \Customer\ListAttr{W, D, C, Bal}\ListAttr{Bal}}\\
\vz = \fordtocust(\myvs),\ \myvs = \flinetoord(\myvv_1),\ \myvs = \flinetoord(\myvv_2)
\end{array}
\]
\end{minipage}
\begin{minipage}[t]{0.49\textwidth}
StockLevel:
\[
\begin{array}{l}
\R[]{\vt: \Stock\ListAttr{W, I, Qua}}\\
\phantom{\myvs = \flinetoord(\myvv_1),\ \myvs = \flinetoord(\myvv_2)}
\end{array}
\]
\end{minipage}

\caption{Abstraction for the TPC-C \ptranss{}. Attribute names are abbreviated.}
\label{fig:tpc-c}
\end{figure}

Incorporating functional constraints for TPC-C can not identify larger sets of templates to be robust. However, 
when a set of \ptranss{} $\workload$ is not robust against \mvrc, an equivalent set of \shortptranss{} $\workload'$ can be constructed from $\workload$ by \emph{promoting} certain $\myR$-operations to $\myUP$-operations~\cite{fullversion}. By incorporating functional constraints it can be shown that fewer $\myR$-operations need to be promoted leading to an increase in throughput as $\myR$-operations do not take locks whereas 
$\myUP$-operations do. 
Consider for example the subset $\workload = \{\text{Delivery}, \text{OrderStatus}\}$ of the TPC-C benchmark, given in Figure~\ref{fig:tpc-c}, where functional constraints are added to express the fact that a tuple of type \emph{OrderLine} implies the tuple of type \emph{Order} (function $\flinetoord$), which in turn implies the tuple of type \emph{Customer} (function $\fordtocust$). 
This set $\workload$ is not robust against \mvrc, but robustness can be achieved by promoting the $\myR$-operation over \emph{Customer} in \emph{OrderStatus} to a $\myUP$-operation.
However, without functional constraints, this single promoted operation no longer guarantees robustness, as witnessed by the following schedule:
$$
    \arraycolsep=.1em
    \begin{array}{lcccc}
    \trans[1] \text{(Orderstatus)}:\,  \UP[1]{c} \, \R[1]{a} \, & & \R[1]{b_1} \, \R[1]{b_2} \, \CT[1]\\
    \trans[2] \text{(Delivery)}:& \UP[2]{a} \, \UP[2]{b_1} \, \UP[2]{b_2} \, \UP[2]{c'} \, \CT[2]&
    \end{array}
$$
Notice in particular how this schedule implicitly assumes in $\trans[2]$ that \emph{Order} $\mathtt{a}$ belongs to \emph{Customer} $\mathtt{c'}$ instead of \emph{Customer} $\mathtt{c}$ to avoid a dirty write on $\mathtt{c}$. Without functional constraints,
$\workload$ is only robust against \mvrc if \emph{all} $\myR$-operations in \emph{OrderStatus} are promoted to $\myUP$-operations.

\section{Definitions}
\label{sec:defs}

We recall the necessary definitions from \cite{fullversion} and extend them with functional constraints.

\subsection{Databases}
A \emph{relational schema} is a pair $(\schRel,\schFunc)$ where $\schRel$ is a set of relation names and $\schFunc$ is a set of function names. 
A finite set of attribute names $\Attr{R}$ is associated to every relation $R\in\schRel$.
Relations will be instantiated by abstract objects that serve as an abstraction of relational tuples. To this end, for every relation $R\in\schRel$, we fix an infinite set of tuples 
$\objectsRes{R}$. 
Furthermore, we assume that $\objectsRes{R} \cap \objectsRes{S} = \emptyset$ for all  $R,S\in\schRel$ with $R\neq S$. We then denote by $\objects$ the set $\bigcup_{R\in\schRel} \objectsRes{R}$ of all possible tuples. Notice that, by definition, for every $\x\in\objects$ there is a unique relation $R \in \schRel$ such that $\x\in\objectsRes{R}$. 
In that case, we say that $\x$ is of \emph{type} $R$ and denote the latter by $\type{\x}=R$. Each function name $f\in \schFunc$ has a domain $\dom{f}\in\schRel$ and a range $\range{f}\in\schRel$. Functions are used to encode relationships between tuples like for instance those implied by foreign-keys constraints. For instance, in the SmallBank example $\schFunc=\{f_{A\to S},f_{A\to C}\}$, $\dom{f_{A\to S}}=\dom{f_{A\to C}}=A$, 
$\range{f_{A\to S}}=S$, and $\range{f_{A\to C}}=C$. 
A \emph{database} $\db$ over schema $(\schRel,\schFunc)$ assigns to every relation name $R\in \schRel$ a finite set $R^{\db} \subset\objectsRes{R}$ and to every function name $f \in \schFunc$ a function $f^{\db}$ from $\dom{f}^{\db}$ to $\range{f}^{\db}$.

\subsection{Transactions and Schedules}
\label{sec:def:trans}

For a tuple $\x \in \objects$, we distinguish three operations $\R[]{\x}$, $\W[]{\x}$, and $\UP[]{\x}$ on $\x$, denoting that tuple $\x$ is read, written, or updated, respectively. {We say that the operation is on the tuple $\x$.}
The operation $\UP[]{\x}$ is an atomic update and should be viewed as an atomic sequence of a read of $\x$ followed by a write to $\x$.
We will use the following terminology: a \emph{read operation} is an $\R[]{\x}$ or a $\UP[]{\x}$, and a \emph{write operation} is a $\W[]{\x}$ or a $\UP[]{\x}$. Furthermore, an \myR-operation is an $\R[]{\x}$, a \myW-operation is a $\W[]{\x}$, and a \myUP-operation is a $\UP[]{\x}$.
We also assume a special \emph{commit} operation denoted $\CT[]$.
To every operation $o$ on a tuple of type $R$, we associate the set of attributes
$\ReadSet{o}\subseteq\Attr{R}$ and $\WriteSet{o}\subseteq\Attr{R}$ containing, respectively, the set of attributes that $o$ reads from and writes to. When $o$ is a $\myR$-operation then $\WriteSet{o}=\emptyset$. Similarly, when $o$ is a $\myW$-operation then $\ReadSet{o}=\emptyset$.

A \emph{transaction} $\trans[]$
is a sequence of read and write operations 
followed by a commit.  
We assume that a transactions starts when its first operation is executed, but no earlier.
Formally, we model a transaction as a linear order $(\trans[],\leq_{\trans[]})$, where $\trans[]$ is the set of (read, write and commit) operations occurring in the transaction and $\leq_{\trans[]}$ encodes the ordering of the operations. As usual, we use $<_{\trans[]}$ to denote the strict ordering.

When considering a set $\transset$ of transactions, we assume that every transaction in the set has a unique id $i$ and write $\trans$ to make this id explicit. Similarly, to distinguish the operations of different transactions, we add this id as a subscript to the operation. {That is, we write $\W{\x}$, $\R{\x}$, and $\UP{\x}$ to denote a $\W[]{\x}$, $\R[]{\x}$, and $\UP[]{\x}$ occurring in transaction $\trans$; similarly $\CT[i]$ denotes the commit operation in transaction $T_i$. }
This convention is consistent with the literature (see, \eg\
\cite{DBLP:conf/sigmod/BerensonBGMOO95,DBLP:conf/pods/Fekete05}). 
To avoid ambiguity of notation, we assume that a transaction performs at most one write, one read, and one update per tuple.
The latter is a common assumption (see, \eg~\cite{DBLP:conf/pods/Fekete05}). All our results carry over to the more general setting in which multiple writes and reads per tuple are allowed.

A \emph{(multiversion) schedule} $\schedule$ over a set $\transset$ of transactions is a tuple $(O_\schedule, \leq_\schedule, {\ll_\schedule,} v_\schedule)$ where $O_\schedule$ is the set 
 containing all operations of transactions in $\transset$ as well as a special operation $\sstart$ conceptually writing the initial versions of all existing tuples, $\leq_\schedule$ encodes the ordering of these operations, {$\ll_\schedule$ is a \emph{version order} providing for each tuple $\x$ a total order over all write operations on $\x$ occurring in $\schedule$,} and $v_\schedule$ is a \emph{version function} mapping each read operation $a$ in $\schedule$ to either $\sstart$ or to a write\footnote{Recall that a write operation is either a $\W[]{x}$ or a $\UP[]{x}$.} operation different from $a$ in $\schedule$. 
We require that $\sstart \leq_\schedule a$ for every operation $a \in {O_\schedule}$, {$\sstart \ll_\schedule a$ for every write operation $a \in {O_\schedule}$}, and that $a <_{\trans[]} b$ implies $a <_\schedule b$ for every $\trans[] \in \transset$ and every $a,b \in \trans[]$.\footnote{{Recall that $<_{\trans[]}$ denotes the order of operations in transaction $\trans[]$.}} 
We furthermore require that for every read operation $a$, $v_\schedule(a) <_\schedule a$ and, if $v_\schedule(a) \neq \sstart$, then the operation $v_\schedule(a)$ is on the same tuple as $a$.
Intuitively, $\sstart$ indicates the start of the schedule, the order of operations in $s$ is consistent with the order of operations in every transaction $\trans[]\in\transset$, and the version function maps each read operation $a$ to the operation that wrote the version observed by $a$.
If $v_\schedule(a)$ is $\sstart$, then $a$ observes the initial version of this tuple.
{The version order $\ll_\schedule$ represents the order in which different versions of a tuple are installed in the database. For a pair of write operations on the same tuple, this version order does not necessarily coincide with $\leq_\schedule$. For example, under \mvrc the version order is based on the commit order instead.}

We say that a schedule $\schedule$ is a \emph{single version schedule} if {$\ll_\schedule$ coincides with $\leq_\schedule$ and} every read operation always reads the last written version of the tuple. Formally, {for each pair of write operations $a$ and $b$ on the same tuple, $a \ll_\schedule b$ iff $a <_\schedule b$, and} for every read operation $a$ there is no write operation $c$ on 
the same tuple as $a$
with $v_\schedule(a) <_\schedule c  <_\schedule a$. 
A single version schedule over a set of transactions $\transset$ is \emph{single version 
serial} if its transactions are not interleaved with operations from other transactions. That is, for every $a,b,c \in {O_\schedule}$ with $a <_{\schedule}
b<_{\schedule} c$ and $a,c \in \trans[]$ implies $b \in \trans[]$ for every
$\trans[] \in \transset$.

{The absence of aborts in our definition of schedule is consistent with the common assumption~\cite{DBLP:conf/pods/Fekete05,DBLP:conf/concur/0002G16} that an underlying recovery mechanism will rollback aborted transactions. We only consider isolation levels that only read committed versions. Therefore there will never be cascading aborts.}

\subsection{Conflict Serializability}
\label{sec:ser}

{Let $a_j$ and $b_i$ be two operations on the same tuple from different transactions $\trans[j]$ and $\trans[i]$ in a {set of transactions $\transset$}. We then say that {$a_j$ is \emph{conflicting} with $b_i$} if:
\begin{itemize}
	\item \emph{(ww-conflict)} $\WriteSet{a_j} \cap \WriteSet{b_i} \neq \emptyset$; or,
	\item \emph{(wr-conflict)} $\WriteSet{a_j} \cap \ReadSet{b_i} \neq \emptyset$; or, 
    \item \emph{(rw-conflict)} $\ReadSet{a_j} \cap \WriteSet{b_i} \neq \emptyset$.
\end{itemize}}
\noindent In this case, we also say that $a_j$ and $b_i$ are conflicting operations.
Furthermore, commit operations and the special operation $\sstart$ never conflict with any other operation.
When $a_j$ and $b_i$ are conflicting operations in $\transset$, we say that $a_j$ \emph{depends on} $b_i$ in a schedule $\schedule$ over $\transset$, denoted $b_i \rightarrow_\schedule a_j$ if:\footnote{Throughout the paper, we adopt the following convention:  a $b$ operation can be understood as a `before' while an $a$ can be interpreted as an `after'.}
\begin{itemize}
    \item \emph{(ww-dependency)} {{$b_i$ is ww-conflicting with $a_j$} and $b_i \ll_{\schedule} a_j$}; or,
    \item \emph{(wr-dependency)} {{$b_i$ is wr-conflicting with $a_j$} and $b_i = v_\schedule(a_j)$ or $b_i \ll_{\schedule} v_\schedule(a_j)$}; or, 
    \item \emph{(rw-antidependency)} {{$b_i$ is rw-conflicting with $a_j$} and $v_\schedule(b_i) \ll_{\schedule} a_j$.}
\end{itemize}

\noindent Intuitively, a ww-dependency from $b_i$ to $a_j$ implies that $a_j$ writes a version of a tuple {that is installed} after the version written by $b_i$.
A wr-dependency from $b_i$ to $a_j$ implies that $b_i$ either writes the version observed by $a_j$, or it writes a version that is {installed} before the version observed by $a_j$.
A rw-antidependency from $b_i$ to $a_j$ implies that $b_i$ observes a version {installed} before the version written by $a_j$.

Two schedules $\schedule$ and $\schedule'$ are \emph{conflict equivalent} if they are over the same set $\transset$ of transactions and for every pair of conflicting operations $a_j$ and $b_i$, $b_i \rightarrow_\schedule a_j$ iff $b_i \rightarrow_{\schedule'} a_j$.

\begin{defi}
    A schedule $\schedule$ is \emph{conflict serializable} if it is conflict equivalent to a single version serial schedule.
\end{defi}

A \emph{conflict graph} $\cg{\schedule}$ for schedule $\schedule$ over a set of transactions $\transset$ is the graph whose nodes are the transactions in $\transset$ and where there is an edge from $T_i$ to $T_j$ if $T_i$ has an operation $b_i$ that conflicts with an operation $a_j$ in $T_j$ and $b_i \rightarrow_\schedule a_j$.
\begin{thmC}[\cite{DBLP:books/cs/Papadimitriou86}]\label{theo:not-conflict-serializable}
    A schedule $\schedule$ is conflict serializable iff the conflict graph for
    $\schedule$ is acyclic.
\end{thmC}

\subsection{Multiversion Read Committed}

Let $\schedule$ be a schedule for a set $\transset$ of transactions.
Then, $\schedule$ \emph{exhibits a dirty write}
iff there are two {ww-conflicting} operations $a_j$ and $b_i$ in $\schedule$ on the same tuple $\x$
with $a_j \in \trans[j]$, $b_i \in \trans[i]$ and $\trans[j] \neq \trans[i]$
such that $b_i <_\schedule a_j <_\schedule \CT[i].$
That is, transaction $T_j$ writes to {an attribute of} a tuple that has
been modified earlier by $T_i$, but $T_i$ has not yet issued a commit.

{For a schedule $\schedule$, the version order $\ll_\schedule$ corresponds to the commit order in $\schedule$ if for every pair of write operations $a_j \in \trans[j]$ and $b_i \in \trans[i]$, $b_i \ll_\schedule a_j$ iff $\CT[i] <_\schedule a_j$.}
We say that a schedule $\schedule$ is \emph{read-last-committed (RLC)} if {$\ll_\schedule$ corresponds to the commit order and} for every read operation $a_j$ in $\schedule$ on some tuple $\x$ the following holds:
\begin{itemize}
    \item $v_\schedule(a_j) = \sstart$ or $\CT[i] <_\schedule a_j$ with $v_\schedule(a_j) \in \trans[i]$; and
    \item there is no write\footnote{Recall that a write operation is either a $\myW$ or a $\myUP$-operation.} operation $c_k \in \trans[k]$ on $\x$ with $\CT[k] <_\schedule a_j$ and $v_\schedule(a_j) {\ll_\schedule} c_k$.
\end{itemize}
So, $a_j$ observes the most recent version of $\x$ {(according to the order of commits)} that is committed before $a_j$. Note in particular that a schedule cannot exhibit dirty reads, defined in the traditional way~\cite{DBLP:conf/sigmod/BerensonBGMOO95}, if it is read-last-committed.

\begin{defi} \label{def:isolationlevels}
A schedule is \emph{allowed under isolation level} read committed (RC) if
it is read-last-committed and does not exhibit dirty writes. 
\end{defi}

\noindent Since a read operation in a schedule allowed under \rc can access the most recently committed version immediately instead of waiting for an uncommitted version to be committed, our definition of read committed allows more schedules than the more restrictive lock-based implementation of read committed~\cite{DBLP:conf/sigmod/BerensonBGMOO95}. Furthermore, our definition of \rc should be contrasted with more abstract specifications of Read Committed~\cite{DBLP:conf/icde/AdyaLO00} where read operations are only required to read a committed version, rather than the most recent one. We emphasize that our definition of \rc is in line with practical implementations of read committed found in e.g.\ PostgreSQL.\footnote{\url{https://www.postgresql.org/docs/15/transaction-iso.html}}

\subsection{\PTranss{}}
\label{sec:templates}

\Ptranss{} are transactions where operations are defined over typed variables together with functional constraints on these variables. Types of variables are relation names in $\schRel$ and indicate that variables can only be instantiated by tuples from the respective type.
We fix an infinite set of variables $\variables$ that is disjoint from $\objects$. Every variable $\vx\in\variables$ has an associated relation name in $\schRel$ as type that we denote by $\type{\vx}$.
{For an operation $o_i$ in a \shortptrans{}, $\myvar{o_i}$ denotes the variable in $o_i$.}
An \emph{equality constraint} is an expression of the form $\vx = f(\vy)$ where $\vx,\vy\in\variables$, $\dom{f}=\type{\vy}$ and $\range{f}=\type{\vx}$. A \emph{disequality constraint} is an expression of the form $\vx\ne\vy$ where $\type{\vx}=\type{\vy}$. 

\begin{defi}\label{def:template}
A \emph{\ptrans{}} is a transaction $\tau$
over $\variables$ together with a set 
$\Cset(\tau)$ of equality and disequality constraints.
{In addition, for every operation $o$ in $\tau$ over a variable $\vx$,
$\ReadSet{o}\subseteq \Attr{\type{\vx}}$ and $\WriteSet{o}\subseteq \Attr{\type{\vx}}$}.
\end{defi}

\noindent Recall that we denote variables by capital letters $\vx,\vy,\vz$ and tuples by small letters $\x,\y$. 
\new{The transaction templates derived from the SmallBank and TPC-C benchmarks are shown in Figure~\ref{fig:smallbank-abstract-syntax} and Figure~\ref{fig:tpc-c}, respectively.}
A variable assignment $\mu$ is a mapping from $\variables$ to $\objects$
such that $\mu(\vx)\in \objects_{\type{\vx}}$.
Furthermore, $\mu$ \emph{satisfies} a constraint $\vx = f(\vy)$ (resp., $\vx\ne\vy$) over a database $\db$ when $\mu(\vx) = f^{\db}(\mu(\vy))$ (resp., $\mu(\vx)\ne \mu(\vy)$). 
A variable assignment $\tmap$ for a \ptrans{}
$\tau$ is \emph{admissible} for 
$\db$ if 
it satisfies all constraints in $\Cset(\tau)$ over $\db$. 
By $\tmap (\templ[])$, we denote the transaction obtained by replacing each variable $\vx$ in $\templ[]$ with $\tmap (\vx)$. 

A set of transactions $\transset$ is \emph{consistent} with a set of \ptranss{} $\workload$ {and 
database $\db$}, if for every transaction $\trans[]$ in $\transset$ there is a \ptrans{}
$\tau\in \workload$ and a variable mapping $\mu_{\trans[]}$ that is admissible for $\db$ such that $\mu_{\trans[]}(\tau) = \trans[]$.
\new{We refer to Section~\ref{sec:example} for concrete examples based on transaction templates derived from the SmallBank and TPC-C benchmarks.}

\subsection{Robustness}

We define the robustness property~\cite{DBLP:conf/concur/0002G16} (also called \emph{acceptability} in~\cite{DBLP:conf/pods/Fekete05,DBLP:journals/tods/FeketeLOOS05}), which guarantees serializability for all schedules of a given set of transactions for a given isolation level.

\begin{defi}[Transaction Robustness]
\label{def:robustness}
    A set\/ $\transset$ of transactions is \emph{robust} against RC
    if every schedule for\/ $\transset$ that is allowed under RC is
    conflict serializable.
\end{defi}

In the next definition, we represent conflicting operations from transactions in a set  $\transset$ as quadruples $(T_i, b_i, a_j, T_j)$ with $b_i$ and $a_j$ conflicting operations, and $T_i$ and $T_j$ their respective transactions in $\transset$. We call these quadruples \emph{\conflictquadruples{}} for $\transset$. 
Further, for an operation $b\in \trans[]$, we denote by $\prefix{\trans[]}{b}$ the restriction of $\trans[]$ to all operations that are before or equal to $b$ according to $\leq_{\trans[]}$. Similarly, we denote by $\postfix{\trans[]}{b}$ the restriction of $\trans[]$ to all operations that are strictly after $b$ according to $\leq_{\trans[]}$. Throughout the paper, we interchangeably consider transactions both as linear orders as well as sequences.
Therefore, $\trans[]$ is then equal to the sequence $\prefix{\trans[]}{b}$ followed by $\postfix{\trans[]}{b}$ which we denote by $\prefix{\trans[]}{b}\cdot \postfix{\trans[]}{b}$ for every $b\in T$.

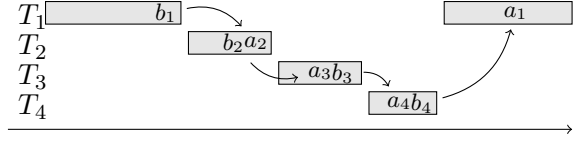
\begin{figure*}[t]
    \centering
    \begin{tikzpicture}
        [
            t0/.style={color=black,fill=black!10},
            t1/.style={color=red,fill=red!30},
            t2/.style={color=blue,fill=blue!30},
            t3/.style={color=orange,fill=orange!30},
            t4/.style={color=pink,fill=pink!30}
        ]

        \draw[t0](.5,1) rectangle (2.3,.7);         \draw[t0](5.8,1) rectangle (7.5,.7);
        \draw[t0](2.4,.6) rectangle (3.5,.3);
        \draw[t0](3.6,.2) rectangle (4.7,-.1);
        \draw[t0](4.8,-.2) rectangle (5.7,-.5);
        \node[anchor=west] at (0,.8) {$T_1$};
        \node[anchor=west] at (0,.4) {$T_2$};
        \node[anchor=west] at (0,.0) {$T_3$};
        \node[anchor=west] at (0,-.4) {$T_4$};
    
        \node(Aa) at (2.1,.85) {\footnotesize{$b_1$}};          \node(Ab) at (6.75,.85) {\footnotesize{$a_1$}};
        \node(Ba) at (3,.45) {\footnotesize{$b_2$}};
        \node(Bb) at (3.3,.45) {\footnotesize{$a_2$}};
        \node(Ca) at (4.15,.05) {\footnotesize{$a_3$}};
        \node(Cb) at (4.45,.05) {\footnotesize{$b_3$}};
        \node(Da) at (5.2,-.35) {\footnotesize{$a_4$}};
        \node(Db) at (5.5,-.35) {\footnotesize{$b_4$}};
       
       \path[->](Aa) edge[bend left] node{} (Bb);
       \path[->](Ba) edge[bend right] node{} (Ca);
       \path[->](Cb) edge[bend left] node{} (Da);
       \path[->](Db) edge[bend right] node{} (Ab);
       
        \draw[->](0,-.7) -- (7.5,-.7);
    \end{tikzpicture}
    \caption{\label{fig:mvschedule} Multiversion split schedule.}
\end{figure*}

\begin{defi}[Multiversion split schedule]\label{def:mvsplitschedule}
Let $\transset$ be a set of transactions and $C = (T_1, b_1, a_2, T_2), (T_2, b_2, a_3, T_3), \ldots, (T_m,\allowbreak b_m,\allowbreak a_1,\allowbreak T_1)$ a \cqsequence{} for $\transset$ such that each transaction in $\transset$ occurs in at most two different quadruples. A \emph{multiversion split schedule} for $\transset$ based on $C$ is a multiversion
schedule that has the following form:
$$ \prefix{\trans[1]}{b_1}\cdot \trans[2]\cdot \ldots \cdot \trans[m] \cdot \postfix{\trans[1]}{b_1}\cdot \trans[m+1] \cdot \ldots \cdot \trans[n],$$
where
\begin{enumerate}
    \item \label{c:1} there is no write operation in $\prefix{\trans[1]}{b_1}$ ww-conflicting with a write operation in any of the transactions $T_2, \ldots, T_m$; 
    \item \label{c:2} $b_1 <_{\trans[1]} a_1$ or $b_{m}$ is rw-conflicting with $a_{1}$; and,
    \item \label{c:3} $b_{1}$ is rw-conflicting with $a_{2}$.
\end{enumerate}
Furthermore, $\trans[m+1],\ldots,\trans[n]$ are the remaining transactions in $\transset$ (those not mentioned in $C$) in an arbitrary order.
\end{defi}

Figure~\ref{fig:mvschedule} depicts a schematic multiversion split schedule. The name stems from the fact that the schedule is obtained by splitting one transaction in two ($T_1$ at operation $b_1$ in Figure~\ref{fig:mvschedule}) and placing all other transactions in $C$ in between. The figure does not display the trailing transactions $\trans[m+1], \trans[m+2], \ldots$ and assumes $b_1<_{T_1} a_1$.

The following theorem characterizes non-robustness in terms of the existence of a multiversion split schedule.
\begin{thmC}[\cite{fullversion}] \label{theo:characterization:split-shedules}
    For a set of transactions $\transset$, the following are equivalent:
    \begin{enumerate}
        \item \label{theo:char:split1} $\transset$ is not robust against \MVRC;
        \item \label{theo:char:split3} there is a multiversion split schedule $\schedule$ for $\transset$ {based on some $C$}.
    \end{enumerate}
\end{thmC}

Let $\workload$ be a set of \ptranss{} and $\db$ be a database.
Then, $\workload$ is \emph{robust against RC over $\db$} if for every set of transactions $\transset$ that is consistent with $\workload$ and $\db$, it holds that $\transset$ is robust against RC.

\begin{defi}[\shortPtrans{} Robustness]\label{def:template_robustness}
 A set of \ptranss{} 
$\workload$ is \emph{robust} against RC if\/ $\workload$ is robust against RC for every database $\db$.
\end{defi}

We say that a \ptrans{} $(\tau,\Cset)$ is a \emph{variable \ptrans{}} when $\Cset=\emptyset$ and
an \emph{equality \ptrans{}} when all constraints in $\Cset$ are equalities.
We denote these sets by \basictemplates and \equaltemplates, respectively.
For an isolation level $\isolationlevel$ and a class of \ptranss{} $\classC$, \RobustnessTempTwo{\classC}{\isolationlevel} is the problem to decide if a given set of \ptranss{} $\workload\in \classC$ is robust against \isolationlevel. When $\classC$ is the class of all \ptranss{}, we simply write \RobustnessTemp{\isolationlevel}.
\begin{thmC}[\cite{fullversion}]
    \RobustnessTempTwo{\basictemplates}{\mvrc} is decidable in \PTIME.
\end{thmC}

In Section~\ref{sec:undec} we start out with a negative result and argue that the addition of functional constraints in its most general form is undecidable by proving undecidability for \RobustnessTempTwo{\equaltemplates}{\mvrc}. Notice in particular that the undecidability result does not even require disequalities.
To obtain decidable fragments, we introduce restrictions on the structure of functional constraints. The \emph{schema graph} $\sg{\schRel, \schFunc}$ of a schema $(\schRel,\schFunc)$ is a directed multigraph having the relations in $\schRel$ as nodes, and in which there are as many edges from a node $R\in \schRel$ to node $S \in \schRel$ as there are functions $f \in \schFunc$ with $\dom{f} = R$ and $\range{f} = S$.
We say that a schema $(\schRel, \schFunc)$ is \emph{acyclic} if the multigraph $\sg{\schRel, \schFunc}$ is acyclic and that it is a \emph{\multitree} if there is at most one directed path between any two nodes in $\sg{\schRel, \schFunc}$.

\begin{exa}
    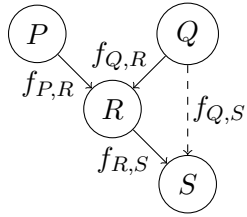
\begin{figure}[t]
        \centering
        \begin{tikzpicture}[scale=1]
        \node[draw, circle,text width=.8em,align=center] (P) at (0,0) {$P$};
        \node[draw, circle,text width=0.8em,align=center] (Q) at (2,0) {$Q$};
        \node[draw, circle,text width=0.8em,align=center] (R) at (1,-1) {$R$};
        \node[draw, circle,text width=0.8em,align=center] (S) at (2,-2) {$S$};
        \draw[->] (P) -- (R) node[midway,below left=-.5em] {$f_{P,R}$};
        \draw[->] (Q) -- (R) node[midway,above left=-.4em] {$f_{Q,R}$};
        \draw[->] (R) -- (S) node[midway,below left=-.5em] {$f_{R,S}$};
        \draw[->, dashed] (Q) -- (S) node[midway,right=-.2em] {$f_{Q,S}$};
        \end{tikzpicture}
        \caption{Acyclic schema graph over the schema $(\{P,Q,R,S\},\allowbreak \{f_{P,R},f_{Q,R},\allowbreak f_{R,S},\allowbreak f_{Q,S}\})$. If we remove function name $f_{Q,S}$ (dashed edge), the resulting schema graph is a multi-tree.}
        \label{fig:ex:sg2}
    \end{figure}
    Consider the schema $(\{P,Q,R,S\}, \{f_{P,R},f_{Q,R},f_{R,S}\})$ with $\dom{f_{i,j}} = i$ and $\range{f_{i,j}} = j$ for each function $f_{i,j}$. The corresponding schema graph with solid lines is given in Figure~\ref{fig:ex:sg2}. This schema is a multi-tree, as there is at most one path between any pair of nodes. Notice that the definition of a multi-tree is more general than a forest, as a node can still have multiple parents (e.g., node $R$ in our example).  Adding the function name $f_{Q,S}$ with $\dom{f_{Q,S}} = Q$ and $\range{f_{Q,S}} = S$ results in the schema graph given in Figure~\ref{fig:ex:sg2}
    that is still acyclic, but no longer a multi-tree as there are now two paths from $Q$ to $S$. \hfill$\Box$
\end{exa}

The schema graph constructed in the proof of Theorem~\ref{theo:equaltemplates} contains several cycles (cf., Figure~\ref{fig:beautifulfigure}).
We consider in Section~\ref{sec:MTBij} robustness for a fragment where a restricted form of cycles in the schema graph is allowed but where
 additional constraints on the \shortptranss{} are assumed.
We consider robustness for acyclic schema graphs in Section~\ref{sec:acyclic}.

\section{Robustness for \shortPTranss{}}
\label{sec:robustness-templates}
\label{sec:undec}

We start out with a negative result and show that the robustness problem in its most general form is undecidable (even when disequalities are not allowed).
The proof is a reduction from \emph{Post's Correspondence Problem (PCP)}~\cite{PCP} and relies on cyclic dependencies between functional constraints. 
{The proof can be found in
the remainder of this section
and is quite elaborate but the basic intuition is simple:
the counterexample split schedule will build up the two strings that need to be generated by the PCP instance by repeated application of functional constraints.

\begin{restatable}{thm}{theoundec}\label{theo:equaltemplates}
     \RobustnessTempTwo{\equaltemplates}{\mvrc} is undecidable.
\end{restatable}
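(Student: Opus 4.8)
The plan is to reduce from \emph{Post's Correspondence Problem} (PCP)~\cite{PCP}. Given a PCP instance consisting of dominoes $(u_1,v_1),\dots,(u_k,v_k)$ over a finite alphabet, I will construct a schema $(\schRel,\schFunc)$ together with a set $\workload$ of equality \ptranss{} such that $\workload$ is \emph{not} robust against \mvrc if and only if the instance has a solution, that is, an index sequence $i_1\cdots i_n$ with $u_{i_1}\cdots u_{i_n}=v_{i_1}\cdots v_{i_n}$. Every constraint introduced will be an equality constraint of the form $\vx=f(\vy)$, so the construction stays inside \equaltemplates{} and uses no disequalities; undecidability of PCP then transfers to \RobustnessTempTwo{\equaltemplates}{\mvrc}.

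The driving tool is the characterization of Theorem~\ref{theo:characterization:split-shedules}: $\workload$ fails to be robust exactly when some set of transactions consistent with $\workload$ over some database $\db$ admits a multiversion split schedule, i.e.\ a cyclic \cqsequence{} $C=(\trans[1],b_1,a_2,\trans[2]),\dots,(\trans[m],b_m,a_1,\trans[1])$ realized in the split form $\prefix{\trans[1]}{b_1}\cdot \trans[2]\cdots \trans[m]\cdot \postfix{\trans[1]}{b_1}\cdot \trans[m+1]\cdots \trans[n]$. The whole design effort is therefore to arrange the templates so that the realizable cycles of conflicting quadruples are \emph{precisely} those that spell out a valid domino sequence whose top and bottom concatenations agree.

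The encoding represents the two strings under construction as chains of tuples linked through the functions in $\schFunc$. For each domino $i$ there is one template $\tau_i$, and an instantiation of $\tau_i$ touches two adjacent tuples of such a chain: it reads the tuple for the ``current position'' and writes the tuple for the ``next position'', creating a rw-conflict with its neighbour in the cycle. The equality constraints inside $\tau_i$ pin down that these two tuples are related exactly by appending $u_i$ to the top chain and $v_i$ to the bottom chain. Because the schema graph $\sg{\schRel,\schFunc}$ is built to contain cycles, a fixed finite set of templates can generate chains of \emph{unbounded} length, matching the unbounded length of a PCP solution; this is the role of the cyclic functional dependencies highlighted in the statement. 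The split transaction $\trans[1]$ seeds the construction, and when the cycle closes through the conflict $b_m\to a_1$ back into $\trans[1]$, the top and bottom chains are forced to meet at the same tuple --- and this coincidence is exactly the equation $u_{i_1}\cdots u_{i_n}=v_{i_1}\cdots v_{i_n}$. I will choose the read/write sets so that the seams are rw-conflicts on the shared position tuples and the write sets of $\trans[2],\dots,\trans[m]$ stay disjoint from $\prefix{\trans[1]}{b_1}$, making conditions~(\ref{c:1})--(\ref{c:3}) of Definition~\ref{def:mvsplitschedule} hold by construction.

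For correctness I argue both directions. Completeness is the easier one: from a solution $i_1\cdots i_n$ I build a database whose functions realize the corresponding chain, instantiate $\tau_{i_1},\dots,\tau_{i_n}$ along it, and verify that the induced transactions form a cyclic \cqsequence{} meeting conditions~(\ref{c:1})--(\ref{c:3}), hence a multiversion split schedule. The main obstacle is \emph{soundness}: I must show that every multiversion split schedule decodes back to a genuine PCP solution, with no spurious cycle getting through. This forces a careful design in which the tuples any two consecutive transactions may share are fully determined by the functional constraints, so that the domino indices remain the only degree of freedom, and in which the structural requirements --- the absence of a ww-conflict inside $\prefix{\trans[1]}{b_1}$ and the two rw-conflicts closing the cycle at $\trans[1]$ --- cannot be satisfied by any accidental interleaving. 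A secondary delicate point, and the reason the schema graph must be cyclic, is making unbounded concatenation expressible with finitely many templates; this is exactly the feature that the decidable fragments of Sections~\ref{sec:MTBij} and~\ref{sec:acyclic} deliberately restrict.
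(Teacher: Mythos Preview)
Your plan matches the paper's approach: a PCP reduction that exploits cyclic functional constraints to encode unbounded string growth, and that uses Theorem~\ref{theo:characterization:split-shedules} so that a counterexample split schedule exists iff a PCP solution does. The paper's actual construction is considerably more elaborate than your sketch suggests, and two ingredients are worth flagging because they are not optional. First, besides one template per domino the paper needs three additional ``scaffolding'' templates (\ttinit, \ttinitb, \ttclose) together with auxiliary relations (\textsf{Boolean}, \textsf{InitialConflict}, an explicit error string) whose sole purpose is to force any split schedule to begin and end in a prescribed way and to rule out degenerate short cycles; without such gadgets the soundness direction collapses, because nothing in a bare domino template prevents two copies from conflicting in unintended ways. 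Second, for soundness the paper first proves a normalization lemma (Lemma~\ref{lem:robustminimalcondition}) showing that any counterexample can be shrunk to one in which dependencies occur \emph{only} between adjacent transactions in the cycle; this is what lets one argue template-by-template that the schedule really is a schedule-encoding of a domino sequence. Your sentence ``the tuples any two consecutive transactions may share are fully determined by the functional constraints'' is the right intuition, but making it rigorous requires this minimality step --- otherwise an arbitrary split schedule may contain extra conflicts that short-circuit the intended chain.
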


\noindent It might be tempting to relate the above result to the undecidability of the implication problem for functional and inclusion dependencies~\cite{DBLP:journals/siamcomp/ChandraV85}. Functional constraints indeed allow to define inclusion dependencies (as in the SmallBank example) but they always relate complete tuples and are not suited to define functional dependencies. Furthermore, the proof of Theorem~\ref{theo:equaltemplates} makes use of only unary relations, for which the implication problem for functional dependencies and inclusion dependencies is known to be decidable.

The remainder of this section is devoted to proving the correctness of Theorem~\ref{theo:equaltemplates}. We first present the reduction from the PCP problem in Section~\ref{sec:undec:reduction}. Afterwards, we show that this reduction is indeed correct by proving both directions in respectively Section~\ref{prop:pcp-schedule-creation} and Section~\ref{sec:undec:ifdirection}.

\subsection{Reduction}
\label{sec:undec:reduction}

\begin{figure}[tp]
    \centering
    \begin{minipage}[t]{\textwidth/3-1ex}    
    \ttinit($\vi$):
    \[
    \begin{array}{l}
    \W[]{\vx_1: Boolean} \\
    \R[]{\vi : InitialConflict} \\
    \R[]{\vs_1 : String} \\
    \R[]{\vs_e : String} \\
    \W[]{\vc:PCPSolution}\\
    \vx_1 = \ufunc{is-non-empty}(\vs_1)\\
    \vx_1 = \ufunc{is-error}(\vs_e)\\
    \vc = \ufunc{final-domino-sequence}(\vi) \\
    \vs_1 = \ufunc{final-dominoes-string}(\vi)\\
    \vs_e = \ufunc{error-string}(\vi)\\
    \vs_1 = \ufunc{solution-string}(\vc)  \\
    \vi = \ufunc{defines}(\vx_1)
    \end{array}
    \]
    \end{minipage}
    \begin{minipage}[t]{\textwidth/3}    
    \ttinitb($\vs_1$):
    \[
    \begin{array}{l}
    \W[]{\vx_2: Boolean} \\
     \W[]{\vi: InitialConflict}\\
     \R[]{\vs_0: String}\\
     \R[]{\vs_e: String}\\
      \R[]{\vs_1: String}\\
     \W[]{\vb:DominoSequence}\\
    \vx_2  = \ufunc{is-non-empty}(\vs_0)\\
    \vx_2  = \ufunc{is-error}(\vs_0)\\
    \vs_1 = \ufunc{final-dominoes-string}(\vi)\\
    \vs_0  = \ufunc{top-string}(\vb)\\
    \vs_0  = \ufunc{bottom-string}(\vb)\\
    \vs_0 = \ufunc{empty-string}(\vb)\\
    \vs_1 = \ufunc{future-solution-string}(\vb)\\
    \vs_e = \ufunc{detach}(\vs_0) \\
    \vs_e = \ufunc{detach}(\vs_e) \\
    \vi = \ufunc{defines}(\vx_2) \\
    \vb = \ufunc{empty-domino-sequence}(\vs_1)
    \end{array}
    \]
    \end{minipage}
    \begin{minipage}[t]{\textwidth/3-1ex}    
    \ttclose($\vb$):
    \[
    \begin{array}{l}
    \W[]{\vb:DominoSequence}\\
    \R[]{\vs_t : String}\\
    \R[]{\vs_b : String}\\
    \R[]{\vs_1 : String}\\
    \W[]{\vc:PCPSolution}\\
    \vs_t  = \ufunc{top-string}(\vb)\\
    \vs_b  = \ufunc{bottom-string}(\vb)\\
    \vs_1 = \ufunc{future-solution-string}(\vb) \\
    \vc = \ufunc{DS$\rightarrow$PCP}(\vb) \\
    \vs_t = \ufunc{solution-string}(\vc) \\
    \vs_b = \ufunc{solution-string}(\vc) \\
    \vs_1 = \ufunc{solution-string}(\vc) \\
    \vb = \ufunc{PCP$\rightarrow$DS}(\vc) \\
    \end{array}
    \]
    \end{minipage}
    
    \vspace{2em}
    \begin{minipage}[t]{\textwidth-2ex}
    For every domino $d_i = (a_1a_2\ldots a_h, b_1b_2\ldots b_k) \in \dominoset$ a \ptrans{} \ttdomino{i}($\vb$):
    \end{minipage}
    \vspace{-.5em}
    
    \begin{minipage}[t]{\textwidth/3-2ex}    
    \[
    \begin{array}{l}
        \W[]{\vb:DominoSequence}\\
    \R[]{\vs_0:String}\\
    \R[]{\vs_1:String}\\
    \R[]{\vs_t:String}\\
    \R[]{\vs_{t\mathit{a_1}}:String}\\
    \R[]{\vs_{t\mathit{a_1a_2}}:String}\\
    \ldots\\
    \R[]{\vs_{t\mathit{a_1a_2\ldots a_h}}:String}\\
    \R[]{\vs_b:String}\\
    \R[]{\vs_{b\mathit{b_1}}:String}\\
    \R[]{\vs_{b\mathit{b_1b_2}}:String}\\
    \ldots\\
    \R[]{\vs_{b\mathit{b_1b_2\ldots b_k}}:String}\\
    \W[]{\vb_{next}:DominoSequence}\\
    \end{array}
    \]
    \end{minipage}
    \begin{minipage}[t]{\textwidth/3-.7ex}    
    \[
    \begin{array}{l}
    \vs[t]  = \ufunc{top-string}(\vb)\\
    \vs[t\mathit{a_1}] = \ufunc{append-$a_1$}(\vs[t])\\
    \vs[t\mathit{a_1a_2}] = \ufunc{append-$a_2$}(\vs[t\mathit{a_1}])\\
    \ldots\\
    \vs[t\mathit{a_1a_2\ldots a_h}] = \\
    \qquad\quad \ufunc{append-$a_h$}(\vs[t\mathit{a_1\ldots a_{h-1}}])\\
    \vs[t]  = \ufunc{detach}(\vs[t\mathit{a_1}])\\
    \vs[t\mathit{a_1}]  = \ufunc{detach}(\vs[t\mathit{a_1a_2}])\\
    \ldots\\
    \vs[t\mathit{a_1a_2\ldots a_{h-1}}]  =\\
    \qquad\quad \ufunc{detach}(\vs[t\mathit{a_1a_2\ldots a_h}])\\
    \vs[t\mathit{a_1a_2\ldots a_h}]  = \ufunc{top-string}(\vb_\text{next}) \\
    \vs[\mathit{a_1}] = \ufunc{top}(\vs[t\mathit{a_1}]) \\
    \vs[\mathit{a_2}] = \ufunc{top}(\vs[t\mathit{a_1a_2}]) \\
    \ldots \\
    \vs[\mathit{a_h}] = \ufunc{top}(\vs[t\mathit{a_1a_2\ldots a_h}]) \\
    \vs_1 = \ufunc{future-solution-string}(\vb) \\
    \vs_0 = \ufunc{empty-string}(\vb)
    \end{array}
    \]
    \end{minipage}
    \begin{minipage}[t]{\textwidth/3+1.3ex}    
    \[
    \begin{array}{l}
    \vs[b]  = \ufunc{bottom-string}(\vb)\\
    \vs[b\mathit{b_1}] = \ufunc{append-$b_1$}(\vs[b])\\
    \vs[b\mathit{b_1b_2}] = \ufunc{append-$b_2$}(\vs[b\mathit{b_1}])\\
    \ldots\\
    \vs[b\mathit{b_1b_2\ldots b_k}] =\\
    \qquad\quad \ufunc{append-$b_k$}(\vs[b\mathit{b_1\ldots b_{k-1}}])\\
    \vs[b]  = \ufunc{detach}(\vs[t\mathit{b_1}])\\
    \vs[b\mathit{b_1}]  = \ufunc{detach}(\vs[b\mathit{b_1b_2}])\\
    \ldots\\
    \vs[b\mathit{b_1b_2\ldots b_{k-1}}]  =\\
    \qquad\quad \ufunc{detach}(\vs[b\mathit{b_1b_2\ldots b_k}])\\
    \vs[b\mathit{b_1b_2\ldots b_k}]  =\\
    \qquad\quad \ufunc{bottom-string}(\vb_\text{next}) \\
    \vs[\mathit{b_1}] = \ufunc{top}(\vs[b\mathit{b_1}]) \\
    \vs[\mathit{b_2}] = \ufunc{top}(\vs[b\mathit{b_1b_2}]) \\
    \ldots\\
    \vs[\mathit{b_k}] = \ufunc{top}(\vs[b\mathit{b_1b_2\ldots b_k}]) \\
    \vs_1 = \ufunc{future-solution-string}(\vb_\text{next}) \\
    \vs_0 = \ufunc{empty-string}(\vb_\text{next})\\
    \vb_\text{next} = \ufunc{next-sequence}(\vb)\\
    \vb = \ufunc{previous-sequence}(\vb_\text{next})
    \end{array}
    \]
    \end{minipage}
    \caption{\label{fig:pcptemplates}\Ptranss{} for the proof of Theorem~\ref{theo:equaltemplates}.}
\end{figure}

{The proof is based on a reduction from the \emph{Post's Correspondence Problem (PCP)}, which is known to be undecidable~\cite{PCP}.}
A domino is a pair $(\vec{a}, \vec{b})$ of two non-empty strings over $\Sigma$.
Henceforth we call $\vec{a}$ its \emph{top value} and $\vec{b}$ its \emph{bottom value}. 
Given a set of dominoes $\mathcal{D}$, the PCP asks if a non-empty sequence $d_1, d_2, \ldots, d_r$ of dominoes in $\mathcal{D}$ exists such that, with $d_i = (\vec{a_i}, \vec{b_i})$, the strings $\vec{a_1}\vec{a_2}\ldots\vec{a_r}$ and $\vec{b_1}\vec{b_2}\ldots\vec{b_r}$ are identical. 

For the reduction to non-robustness against \mvrc, we construct a set $\workload$ of \ptranss{} consisting of the \ptranss{} in Figure~\ref{fig:pcptemplates} for $\dominoset$.
There are the transactions $\ttinit$, $\ttinitb$ and $\ttclose$ (whose meaning will be explained next) and for every domino in $\mathcal{D}$ there is a template in Figure~\ref{fig:pcptemplates} representing that domino and the action of appending that domino to a sequence of dominoes.  The schema consists of the relations $\{ 
\text{\tt Boolean}, \text{\tt InitialConflict}, \text{\tt String},
\text{\tt PCPSolution}, \text{\tt DominoSequence}\}$ whose meaning will be explained below together with a discussion of all the functions.
The schema graph is presented in Figure~\ref{fig:beautifulfigure} and contains various cycles.

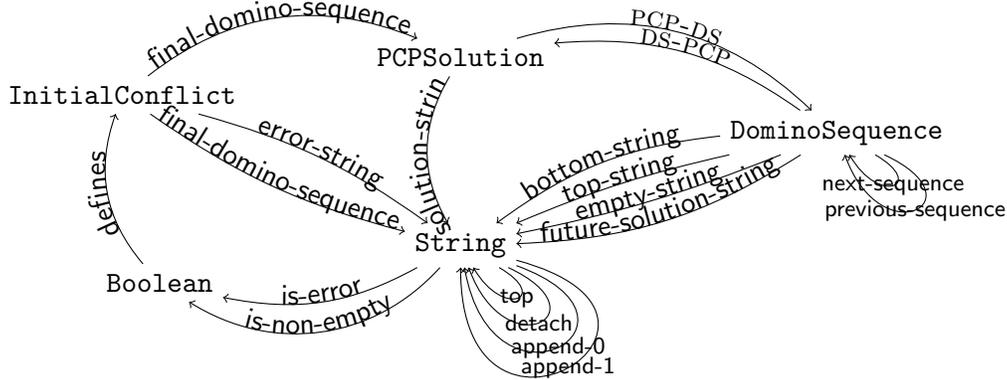
\begin{figure}[ht]
\begin{tikzpicture}

    \node(IC)   at (.5,-.5)  {\tt InitialConflict};
    \node(B)    at (1,-3) {\tt Boolean};
    \node(PCP)  at (5,0) {\tt PCPSolution};
    \node(S)    at (5,-2.5) {\tt String};
    \node(D)    at (10,-1) {\tt DominoSequence};

    \draw[->,postaction={decorate,decoration={raise=0ex,text along path,text align=center,text={|\sffamily|defines}}}] (B) to[bend left] (IC);
    \draw[<-,postaction={decorate,decoration={raise=0ex,text along path,text align=center,text={|\sffamily|is-non-empty}}}] (B) to[bend right=40] (S);
    \draw[<-,postaction={decorate,decoration={raise=0ex,text along path,text align=center,text={|\sffamily|is-error}}}] (B) to[bend right=20] (S);
    \draw[->,postaction={decorate,decoration={raise=0ex,text along path,text align=center,text={|\sffamily|final-domino-sequence}}}] (IC) to[bend right=10] (S);
    \draw[->,postaction={decorate,decoration={raise=0ex,text along path,text align=center,text={|\sffamily|error-string}}}] (IC) to[bend left=10] (S);
    \draw[->] (S) edge[out=330,in=300,looseness=6] node{\sffamily\footnotesize top} (S);
    \draw[->] (S) edge[out=335,in=290,looseness=7] node{\sffamily\footnotesize detach} (S);
    \draw[->] (S) edge[out=340,in=280,looseness=8] node{\sffamily\footnotesize append-0} (S);
    \draw[->] (S) edge[out=345,in=270,looseness=10] node{\sffamily\footnotesize append-1} (S);
    \draw[->] (D) edge[out=330,in=300,looseness=6] node{\sffamily\footnotesize next-sequence} (D);
    \draw[->] (D) edge[out=335,in=290,looseness=7] node{\sffamily\footnotesize previous-sequence} (D);
    \draw[<-,postaction={decorate,decoration={raise=0ex,text along path,text align=center,text={|\sffamily|solution-string}}}] (S) to[bend left] (PCP);
    \draw[->,postaction={decorate,decoration={raise=0ex,text along path,text align=center,text={|\sffamily|final-domino-sequence}}}] (IC) to[bend left] (PCP);
    \draw[<-,postaction={decorate,decoration={raise=0ex,text along path,text align=center,text={|\sffamily|top-string}}}] (S) to[bend left=5] (D);
    \draw[<-,postaction={decorate,decoration={raise=0ex,text along path,text align=center,text={|\sffamily|bottom-string}}}] (S) to[bend left=15] (D);
    \draw[<-,postaction={decorate,decoration={raise=0ex,text along path,text align=center,text={|\sffamily|empty-string}}}] (S) to[bend right=5] (D);
    \draw[<-,postaction={decorate,decoration={raise=0ex,text along path,text align=center,text={|\sffamily|future-solution-string}}}]  (S) to[bend right=15] (D);
    \draw[->,postaction={decorate,decoration={raise=0ex,text along path,text align=center,text={|\sc|pcp-ds}}}] (PCP) to[bend left] (D);
    \draw[<-,postaction={decorate,decoration={raise=0ex,text along path,text align=center,text={|\sc|ds-pcp}}}] (PCP) to[bend left=20] (D);
\end{tikzpicture}
    \caption{\label{fig:beautifulfigure}{Schema graph for the \ptranss{} in Figure~\ref{fig:pcptemplates} (for any set of dominoes).}}
\end{figure}
To prove Theorem~\ref{theo:equaltemplates}, we will show that there is a solution for PCP if and only if $\workload$ is not robust against \mvrc. For the only-if direction, we show that, if there is a solution $\vec{d} =d_1,d_2,\ldots,d_r$ for the PCP problem over $\mathcal{D}$, then there is a multiversion split schedule that encodes this solution in a particular way: in this schedule the split transaction is an instantiation of \ptrans{} $\ttinit$, the next transaction is an instantiation of $\ttinitb$, then followed by instantiations of \ptranss{} $\ttdomino{d_1}, \ldots, \ttdomino{d_r}$ representing the sequence of dominoes in solution $\vec{d}$, and finally an instantiation of \ptrans{} $\ttclose$. Henceforth, we call a schedule that encodes a sequence of dominoes $\vec{d}$ in this way a \emph{schedule-encoding of $\vec{d}$}. For the if-direction, we first show that every multiversion split schedule consistent with the \ptranss{} in Figure~\ref{fig:pcptemplates} for some set $\dominoset$ of dominoes is a schedule-encoding for some sequence $\vec{d}$ of dominoes from $\dominoset$, and then that for every schedule-encoding of a sequence $\vec{d}$ of dominoes, $\vec{d}$ is always a solution for the PCP problem over a set of dominoes containing those in $\vec{d}$.

\subsection{Only-if direction} We first prove the only-if direction of Theorem~\ref{theo:equaltemplates}.

\newcommand{\stringa}{\textsf{top-string}}
\newcommand{\stringb}{\textsf{bottom-string}}

\begin{prop}[Only-if part of Theorem~\ref{theo:equaltemplates}]\label{prop:pcp-schedule-creation}
    Let $\dominoset$ be a set of dominoes with a solution $\vec{d}$ for the PCP problem for $\dominoset$. Then there exists a schedule-encoding of $\vec{d}$ that is consistent with the \ptranss{} in Figure~\ref{fig:pcptemplates} and some database $\db$. 
\end{prop}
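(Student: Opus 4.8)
The plan is to read the given solution $\vec{d}=d_1,\ldots,d_r$ (write $w \mydef \vec{a_1}\cdots\vec{a_r}=\vec{b_1}\cdots\vec{b_r}$) off as a concrete database $\db$ together with admissible assignments instantiating $\ttinit$, $\ttinitb$, $\ttdomino{d_1},\ldots,\ttdomino{d_r}$ and $\ttclose$, and then to interleave these instantiations into the multiversion schedule
\[\prefix{\trans[1]}{b_1}\cdot \trans[2]\cdots \trans[m]\cdot \postfix{\trans[1]}{b_1},\]
with $\trans[1]=\ttinit$, $\trans[2]=\ttinitb$, $\trans[3]=\ttdomino{d_1},\ldots,\trans[m-1]=\ttdomino{d_r}$, $\trans[m]=\ttclose$ and $m=r+3$ (so there are no trailing transactions). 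The split point $b_1$ is the read of the \textsf{InitialConflict} tuple in $\ttinit$.

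First I would fix $\db$ to realise the intended meaning of the relations. \textsf{String} is populated by genuine strings: the empty string $\epsilon$, a designated error string $s_{\mathrm{err}}$, the one-letter strings over $\Sigma$, and every string that arises while appending the letters of the dominoes of $\vec{d}$ one at a time on the top and on the bottom side; \textsf{Boolean} by $\{\textsf{true},\textsf{false}\}$; \textsf{InitialConflict} and \textsf{PCPSolution} by single tuples $I$ and $C$; and \textsf{DominoSequence} by $B_0,\ldots,B_r$, where $B_j$ stands for the prefix $d_1\cdots d_j$. The functions are defined to match this reading: $\textsf{top-string}(B_j)=\vec{a_1}\cdots\vec{a_j}$, $\textsf{bottom-string}(B_j)=\vec{b_1}\cdots\vec{b_j}$, $\textsf{empty-string}(B_j)=\epsilon$, $\textsf{future-solution-string}(B_j)=w$, $\textsf{next-sequence}(B_j)=B_{j+1}$ and $\textsf{previous-sequence}(B_{j+1})=B_j$; the letter functions $\textsf{append-}\sigma$, $\textsf{detach}$ and $\textsf{top}$ append $\sigma$, drop the last letter and return the last letter, with $\textsf{detach}(\epsilon)=\textsf{detach}(s_{\mathrm{err}})=s_{\mathrm{err}}$; $\textsf{is-non-empty}$ tests being $\neq\epsilon$ and $\textsf{is-error}$ tests being $s_{\mathrm{err}}$; $\textsf{defines}(\textsf{true})=\textsf{defines}(\textsf{false})=I$; $\textsf{error-string}(I)=s_{\mathrm{err}}$ and $\textsf{final-dominoes-string}(I)=w$; $\textsf{solution-string}(C)=w$; and the remaining functions are chosen so that $I$, $C$, $B_0$ and $B_r$ are linked as the constraints demand. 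All functions are made total by fixing arbitrary values wherever no constraint applies.

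With $\db$ in place I would give each template its obvious assignment ($\vb\mapsto B_{j-1}$ and $\vb_{\text{next}}\mapsto B_j$ in $\ttdomino{d_j}$; $\vb\mapsto B_0$ in $\ttinitb$; $\vb\mapsto B_r$ and $\vc\mapsto C$ in $\ttclose$) and check admissibility template by template. The append/detach/top chains inside $\ttdomino{d_j}$ hold because appending the letters of $d_j$ to $\textsf{top-string}(B_{j-1})$ yields $\textsf{top-string}(B_j)$ (and likewise on the bottom), and the $\ttinit$/$\ttinitb$ constraints force $\vx_1=\textsf{true}$ and $\vx_2=\textsf{false}$ while both route $\textsf{defines}$ to the shared tuple $I$. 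The only place where the assumption that $\vec{d}$ is a \emph{solution} is used is $\ttclose$: its constraints require $\vs[t]=\textsf{top-string}(B_r)$, $\vs[b]=\textsf{bottom-string}(B_r)$ and $\vs[t]=\vs[b]=\textsf{solution-string}(C)$, which is satisfiable precisely because $\vec{a_1}\cdots\vec{a_r}=\vec{b_1}\cdots\vec{b_r}=w$.

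Finally I would take the version order to coincide with the commit order and let every read observe the version written by $\sstart$. This is a legal read-last-committed schedule: no \textsf{String} tuple is ever written, so every \textsf{String} read sees the latest committed (initial) version, and the only other read, that of $I$ in $\ttinit$, occurs before $\ttinitb$ writes $I$; moreover each writer of a shared \textsf{DominoSequence} or \textsf{PCPSolution} tuple commits before the next transaction writes it, so there is no dirty write. The witnessing \cqsequence{} is the cycle formed by $(\trans[1],\mathtt{R}[I],\mathtt{W}[I],\trans[2])$, the ww-conflicts chaining $\ttinitb,\ttdomino{d_1},\ldots,\ttdomino{d_r},\ttclose$ through the tuples $B_0,\ldots,B_r$, and the closing ww-conflict $(\ttclose,\mathtt{W}[C],\mathtt{W}[C],\ttinit)$ on $C$; each transaction occurs in exactly two quadruples. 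Condition~\ref{c:3} is the rw-conflict on $I$, condition~\ref{c:2} holds because $b_1=\mathtt{R}[I]$ precedes $a_1=\mathtt{W}[C]$ inside $\ttinit$, and condition~\ref{c:1} holds because the only write in the prefix of $\ttinit$, namely $\mathtt{W}[\textsf{true}]$, sits on a different \textsf{Boolean} tuple than $\ttinitb$'s $\mathtt{W}[\textsf{false}]$. I expect the main obstacle to be exactly this bookkeeping — defining every function totally and consistently so that all constraints of all templates hold simultaneously — and in particular engineering $\vx_1=\textsf{true}\neq\textsf{false}=\vx_2$, which condition~\ref{c:1} needs, while still funnelling both Booleans through the single \textsf{InitialConflict} tuple $I$ whose shared read and write form the splitting rw-conflict.
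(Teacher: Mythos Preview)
Your proposal is correct and follows essentially the same approach as the paper: you build the intended database (strings, Booleans, a single \textsf{InitialConflict} tuple, the domino-prefix tuples $B_0,\ldots,B_r$, a single \textsf{PCPSolution} tuple) with the natural function interpretations, instantiate the templates in the order $\ttinit,\ttinitb,\ttdomino{d_1},\ldots,\ttdomino{d_r},\ttclose$, split $\ttinit$ at the read of $I$, and verify the three conditions of Definition~\ref{def:mvsplitschedule} via exactly the same \cqsequence{} the paper uses. Your additional remarks on read-last-committed and absence of dirty writes are correct and even a bit more explicit than the paper's treatment.
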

\begin{proof}

\newcommand{\charc}{c}
    Let $\vec{d} = d_1,d_2,\ldots,d_r$ be a solution to the PCP problem for $\dominoset$. Let $\vec{a_1}\vec{a_2}\ldots\vec{a_r}$ be the 
    read of top values and $\vec{b_1}\vec{b_2}\ldots\vec{b_r}$ be the 
    read of bottom values, which thus represent an identical string $\vec{\charc} = \charc_1 \cdots \charc_n$, with $\charc_i \in \Sigma$. 
    We now construct a schedule $\schedule$ and database $\db$ as in Definition~\ref{def:mvsplitschedule} with transactions based on the \ptranss{} $\workload$ in Figure~\ref{fig:pcptemplates}.

    Relation \textsf{PCPSolution} contains a tuple that we interpret as the PCP solution $\vec{d} = d_1,d_2,\ldots,d_r$. Relation \textsf{DominoSequence} contains $r+1$ tuples, one for every prefix of $\vec{d}$, including the empty sequence $()$ and the PCP solution $\vec{d}$ itself. For convenience of notation, we will henceforth often represent tuples by their interpretation, which is justified by the fact that every tuple in a particular relation will have a different interpretation, and the relation itself can always be derived from the context (e.g., the function signature).

    Since the PCP solution has an interpretation in both the relations \textsf{PCPSolution} and \textsf{DominoSequence}, we assume two functions, $\ufunc{PCP$\rightarrow$DS}:\textsf{PCPSolution}\to\textsf{DominoSequence}$ and $\ufunc{DS$\rightarrow$PCP}:\textsf{DominoSequence}\to\textsf{PCPSolution}$ that
    \new{relate these interpretations to each other.}
    That is,  
     $\ufunc{PCP$\rightarrow$DS}^{\db}(\vec{d}) = \vec{d}$ and $\ufunc{typecase-to-C}^{\db}(\vec{d}) = \vec{d}$.

    Further, we have functions $\ufunc{next-sequence}:\textsf{DominoSequence}\to\textsf{DominoSequence}$ and $\ufunc{previous-sequence}:\textsf{DominoSequence}\to\textsf{DominoSequence}$ with \new{the} following interpretation:
    \begin{align*}
        \ufunc{next-sequence}^{\db}(\vec{d'}) &= \vec{d'}d & \text{with $d'$ a strict prefix of $\vec{d}$ followed by domino $d$ in $\vec{d}$}, \\
        \ufunc{next-sequence}^{\db}(\vec{d}) &= \vec{d}, \\
        \ufunc{previous-sequence}^{\db}(\vec{d'}d) &= \vec{d'} &\text{with $d'$ a strict prefix of $\vec{d}$ followed by domino $d$ in $\vec{d}$}, \\
        \ufunc{previous-sequence}^{\db}(()) &= ().
    \end{align*}
    \new{Intuitively, these functions relate each tuple in $\textsf{DominoSequence}$ representing a prefix of $\vec{d}$ to the prefixes obtained by adding or removing one domino in the sequence. That is, given a tuple in $\textsf{DominoSequence}$ representing a strict prefix $d_1,d_2,\ldots,d_{i-1}, d_i$ of $\vec{d}$, $\ufunc{next-sequence}$ returns the tuple representing $d_1,d_2,\ldots,d_{i-1}, d_i, d_{i+1}$ (i.e., the prefix of $\vec{d}$ obtained by adding one domino), and $\ufunc{previous-sequence}$ returns the tuple representing $d_1,d_2,\ldots,d_{i-1}$ (i.e., the prefix of $\vec{d}$ obtained by removing the last domino). We furthermore distinguish two special cases to guarantee that both functions are defined for all tuples in $\textsf{DominoSequence}$: if the tuple represents the solution $\vec{d}$ itself, then $\ufunc{next-sequence}$ returns $\vec{d}$, and if the tuple represents the empty sequence $()$, then $\ufunc{previous-sequence}$ returns $()$.}

    Relation $\textsf{String}^{\db}$ contains a tuple representing the read $\vec{\charc}$ of PCP-solution sequence $\vec{d}$, a tuple representing an error $\langle\text{error}\rangle$, and a tuple for every substring of $\vec{\charc}$, 
     including the empty string $\langle\rangle$. We assume that all these tuples are different. {We use notation $\langle\rangle$ to denote the empty string to distinguish it from $()$, which denotes the empty sequence of dominoes.}

    \newcommand{\chare}{e}

    Functions $\ufunc{append-0}:\textsf{String}\to\textsf{String}$, $\ufunc{append-1}:\textsf{String}\to\textsf{String}$, $\ufunc{detach}:\textsf{String}\to\textsf{String}$, and $\ufunc{top}:\textsf{String}\to\textsf{String}$ simulate standard string operations for the interpretations of tuples in relation \textsf{String}. Thus, tuples representing a (possibly empty) string $\vec{\chare}$: 
    \begin{align*}
        \ufunc{append-\textit{c}}^{\db}(\langle\vec{\chare}\rangle) &= \left\{
            \begin{array}{ll}
                \langle\vec{\chare}c\rangle &\text{ with $\vec{\chare}$ a (possibly empty) string over $\Sigma$, $c \in \Sigma$, and} \\
                & \text{ $\langle\vec{\chare}c\rangle$ a substring of $\vec{\charc}$,} \\
                \langle\text{error}\rangle & \text{ otherwise},
            \end{array}\right. \\
        \ufunc{detach}^{\db}(\langle\vec{\chare}c\rangle) &= \langle\vec{\chare}\rangle  \text{ with $\vec{\chare}$ a (possibly empty) string over $\Sigma$, and $c \in \Sigma$}, \\
        \ufunc{detach}^{\db}(\langle\rangle) &= \ufunc{detach}(\langle\text{error}\rangle) = \langle\text{error}\rangle, \\
        \ufunc{top}^{\db}(\langle\vec{\chare}c\rangle) &= \langle c\rangle \text{ with $\vec{\chare}$ a (possibly empty) string over $\Sigma$, and $c \in \Sigma$},
        \\
        \ufunc{top}^{\db}(\langle\rangle) &= \ufunc{top}(\langle\text{error}\rangle) = \langle\text{error}\rangle.
    \end{align*}
    Notice that these function interpretations are closed under $\db$, that is, every tuple from relation \textsf{String}$^\db$ maps onto a tuple that is in relation \textsf{String}$^\db$.

    Every tuple in \textsf{DominoSequence} is associated with three tuples in \textsf{String} representing, respectively, the read of top values, the read of bottom values, and the empty string. The association is made via functions $\ufunc{top-string}:\textsf{DominoSequence}\to\textsf{String}$, $\ufunc{bottom-string}:\textsf{DominoSequence}\to\textsf{String}$, and $\ufunc{empty-string}:\textsf{DominoSequence}\to\textsf{String}$ with following interpretations in $\db$:
    \begin{align*}
        \ufunc{top-string}^{\db}(\vec{d'}) &= \vec{\chare} \text{, with $\vec{\chare}$ the read of top values on dominoes in $\vec{d'}$}, \\
        \ufunc{bottom-string}^{\db}(\vec{d'}) &= \vec{\chare} \text{, with $\vec{\chare}$ the read of bottom values on dominoes in $\vec{d'}$, and} \\
        \ufunc{empty-string}^{\db}(\vec{d'}) &= \langle\rangle.
    \end{align*}
    We emphasize that in the expressions above the read $\vec{\chare}$ of top and bottom values on dominoes in $\vec{d'}$ might be empty.

    Finally, for function $\ufunc{future-solution-string}:\textsf{DominoSequence}\to\textsf{String}$ we consider the interpretation that associates every domino sequence $\vec{d'}$ represented by a tuple in relation \textsf{DominoSequence} in $\db$ to the final read $\ufunc{future-solution-string}^{\db}(\vec{d'}) = \vec{\charc}$. Function $\ufunc{solution-string}:\textsf{PCPSolution}\to\textsf{String}$ does the same for the single tuple representing $\vec{d}$ in \textsf{PCPSolution}, thus with $\ufunc{solution-string}^{\db}(\vec{d}) = \vec{\charc}$. 
  Function $\ufunc{empty-domino-sequence}:\textsf{String}\to\textsf{DominoSequence}$ is interpreted to map every tuple in \textsf{String} onto the tuple from \textsf{DominoSequence} representing the empty sequence $()$.

All other relations and functions have as purpose to pass tuples from one transaction to another in a schedule and to enforce that certain tuples do not collide, which is useful for the (if)-part of the proof.

    Relation $\textsf{Boolean}^{\db}$ contains two tuples, which we interpret as Boolean values $0$ and $1$. Function $\ufunc{is-non-empty}:\textsf{String}\to\textsf{Boolean}$ and $\ufunc{is-error}:\textsf{String}\to\textsf{Boolean}$ are interpreted as follows:
    \begin{align*}
        \ufunc{is-non-empty}^{\db}(s) &= \left\{ 
        \begin{array}{ll}
            1 & \text{if }s \ne \langle \rangle, \\
            0 & \text{otherwise},
        \end{array}\right.\text{, and} \\
        \ufunc{is-error}^{\db}(s) &= \left\{ 
        \begin{array}{ll}
            1 & \text{if } s = \langle \text{error}\rangle, \\
            0 & \text{otherwise},
        \end{array}\right.\text{, and}. 
    \end{align*}

Finally, relation $\textsf{InitialConflict}^{\db}$ contains a single tuple, which we refer to by $\langle \text{init} \rangle$. The interpretation of $\ufunc{defines}:\textsf{Boolean}\to\textsf{InitialConflict}$ maps $1$ and $0$ onto 
$\langle \text{init} \rangle$. Function $\ufunc{error-string}:\textsf{InitialConflict}\to\textsf{String}$ maps $\langle \text{init} \rangle$ onto $\langle\text{error}\rangle$. Functions $\ufunc{final-domino-string}:\textsf{InitialConflict}\to\textsf{DominoSequence}$ and $\ufunc{final-domino-sequence}:\textsf{InitialConflict}\to\textsf{PCPSolution}$ map $\langle \text{init} \rangle$ onto the solution domino sequence $\vec{d}$, respectively on the final read $\vec{\charc}$ of $\vec{d}$.

Now the schedule $\prefix{\trans[1]}{b_1}\cdot \trans[2]\cdot \ldots \cdot \trans[m] \cdot \postfix{\trans[1]}{b_1}$, taking $\trans[1] = \ttinit(\langle \text{init} \rangle)$, $\trans[2] = \ttinitb(\langle \text{init} \rangle)$, for $i: 1 \le i \le r$, transaction $\trans[i+2] =  \ttdomino{i}(( d_1, \ldots, d_i))$, $\trans[m] = \ttclose(( d_1, \ldots, d_r))$ and $b_1 = \langle \text{init} \rangle$ has the conditions of Definition~\ref{def:mvsplitschedule}. Indeed, it is based on sequence of conflict quadruples $(T_1, 
\R[1]{\langle\text{init}\rangle}, \W[2]{\langle \text{init}\rangle}, T_2),\allowbreak (T_2, \W[2]{()}, \W[3]{()}, T_3),\allowbreak (T_3, \W[3]{(d_1)},\allowbreak \W[4]{(d_2)}, T_4),\allowbreak \ldots, (T_{r+2}, \W[r+2]{(d_1,\ldots,d_r)},\allowbreak \W[r+3]{(d_1,\ldots,d_r)}, T_{r+3}),\allowbreak (T_{r+3}, \W[r+3]{\vec{d}}, \W[1]{\vec{d}}, T_1)$. 

Condition (\ref{c:1}) is true because there is no ww-conflict between a write operation in $\prefix{\trans[1]}{b_1}$ and a write operation in any of the transactions $T_2, \ldots, T_m$, since the first write operation, respectively second write operation, in $\ttinit(\langle \text{init} \rangle)$ has a type that only occurs before the conflict with $\ttinitb(\langle \text{init} \rangle)$,
    and is the conflict with $\ttclose((d_1, \ldots, d_r))$, respectively. Furthermore (\ref{c:2}) is true because $b_1 <_{\trans[1]} a_1$ and Condition (\ref{c:3}) is true because $b_{1}$ and $a_{2}$ are rw-conflicting.
\end{proof}

\subsection{Helpful lemma} Before proving the opposite direction of Theorem~\ref{theo:equaltemplates}, we first establish the following Lemma.

\begin{lem}\label{lem:robustminimalcondition}
    If a set $\workload$ of \ptranss{} is not robust against \mvrc then there is a \emph{multiversion split schedule} 
$\prefix{\trans[1]}{b_1}\cdot \trans[2]\cdot \ldots \cdot \trans[m] \cdot \postfix{\trans[1]}{b_1}$ for a set $\transset = \{\trans[1], \ldots, \trans[m]\}$ of transactions consistent with $\workload$ in which an operation from a transaction $T_j$ depends on an operation from transaction $T_i$ only if $j=i+1$ or $i=m$ and $j=1$.
\end{lem}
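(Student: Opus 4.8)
The plan is a minimal-counterexample argument on the number of transactions. Since $\workload$ is not robust against \mvrc, by Definition~\ref{def:template_robustness} there is a database $\db$ and a set of transactions consistent with $\workload$ and $\db$ that is not robust against \mvrc; by Theorem~\ref{theo:characterization:split-shedules} this set admits a multiversion split schedule. Among all sets of transactions that are consistent with $\workload$ (over some database) and admit a multiversion split schedule, I fix one, say $s=\prefix{\trans[1]}{b_1}\cdot \trans[2]\cdots \trans[m]\cdot \postfix{\trans[1]}{b_1}\cdot \trans[m+1]\cdots \trans[n]$ based on a \cqsequence{} $C=(\trans[1],b_1,a_2,\trans[2]),\ldots,(\trans[m],b_m,a_1,\trans[1])$, with the total number $n$ of transactions as small as possible. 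I first observe that there are no trailing transactions: every operation of $\trans[1],\ldots,\trans[m]$ precedes every operation of $\trans[m+1],\ldots,\trans[n]$ in $s$, so no read of $\trans[1],\ldots,\trans[m]$ observes a version written by a trailing transaction; deleting $\trans[m+1],\ldots,\trans[n]$ therefore leaves a read-last-committed, dirty-write-free schedule that still satisfies Conditions~(\ref{c:1})--(\ref{c:3}) (these mention only $\trans[1],\ldots,\trans[m]$). By minimality of $n$, this forces $n=m$ and $\transset=\{\trans[1],\ldots,\trans[m]\}$.

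Next I would pin down the possible edges of the conflict graph $\cg{\schedule}$, recalling that an edge $\trans[i]\to\trans[j]$ means that an operation of $\trans[j]$ depends on an operation of $\trans[i]$. Since $\trans[2],\ldots,\trans[m]$ are not interleaved and occur in increasing index order, while $\trans[1]$ is split around them, and since under \mvrc the version order $\ll_\schedule$ is the commit order, a short case analysis on the three conflict types shows: (i) for $2\le i<j\le m$ every edge between $\trans[i]$ and $\trans[j]$ points forward, from $\trans[i]$ to $\trans[j]$; (ii) every edge $\trans[1]\to\trans[j]$ with $j\ge 2$ is an rw-antidependency whose reading operation lies in $\prefix{\trans[1]}{b_1}$ (the operations of $\postfix{\trans[1]}{b_1}$ are last in $s$ and $\trans[1]$ commits last, so they cannot be the source of a dependency); and (iii) for an edge $\trans[j]\to\trans[1]$ whose $\trans[1]$-operation lies in $\prefix{\trans[1]}{b_1}$, Condition~(\ref{c:1}) rules out a ww-edge, so it must be rw. Consequently the only edges that could violate the claim are a forward chord $\trans[i]\to\trans[j]$ with $2\le i$ and $i+2\le j\le m$, an edge $\trans[1]\to\trans[j]$ with $j\ge 3$, or an edge $\trans[j]\to\trans[1]$ with $2\le j\le m-1$.

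I would then eliminate each of these by shortcutting the cycle and contradicting minimality of $n$. In each case the offending edge, together with the appropriate arc of $C$, yields a strictly shorter cyclic \cqsequence{}; the corresponding candidate schedule keeps $\trans[1]$ as the split transaction and drops the bypassed transactions, namely $\prefix{\trans[1]}{b_1}\cdot\trans[2]\cdots\trans[i]\cdot\trans[j]\cdots\trans[m]\cdot\postfix{\trans[1]}{b_1}$ for a forward chord, $\prefix{\trans[1]}{b_1}\cdot\trans[2]\cdots\trans[j]\cdot\postfix{\trans[1]}{b_1}$ for an edge $\trans[j]\to\trans[1]$, and $\prefix{\trans[1]}{b_1'}\cdot\trans[j]\cdots\trans[m]\cdot\postfix{\trans[1]}{b_1'}$ for an edge $\trans[1]\to\trans[j]$, where $b_1'$ is the prefix read witnessing that rw-edge (so $\prefix{\trans[1]}{b_1'}\subseteq\prefix{\trans[1]}{b_1}$). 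Condition~(\ref{c:1}) is inherited because it is monotone under removing transactions; Condition~(\ref{c:3}) holds because in each case the arc leaving $\trans[1]$ is an rw-conflict (by the original choice of $b_1$, or by observation (ii)); and Condition~(\ref{c:2}) follows from the position of the closing conflict together with observations (ii)--(iii).

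The main obstacle is the verification that each shortcut schedule is still \emph{allowed under} \mvrc and still realizes the shorter cycle. Dropping the bypassed transactions can strand reads of the retained transactions whose observed version was written by a deleted transaction, so I would redefine the version function to let every read observe the most recent committed version among the surviving writes; the work is then to check that this keeps the schedule read-last-committed and dirty-write-free (the latter again via Condition~(\ref{c:1}) and non-interleaving), and — the crucial point — that it preserves the orientation of every conflict forming the shorter cycle. Since removing writers can only shift an observed version to an equally old or older surviving write, a ww- or rw-edge (governed by the unchanged commit order) and a wr-edge into a retained read both survive; carrying out this bookkeeping uniformly across the three cases, while also tracking whether the closing conflict's $\trans[1]$-endpoint lands in the new prefix or postfix, is where the real effort lies, the rest being the standard minimal-counterexample wrapper.
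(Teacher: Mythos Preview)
Your approach is essentially the same as the paper's: a minimality/shrinking argument with the same three cases (a chord among $T_2,\ldots,T_m$; a chord out of $T_1$; a chord into $T_1$), and the same verification of Conditions~(\ref{c:1})--(\ref{c:3}) for the shortened sequence.

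The one place where you diverge is the final paragraph, and there you make the task look harder than it is. By Theorem~\ref{theo:characterization:split-shedules}, all you need for the shortened set is \emph{some} multiversion split schedule based on the new $C'$, and Definition~\ref{def:mvsplitschedule} is purely structural: it constrains $\le_\schedule$ and mentions only conflicts, not dependencies. So you are free to pick the version order and version function fresh for the shortened schedule --- simply take the read-last-committed choice, as the paper does --- rather than worrying about how deleting transactions affects observed versions. There is then nothing to ``preserve'': conflicting quadruples stay conflicting because being ww-/wr-/rw-conflicting is a static property of the operations, and the three conditions of Definition~\ref{def:mvsplitschedule} are exactly what you already check. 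The ``real effort'' you anticipate in the last paragraph therefore evaporates once you appeal to the characterization rather than trying to directly exhibit an RC-allowed, non-serializable schedule.
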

\begin{proof}
    If $\workload$ is not robust against \mvrc, then there is a database $\db$ and a 
    multiversion split schedule $\schedule = \prefix{\trans[1]}{b_1}\cdot \trans[2]\cdot \ldots \cdot \trans[m] \cdot \postfix{\trans[1]}{b_1}\cdot \trans[m+1] \cdot \ldots \cdot \trans[n]$
    based on a sequence of conflict quadruples $C$
    for a set of transactions $\transset$ that is consistent with $\workload$ and $\db$ having the properties of Definition~\ref{def:mvsplitschedule}.

    We can assume that $n=m$. Otherwise removing the transactions $T_{m+1}, \ldots, T_{n}$ from $\transset$, $\schedule$, and $C$. We can also assume that $\schedule$ is read-last-committed. Otherwise, choosing an appropriate version order $\ll_\schedule$ and version function $v_\schedule$.

    Now suppose that there is a transaction $T_j$ with an operation $a'_j$ that depends on an operation $b'_i$ from transaction $T_i$ and with $j \ne i + 1$ or $i=m$ and $j\ne 1$. Clearly, by definition of dependency and the structure of a multiversion split schedule, $i < j$ or $j=1$.

    We proceed the proof by a construction showing that \new{under these assumptions} there is an alternative schedule $\schedule'$ that is also a multiversion split schedule, but for a strict subset of transactions in $\transset$ (thus also still consistent with $\workload$ and $\db$). The result of the lemma then follows from the observation that repeated application of this construction must lead to a schedule with the properties of the lemma, without existence of such a dependency. 
    
    For the construction, we proceed by case distinction.
    
    \smallskip
    \noindent
    \textsc{If $i\ne 1$ and $j\ne 1$}, we construct a schedule $\schedule'$ from $\schedule$ by removing all operations from transactions $\trans[h]$ with $i<h<j$. Notice that we remove at least one transaction, since $i < i+1 < j$. We can derive a sequence of conflict quadruples $C'$ from $C$ by removing all occurrences of these transactions $\trans[h]$ and adding the conflict quadruple $(\trans[i], b'_i, a'_j, \trans[j])$ instead.
    By construction, $\schedule'$ is a multiversion split schedule based on $C'$ over a set of transactions consistent with $\workload$ and $\db$. It remains to show that the newly constructed schedule $\schedule'$ has the properties of Definition~\ref{def:mvsplitschedule}. The latter is straightforward since $C$ and $C'$ agree on their first and last quadruple, due to assumption $i\ne 1$ and $j\ne 1$.

    \smallskip
    \noindent
    \textsc{If $i = 1$}, it follows that $i<j$ and thus $j\ne 1$. Then, we 
    construct a schedule $\schedule'$ from $\schedule$ by removing all operations from transactions $\trans[h]$ with $i<h<j$ and updating the prefix and postfix of $\trans[1]$, now based on $b'_i$.
    Notice that we again remove at least one transaction, since $i < i+1 < j$ and that
    we can derive a sequence of conflict quadruples $C'$ from $C$ in the same way as before, by removing all occurrences of these transactions $\trans[h]$ and adding the conflict quadruple $(\trans[i], b'_i, a'_j, \trans[j])$ instead. By construction, $\schedule'$ is a multiversion split schedule based on $C'$ over a set of transactions consistent with $\workload$ and $\db$. It remains to show that the newly constructed schedule $\schedule'$ has the properties of Definition~\ref{def:mvsplitschedule}.

    First, we observe that $b'_1$ and $a'_j$ are rw-conflicting, which immediately implies that Condition~(3) is true for $\schedule'$. The argument is by exclusion. Indeed, if $b'_1$ and $a'_j$ would be ww-conflicting, then $b'_1 \ll_\schedule a'_j$ implying $b'_1 <_\schedule a_j'$ (due to the assumed read-last committed) and thus $b'_1 \le_\schedule b_1$, which is not allowed by condition~(1) on $\schedule$. It follows from a similar argument that $b'_1$ and $a'_j$ are not wr-conflicting: both $b'_1 = v_\schedule(a'_j)$ and $b'_1 \ll_\schedule v_\schedule(a'_j)$ imply $b'_1 <_\schedule C_1 <_\schedule a'_j$, which contradicts with $C_1$ being the last operation in $\schedule$. 

    Since $b'_1$ is rw-conflicting with $a'_j$, we have $v_\schedule(b'_1) \ll_\schedule a_j'$, implying $b'_1 <_\schedule a_j'$ (due to read-last-committed and the structure of a multiversion split schedule),
    thus $b'_1 \le_\schedule b_1$. Therefore, Condition (1) again transfers from $\schedule$ to $\schedule'$. For similar reasons Condition (2) applies on $\schedule'$: If $b_1 <_{\trans[1]} a_1$ then $b'_1 \le_{\trans[1]}b_1<_{\trans[1]} a_1$.

    \smallskip
    \noindent
    \textsc{Otherwise, if $j=1$}, it follows that $1 < i$. Then, we 
    construct a schedule $\schedule'$ from $\schedule$ by removing all operations from transactions $\trans[h]$ with $i<h$. 
   Notice that we remove at least one transaction, since $i < m$. We can derive a \cqsequence{} $C'$ from $C$ by removing all occurrences of these transactions $\trans[h]$ and adding the \conflictquadruple{} $(\trans[i], b'_i, a'_j, \trans[j])$ instead.

    In this schedule $\schedule'$, Condition (1) and (3) transfer from $\schedule$ by its construction. To see that Condition (2) is true on $\schedule'$, simply notice that if $b_i'$ and $a_1'$ are ww or wr-conflicting, then either $b_i' \ll_\schedule a_j'$ or $b_i' = v_\schedule(a_j')$ or $b_i' \ll_\schedule v_\schedule(a_j')$, which all imply $b_i <_\schedule C_i <_\schedule a_1'$ and thus that $b_1 <_\schedule a_1'$, implying $b_1 <_{\schedule'} a_1'$. 
\end{proof}

\subsection{If direction}
\label{sec:undec:ifdirection}
It remains to argue that the if direction of Theorem~\ref{theo:equaltemplates} is indeed correct.

Next, we show that, if there exists a multiversion split schedule for the set of \ptranss{} in Figure~\ref{fig:pcptemplates} for some set $\dominoset$ of dominoes, then this schedule is always a schedule-encoding of a sequence of dominoes in $\dominoset$.

\begin{prop}\label{pro:pcp-schedule-has-desired-form}
    Let $\dominoset$ be a set of dominoes. If there is a multiversion split schedule $\schedule$ for a set of transactions consistent with the \ptrans{} in Figure~\ref{fig:pcptemplates} for $\dominoset$ and some database $\db$, then this schedule $\schedule$ is a schedule-encoding of some sequence $\vec{d}$ of dominoes in $\dominoset$. 
\end{prop}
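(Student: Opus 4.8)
The plan is to analyze the conflict structure of the templates in Figure~\ref{fig:pcptemplates}, which is almost entirely determined by the relation on which each operation acts, and then to peel the cyclic dependency apart one transaction at a time. Since the existence of a multiversion split schedule already witnesses that $\workload$ is not robust against \mvrc{} (Theorem~\ref{theo:characterization:split-shedules}), I would first invoke Lemma~\ref{lem:robustminimalcondition} to replace $\schedule$ by a minimal multiversion split schedule $\prefix{\trans[1]}{b_1}\cdot\trans[2]\cdots\trans[m]\cdot\postfix{\trans[1]}{b_1}$ over exactly the transactions $\{\trans[1],\ldots,\trans[m]\}$, in which every dependency runs from $\trans[i]$ to $\trans[i+1]$ (cyclically). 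This removes trailing transactions and guarantees that the conflict quadruples form a single clean cycle $\trans[1]\to\trans[2]\to\cdots\to\trans[m]\to\trans[1]$, which is exactly the shape a schedule-encoding must have.

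The second step is to read off the endpoints of the cycle from the conflict classification. The crucial observation is that \textsf{String} is the only relation that is read, and it is never written, so no operation on a \textsf{String} tuple ever conflicts; every conflict therefore occurs on \textsf{Boolean}, \textsf{InitialConflict}, \textsf{DominoSequence}, or \textsf{PCPSolution}, and is a ww-conflict except for the single read/write pair on \textsf{InitialConflict} (read in $\ttinit$, written in $\ttinitb$). Condition~(\ref{c:3}) forces $b_1$ to be rw-conflicting with $a_2$, and since this is the unique rw-conflict available, $b_1$ must be the \textsf{InitialConflict} read and $a_2$ the \textsf{InitialConflict} write, pinning down $\trans[1]$ as an instance of $\ttinit$ and $\trans[2]$ as an instance of $\ttinitb$. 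For the other end, the \textsf{InitialConflict} read precedes the final \textsf{PCPSolution} write inside $\ttinit$, so Condition~(\ref{c:2}) holds with $a_1$ the \textsf{PCPSolution} write; ruling out the \textsf{Boolean} write of $\ttinit$ as a candidate for $a_1$ via Condition~(\ref{c:1}) (it sits in $\prefix{\trans[1]}{b_1}$) and the \textsf{InitialConflict} read via read-last-committed (an early prefix read cannot observe a version installed by the late transaction $\trans[m]$), I conclude that $b_m\to a_1$ is a \textsf{PCPSolution} ww-conflict, and hence $\trans[m]$ is an instance of $\ttclose$.

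The third step is to identify the interior transactions $\trans[3],\ldots,\trans[m-1]$ as Domino templates chained consistently. All remaining conflicts along the cycle are ww-conflicts, and I would trace the \textsf{DominoSequence} spine: $\ttinitb$ writes the empty sequence, $\ttclose$ writes the final sequence, and every $\ttdomino{i}$ writes two \textsf{DominoSequence} tuples tied by $\ufunc{next-sequence}$ and $\ufunc{previous-sequence}$. Using the at-most-two-quadruples restriction of Definition~\ref{def:mvsplitschedule} together with the uniqueness of the rw-conflict established above, I would show inductively that each outgoing conflict of an interior transaction must be a \textsf{DominoSequence} write whose tuple is picked up by the successor, so that the $\vb_{\text{next}}$ tuple of $\trans[k]$ equals the $\vb$ tuple of $\trans[k+1]$; the functional constraints relating the top-, bottom- and empty-strings of consecutive sequences then guarantee that these tuples genuinely form a consistent extension chain from the empty sequence to $\ttclose$'s sequence, yielding a well-defined domino sequence $\vec{d}$ over $\dominoset$. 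Finally, the \textsf{Boolean} constraints $\ufunc{is-non-empty}$ and $\ufunc{is-error}$ together with Condition~(\ref{c:1}), which forces the two \textsf{Boolean} writes to differ, rule out the degenerate empty chain, so $\vec{d}$ is non-empty and $\schedule$ is a schedule-encoding of $\vec{d}$.

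The main obstacle I anticipate is the interior analysis of the third step. Fixing the two endpoints is essentially a bookkeeping consequence of the conflict classification, but ruling out spurious shortcuts in the middle of the cycle -- for instance a \textsf{Boolean} or \textsf{InitialConflict} ww-link that would reintroduce a second instance of $\ttinit$ or $\ttinitb$ and close the cycle prematurely -- and proving that the \textsf{DominoSequence} tuples really do chain under $\ufunc{next-sequence}$ must be carried out purely from the functional constraints over an \emph{arbitrary} database $\db$, without appealing to the intended interpretation used in Proposition~\ref{prop:pcp-schedule-creation}. This is precisely where the cyclic functional constraints depicted in Figure~\ref{fig:beautifulfigure} are needed, and it is the part that demands the most careful case distinction.
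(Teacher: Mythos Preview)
Your overall plan coincides with the paper's: invoke Lemma~\ref{lem:robustminimalcondition}, pin down $\trans[1]=\ttinit$ and $\trans[2]=\ttinitb$ via the unique rw-pair on \textsf{InitialConflict}, and then argue forward that the interior transactions are domino instances terminated by $\ttclose$. Two points are worth flagging.

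First, your step~2 conclusion that $\trans[m]$ is an instance of $\ttclose$ is premature as written: \textsf{PCPSolution} is also written by $\ttinit$, so a second $\ttinit$ instance is a candidate for $\trans[m]$. Ruling this out needs exactly the machinery you postpone to step~3. The paper therefore does \emph{not} identify $\trans[m]$ up front; it runs the forward induction (its Lemmas~\ref{lem:pcp-lem-2}--\ref{lem:pcp-lem-3}) and only at the end shows that once some $\trans[i]$ is based on $\ttclose$ one must have $i=m$ (Lemma~\ref{lem:pcp-lem-4}). If you prefer your ordering, you can close the gap by observing that in $\ttinit$ the chain $\vc\mapsto\vs_1\mapsto\vx_1\mapsto\vi$ (via $\ufunc{solution-string}$, $\ufunc{is-non-empty}$, $\ufunc{defines}$) makes $\mu(\vi)$ a function of $\mu(\vc)$, so $\mu_1(\vc)=\mu_m(\vc)$ would force $\mu_1(\vi)=\mu_m(\vi)$, which is excluded since duplicate transactions are impossible under Lemma~\ref{lem:robustminimalcondition}.

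Second, the concrete invariant that makes your step~3 exclusion arguments go through is not the \textsf{DominoSequence} spine per se but the propagation of $\mu_i(\vs_1)$ along the cycle via $\ufunc{future-solution-string}$ (and, at the ends, $\ufunc{final-dominoes-string}$ and $\ufunc{solution-string}$). Combined with the paper's Lemma~\ref{lem:pcp-dagger}---each template's variable assignment is determined by a single ``key'' variable, so two distinct instances of the same template must disagree on that key---this is what cleanly eliminates a second $\ttinitb$ (which would force $\mu_2(\vs_1)=\mu_3(\vs_1)$), a second $\ttinit$, or an early $\ttclose$ (which would collapse $\mu_2(\vs_0)=\mu_2(\vs_1)$, contradicting the Boolean separation you already derived). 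Your plan to trace $\ufunc{next-sequence}/\ufunc{previous-sequence}$ is useful for showing the domino conflicts chain correctly, but the exclusion of spurious templates hinges on the $\vs_1$ invariant; you should make that the backbone of step~3.
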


\noindent For the proof, let $\db$ be a database and $\schedule = \prefix{\trans[1]}{b_1}\cdot \trans[2]\cdot \ldots \cdot \trans[m] \cdot \postfix{\trans[1]}{b_1}$ a multiversion split schedule for a set of transactions $\transset$ consistent with $\workload$ and $\db$, with the conditions of Lemma~\ref{lem:robustminimalcondition} and based on some sequence of conflict quadruples $C = (\trans[1],b_1,a_2,\trans[2]), (\trans[2],b_2,a_3,\trans[3]) \ldots, (\trans[m],b_m, a_1,\trans[1])$. 
    We show through a sequence of properties (Lemmas~\ref{lem:pcp-lem-1}, \ref{lem:pcp-lem-2}, \ref{lem:pcp-lem-3}, and \ref{lem:pcp-lem-4}), that $\schedule$ is a schedule-encoding of a sequence $\vec{d}$ of dominoes in $\dominoset$.

    As a first property (Lemma~\ref{lem:pcp-dagger}), we observe that \ptranss{} in $\workload$ heavily constrain the possible variable instantiations. For \ptrans{}  $\ttinit$, for example, a variable mapping depends entirely on the choice of the value for variable $\vi$. Since Lemma~\ref{lem:robustminimalcondition} forbids the presence of duplicate transactions in $\transset$, two transactions $T_i$ and $T_j$ (with $i\ne j$) based on \ptrans{} $\ttinit$ cannot agree on their choice for variable $\vi$ in $\schedule$. By applying this argument to other \ptranss{}, we obtain the following corollary of Lemma~\ref{lem:robustminimalcondition}.
        Here, for each transaction $\trans[i]$ in $\schedule$, we write $\tau_i$ to denote the \ptrans{} in $\workload$ that it is based on, and by $\mu_i$ the associated variable mapping for $\tau_i$, with $\mu_i(\tau_i) = \trans[i]$. 

    \begin{lem}\label{lem:pcp-dagger}
        for two transactions $T_i$ and $T_j$ in $\schedule$, with $i\ne j$:
        \begin{itemize}
            \item if $T_i$ and $T_j$ are based on $\ttinit$, then $\mu_i(\vi) \ne \mu_j(\vi)$;
            \item if $T_i$ and $T_j$ are based on $\ttinitb$ then $\mu_i(\vs_1) \ne \mu_j(\vs_1)$; 
            \item if $T_i$ and $T_j$ are based on $\ttclose$ then $\mu_i(\vb) \ne \mu_j(\vb)$ and $\mu_i(\vc) \ne \mu_j(\vc)$;
            \item if $T_i$ and $T_j$ are based on domino \ptranss{} then $\mu_i(\vb) \ne \mu_j(\vb)$ and $\mu_i(\vb_{\text{next}}) \ne \mu_j(\vb_\text{next})$. 
        \end{itemize}
    \end{lem}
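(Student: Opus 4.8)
The plan is to prove all four cases by a single contradiction argument that combines two ingredients. The first is the no-duplicate-transactions property recorded in the discussion preceding the statement: because $\schedule$ satisfies the conclusion of Lemma~\ref{lem:robustminimalcondition}, no two distinct transactions of $\transset$ can be identical as sequences of operations on tuples (two distinct but identical transactions would induce conflicts, and hence dependencies, between transactions that the cycle structure of Lemma~\ref{lem:robustminimalcondition} keeps apart). The second ingredient is that, for each \ptrans{} in Figure~\ref{fig:pcptemplates}, the value an admissible mapping assigns to the indicated variable already pins down its values on \emph{all} variables: every other variable is obtained from the indicated one by following equality constraints of the form $\vu = \ufunc{name}(\vw)$, read as $\mu(\vu) = \ufunc{name}^{\db}(\mu(\vw))$ in the single database $\db$ underlying $\schedule$. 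Hence if two transactions $T_i, T_j$ built on the same template agreed on the indicated variable(s), the mappings $\mu_i$ and $\mu_j$ would coincide on every variable of that template, whence $T_i = T_j$, contradicting no-duplicates. So for each template it suffices to check that the indicated variable is a \emph{source} in its constraint graph from which every variable is reachable.

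For $\ttinit$ rooted at $\vi$, the constraints $\vc = \ufunc{final-domino-sequence}(\vi)$, $\vs_1 = \ufunc{final-dominoes-string}(\vi)$ and $\vs_e = \ufunc{error-string}(\vi)$ fix $\vc, \vs_1, \vs_e$, after which $\vx_1 = \ufunc{is-non-empty}(\vs_1)$ fixes $\vx_1$; all five variables are determined. For $\ttinitb$ rooted at $\vs_1$, $\vb = \ufunc{empty-domino-sequence}(\vs_1)$ fixes $\vb$, then $\vs_0 = \ufunc{top-string}(\vb)$ fixes $\vs_0$, from which $\vx_2 = \ufunc{is-non-empty}(\vs_0)$ and $\vs_e = \ufunc{detach}(\vs_0)$ follow, and finally $\vi = \ufunc{defines}(\vx_2)$; again every variable is determined.

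The two-variable cases exploit a pair of mutually inverse constraints. For $\ttclose$, starting from $\vb$ the constraints $\vs_t = \ufunc{top-string}(\vb)$, $\vs_b = \ufunc{bottom-string}(\vb)$, $\vs_1 = \ufunc{future-solution-string}(\vb)$ and $\vc = \ufunc{DS$\rightarrow$PCP}(\vb)$ determine every variable; and since $\vb = \ufunc{PCP$\rightarrow$DS}(\vc)$, the value of $\vc$ alone determines $\vb$ and hence the whole mapping as well, which yields both $\mu_i(\vb) \ne \mu_j(\vb)$ and $\mu_i(\vc) \ne \mu_j(\vc)$. The domino templates are handled identically rooted at $\vb$: $\vb$ propagates through the $\ufunc{top-string}$ and $\ufunc{bottom-string}$ constraints and the attached chains of $\ufunc{append-$c$}$- and $\ufunc{top}$-constraints, through $\ufunc{empty-string}$ and $\ufunc{future-solution-string}$, and through $\vb_{\text{next}} = \ufunc{next-sequence}(\vb)$, fixing every variable; and $\vb = \ufunc{previous-sequence}(\vb_{\text{next}})$ shows that $\vb_{\text{next}}$ determines the mapping too, giving both inequalities.

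The argument is essentially a finite reachability check per template, so the only genuinely substantive input is the no-duplicates property, which I would take directly from Lemma~\ref{lem:robustminimalcondition}. The point that requires the most care is to make the propagation precise: one must note that $\mu_i$ and $\mu_j$ are both admissible for the \emph{same} database $\db$, so that agreement on a source variable, combined with the common interpretations $\ufunc{name}^{\db}$, really does force agreement on all variables; and in the two-variable cases one must verify that \emph{each} of the two named variables is individually a source, which is exactly what the inverse-function constraint pairs ($\ufunc{DS$\rightarrow$PCP}$/$\ufunc{PCP$\rightarrow$DS}$ for $\ttclose$ and $\ufunc{next-sequence}$/$\ufunc{previous-sequence}$ for the domino templates) provide.
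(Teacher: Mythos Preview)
Your proposal is correct and follows essentially the same approach as the paper: the paper's argument is precisely the paragraph immediately preceding the lemma, which sketches the same two ingredients (no duplicate transactions via Lemma~\ref{lem:robustminimalcondition}, plus the observation that in each template the indicated variable functionally determines the entire variable mapping), and you have simply spelled out the per-template reachability checks and the role of the inverse-function constraint pairs in more detail. One minor remark: your parenthetical justification for the no-duplicates property via the dependency structure is not quite the point---adjacent identical transactions would not violate the dependency restriction of Lemma~\ref{lem:robustminimalcondition}---but the conclusion is still correct, since the split-schedule is over a \emph{set} $\transset$ of transactions and Definition~\ref{def:mvsplitschedule} ensures that $T_1,\ldots,T_m$ are pairwise distinct elements of it.
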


    We conclude the proof of Proposition~\ref{pro:pcp-schedule-has-desired-form} with the necessary arguments (Lemmas~\ref{lem:pcp-lem-1}, \ref{lem:pcp-lem-2}, \ref{lem:pcp-lem-3} and \ref{lem:pcp-lem-4}) that $\schedule$ is indeed a schedule-encoding for some sequence of dominoes.

    \begin{lem}\label{lem:pcp-lem-1}
        Transaction $T_1$ is based on $\ttinit$,  $T_2$ is based on $\ttinitb$, and $\mu_1(\vi) = \mu_2(\vi)$, $\mu_1(\vx_1) \ne \mu_2(\vx_2)$, and $\mu_1(\vs_1) = \mu_2(\vs_1) \ne \mu_2(\vs_0)$.
    \end{lem}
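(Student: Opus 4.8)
The plan is to pin down $\trans[1]$ and $\trans[2]$ purely from the type discipline of reads and writes in Figure~\ref{fig:pcptemplates}, and then to read off the remaining (dis)equalities from the functional constraints of $\ttinit$ and $\ttinitb$ together with Conditions~(1) and~(3) of Definition~\ref{def:mvsplitschedule}. Throughout I use that all relations are unary, so the read set of every read and the write set of every write is the single attribute of its relation; consequently two operations on the same tuple conflict as soon as one of them writes, and in particular two write operations ww-conflict exactly when they are on the same tuple.

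First I would record the crucial typing observation. No \ptrans{} in Figure~\ref{fig:pcptemplates} ever \emph{writes} a tuple of type $\textsf{String}$ (every operation on a $\textsf{String}$ tuple is an $\myR$-operation), while $\textsf{Boolean}$, $\textsf{PCPSolution}$ and $\textsf{DominoSequence}$ are \emph{never read}; the only type that is both read and written is $\textsf{InitialConflict}$, read solely by the operation $\R[]{\vi}$ of $\ttinit$ and written solely by the operation $\W[]{\vi}$ of $\ttinitb$. Now Condition~(3) puts $b_1$ in an rw-conflict with $a_2$, so $b_1$ is a read and $a_2$ a write on one common tuple $\x$, whence $\type{\x}$ is read by the \ptrans{} underlying $\trans[1]$ and written by the \ptrans{} underlying $\trans[2]$. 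By the typing observation this forces $\type{\x}=\textsf{InitialConflict}$, hence $\trans[1]$ is based on $\ttinit$ with $b_1$ its read of $\vi$, and $\trans[2]$ is based on $\ttinitb$ with $a_2$ its write of $\vi$. Since $b_1$ and $a_2$ act on the same tuple $\x$, we obtain $\mu_1(\vi)=\x=\mu_2(\vi)$, which gives the first three assertions of the lemma.

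For $\mu_1(\vx_1)\ne\mu_2(\vx_2)$ I would invoke Condition~(1). In $\ttinit$ the write $\W[]{\vx_1}$ of the $\textsf{Boolean}$ variable precedes the read of $\vi$, so it lies in $\prefix{\trans[1]}{b_1}$, while $\W[]{\vx_2}$ is a $\textsf{Boolean}$ write inside $\trans[2]$. Condition~(1) forbids any write in $\prefix{\trans[1]}{b_1}$ from ww-conflicting with a write in $\trans[2],\ldots,\trans[m]$; since two $\textsf{Boolean}$ writes ww-conflict precisely when they are on the same tuple, $\mu_1(\vx_1)$ and $\mu_2(\vx_2)$ must be distinct tuples, i.e.\ $\mu_1(\vx_1)\ne\mu_2(\vx_2)$.

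Finally I would derive the $\textsf{String}$-level (dis)equalities by constraint-chasing. Both $\ttinit$ and $\ttinitb$ contain the constraint $\vs_1=\ufunc{final-dominoes-string}(\vi)$, so $\mu_1(\vs_1)=\ufunc{final-dominoes-string}^{\db}(\mu_1(\vi))=\ufunc{final-dominoes-string}^{\db}(\mu_2(\vi))=\mu_2(\vs_1)$ using $\mu_1(\vi)=\mu_2(\vi)$. For the disequality I argue by contradiction: if $\mu_2(\vs_1)=\mu_2(\vs_0)$, then the constraints $\vx_1=\ufunc{is-non-empty}(\vs_1)$ of $\ttinit$ and $\vx_2=\ufunc{is-non-empty}(\vs_0)$ of $\ttinitb$ give $\mu_1(\vx_1)=\ufunc{is-non-empty}^{\db}(\mu_1(\vs_1))=\ufunc{is-non-empty}^{\db}(\mu_2(\vs_0))=\mu_2(\vx_2)$ (again via $\mu_1(\vs_1)=\mu_2(\vs_1)$), contradicting the previous paragraph; hence $\mu_2(\vs_1)\ne\mu_2(\vs_0)$. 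The main obstacle is the opening typing argument: one must be certain that \emph{no} other read in \emph{any} \ptrans{} of Figure~\ref{fig:pcptemplates} can be the reader in the rw-conflict forced by Condition~(3), which rests entirely on $\textsf{String}$ never being written and $\textsf{InitialConflict}$ being read only by $\ttinit$. Once $\mu_1(\vi)=\mu_2(\vi)$ and $\mu_1(\vx_1)\ne\mu_2(\vx_2)$ are established, the rest is routine.
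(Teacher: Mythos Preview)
Your proof is correct and follows essentially the same approach as the paper's: both identify $\trans[1]$ and $\trans[2]$ via the typing observation that $\textsf{InitialConflict}$ is the only type admitting an rw-conflict (Condition~(3)), then use Condition~(1) for $\mu_1(\vx_1)\ne\mu_2(\vx_2)$ and chase the constraints $\ufunc{final-dominoes-string}$ and $\ufunc{is-non-empty}$ for the $\textsf{String}$-level (dis)equalities. Your version is somewhat more explicit in spelling out the typing analysis and in framing the final disequality as a contradiction, but the underlying argument is the same.
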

    \begin{proof}
Since $b_{1}$ and $a_{2}$ are rw-conflicting (cf, Definition~\ref{def:mvsplitschedule}), and there are no updates in the considered \ptranss{}, operation $b_1$ must be a read. Since \textsf{InitialConflict} is the only type allowing for conflicts involving a read, it is immediate that $T_1$ must be based on $\ttinit$ and $T_2$ based on $\ttinitb$, with $\mu_1(\vi) = \mu_2(\vi)$.  From this equality and function $\ufunc{final-dominoes-string}$ it follows that $\mu_1(\vs_1) =  \mu_2(\vs_1)$. 
From Definition~\ref{def:mvsplitschedule}, particularly that there is no ww-conflict between a write operation in $\prefix{\trans[1]}{b_1}$ and a write operation in any of the transactions $T_2, \ldots, T_m$, it follows that $\mu_1(\vx_1) \ne \mu_2(\vx_2)$.
Finally, function $\ufunc{is-non-empty}$, which maps $\vs_1$ onto $\vx_1$ in \ptrans{} $\ttinit$ and $\vs_0$ onto $\vx_2$ in \ptrans{} $\ttinitb$, implies $\mu_1(\vs_1) \ne \mu_2(\vs_0)$. 
    \end{proof}

        \begin{lem}\label{lem:pcp-lem-2}
            There is a transaction $T_3$ in $\schedule$ and it is based on a domino \ptrans{}, with $\mu_3(\vs_1) = \mu_2(\vs_1)$.
        \end{lem}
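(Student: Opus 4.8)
The plan is to read off the template underlying $\trans[3]$ from the conflict quadruple $(\trans[2], b_2, a_3, \trans[3])$ of $C$, by first pinning down $b_2$ and then doing a case analysis on the relation carrying the conflict. First I would observe that $b_2$ must be a \emph{write} of $\trans[2]$: no \ptrans{} in Figure~\ref{fig:pcptemplates} ever writes a tuple of type \textsf{String}, so the \textsf{String}-reads of $\ttinitb$ cannot be the ``before'' side of any dependency, leaving only $\W[2]{\vx_2}$ (\textsf{Boolean}), $\W[2]{\vi}$ (\textsf{InitialConflict}), and $\W[2]{\vb}$ (\textsf{DominoSequence}) as candidates. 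I would also record the access pattern of these relations: \textsf{DominoSequence} is write-only across all templates (so every conflict on it is a ww-conflict), \textsf{Boolean} is written only by $\ttinit$ and $\ttinitb$, and \textsf{InitialConflict} is written only by $\ttinitb$ and read only by $\ttinit$.

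Next I would rule out every non-domino template for $\trans[3]$. If $b_2 = \W[2]{\vx_2}$ or $b_2 = \W[2]{\vi}$, the access patterns force $\trans[3]$ to be based on $\ttinit$ or $\ttinitb$, and in each subcase the conflicting pair forces an equality of tuples that I would chase -- through the constraint $\vi = \ufunc{defines}(\cdot)$ in the \textsf{Boolean} case, or directly in the \textsf{InitialConflict} case -- down to $\mu_3(\vi) = \mu_2(\vi) = \mu_1(\vi)$. If $\trans[3]$ is based on $\ttinit$, this exhibits two $\ttinit$ transactions agreeing on $\vi$, contradicting Lemma~\ref{lem:pcp-dagger}. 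If $\trans[3]$ is based on $\ttinitb$, it writes the very tuple $\mu_1(\vi)$ read by $\trans[1]$ at $b_1$, so $b_1$ rw-antidepends on that write, yielding a dependency from $\trans[1]$ to $\trans[3]$ that Lemma~\ref{lem:robustminimalcondition} forbids (as $\trans[1]$ and $\trans[3]$ are not consecutive). This last observation also disposes of the degenerate case $m = 2$, where $\trans[3]$ would coincide with $\trans[1]$; hence $\trans[3]$ genuinely exists and is distinct from both $\trans[1]$ and $\trans[2]$.

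This leaves $b_2 = \W[2]{\vb}$, a ww-conflict on \textsf{DominoSequence} in which $\trans[3]$ writes the same tuple $\mu_2(\vb)$ and is therefore based on $\ttinitb$, $\ttclose$, or a domino template. The decisive constraint is $\vs_1 = \ufunc{future-solution-string}(\vb)$, which occurs in all three templates and, in the domino template, applies to $\vb_{\text{next}}$ as well; consequently, regardless of which \textsf{DominoSequence} write of $\trans[3]$ is involved, the conflicting tuple has future-solution-string equal to both $\mu_3(\vs_1)$ and $\mu_2(\vs_1)$, so $\mu_3(\vs_1) = \mu_2(\vs_1)$ -- already the equality claimed by the lemma. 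It remains to exclude the first two templates. For $\ttinitb$, the equality $\mu_3(\vs_1) = \mu_2(\vs_1)$ between two $\ttinitb$ transactions contradicts Lemma~\ref{lem:pcp-dagger}. For $\ttclose$, I would combine $\ufunc{top-string}(\vb) = \vs_t = \ufunc{solution-string}(\vc) = \vs_1$ (from $\ttclose$) with $\ufunc{top-string}(\vb) = \ufunc{empty-string}(\vb) = \vs_0$ (from $\ttinitb$), which on the shared tuple $\mu_2(\vb) = \mu_3(\vb)$ gives $\mu_2(\vs_0) = \mu_3(\vs_t) = \mu_3(\vs_1) = \mu_2(\vs_1)$, contradicting Lemma~\ref{lem:pcp-lem-1}. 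Thus $\trans[3]$ is a domino template with $\mu_3(\vs_1) = \mu_2(\vs_1)$.

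The main obstacle is the bookkeeping of this case distinction: each alternative template for $\trans[3]$ must be excluded by following the correct chain of functional constraints back to one of the distinctness facts (Lemma~\ref{lem:pcp-dagger}, Lemma~\ref{lem:pcp-lem-1}) or to the consecutiveness restriction of Lemma~\ref{lem:robustminimalcondition}, and one must keep track of which relation actually carries the forward conflict so that the constraints being invoked really hold in the two templates being compared.
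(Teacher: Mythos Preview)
Your case analysis is essentially the paper's, just organised by the operation $b_2$ rather than by the template of $\trans[3]$; the paper likewise excludes $\ttinit$, $\ttinitb$, and $\ttclose$ by chasing functional constraints back to Lemma~\ref{lem:pcp-dagger} or Lemma~\ref{lem:pcp-lem-1}. One genuine difference: in the \textsf{Boolean}/\textsf{InitialConflict} subcase you rule out $\ttinitb$ via Lemma~\ref{lem:robustminimalcondition} (exhibiting a non-consecutive dependency $T_1\to T_3$ on $\mu_1(\vi)$), whereas the paper continues the constraint chase one step further, from $\mu_2(\vi)=\mu_3(\vi)$ through $\ufunc{final-dominoes-string}$ to $\mu_2(\vs_1)=\mu_3(\vs_1)$, and then invokes Lemma~\ref{lem:pcp-dagger} directly. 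Your route is valid but the paper's keeps everything inside the ``distinctness'' lemmas.

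The gap is your treatment of $m=2$. You assert that ``this last observation'' (the $\ttinitb$ argument via Lemma~\ref{lem:robustminimalcondition}) also covers it, but it does not: when $m=2$ the transaction following $\trans[2]$ in $C$ is $\trans[1]$ itself, which is based on $\ttinit$, not $\ttinitb$, so the rw-antidependency you invoke is irrelevant; and your $\ttinit$ exclusion via Lemma~\ref{lem:pcp-dagger} fails too, since the two ``agreeing'' $\ttinit$ transactions would be the \emph{same} transaction. What is actually needed (and what the paper spells out first) is the direct argument: since $\mu_1(\vx_1)\neq\mu_2(\vx_2)$ eliminates any \textsf{Boolean} conflict and $\ttinit$ has no \textsf{DominoSequence} operation, the only candidate for $(T_2,b_2,a_1,T_1)$ is $b_2=\W[2]{\mu_2(\vi)}=a_2$ and $a_1=\R[1]{\mu_1(\vi)}=b_1$; but then $b_2\rightarrow_\schedule a_1$ would have to be a wr-dependency, which is impossible because $a_1=b_1$ lies in $\prefix{\trans[1]}{b_1}$ and hence $v_\schedule(a_1)\ll_\schedule b_2$ rather than the required $b_2\ll_\schedule v_\schedule(a_1)$.
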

        \begin{proof}
            First, suppose towards a contradiction that $m=2$. We already know from Lemma~\ref{lem:pcp-lem-1} that 
$\mu_1(\vi) = \mu_2(\vi)$ and $\mu_1(\vx_1) \ne \mu_2(\vx_2)$, thus $a_1 = b_1 = \R[]{\mu_1(\vi)}$ and $b_2 = a_2 = \W[]{\mu_2(\vi)}$, indicating $b_1 \rightarrow_s a_2$, particularly, $v_\schedule(b_1) \ll_s a_2$, thus implying that $a_1$ cannot depend on $b_2$, which is the desired contradiction.

            The remainder of the proof is by exclusion. Transaction $T_3$ is not based on \ptrans{} $\ttinitb$, because all possible conflicts between $T_2$ and an instantiation of \ptrans{} $\ttinitb$ (implying either $\mu_2(\vx_2) = \mu_3(\vx_2)$,  $\mu_2(\vb) =  \mu_3(\vb)$, or $\mu_2(\vi) =  \mu_3(\vi)$) would imply the equality $\mu_2(\vs_1) = \mu_3(\vs_1)$ (through functional constraints $\vi = \ufunc{defines}(\vx_2)$, $\vs_1 = \ufunc{future-solution-string}(\vb)$, and $\vs_1 = \ufunc{final-dominoes-string}(\vi)$), which is forbidden by Lemma~\ref{lem:pcp-dagger}.
The argument that transaction $T_3$ cannot based on \ptrans{} $\ttinit$ is similar: every possible conflict between $T_2$ and an instantiation of $\ttinit$ implies $\mu_1(\vi) = \mu_2(\vi) = \mu_3(\vi)$ either directly (taking $\mu_2(\vi_2) = \mu_3(\vi_1)$ as conflict) or, when taking 
            $\mu_2(\vx_2) = \mu_3(\vx_1)$ as conflict, through constraints $\vi = \ufunc{defines}(\vx_1)$ and $\vi = \ufunc{defines}(\vx_2)$ in $\trans[2]$ and $\trans[3]$, respectively. Either way, $\mu_1(\vi) = \mu_3(\vi)$ is forbidden by Lemma~\ref{lem:pcp-dagger}.
Finally, to see that $\trans[3]$ is not based on \ptrans{} $\ttclose$, we observe that a conflict between $\trans[2]$ and an instantiation of \ptrans{} $\ttclose$ must be ww-conflicting involving variables $\vb$, thus with $\mu_2(\vb) = \mu_3(\vb)$.
Then, $\mu_2(\vs_0) = \mu_3(\vs_t)$, due to functional constraint $\vs_0 = \ufunc{top-string}(\vb)$ in $\trans[2]$ and $\vs_t = \ufunc{top-string}(\vb)$ in $\trans[3]$, and $\mu_2(\vs_1) = \mu_3(\vs_1)$, due to functional constraint $\vs_1 = \ufunc{future-solution-string}(\vb)$ in $\trans[2]$ and $\trans[3]$.
            However, we also have $\mu_3(\vs_1) = \mu_3(\vs_t)$, due to constraints $\vs_1 = \ufunc{solution-string}(\vb)$ and $\vs_t=\ufunc{solution-string}(\vb)$, thus implying $\mu_2(\vs_0) = \mu_3(\vs_t) = \mu_3(\vs_1) = \mu_2(\vs_1)$, which contradicts with earlier proven Lemma~\ref{lem:pcp-lem-1}.
We conclude that $T_3$ is indeed based on a domino \ptrans{}. 
Therefore, the conflict quadruple $(T_2, b_2, a_3, T_3)$ must admit ww-conflicting operations over variable $\vb$ in $\trans[2]$ and either variable $\vb$ or $\vb_{next}$ in $\trans[3]$. We notice that $\mu_2(\vs_1) = \ufunc{future-solution-string}(\mu_2(\vb))$, $\mu_3(\vs_1) = \ufunc{future-solution-string}(\mu_3(\vb))$, and $\mu_3(\vs_1) = \ufunc{future-solution-string}(\mu_3(\vb_{next}))$, thus independent of the variable $\vb_{next}$ or $\vb$ in $\trans[3]$, we have $\mu_2(\vs_1) = \mu_3(\vs_1)$.
    \end{proof}

    \begin{lem}\label{lem:pcp-lem-3}
    For a transaction $T_i$, with $i\ge 4$, for which all $T_j$'s, with $j \in \{3,\ldots,i-1\}$, are based on domino \ptranss{}, transaction $T_i$ is based on a domino \ptrans{} or on \ptrans{} $\ttclose$. Furthermore $\mu_2(\vs_1) = \mu_i(\vs_1)$.
    \end{lem}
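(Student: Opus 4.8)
The plan is to prove the statement by induction on $i$, in each step analysing the single conflict quadruple $(T_{i-1}, b_{i-1}, a_i, T_i)$ of $C$ that links $T_i$ to its predecessor. First I would record a structural observation about the \ptranss{} in Figure~\ref{fig:pcptemplates}: the only relation on which a read and a write can meet is \textsf{InitialConflict} (read in $\ttinit$, written in $\ttinitb$). Indeed, \textsf{String} is read throughout but written nowhere, while \textsf{Boolean}, \textsf{PCPSolution} and \textsf{DominoSequence} are only ever written. Hence every conflict in $\workload$ that does not involve \textsf{InitialConflict} is a ww-conflict. Since $i \ge 4$, the hypothesis gives that $T_{i-1}$ is a domino \ptrans{}, whose only write operations are $\W[]{\vb}$ and $\W[]{\vb_{next}}$ on \textsf{DominoSequence} (and which does not touch \textsf{InitialConflict}); therefore $b_{i-1} \rightarrow_\schedule a_i$ must be a ww-conflict on \textsf{DominoSequence}, so in particular $a_i$ is a write on \textsf{DominoSequence}.

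From this I would read off the possible types of $T_i$: the only \ptranss{} writing \textsf{DominoSequence} are $\ttinitb$, $\ttclose$ and the domino \ptranss{}, so $\ttinit$ is excluded at once. To treat the surviving cases uniformly I would exploit that in every \ptrans{} writing \textsf{DominoSequence}, the function $\ufunc{future-solution-string}$ maps each written \textsf{DominoSequence} tuple to $\vs_1$ (for the domino \ptranss{} this holds for both $\vb$ and $\vb_{next}$, and likewise for the single written tuple of $\ttinitb$ and $\ttclose$). Enumerating the at most four ways the ww-conflict can identify a write of $T_{i-1}$ with a write of $T_i$, each case equates a \textsf{DominoSequence} tuple common to the two transactions; applying $\ufunc{future-solution-string}$ to this shared tuple then yields $\mu_{i-1}(\vs_1) = \mu_i(\vs_1)$ in all cases. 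Combined with $\mu_{i-1}(\vs_1) = \mu_2(\vs_1)$ — which holds by Lemma~\ref{lem:pcp-lem-2} when $i-1 = 3$, and by the induction hypothesis (the ``furthermore'' clause applied at index $i-1$) when $i-1 > 3$ — this already establishes the claimed equality $\mu_i(\vs_1) = \mu_2(\vs_1)$.

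It then remains only to rule out $T_i$ being based on $\ttinitb$, which I would do by contradiction: $T_2$ is based on $\ttinitb$ by Lemma~\ref{lem:pcp-lem-1}, so if $T_i$ (with $i \ge 4 > 2$) were also based on $\ttinitb$, the equality $\mu_i(\vs_1) = \mu_2(\vs_1)$ just derived would contradict Lemma~\ref{lem:pcp-dagger}, which forbids two distinct $\ttinitb$-transactions from agreeing on $\vs_1$. Hence $T_i$ is based on a domino \ptrans{} or on $\ttclose$, and the $\vs_1$-equality has been shown, completing the inductive step.

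The main obstacle I anticipate is the bookkeeping underlying the conflict analysis. I must make that analysis airtight — confirming that a read on \textsf{String} can never participate in a conflict (no \ptrans{} writes \textsf{String}) and that a write on \textsf{DominoSequence} can never participate in a wr-conflict (no \ptrans{} reads \textsf{DominoSequence}), so that the conflict with the domino $T_{i-1}$ is forced to be ww on \textsf{DominoSequence} — and then verify, case by case over the possible coincidences of the $\vb$/$\vb_{next}$ writes, that $\ufunc{future-solution-string}$ propagates $\vs_1$ identically through $\ttclose$ and the domino \ptranss{}. Once this propagation is in place, the type-exclusion of $\ttinit$ and the contradiction with Lemma~\ref{lem:pcp-dagger} are routine.
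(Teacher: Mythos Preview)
Your proposal is correct and follows essentially the same route as the paper: restrict the conflict with the domino predecessor to a ww-conflict on \textsf{DominoSequence} (ruling out $\ttinit$), propagate $\vs_1$ via $\ufunc{future-solution-string}$ to obtain $\mu_{i-1}(\vs_1)=\mu_i(\vs_1)=\mu_2(\vs_1)$, and then exclude $\ttinitb$ by Lemma~\ref{lem:pcp-dagger}. Your write-up is in fact a bit more explicit than the paper's about the inductive structure and the conflict-type analysis, but the argument is the same.
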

    \begin{proof}
Since domino \ptranss{} do not mention variables of type \textsf{InitialConflict} and write only to variables of type \textsf{DominoSequence}, it remains to show that $\trans[i+1]$ is not based on \ptrans{} $\ttinitb$. 

For this,  observe that $\mu_2(\vs_1) = \mu_{i-1}(\vs_1)$. Indeed, every conflict quadruple $(T_i, b_i, a_{i+1},\allowbreak T_{i+1})$, with $i \in \{3,\ldots, i-1\}$, admits ww-conflicting operations with variables of type \textsf{DominoSequence}. No matter if the conflict is via a variable $\vb$ or $\vb_{next}$, the constraints $\vs_1 = \ufunc{future-solution-string}(\vb)$ and $\vs_1 = \ufunc{future-solution-string}(\vb_{next})$ ensure $\mu_2(\vs_1) = \mu_{i-1}(\vs_1)$.

Now, assume towards a contradiction that $\trans[i+1]$ is based on $\ttinitb$, thus admitting a conflict quadruple $(T_{i}, b_i, a_{i+1}, T_{i+1})$ in $C$.
Then either $b_i=\mu_i(\vb)$ and $a_{i+1} = \mu_{i+1}(\vb)$ \emph{or} $b_i=\mu_i(\vb_\text{next})$ and $a_{i+1} = \mu_{i+1}(\vb)$. Both of these equalities imply $\mu_i(\vs_1) = \mu_{i+1}(\vs_1)$ due to constraints
$\vs_1 = \ufunc{future-solution-string}(\vb)$ and $\vs_1 = \ufunc{future-solution-string}(\vb_{next})$, thus implying $\mu_i(\vs_1) = \mu_2(\vs_1)$, this contradict with Lemma~\ref{lem:pcp-dagger}. We conclude that $\trans[i]$ is indeed based on a domino \ptrans{} or on \ptrans{} $\ttclose$. That $\mu_{i-1}(\vs_1) = \mu_i(\vs_1)$ follows again from the constraints using function $\ufunc{future-solution-string}$.
\end{proof}

\begin{lem}\label{lem:pcp-lem-4}
    If $T_i$ is based on \ptrans{} $\ttclose$, then $i=m$.
\end{lem}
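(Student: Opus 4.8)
The plan is to assume, towards a contradiction, that $\trans[i]$ is based on $\ttclose$ with $i < m$, and to show that the outgoing conflict quadruple $(\trans[i], b_i, a_{i+1}, \trans[i+1])$ of $C$ cannot be realized unless $\trans[i+1]$ is the split transaction $\trans[1]$, which already forces $i = m$. By Lemmas~\ref{lem:pcp-lem-1}--\ref{lem:pcp-lem-3} I may assume $\trans[1]$ is based on $\ttinit$, $\trans[2]$ on $\ttinitb$, and $\trans[3],\ldots,\trans[i-1]$ on domino \ptranss{}, so that $\trans[i]$ is the first transaction after $\trans[2]$ that is not a domino \ptrans{} and, since $i\ge 4$ and $i<m$, that $m \ge 5$.

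The engine of the argument is the following \emph{consecutiveness principle}, read off from Lemma~\ref{lem:robustminimalcondition}: any dependency holds only between cyclically consecutive transactions, and since two write operations on a common tuple with overlapping write sets are ww-conflicting and are ordered by $\ll_\schedule$, \emph{any two transactions that write a common tuple must be cyclically consecutive}; in particular no tuple may be written by three pairwise non-consecutive transactions. I would also use the bookkeeping observation that, in Figure~\ref{fig:pcptemplates}, the type \textsf{String} is never written (hence never involved in a conflict), that \textsf{DominoSequence} and \textsf{PCPSolution} are written but never read, and that \textsf{InitialConflict} and \textsf{Boolean} do not occur in $\ttclose$. Consequently the only conflicting operations of $\trans[i]=\ttclose(\cdot)$ are its single \textsf{DominoSequence}-write on $\vb$ and its single \textsf{PCPSolution}-write on $\vc$, so $b_i$ is one of these two.

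First I would rule out $b_i$ being the \textsf{DominoSequence}-write. Since \textsf{DominoSequence} is write-only and $\trans[i-1]$ is a domino \ptrans{}, the incoming conflict $a_i$ is necessarily the same (unique) \textsf{DominoSequence}-write of $\ttclose$, on the tuple $\mu_i(\vb)$, and $\trans[i-1]$ writes $\mu_i(\vb)$ as well. If $b_i$ were this write too, then $\trans[i+1]$ would also write $\mu_i(\vb)$, making $\trans[i-1]$, $\trans[i]$ and $\trans[i+1]$ three transactions writing the same tuple; as $\trans[i-1]$ and $\trans[i+1]$ are at cyclic distance two in a cycle of length $m \ge 5$, this contradicts the consecutiveness principle. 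Hence $b_i$ is the \textsf{PCPSolution}-write on $\vc$ and $\trans[i+1]$ also writes $\mu_i(\vc)$; since \textsf{PCPSolution} is write-only, $\trans[i+1]$ is based on $\ttinit$ or $\ttclose$, and Lemma~\ref{lem:pcp-dagger} excludes $\ttclose$. So $\trans[i+1]$ is based on $\ttinit$.

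It remains to force $\trans[i+1] = \trans[1]$, and I expect this to be the main obstacle. I would first pin down the closing conflict $(\trans[m], b_m, a_1, \trans[1])$: because $\trans[1]=\ttinit$ writes only to the write-only types \textsf{Boolean} and \textsf{PCPSolution}, no read $b_m$ can rw-conflict with a write of $\trans[1]$, so Condition~(2) of Definition~\ref{def:mvsplitschedule} reduces to $b_1 <_{\trans[1]} a_1$; the only conflicting operation after the split point $b_1$ (the read of $\vi$) is the \textsf{PCPSolution}-write, whence $a_1$ is that write and $\trans[m]$ also writes $\mu_1(\vc)$. Now if $\trans[i+1]$ were an $\ttinit$ distinct from $\trans[1]$, the consecutiveness principle gives $\mu_i(\vc) \ne \mu_1(\vc)$ (otherwise the non-consecutive $\trans[i]$ and $\trans[1]$ would both write $\mu_1(\vc)$), and the outgoing conflict of $\trans[i+1]$ cannot again be on $\mu_i(\vc)$ (that would make $\trans[i], \trans[i+1], \trans[i+2]$ a forbidden triple), so it is on \textsf{Boolean} or \textsf{InitialConflict}, forcing $\trans[i+2]$ to be based on $\ttinitb$ and thereby starting a fresh $\ttinit$--$\ttinitb$--domino block. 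The hard part is excluding such a second block; I plan to close it by a descent argument, observing that each new block must itself terminate (by re-applying Lemmas~\ref{lem:pcp-lem-2} and~\ref{lem:pcp-lem-3} from $\trans[i+2]$) in another $\ttclose$ whose \textsf{PCPSolution}-write again connects only to an $\ttinit$, while the finite cycle reaches the split transaction $\trans[1]$ through $\mu_1(\vc)$ exactly once. Hence $\trans[i+1]=\trans[1]$, so $i+1 \equiv 1 \pmod m$, i.e.\ $i=m$, contradicting $i<m$; everything except this uniqueness-of-block step is a direct application of the write-only type observations together with the consecutiveness principle.
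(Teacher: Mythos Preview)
Your structural ``consecutiveness principle'' correctly rules out the \textsf{DominoSequence} write as the outgoing operation $b_i$, and from there your deduction that the successor $T_j$ is based on $\ttinit$ is sound; this is a legitimate alternative to the paper's template-by-template exclusion via Lemma~\ref{lem:pcp-dagger}. The gap is in the final step. Your descent argument does not exclude a cycle made of several consecutive $\ttinit$--$\ttinitb$--domino--$\ttclose$ blocks: structurally nothing prevents each block's $\ttclose$ from connecting to the next block's $\ttinit$ on a \emph{fresh} \textsf{PCPSolution} tuple, with only the final block closing on $\mu_1(\vc)$. Re-applying Lemmas~\ref{lem:pcp-lem-2} and~\ref{lem:pcp-lem-3} to a new block (even granting that they transfer) only confirms its internal shape; it never yields a contradiction, and ``the finite cycle reaches $T_1$ through $\mu_1(\vc)$ exactly once'' is true of any such multi-block cycle, so it cannot be the hook.

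What closes the argument is precisely the functional-constraint chase you have been avoiding. Lemmas~\ref{lem:pcp-lem-1}--\ref{lem:pcp-lem-3} already give $\mu_1(\vs_1)=\mu_i(\vs_1)$. The conflict $\mu_i(\vc)=\mu_j(\vc)$, together with the constraint $\vs_1=\ufunc{solution-string}(\vc)$ present in both $\ttclose$ and $\ttinit$, gives $\mu_i(\vs_1)=\mu_j(\vs_1)$, hence $\mu_1(\vs_1)=\mu_j(\vs_1)$. Then the composite $\vi = (\ufunc{defines}\circ\ufunc{is-non-empty})(\vs_1)$ in $\ttinit$ yields $\mu_1(\vi)=\mu_j(\vi)$, and Lemma~\ref{lem:pcp-dagger} forces $j=1$ in one line. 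The templates were engineered so that this single $\vs_1$-value threads through every transaction in the cycle, and that is exactly what collapses any would-be second $\ttinit$ onto $T_1$; a purely write-graph argument cannot see it.
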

\begin{proof}
Let $T_j$ be the transaction following $T_i$. We already know about $T_j$ that either $j=1$ or must be a transaction that is different to all foregoing transactions  $T_1, \ldots, T_i$ (due to Lemma~\ref{lem:robustminimalcondition}).

    We first show, by exclusion, that transaction $T_j$ is based on $\ttinit$: Transaction $T_{j}$ cannot be based on $\ttclose$, as then either $\mu_i(\vb) = \mu_{j}(\vb)$ or $\mu_i(\vc) = \mu_{j}(\vc)$, which directly contradicts Lemma~\ref{lem:pcp-dagger}.
Similarly, transaction $T_{j}$ cannot be based on $\ttinitb$,  as then $\mu_i(\vb) = \mu_{j}(\vb)$ implying $\mu_2(\vs_1) = \mu_i(\vs_1) = \mu_{j}(\vs_1)$, due to the constraints involving function $\ufunc{future-solution-string}$.
Finally, transaction $T_{j}$ cannot be based on a domino \ptrans{}, because then $\mu_{j}(\vb_\text{next}) = \mu_{i}(\vb) = \mu_{j}(\vb)$
    or $\mu_{i-1}(\vb_\text{next}) = \mu_{i}(\vb) = \mu_{j}(\vb_\text{next})$, thus with $T_{i}$ and $T_{j}$ contradicting Lemma~\ref{lem:pcp-dagger}. We can thus indeed conclude that transaction $T_j$ is based on $\ttinit$. 

To see that $j=1$, recall that $\mu_1(\vs_1) = \mu_{i-1}(\vs_1)$ and the only possible conflict between $T_i$ and $T_j$ implies $\mu_i(\vc) = \mu_{j}(\vc)$. 
From the latter we obtain  $\mu_{i-1}(\vs_1) = \mu_{i}(\vs_1)$, due to $\mu_{i-1}(\vb_\text{next}) = \mu_{i}(\vb)$ and function $\ufunc{future-solution-string}$.
    From this it follows that $\mu_1(\vi) = \mu_j(\vi)$ through $\vi = (\ufunc{defines} \circ \ufunc{is-non-empty})(\vs_1)$ in \ptrans{} $\ttinit$. That $j=1$ then follows from Lemma~\ref{lem:pcp-dagger}. 
\end{proof}

Finally, we show that if there is a multiversion split schedule with the properties of Lemma~\ref{lem:robustminimalcondition} that is a schedule-encoding for a sequence of dominoes $\vec{d}$, then this sequence $\vec{d}$ is also a solution to the respective PCP problem. The next Proposition thus finalizes the proof for the if-direction of Theorem~\ref{theo:equaltemplates}.

\begin{prop}\label{pro:pcp-result}
    Let $\dominoset$ be a set of dominoes. Let $\schedule$ be a multiversion split schedule with the properties of Lemma~\ref{lem:robustminimalcondition} that is consistent with the \ptranss{} in Figure~\ref{fig:pcptemplates} for $\dominoset$ and with some database $\db$. If $\schedule$ is a schedule-encoding of a sequence $\vec{d}$ of dominoes in $\dominoset$, then $\vec{d}$ is a solution for the PCP problem on input $\dominoset$. 
\end{prop}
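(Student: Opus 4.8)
The plan is to decode the single ``solution tuple'' produced by $\ttclose$ and to read off from it that the top and bottom concatenations of $\vec{d}$ coincide. I keep the notation of Proposition~\ref{pro:pcp-schedule-has-desired-form}: $\trans[1]$ is based on $\ttinit$, $\trans[2]$ on $\ttinitb$, the transactions $\trans[3],\ldots,\trans[m-1]$ on the domino \ptranss{} $\ttdomino{d_1},\ldots,\ttdomino{d_r}$, and $\trans[m]$ on $\ttclose$, with admissible mappings $\mu_i$. Write $\vec{a_i},\vec{b_i}$ for the top and bottom value of $d_i$, and put $N=\sum_i|\vec{a_i}|$, $M=\sum_i|\vec{b_i}|$. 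First I would fix the shared anchors. Using the threading of the underlying \cqsequence{} together with Lemma~\ref{lem:pcp-dagger} and Lemmas~\ref{lem:pcp-lem-1}--\ref{lem:pcp-lem-4}, the conflicts force $\mu_{i}(\vb_{\text{next}})=\mu_{i+1}(\vb)$ along the domino transactions, $\mu_2(\vb)=\mu_3(\vb)$ for the empty sequence, and $\mu_{m-1}(\vb_{\text{next}})=\mu_m(\vb)$ for the full sequence. Chasing $\vs_0=\ufunc{empty-string}(\vb)$ and $\vs_1=\ufunc{future-solution-string}(\vb)$ through this threading shows that all transactions agree on one empty-string tuple $\epsilon:=\mu_2(\vs_0)$ and one solution-string tuple $X:=\mu_2(\vs_1)$.

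Second, the constraints of $\ttclose$, namely $\vs_t=\ufunc{top-string}(\vb)$, $\vs_b=\ufunc{bottom-string}(\vb)$ and $\vs_t=\vs_b=\vs_1=\ufunc{solution-string}(\vc)$, give the crucial identity $\ufunc{top-string}^{\db}(\mu_m(\vb))=\ufunc{bottom-string}^{\db}(\mu_m(\vb))=X$: the top string and the bottom string of $\vec{d}$ are \emph{the same tuple}. Tracing the append-, detach- and top-constraints inside each $\ttdomino{d_i}$ and gluing them along the threading then yields, on the top side, that $X$ is obtained from $\epsilon$ by a chain of $N$ append-constraints (one per symbol of $\vec{a_1}\cdots\vec{a_r}$) with $(\ufunc{detach}^{\db})^{N}(X)=\epsilon$, and symmetrically on the bottom side with $M$ append-constraints and $(\ufunc{detach}^{\db})^{M}(X)=\epsilon$.

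Third, I decode $X$. Since $\ufunc{detach}^{\db}$ is a function, the sequence $X,\ \ufunc{detach}^{\db}(X),\ (\ufunc{detach}^{\db})^{2}(X),\ldots$ is uniquely determined, so the top and the bottom decodings describe the \emph{same} chain. The extraction constraints $\vs_{\mathit{a_j}}=\ufunc{top}(\cdots)$, read together with $\vs_0=\ufunc{empty-string}(\vb)$ and $\vs_1=\ufunc{future-solution-string}(\vb)$, force $\ufunc{top}^{\db}$ of every tuple ending in the symbol $0$ (resp.\ $1$) to equal the canonical tuple $\epsilon$ (resp.\ $X$). Applying $\ufunc{top}^{\db}$ to the common chain therefore reads the $k$-th symbol identically from the top and from the bottom decoding, so the two symbol sequences agree character by character; the required distinctness of the two canonical symbols is exactly $X\neq\epsilon$, which follows from the $\ufunc{is-non-empty}$ constraints of $\ttinit$ and $\ttinitb$ together with $\mu_1(\vx_1)\neq\mu_2(\vx_2)$ (Lemma~\ref{lem:pcp-lem-1}).

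The main obstacle is matching the two \emph{lengths} $N$ and $M$, i.e.\ ruling out that one decoding reaches $\epsilon$ strictly before the other. The handle is the error tuple $\mathrm{err}:=\ufunc{detach}^{\db}(\epsilon)=\mu_2(\vs_e)$, which by the constraints $\vs_e=\ufunc{detach}(\vs_0)$ and $\vs_e=\ufunc{detach}(\vs_e)$ of $\ttinitb$ is a fixpoint of $\ufunc{detach}^{\db}$. The delicate step I would prove is that the $\ufunc{is-error}$ gadget forces $\mathrm{err}$, $\epsilon$ and $X$ to be pairwise distinct, using once more $\mu_1(\vx_1)\neq\mu_2(\vx_2)$ and the constraint $\vs_e=\ufunc{error-string}(\vi)$ of $\ttinit$. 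Given this, once the deterministic chain hits $\epsilon$ its successor is $\mathrm{err}\neq\epsilon$ and it is thereafter trapped at the fixpoint $\mathrm{err}$; hence $\epsilon$ occurs exactly once in the chain, and since both decodings land on $\epsilon$ (at step $N$ and step $M$ respectively) we get $N=M$. Combining $N=M$ with the character-by-character agreement gives $\vec{a_1}\cdots\vec{a_r}=\vec{b_1}\cdots\vec{b_r}$; as $r\ge 1$ and $X\neq\epsilon$ ensure the sequence and string are non-empty, $\vec{d}$ is a genuine solution of the PCP instance $\dominoset$. I expect the pairwise distinctness of $\epsilon$, $X$ and $\mathrm{err}$ from the Boolean constraints — the point that makes termination of the decoding rigorous — to be the hardest part.
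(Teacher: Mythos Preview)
Your overall strategy matches the paper's: thread the \cqsequence{} to fix the shared anchors $\epsilon=\mu_2(\vs_0)$ and $X=\mu_2(\vs_1)$, use the $\ttclose$ constraints to equate the top- and bottom-string of the final domino sequence with $X$, exploit determinism of $\ufunc{detach}^{\db}$ to align the two backward chains, invoke the error tuple to force equal lengths, and finally read off symbols via $\ufunc{top}$. Your exposition is in several places sharper than the paper's (e.g.\ the explicit uniqueness of the detach chain and the explicit identification of $\mathrm{err}\neq\epsilon$ as the crux of the length-matching step).

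There is, however, a genuine gap precisely in the step you flag as hardest. You propose to obtain $\mathrm{err}\neq\epsilon$ from $\mu_1(\vx_1)\neq\mu_2(\vx_2)$ together with $\vs_e=\ufunc{error-string}(\vi)$ in $\ttinit$. But the $\ufunc{is-error}$ constraint of $\ttinit$ is applied to $\mu_1(\vs_e)=\ufunc{error-string}^{\db}(\mu_1(\vi))$, whereas the error tuple appearing in your detach chain is $\mathrm{err}=\mu_2(\vs_e)=\ufunc{detach}^{\db}(\epsilon)$, and no functional constraint in Figure~\ref{fig:pcptemplates} connects these two instantiations of $\vs_e$: there is no $\ufunc{error-string}$ constraint in $\ttinitb$ and no $\ufunc{detach}$ constraint in $\ttinit$. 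Hence the Boolean gadget yields $\mu_1(\vs_e)\neq\epsilon$, not $\mathrm{err}\neq\epsilon$. The paper's own proof is equally elliptic at this point --- it asserts a contradiction with ``$\vs_e$ can never equal the instantiation for $\vs_1$'' without specifying which $\vs_e$ is meant or deriving that inequality --- so you have correctly located the delicate spot, but the specific route you sketch does not close it.
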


\begin{proof}
Let $a_1a_2\ldots a_h$ and $b_1b_2\ldots b_k$ be the two strings (with $a_i, b_i \in \Sigma$) obtained by reading from left to right, symbol by symbol, the values on the top, respectively, the bottom of dominoes $d_1, \ldots, d_r$. Let us say that $a_1a_2\ldots a_h = \vec{a_1}\vec{a_2}\ldots\vec{a_r}$ and $b_1b_2\ldots b_k = \vec{b_1}\vec{b_2}\ldots\vec{b_r}$. Notice that $h$ and $k$ are not necessarily equal to $r$ as the top and bottom value of an individual domino can be of different length.

For convenience of notation, we introduce for every $i \in \{0, \ldots, h\}$ and $j \in \{1, \ldots, k\}$ the following notation:
\begin{align*}
\alpha_i \mydef & (\ufunc{append-$a_i$}\circ  \ufunc{append-$a_{i-1}$} \circ \cdots \circ \ufunc{append-$a_1$})(\mu_2(\vs_0)). \\
\beta_j \mydef & (\ufunc{append-$b_j$}\circ  \ufunc{append-$b_{j-1}$} \circ \cdots \circ \ufunc{append-$b_1$})(\mu_2(\vs_0)). 
\end{align*}

    First, we show that 
    \begin{align}
        \alpha_h = \mu_2(\vs_1) = \beta_k.\label{align:a}
    \end{align}
    This result follows from the assumed structure of schedule $\schedule$. More precisely, since an instantiation of $\ttinitb$ with an instantiation of $\ttdomino{d_1}$ can only have conflicts on instantiations of $\W[]{\vb:DominoSequence}$, we have $\mu_2(\vb) = \mu_3(\vb)$, from which it follows that $\mu_2(\vs_1) = \mu_3(\vs_1)$.

    For every individual instantiation of $\ttdomino{d_i}$ in $\schedule$, we have that $\ufunc{append-$a_i^{\ell_a}$}\circ\cdots \circ \ufunc{append-$a_i^1$}(\mu_i(\vs_t)) = \mu_i(\vs_{t\vec{a_i}})$ and $\ufunc{append-$b_i^{\ell_b}$}\circ\cdots \circ \ufunc{append-$b_i^1$}(\mu_i(\vs_b)) = \mu_i(\vs_{b\vec{b_i}})$, with $\vec{a_i} = a_1a_2\ldots a_{\ell_a}$ and $\vec{b_i} = b_1b_2\ldots b_{\ell_b}$. 

    For transactions $\trans[i]$, with $i \in \{3,\ldots,m+1\}$, (thus representing an instantiation of $\ttdomino{d_{i-2}}$ which is followed in $\schedule$ by an instantiation of $\ttdomino{d_{i-1}}$), the only possible conflict is between the instantiation of $\W[]{\vb_{next}:DominoSequence}$ ($\trans[i]$) and the instantiation of $\W[]{\vb:DominoSequence}$ in ($\trans[i+1]$) -- notice that this is indeed the only option due to Lemma~\ref{lem:pcp-dagger} -- thus with $\mu_i(\vb_{next}) = \mu_{i+1}(\vb)$, implying $\mu_i(\vs_{t\vec{a_i}}) = \mu_{i+1}(\vs_t)$, $\mu_i(\vs_{b\vec{b_i}}) = \mu_{i+1}(\vs_b)$, and $\mu_{i}(\vs_1) = \mu_{i+1}(\vs_1)$.

    Finally, transaction $\trans[m-1]$ (an instantiation of $\ttdomino{d_m}$) can only conflict with transaction $\trans[m]$ (an instantiation of $\ttclose$) on instantiations of $\vb_{next}$ (in $\ttdomino{d_m}$, and $\vb$ (in $\ttclose$), thus with $\mu_{m-1}(\vb_{next}) = \mu_{m}(\vb)$, implying $\mu_{m-1}(\vs_{t\vec{a_r}}) = \mu_m(\vs_t) = \mu_{m}(\vs_b) = \mu_{m-1}(\vs_{b\vec{b_r}})$. 

    Combining the above equalities indeed proves Condition~(\ref{align:a}).

    From Condition~(\ref{align:a}) we can now derive that,
    \begin{align}
        \alpha_i = \mu_2(\vs_1) = \beta_i, \text{ for every $i \in \{ 1,\ldots \min\{h,k\}\}$,}\label{align:b}
    \end{align}
    by following an analogous approach. Indeed, in every instantiation of $\ttdomino{i}$, there is a functional constraint for every application of the append function that requires its input to be the result of the detach function applied over its output, which indeed implies Condition~(\ref{align:b}).

    To see that $k=h$, we observe that $k \ne h$ implies an application of the detach function over the instantiation of $\vs_1$ (for which we already argued it has the same tuple assigned for every domino instantiation) for the shortest string, which contradicts with Condition~(\ref{align:a}) because such an application results in the same instantiation as $\vs_e$,  which can never equal the instantiation for $\vs_1$.

    The desired result that the individual symbols in the top and bottom reads of dominoes in sequence $\vec{d}$ are the same now follows from the functional constraint that every interpretation of a string sequence mapped via function $\ufunc{top}$ onto either the interpretation for $\vs_1$ (representing symbol $1 \in \Sigma$) or $\vs_0$ (representing symbol $0 \in \Sigma$).
\end{proof}

\section{Robustness for \shortPtranss{} admitting Multi-Tree Bijectivity}
\label{sec:MTBij}

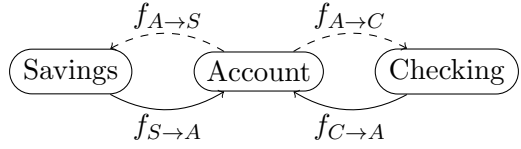
\begin{figure}[t]
        \centering
        \begin{tikzpicture}[scale=1]
        \node[draw, rounded rectangle,align=center] (A) at (0,0) {\Account};
        \node[draw, rounded rectangle,align=center] (S) at (-2.5,0) {\Savings};
        \node[draw, rounded rectangle,align=center] (C) at (2.5,0) {\Checking};
        \draw[->,dashed] (A) to [bend right] node[midway,above=-.2em] {$\fas$} (S) ;
        \draw[->] (S) to [bend right] node[midway,below=-.2em] {$\fsa$} (A) ;
        \draw[->,dashed] (A) to [bend left] node[midway,above=-.2em] {$\fac$} (C) ;
        \draw[->] (C) to [bend left] node[midway,below=-.2em] {$\fca$} (A) ;
        \end{tikzpicture}
        \caption{Schema graph for the SmallBank benchmark. The dashed edges correspond to the multi-tree schema graph for the schema restricted to $\fas$ and $\fac$.}
        \label{fig:sb:sg}
    \end{figure}

We say that a set of \ptranss{} $\workload$ over a schema $(\schRel, \schFunc)$ admits \emph{multi-tree bijectivity} if a disjoint partitioning of $\schFunc$ in pairs $(f_1, g_1),(f_2, g_2), \ldots, (f_n, g_n)$ exists such that $\domf(f_i) = \rangef(g_i)$ and $\domf(g_i) = \rangef(f_i)$ for every pair of function names $(f_i, g_i)$;
every schema graph $\sg{\schRel, \{h_1, h_2, \ldots, h_n\}}$ over the schema restricted to function names $\{h_1, h_2, \ldots, h_n\}$ (with $h_i = f_i$ or $h_i= g_i$) is a multi-tree; and,
for every pair of function names $(f_i, g_i)$ and for every pair of variables $\vx, \vy$ occurring in a \shortptrans{} $\tau_j \in \workload$, we have $f_i(\vx) = \vy \in \Cset_j$ iff $g_i(\vy) = \vx \in \Cset_j$.
Intuitively, we can think of $f_i$ as a bijective function, with $g_i$ its inverse. We denote the class of all sets of \shortptranss{} admitting multi-tree bijectivity by $\mtbtemplates$. The SmallBank benchmark {given in Figure~\ref{fig:smallbank-abstract-syntax}} is in $\mtbtemplates$, witnessed by the partitioning $\{(\fac, \fca), (\fas, \fsa)\}$. {For example}, the schema graph restricted to $\fac$ and $\fas$ is a tree and therefore also a \multitree, as illustrated in Figure~\ref{fig:sb:sg}. 

The next theorem allows disequalities whereas Theorem~\ref{theo:equaltemplates} does not require them.

\begin{thm}\label{theo:1dettemplates}\label{theo:mtbtemplates}
    \RobustnessTempTwo{\mtbtemplates}{\mvrc} is decidable in \NLOGSPACE.
\end{thm}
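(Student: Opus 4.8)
The plan is to reduce non-robustness to a reachability problem on a graph of polynomial size. The starting point is the characterization of non-robustness by multiversion split schedules (Theorem~\ref{theo:characterization:split-shedules}) together with Lemma~\ref{lem:robustminimalcondition}, which lets me restrict attention to schedules $\prefix{\trans[1]}{b_1}\cdot \trans[2]\cdots \trans[m]\cdot \postfix{\trans[1]}{b_1}$ in which each transaction depends only on its successor in a conflict cycle. Thus a witness for non-robustness is a cyclic chain of conflicts through instantiations of the given templates, and the task is to detect, over all databases $\db$ and all consistent transaction sets, whether such a chain exists. I would show that, under multi-tree bijectivity, the mere existence of such a chain is captured by a purely combinatorial object on the templates, and that searching for this object is an \NLOGSPACE{} reachability question.

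The key structural consequence of multi-tree bijectivity is that the relations together with the inverse pairs $(f_i,g_i)$ form a \emph{forest of bijections}: requiring that either orientation of each pair yields a multi-tree rules out multiple edges and cycles, so between any two relation types there is at most one connecting path and the induced relationship is a single composition of bijections. I would package this as a propagation lemma: whenever two transactions conflict on a shared tuple, identifying a variable $\vx$ of one template with a variable $\vy$ of the other of the same type, the functional constraints force a unique, path-independent pattern of further tuple equalities, and, conversely, a database can be chosen realizing exactly these forced equalities and no others. This makes the relevant state of a transaction in the chain finite: a \emph{configuration} records the current template and the variable through which it was entered, plus constantly many bits recording its position relative to the split of $\trans[1]$ and how its context connects to the prefix of $\trans[1]$. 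An abstract witness is then a cyclic sequence of configurations such that consecutive templates admit a genuine conflict compatible with the propagated identifications, the disequalities of each visited template survive the forced equalities, and the boundary conditions (\ref{c:1})--(\ref{c:3}) of Definition~\ref{def:mvsplitschedule} hold.

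I would then establish that abstract witnesses correspond exactly to actual non-robustness. The abstraction direction is routine: from a minimal multiversion split schedule one simply reads off the configuration cycle. The realizability direction is the crux, and here the forest-of-bijections structure does the decisive work. From an abstract witness I would build a database $\db$ with fresh tuples and bijective interpretations $f_i^{\db}$ realizing precisely the forced equalities. The closing of the cycle is then automatic: both the composition of bijections accumulated around the chain and the internal relationship prescribed by $\tau_1$ between $b_1$ and $a_1$ traverse the \emph{same} unique forest path between the two relation types, so by path-uniqueness and the fact that $g_i^{\db}=(f_i^{\db})^{-1}$ they compute the same bijection, and the tuple returned by the chain necessarily matches the one required at $a_1$. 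The remaining freedom is used to honour all disequalities and to satisfy Condition~(\ref{c:1}): a write in an intermediate transaction can always be placed on a tuple distinct from those written in $\prefix{\trans[1]}{b_1}$ unless the conflict chain itself forces the collision, and the witness search is designed to reject exactly those chains that force such a collision.

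For the complexity bound I would observe that configurations are encodable in $O(\log n)$ bits, being indices into the input templates and their variables, that the existence of a conflict edge between two configurations and compatibility with the functional constraints are checkable directly from the input, and that the boundary conditions can be verified while traversing. A nondeterministic machine can therefore guess the configuration cycle one step at a time, storing only the current configuration, a pointer into $\tau_1$ for the split, and the bookkeeping needed to recognize the closing conflict and to monitor Condition~(\ref{c:1}). Deciding non-robustness thus becomes graph reachability, which lies in \NLOGSPACE{}; since \NLOGSPACE{} is closed under complement, \RobustnessTempTwo{\mtbtemplates}{\mvrc} is in \NLOGSPACE{}. The hardest part, and the step that genuinely uses all three MTB conditions, is proving that step-local consistency already implies global realizability, so that the logarithmic-space configuration carries enough information to certify that the entire cyclic chain can be instantiated over a single database while respecting every disequality and Condition~(\ref{c:1}).
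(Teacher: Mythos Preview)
Your high-level strategy coincides with the paper's: reduce non-robustness to a reachability problem over a finite configuration graph, using the ``forest of bijections'' consequence of multi-tree bijectivity (the paper's Lemma~\ref{lemma:mtbconnected}) to argue that connected same-type variables must receive the same tuple. Where your sketch falls short is in pinning down the finite abstraction. You assert that a configuration needs only ``constantly many bits recording\ldots how its context connects to the prefix of $\trans[1]$,'' but never say what those bits are or prove that constantly many suffice. The paper's answer is that four \emph{type mappings} are enough (Lemma~\ref{lemma:mtbchar}): two reserved for the connected components of $o_1$ and $p_1$ in $\tau_1$, and two more to cycle through for all other components. This is the technical heart of the result, and it is not obvious---one has to show that identifications with $\tau_1$ are the only non-local equalities that matter for Condition~(\ref{c:1}) and for disequalities, and that all other equalities can be refreshed at each step.

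Relatedly, your claim that ``the closing of the cycle is then automatic'' is not correct as stated. Closing is a genuine constraint: if every $o_i,p_i$ pair is connected in $\tau_i$ and $o_1,p_1$ are connected in $\tau_1$, then conditions (\ref{e3}) and (\ref{e8}) force all type mappings around the cycle to coincide, which can then clash with a disequality in some $\tau_i$ via condition~(\ref{e4}). Conversely, if the chain has a break, you must \emph{choose} the outgoing mapping so that it eventually meets $c_{p_1}$; this is why the algorithm guesses $E_1$ and $E_m$ up front and checks the join at the end. Path-uniqueness guarantees that a single type mapping determines all tuples in a connected component consistently---that is what makes the abstraction sound---but it does not by itself guarantee that the cycle closes without violating a disequality or Condition~(\ref{c:1}). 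Your proposal would become a proof once you (i) state and prove the four-mapping bound, and (ii) replace the ``automatic closing'' paragraph with an explicit argument that the guessed mappings at $E_1$ and $E_m$ close the cycle and that conditions (\ref{e2})--(\ref{e6}) are exactly what is needed for realizability.
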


The approach followed in the proof of Theorem~\ref{theo:1dettemplates} is to repeatedly pick a transaction template while maintaining an overall consistent variable mapping in search for a counterexample multiversion split schedule 
that by Theorem~\ref{theo:characterization:split-shedules} suffices to show that robustness does not hold. 
The main challenge is to show that a variable mapping consistent with all functional constraints can be maintained in logarithmic space and that all requirements for a multiversion split schedule can be verified in \NLOGSPACE.

Central to our approach is a generalization of conflicting operations.
Let $\workload$ be a set of \ptranss{}. 
 For
  $\tau_i$ and $\tau_j$ in $\workload$, we say that an operation $o_i \in \tau_i$ is \emph{potentially conflicting} with an operation $o_j \in \tau_j$ if {$o_i$ and $o_j$ are operations over a variable of the same type, and at least one of the following holds:}
}
\begin{itemize}
    \item $\WriteSet{o_i}\cap \WriteSet{o_j}\neq \emptyset$ (potentially ww-conflicting);
    \item $\WriteSet{o_i}\cap \ReadSet{o_j}\neq \emptyset$ (potentially wr-conflicting); or
    \item $\ReadSet{o_i}\cap \WriteSet{o_j}\neq \emptyset$ (potentially rw-conflicting).
\end{itemize}
Intuitively, potentially conflicting operations lead to conflicting operations when the variables of these operations are mapped to the same tuple by a variable assignment. In analogy to \conflictquadruples{} over a set of transactions {as in Definition~\ref{def:mvsplitschedule}}, we consider \emph{\pconflictquadruples{}} $(\tau_i, o_i, p_j,  \tau_j)$ over 
$\workload$ with $\tau_i, \tau_j \in \workload$, and $o_i \in \tau_i$ an operation that is potentially conflicting with an operation $p_j \in \tau_j$.
For a \pcqsequence{} $D = (\tau_1, o_1, \allowbreak p_2,\allowbreak \tau_2), \ldots, (\tau_m, o_m, p_1, \tau_1)$ over  
$\workload$, we write $\transcopies{D}$ to denote the set $\{\tau_1,\ldots,\tau_m\}$ of \ptranss{} mentioned in $D$. For ease of exposition, we assume a variable renaming such that any pair of templates in $\transcopies{D}$ uses a disjoint set of variables.\footnote{To be formally correct, the latter would require to add every such variable-renamed template to $\workload$ creating a larger set $\workload'$. This does not influence the complexity of Theorem~\ref{theo:mtbtemplates} as $\transcopies{D}$ nor $\workload'$ are used in the algorithm. Their only purpose is to reason about properties of $\bar \mu$.}
The sequence $D$
induces a \cqsequence{} $C = (\trans[1], b_1, a_2, \trans[2]),\allowbreak \ldots,\allowbreak (\trans[m], b_m, a_1, \trans[1])$
by applying a variable assignment $\mu_i$ to each $\tau_i$ in $\transcopies{D}$.
We call such a set of variable assignments simply a \emph{variable mapping} for $D$, denoted $\barmu$, and write $\barmu(D) = C$. For a variable $\vx$ occurring in a \shortptrans{} $\tau_i$, we write $\barmu(\vx)$ as a shorthand notation for $\mu_i(\vx)$, with $\mu_i$ the variable assignment over $\tau_i$ in $\barmu$. This is well-defined as all templates in $\transcopies{D}$ are variable-disjoint.
Furthermore, $\barmu(\myvar{o_i}) = \barmu(\myvar{p_j})$ for each \pconflictquadruple{} $(\tau_i, o_i, p_j, \tau_j)$ in $D$ 
as otherwise the induced quadruple $(\trans[i], b_i, a_j, \trans[j])$ is not a valid \conflictquadruple{} in $C$.
We {say} that a variable mapping $\barmu$ is admissible for a database $\db$ if every variable assignment $\mu_i$ in $\barmu$ is admissible for $\db$.

A basic insight is that if there is a multiversion split schedule $\schedule$ for some $C$ over a set of transactions $\transset$ consistent with $\workload$ and a database $\db$,
then there is a \pcqsequence{} $D$ such that $\barmu(D) = C$ for some $\barmu$. We will verify the existence of such a $C$, 
satisfying the properties of Definition~\ref{def:mvsplitschedule}, 
by nondeterministically constructing $D$ on-the-fly together with a mapping $\bar\mu$. We show in Lemma~\ref{lemma:mtbschedule:types} that when $\workload\in \mtbtemplates$,
$\bar \mu$ is a collection of disjoint type mappings  (that map variables of the same type to the same tuple) such that variables that are ``connected'' in $D$ (in {a} way that we will make precise next) are mapped using the same type mapping. Lemma~\ref{lemma:mtbchar} then shows that already a constant number of those type mappings suffice. 

We introduce the necessary notions to capture when two variables are connected in $D$. We can think of equality constraints $\vy = f(\vx)$ in a template $\tau$ as constraints on the possible variable assignments $\mu$ for $\tau$ when a database $\db$ is given. Indeed, if we fix $\mu(\vx)$ to a tuple in $\db$, then $\mu(\vy) = f^\db(\mu(\vx))$ is immediately implied.
These constraints can cause a chain reaction of implications. If for example $\vz = g(\vy)$ is a constraint in $\tau$ as well, then $\mu(\vx)$ immediately implies $\mu(\vz) = g^\db(f^\db(\mu(\vx)))$. We formalize this notion of implication next. We use sequences of function names 
$F = f_1\cdots f_n$, 
denoting the empty sequence as $\epsilon$ and the concatenation of two sequences $F$ and $G$ by $F \cdot G$.
For two variables $\vx, \vy$ occurring in a \shortptrans{} $\tau$ and a (possibly empty) sequence of function names $F$, we say that \emph{$\vx$ implies $\vy$ by $F$ in $\tau$}, denoted $\vx \varimplies{F}{\tau} \vy$, if $\vx = \vy$ and $F = \epsilon$ or if
there is a variable $\vz$ such that $\vy = f(\vz)$ is a constraint in $\tau$, $\vx \varimplies{F'}{\tau} \vz$ and $F = F' \cdot f$.
We next extend the notions of implication 
to sequences of \pconflictquadruples{}.
Let $D = (\tau_1, o_1, \allowbreak p_2,\allowbreak \tau_2), \ldots, (\tau_m, o_m, p_1, \tau_1)$ be a \pcqsequence{},
and let $\vx$ and $\vy$ be two variables occurring in \shortptranss{} $\tau_i$ and $\tau_j$ in $\transcopies{D}$, respectively.
Then \emph{$\vx$ implies $\vy$ by a sequence of function names $F$ in $D$}, denoted $\vx \varimplies{F}{D} \vy$ if
\begin{itemize}
    \item $i=j$
    and $\vx \varimplies{F}{\tau_i} \vy$ (implication within the same template);
    \item $F = \epsilon$ and $(\tau_i, o_i, p_j, \tau_j)$ or $(\tau_j, o_j, p_i, \tau_i)$ is a \pconflictquadruple{} in $D$ with $o_i$ (respectively $p_i$) an operation over $\vx$ and $p_j$ (respectively $o_j$) an operation over $\vy$ (implication between templates, notice that $\vx \varimplies{\epsilon}{D} \vy$ iff $\vy \varimplies{\epsilon}{D} \vx$); or
    \item there exists a variable $\vz$ such that $\vx \varimplies{F_1}{D} \vz$ and $\vz \varimplies{F_2}{D} \vy$ with $F = F_1 \cdot F_2$.
\end{itemize}
 
\noindent Two variables $\vx$ and $\vy$ occurring in $\transcopies{D}$ are \emph{connected in $D$}, denoted \conn{\vx}{\vy}{D}, if $\vx \varimplies{F}{D} \vy$ or $\vy \varimplies{F}{D} \vx$, 
or if there is a variable $\vz$ with $\conn{\vx}{\vz}{D}$ 
and either $\vz \varimplies{F}{D} \vy$ or $\vy \varimplies{F}{D} \vz$ for some sequence 
$F$. Furthermore, two variables $\vx$ and $\vy$ occurring in a  \shortptrans{} $\tau$ are \emph{connected in $\tau$}, denoted
\conn{\vx}{\vy}{\tau}, if $\vx \varimplies{F}{\tau} \vy$ or $\vy \varimplies{F}{\tau} \vx$, 
or if there is a variable $\vz$ with $\conn{\vx}{\vz}{\tau}$ 
and either $\vz \varimplies{F}{\tau} \vy$ or $\vy \varimplies{F}{\tau} \vz$ for some sequence 
$F$. These definitions of connectedness can be trivially extended to operations over variables: two operations in $D$ (respectively $\tau$) are connected in $D$ (respectively $\tau$) if they are over variables that are connected in $D$ (respectively $\tau$). When $F$ is not important we drop it from the notation. For instance, we denote by 
$\vx \varimplies{}{D} \vy$ that there is an $F$ with $\vx \varimplies{F}{D} \vy$. 

\begin{lem}\label{lemma:mtbconnected}
    Let $D$ be a \pcqsequence{} over
    $\workload\in \mtbtemplates$. Then \conn{\vx}{\vy}{D}
    implies $\vx \varimplies{}{D} \vy$ and $\vy \varimplies{}{D} \vx$.
    Furthermore, if $\type{\vx} = \type{\vy}$ 
        then $\barmu(\vx) = \barmu(\vy)$ for every variable mapping $\barmu$ for $D$ that is admissible for some database $\db$. 
\end{lem}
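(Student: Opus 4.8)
The plan is to prove the two assertions separately, reducing both to a single structural fact: under multi-tree bijectivity the implication relation $\varimplies{}{D}$ is \emph{symmetric}, so that connectedness collapses onto it.

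First I would show that $\varimplies{}{D}$ is an equivalence relation on the variables occurring in $\transcopies{D}$. Reflexivity holds since $\vx \varimplies{\epsilon}{\tau_i} \vx$ inside each template, and transitivity is immediate from the composition clause in the definition of $\varimplies{}{D}$. The only nontrivial property is symmetry, and this is precisely where the defining conditions of $\mtbtemplates$ enter. For a single within-template step arising from a constraint $\vy = f_i(\vx) \in \Cset_i$, the bijectivity condition guarantees that $\vx = g_i(\vy) \in \Cset_i$ as well, so $\vy \varimplies{g_i}{\tau_i} \vx$; an induction on the length of $F$ then shows that every within-template implication $\vx \varimplies{F}{\tau_i} \vy$ has a reverse $\vy \varimplies{\overline{F}}{\tau_i} \vx$, where $\overline{F}$ replaces each function name by its pair partner and reverses the order. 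The inter-template ($\epsilon$) steps are symmetric by definition, and symmetry is preserved under composition, so a structural induction over the derivation of $\vx \varimplies{F}{D} \vy$ yields $\vy \varimplies{}{D} \vx$. Since $\conn{\vx}{\vy}{D}$ is by definition the reflexive--symmetric--transitive closure of $\varimplies{}{D}$, and $\varimplies{}{D}$ is already an equivalence relation, the two coincide; this proves the first assertion.

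For the second assertion, fix an admissible $\barmu$ for $\db$ and, using the first part, a witness $\vx \varimplies{F}{D} \vy$. Admissibility makes every within-template step act as the corresponding database function and every $\epsilon$-step preserve the value, so $\barmu(\vy)$ is the image of $\barmu(\vx)$ under the composition $F^{\db}$ of the interpretations $h^{\db}$ of the function names $h$ in $F$. It remains to show $F^{\db}(\barmu(\vx)) = \barmu(\vx)$. Here I would read $F$ as a walk in the undirected \emph{pair graph} $G_u$ that has one undirected edge per pair $(f_i, g_i)$: each function step traverses the edge of its pair, and since $\type{\vx} = \type{\vy}$ this walk is \emph{closed}. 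The multi-tree condition, applied to all $2^n$ choices of representatives $h_i \in \{f_i, g_i\}$, forces $G_u$ to be a forest, since a second parallel pair or a cycle could be oriented so as to produce two distinct directed paths between some two relations, contradicting that every such restricted schema graph is a \multitree{}. I would then invoke the standard fact that in a forest every nonempty closed walk contains an immediate backtrack, and argue by induction on the length of $F$ that each backtrack cancels: at a backtrack the two consecutive steps are $f_i$ then $g_i$ (or $g_i$ then $f_i$), witnessed by a constraint $\vq = f_i(\vz)$, so the bijectivity condition gives the constraint $\vz = g_i(\vq)$ and admissibility yields $g_i^{\db}(f_i^{\db}(\barmu(\vz))) = \barmu(\vz)$; deleting the backtrack therefore produces a strictly shorter closed walk applying the same map to $\barmu(\vx)$ (still equal to $\barmu(\vy)$) and still witnessed by genuine constraints at every step. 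The induction bottoms out at the empty walk, where $F^{\db}$ is the identity, giving $F^{\db}(\barmu(\vx)) = \barmu(\vx) = \barmu(\vy)$.

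The main obstacle is the second assertion, and specifically the interaction of its two ingredients: the purely combinatorial claim that the multi-tree condition (over \emph{all} orientations) makes $G_u$ a forest, and the semantic claim that each backtrack genuinely cancels. The latter is delicate because the two halves $f_i$ and $g_i$ of a backtrack may be supplied by different templates linked through $\epsilon$-steps, so the inverse relation $g_i^{\db} \circ f_i^{\db} = \mathrm{id}$ cannot be assumed to hold globally; it must be re-derived locally at each backtrack from the bijective-constraint condition together with admissibility, and the induction must carry along the invariant that every step of the current walk remains witnessed by an actual constraint so that this local argument stays available.
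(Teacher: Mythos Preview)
Your argument is correct. The first assertion matches the paper's approach exactly: the paper isolates the symmetry of $\varimplies{}{D}$ under the bijectivity hypothesis as a separate lemma and then collapses connectedness onto it by structural induction. For the second assertion the paper takes a different, outside-in induction: it too unfolds $\vx\varimplies{F}{D}\vy$ into a chain of variables $\vx_1,\ldots,\vx_n$, but rather than locating an arbitrary interior backtrack it fixes a subchain with same-type endpoints and no interior variable of that type, and uses a structural lemma on the schema (equivalent to your ``$G_u$ is a forest'') to show that the first and last function steps of such a subchain always form a pair; bijectivity plus the inductive hypothesis on the strictly shorter interior then gives the value equality. Your inside-out backtrack cancellation and the paper's outside-in peeling are two valid traversals of the same forest structure, and both hinge on the same local fact that the template supplying an $f_i$-constraint also supplies its partner $g_i$, so $g_i^{\db}(f_i^{\db}(\barmu(\vz)))=\barmu(\vz)$ at that specific value. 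One point worth sharpening in your write-up: after a cancellation the two surviving segments are glued only value-wise, not by a constraint or an $\epsilon$-step, so your invariant is best phrased at the level of values (each remaining function step is witnessed by a constraint at some variable carrying that value) rather than as a literal derivation chain in $D$; your final paragraph already anticipates exactly this subtlety.
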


\noindent Before proving the correctness of Lemma~\ref{lemma:mtbconnected}, we first present two additional lemmas that will be used in the correctness proof.

\begin{lem}\label{lemma:mtbschemaprops}
    Let $(\schRel, \schFunc)$ be a schema for which a disjoint partitioning of $\schFunc$ in pairs $P = (f_1, g_1),(f_2, g_2), \ldots, (f_n, g_n)$ exists such that $\domf(f_i) = \rangef(g_i)$ and $\domf(g_i) = \rangef(f_i)$ for every $(f_i, g_i) \in P$ and
    every schema graph $\sg{\schRel, \{h_1, h_2, \ldots, h_n\}}$ over the schema restricted to function names $\{h_1, h_2, \ldots, h_n\}$ with $h_i \in (f_i, g_i)$ is a multi-tree. Then:
    \begin{enumerate}
        \item\label{lemma:mtbschemaprops:1} there is no function name $f \in \schFunc$ with $\domf(f) = \rangef(f)$; and
        \item\label{lemma:mtbschemaprops:2} for every path in $\sg{\schRel, \schFunc}$ 
        \new{visiting a sequence of nodes} $R_1, R_2, \ldots, R_{m-1}, R_m$, if $R_1 = R_m$ and $R_1 \neq R_i$ for every $i \in [2, m-1]$,
        then $R_2 = R_{m-1}$ and $(f,g)$ is a pair in $P$ with $f$ the edge from $R_1$ to $R_2$ and $g$ the edge from $R_{m-1}$ to $R_m$.
    \end{enumerate}
\end{lem}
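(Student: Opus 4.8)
The plan is to derive both items from one structural fact: a \multitree{} is acyclic. A directed cycle (in particular a self-loop) would yield two distinct directed walks between two of its nodes --- for instance the empty walk and the full cycle, both from a node to itself --- contradicting the ``at most one directed path'' requirement. The whole argument then proceeds by \emph{selecting}, for each pair $(f_i,g_i)\in P$, one representative $h_i\in\{f_i,g_i\}$ so as to realise a forbidden cycle inside some restricted schema graph $\sg{\schRel,\{h_1,\dots,h_n\}}$, which by hypothesis is a \multitree{} and hence acyclic.

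For the first item, I would suppose some $f\in\schFunc$ had $\dom{f}=\range{f}=R$. Since $f$ lies in some pair, I choose the representative of that pair to be $f$ itself and the representatives of all other pairs arbitrarily. The resulting restricted schema graph then contains the self-loop corresponding to $f$ at node $R$, contradicting acyclicity of the \multitree{}. Hence no such $f$ exists.

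For the second item, I would take the cycle $R_1\to R_2\to\cdots\to R_{m-1}\to R_m=R_1$ with $R_1,\dots,R_{m-1}$ distinct, write $\ell_i$ for the function name labelling the edge from $R_i$ to $R_{i+1}$, and let $\pi_i\in P$ be the pair containing $\ell_i$. The first step is to show that two cycle edges come from the same pair: if all the $\pi_i$ were pairwise distinct, I could set the representative of each $\pi_i$ to the corresponding $\ell_i$ (picking the remaining representatives arbitrarily), so that the restricted schema graph would contain every edge of the cycle and thus the whole directed cycle --- again contradicting acyclicity. So there are indices $p\ne q$ with $\ell_p,\ell_q$ in a common pair $(f,g)$. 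Writing $A=\dom{f}$ and $B=\range{f}$, the partition conditions give $f\colon A\to B$ and $g\colon B\to A$, so these are the only two edges of this pair and they run in opposite directions between $A$ and $B$. In a simple cycle each vertex has a unique outgoing edge, so $f$ and $g$ can each occur at most once; since $\ell_p\ne\ell_q$ this forces $\{\ell_p,\ell_q\}=\{f,g\}$. Consequently the successor of $A$ on the cycle is $B$ and the successor of $B$ is $A$, i.e.\ applying the successor map twice returns to $A$. Because the cycle is simple, this means its length $m-1$ equals $2$, so $m=3$; then $R_2=R_{m-1}$, and the two edges $R_1\to R_2$ and $R_{m-1}\to R_m$ are exactly $f$ and $g$, a pair in $P$, as required.

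The main obstacle I anticipate is purely the bookkeeping around the cyclic indexing in the second item --- correctly arguing that $f$ (out of $A$) and $g$ (out of $B$) being forced makes the cycle close after two steps --- together with pinning down the exact reading of ``\multitree{}'' so that the acyclicity (and loop-freeness) used in both selection arguments is justified. Once acyclicity is granted, the selection trick does all the real work.
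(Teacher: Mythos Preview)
Your selection trick is exactly the paper's idea, and part~(1) matches the paper's argument. For part~(2), however, you assume that $R_1,\ldots,R_{m-1}$ are pairwise distinct, but the hypothesis only gives $R_1\neq R_i$ for $i\in[2,m-1]$; the interior nodes may repeat. Consequently your conclusion $m=3$ is too strong for the lemma as stated: with pairs $(f_{AB},g_{BA})$ and $(f_{BC},g_{CB})$ the path $A\to B\to C\to B\to A$ has $m=5$, satisfies the hypothesis, and satisfies the lemma's conclusion $R_2=R_{m-1}$, yet $m\neq 3$.

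The fix is what the paper does: argue by contradiction, assume $R_2\neq R_{m-1}$, and first shorten the path by excising interior loops (whenever $R_i=R_j$ with $2\le i<j\le m-1$, pass to $R_1,\ldots,R_i,R_{j+1},\ldots,R_m$). This preserves $R_2$, $R_{m-1}$, and the first and last edges, and yields a simple cycle on which your argument now applies verbatim; the resulting $m'=3$ contradicts $R_2\neq R_{m-1}$, and the pair conclusion transfers back since the first and last edges were untouched. Once this reduction is in place, your one-step argument for the simple cycle (locate a pair among the edges and deduce length~$2$) is in fact a little slicker than the paper's two-step version, which first shows $R_2=R_{m-1}$ and then separately checks that the two end edges form a pair by a second selection argument.
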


\begin{proof}
    Towards a contradiction, assume \emph{(\ref{lemma:mtbschemaprops:1})} does not hold. That is, there is a function name $f_i$ with $\domf(f_i) = \rangef(f_i) = R$ for some type $R$. Let $g_i$ be the function name such that $(f_i,g_i)$ is a pair in $P$. By definition, $\domf(g_i) = \rangef(g_i) = R$. But then we cannot pick a $h_i \in (f_i, g_i)$ such that the resulting schema graph is a multi-tree. Indeed, in both cases, there is a self-loop on $R$, leading to the desired contradiction.
   
    For \emph{(\ref{lemma:mtbschemaprops:2})}, assume towards a contradiction that $R_2 \neq R_{m-1}$. Without loss of generality, we can assume that each node is visited only once in $R_2, \ldots, R_{m-1}$.
    Otherwise, $R_2, \ldots, R_{m-1}$ contains a loop that can be removed from this sequence without altering $R_2$ and $R_{m-1}$.
    Since $R_1, R_2, \ldots, R_{m-1}, R_m$ is a path in $\sg{\schRel, \schFunc}$, there is a sequence of function names $F = e_1 \cdots e_{m-1}$ such that each $e_i$ is an edge from $R_i$ to $R_{i+1}$ in $\sg{\schRel, \schFunc}$, implying $\domf(e_i) = R_i$ and $\rangef(e_i) = R_{i+1}$.
    By assumption that each type $R_i$ occurs only once in $R_2, \ldots, R_{m-1}$ (notice that, for $i=1$, this follows from Condition~(2) of the lemma) and type $R_1 = R_m$ does not appear in $R_2, \ldots, R_{m-1}$, there is no pair of function names $e_i$ and $e_j$ in $F$ with $i\ne j$, $\domf(e_i) = \rangef(e_j)$ and $\rangef(e_i) = \domf(e_j)$.
    Therefore, at most one function name of each pair in $P$ appears in $F$.
    But then we can choose $h_i = e_i$ for each such pair in $P$, with $e_i$ the function name appearing in $F$.
    Since $F$ describes a cycle in $\sg{\schRel, \schFunc}$, the resulting schema graph restricted to these $h_i$ cannot be a multi-tree, as it contains a cycle.
    
    It remains to argue that if $f$ is the edge from $R_1$ to $R_2$ and $g$ is the edge from $R_{m-1}$ to $R_m$ on this path, then $(f, g)$ is a pair in $P$.
    To this end, note that $\domf(f) = \rangef(g)$ and $\rangef(f) = \domf(g)$, as $R_1 = R_m$ and $R_2 = R_{m-1}$.
    If $(f, g)$ is not a pair in $P$, then there are two pairs $(f, f')$ and $(g, g')$ in $P$ with $\domf(f) = \domf(g') = \rangef(f') = \rangef(g) = R_1$ and $\rangef(f) = \rangef(g') = \domf(f') = \domf(g) = R_2$.
    Then we can choose $f$ in $(f, f')$ and $g$ in $(g, g')$.
    Since the resulting schema graph cannot be a multi-tree, as there is a cycle between $R_1$ and $R_2$, this choice leads to a contradiction.
\end{proof}

\begin{lem}\label{lemma:varimpliessymmetry}
    Let $D$ be a \pcqsequence{} over $\workload\in \mtbtemplates$. Then
    \begin{enumerate}
        \item $\vx \varimplies{}{\tau} \vy$ iff $\vy \varimplies{}{\tau} \vx$ for every pair of variables $\vx$ and $\vy$ occurring in a template $\tau$; and
        \item $\vx \varimplies{}{D} \vy$ iff $\vy \varimplies{}{D} \vx$ for every pair of variables $\vx$ and $\vy$ occurring in $D$.
    \end{enumerate}
\end{lem}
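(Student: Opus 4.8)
The plan is to exploit the one defining feature of $\mtbtemplates$ that actually matters here: the constraint-pairing condition. Because $\schFunc$ is partitioned into pairs $(f_i,g_i)$ with $\domf(f_i)=\rangef(g_i)$ and $\domf(g_i)=\rangef(f_i)$, and because for every template $\tau_j$ the membership $\vy = f_i(\vz) \in \Cset(\tau_j)$ holds iff $\vz = g_i(\vy) \in \Cset(\tau_j)$, every functional constraint in a template comes together with its reverse. Concretely, writing $\hat f$ for the partner of a function name $f$ in the partition (which exists and is unique since the partition is disjoint and covers $\schFunc$), whenever $\vy = f(\vz)$ is a constraint of $\tau$ the constraint $\vz = \hat f(\vy)$ is also a constraint of $\tau$. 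This is the engine that turns a forward chain of implications into a backward one. I note in passing that the multi-tree property itself plays no role in this lemma; it is needed only later, for Lemma~\ref{lemma:mtbconnected}.

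Before the two inductions I would record a routine \emph{composition} property, immediate from the definitions: if $\vx \varimplies{F_1}{\tau} \vz$ and $\vz \varimplies{F_2}{\tau} \vy$ then $\vx \varimplies{F_1 \cdot F_2}{\tau} \vy$, and likewise $\vx \varimplies{F_1}{D} \vz$ and $\vz \varimplies{F_2}{D} \vy$ give $\vx \varimplies{F_1 \cdot F_2}{D} \vy$ (this last case is in fact one of the defining clauses of $\varimplies{}{D}$). Composition within a template is proved by a short induction on the derivation of the second implication: the definition appends a single function name at the tail, so splicing two chains is straightforward. Composition is precisely what lets us reverse a chain one step at a time.

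For part~(1), since the statement is an ``iff'' and the roles of $\vx$ and $\vy$ are interchangeable, it suffices to show $\vx \varimplies{}{\tau} \vy$ implies $\vy \varimplies{}{\tau} \vx$. I would induct on the length of a witnessing sequence $F$ with $\vx \varimplies{F}{\tau} \vy$. If $F = \epsilon$ then $\vx = \vy$ and the claim is trivial. Otherwise $F = F' \cdot f$ and there is a variable $\vz$ with $\vy = f(\vz) \in \Cset(\tau)$ and $\vx \varimplies{F'}{\tau} \vz$. The pairing condition supplies the reverse constraint $\vz = \hat f(\vy) \in \Cset(\tau)$, whence $\vy \varimplies{\hat f}{\tau} \vz$ directly from the definition. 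The induction hypothesis applied to $\vx \varimplies{F'}{\tau} \vz$ yields $\vz \varimplies{}{\tau} \vx$, and composing $\vy \varimplies{\hat f}{\tau} \vz$ with $\vz \varimplies{}{\tau} \vx$ gives $\vy \varimplies{}{\tau} \vx$, as required.

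For part~(2), again it suffices to prove one direction, and I would induct on the structure of the derivation of $\vx \varimplies{}{D} \vy$ according to the three defining clauses. If it arose from $\vx \varimplies{}{\tau_i} \vy$ within a single template, part~(1) gives $\vy \varimplies{}{\tau_i} \vx$ and hence $\vy \varimplies{}{D} \vx$. If it arose from the cross-template clause with $F = \epsilon$, the definition is already explicitly symmetric. If it arose from the composition clause, via $\vx \varimplies{}{D} \vz$ and $\vz \varimplies{}{D} \vy$, the induction hypothesis gives $\vz \varimplies{}{D} \vx$ and $\vy \varimplies{}{D} \vz$, and the composition property for $\varimplies{}{D}$ yields $\vy \varimplies{}{D} \vx$. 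This exhausts all cases. There is no genuinely hard step here; the only thing to be careful about is that $\varimplies{}{\tau}$ is defined by tail-appending a single function name, so the reversal of a chain cannot be read off directly but must be assembled through the composition property together with the reverse constraints guaranteed by the pairing condition of $\mtbtemplates$.
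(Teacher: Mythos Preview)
Your proposal is correct and follows essentially the same route as the paper: induction on the derivation structure, using the pairing condition of $\mtbtemplates$ to reverse each constraint. You are in fact slightly more careful than the paper, which writes ``It follows that $\vy \varimplies{F'}{\tau} \vx$ with $F' = f' \cdot F$'' without isolating the composition property for $\varimplies{}{\tau}$ that justifies prepending rather than appending; your explicit treatment of composition fills that small gap.
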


\begin{proof}
    \emph{(1)} We argue by induction on the definition of $\vx \varimplies{}{\tau} \vy$ that $\vx \varimplies{}{\tau} \vy$ implies $\vy \varimplies{}{\tau} \vx$. The other direction is analogous. The base case is immediate, as $\vx = \vy$ implies $\vy \varimplies{}{\tau} \vx$ by definition. For the inductive case, assume a variable $\vz$ such that $\vy = f(\vz)$ is a constraint in $\tau$ and $\vx \varimplies{}{\tau} \vz$.
    By the induction hypothesis, $\vz \varimplies{F}{\tau} \vx$ for some sequence of function names $F$. Since $\workload \in \mtbtemplates$, there is a constraint $\vz = f'(\vy)$ in $\tau$ as well. It follows that $\vy \varimplies{F'}{\tau} \vx$ with $F' = f' \cdot F$.
    
    \emph{(2)} We argue by induction on the definition of $\vx \varimplies{}{D} \vy$ that $\vx \varimplies{}{D} \vy$ implies $\vy \varimplies{}{D} \vx$. The other direction is again analogous. The first base case is now immediate, as we already argued that $\vx \varimplies{}{\tau} \vy$ implies $\vy \varimplies{}{\tau} \vx$.
    For the second base case, assume $\vx \varimplies{\epsilon}{D} \vy$ and $(\tau_i, o_i, p_j, \tau_j)$ is a \pconflictquadruple{} in $D$ with $\myvar{o_i} = \vx$ and $\myvar{p_j} = \vy$ (the case for $(\tau_j, o_j, p_i, \tau_i)$ is analogous). $\vy \varimplies{\epsilon}{D} \vx$ then follows by definition.
    For the inductive case, let $\vz$ be a variable such that $\vx \varimplies{F_1}{D} \vz$ and $\vz \varimplies{F_2}{D} \vy$. Then by induction hypothesis $\vz \varimplies{F_1'}{D} \vx$ and $\vy \varimplies{F_2'}{D} \vz$ for some sequence of function names $F_1'$ and $F_2'$.
    By definition, $\vy \varimplies{F'}{D} \vx$ with $F' = F_2' \cdot F_1'$.
\end{proof}

We are now ready to prove the correctness of Lemma~\ref{lemma:mtbconnected}.

\begin{proof}[Proof of Lemma~\ref{lemma:mtbconnected}]
Assuming $\conn{\vx}{\vy}{D}$, we first show by induction on the definition of connectedness that $\vx \varimplies{}{D} \vy$.
    By Lemma~\ref{lemma:varimpliessymmetry}, $\vy \varimplies{}{D} \vx$ then follows.
    For the base case, both $\vx \varimplies{}{D} \vy$ and $\vy \varimplies{}{D} \vx$ imply $\vx \varimplies{}{D} \vy$, where the former is immediate and the latter is by Lemma~\ref{lemma:varimpliessymmetry}.
    For the inductive case, let $\vz$ be a variable with $\conn{\vx}{\vz}{D}$ and either $\vz \varimplies{}{D} \vy$ or $\vy \varimplies{}{D} \vz$. Again, $\vz \varimplies{F_2}{D} \vy$ for some sequence of function names $F_2$ is implied in both cases. By induction hypothesis, $\vx \varimplies{F_1}{D} \vz$ for some sequence of function names $F_1$.
    As a result, $\vx \varimplies{F}{D} \vy$ with $F = F_1 \cdot F_2$.
    
  Next, let $\vx$ and $\vy$ be two variables occurring in $\transcopies{D}$ with $\conn{\vx}{\vy}{D}$ and $\type{\vx} = \type{\vy}$ and let $\barmu$ be a variable mapping for $D$ that is admissible for a database $\db$. We prove that $\barmu(\vx) = \barmu(\vy)$.
    
    We already argued that $\conn{\vx}{\vy}{D}$ implies $\vx \varimplies{}{D} \vy$.
    By definition of $\vx \varimplies{}{D} \vy$, there is a sequence of variables $\vx_1, \vx_2\, \ldots, \vx_n$ with $\vx_1 = \vx$ and $\vx_n = \vy$ such that for each pair of adjacent variables $\vx_i$ and $\vx_{i+1}$:
    \begin{itemize}
        \item[$(\dagger)$] $\vx_i$ and $\vx_{i+1}$ both occur in the same \shortptrans{} $\tau \in \transcopies{D}$ and $\vx_{i+1} = f(\vx_i) \in \Cset(\tau)$ for some function name $f$; or 
        \item[$(\ddagger)$] $\type{\vx_i} = \type{\vx_{i+1}}$ and there is a \pconflictquadruple{} $(\tau_j, o_j, p_k, \tau_k)$ in $D$ with either $\myvar{o_j} = \vx_i$ and $\myvar{p_k} = \vx_{i+1}$ or $\myvar{p_k} = \vx_i$ and $\myvar{o_j} = \vx_{i+1}$.
    \end{itemize}
    In the remainder of this proof, we show that for each pair of variables $\vx_i$ and $\vx_j$ in this sequence with $\type{\vx_i} = \type{\vx_j}$ that $\barmu(\vx_i) = \barmu(\vx_j)$. The desired $\barmu(\vx) = \barmu(\vy)$ then follows immediately as $\vx=\vx_1$ and $\vy=\vx_n$.
    Note that it suffices to show this property only for pairs of variables $\vx_i$ and $\vx_j$ for which no variable $\vx_k$ exists with $i<k<j$ and $\type{\vx_i} = \type{\vx_j} = \type{\vx_k}$. Indeed, if such an $\vx_k$ exists, we can recursively argue that $\barmu(\vx_i) = \barmu(\vx_k)$ and $\barmu(\vx_k) = \barmu(\vx_j)$.
        The argument is by induction on the number of variables between $\vx_i$ and $\vx_j$. 

    If $j = i+1$ \emph{(base case)}, then $(\ddagger)$ applies to $\vx_i$ and $\vx_j$.
    Indeed, if $(\dagger)$ would apply instead, then there would be a function name $f$ with $\domf(f) = \type{\vx_i} = \type{\vx_j} = \rangef(f)$, contradicting Condition~\emph{(\ref{lemma:mtbschemaprops:1})} of Lemma~\ref{lemma:mtbschemaprops}.
    By definition of $\barmu$, we have $\barmu(\vx_i) = \barmu(\vx_j)$.
    
    Next, let $i+1 < j$ \emph{(inductive case)}, and assume that $\barmu(\vx_k) = \barmu(\vx_\ell)$ for all $\vx_k$ and $\vx_\ell$ with $i < k \leq \ell < j$ and $\type{\vx_k} = \type{\vx_\ell}$ \emph{(induction hypothesis)}.
    From this sequence $\vx_i, \ldots \vx_j$, we derive a sequence of function names $F = f_1 \cdots f_{m-1}$, where each function name $f_i$ is based on an application of $(\dagger)$ on adjacent variables (notice that applications of $(\ddagger)$ do not result in a function name being added to $F$).
    By assumption on the types of variables $\vx_k$ with $i < k < j$, we have in particular $\type{\vx_{i+1}} \neq \type{\vx_i}$ and $\type{\vx_{j-1}} \neq \type{\vx_j}$.
    This implies that $(\dagger)$ is applicable for $\vx_{i}$ and $\vx_{i+1}$ (respectively $\vx_{j-1}$ and $\vx_{j}$).
    Furthermore, $\vx_{i}$ and $\vx_{i+1}$ appear in the same \shortptrans{}, say $\tau_i$ (respectively $\tau_j$ for $\vx_{j-1}$ and $\vx_{j}$), and $\vx_{i+1} = f_1(\vx_i) \in \Cset(\tau_i)$ (respectively $\vx_{j} = f_{m-1}(\vx_{j-1}) \in \Cset(\tau_j)$).
    By construction, $F$ then describes a path in $\sg{\schRel, \schFunc}$ visiting the nodes $R_1, R_2, \ldots, R_{m-1}, R_m$ with $\type{\vx_i} = R_1$, $\type{\vx_{i+1}} = R_2$, $\type{\vx_{j-1}} = R_{m-1}$ and $\type{\vx_{j}} = R_{m}$.
    Since this path satisfies Condition~\ref{lemma:mtbschemaprops:2} in Lemma~\ref{lemma:mtbschemaprops}, it follows that $\type{\vx_{i+1}} = \type{\vx_{j-1}}$ and $(f_1, f_{m-1})$ is a pair in the pairwise partitioning of $\schFunc$ witnessing $\workload \in \mtbtemplates$.
    By definition of $\mtbtemplates$, $\vx_{i+1} = f_1(\vx_i) \in \Cset(\tau_i)$ then implies $\vx_{i} = f_{m-1}(\vx_{i+1}) \in \Cset(\tau_i)$.
    According to the induction hypothesis, $\barmu(\vx_{i+1}) = \barmu(\vx_{j-1})$. Since $\barmu$ is admissible for $\db$, we conclude that $\barmu(\vx_i) = f_{m-1}^\db(\barmu(\vx_{i+1})) = f_{m-1}^\db(\barmu(\vx_{j-1})) = \barmu(\vx_j)$.
\end{proof}

It follows from Lemma~\ref{lemma:mtbconnected} that, if we group connected variables, then the same tuple is assigned to all variables of the same type in this group.
We encode this choice of tuples for variables through (total) functions $c:\schRel\to\objects$ that we call \emph{type mappings} and which map a relation onto a particular tuple of that relation's type. For instance, in SmallBank, a type mapping $c$ is determined by an \emph{Account} tuple $\mya$, a \emph{Savings} tuple $\mys$, and a \emph{Checking} tuple $\myc$. 
The following Lemma makes explicit how $\bar\mu$ can be decomposed into type mappings such that connected variables use the same type mapping and disequalities enforce the use of different type mappings.
\begin{restatable}{lem}{lemmamtbscheduletypes}\label{lemma:mtbschedule:types}
    For a multiversion split schedule $\schedule$ based on a \cqsequence{} $C$ over a set of transactions $\transset$ consistent with 
    a $\workload\in \mtbtemplates$ and a database $\db$, let $\barmu$ be the variable mapping for a \pcqsequence{} $D$ over $\workload$ with $\barmu(D) = C$.
    Then, a set $\typemapset$ of type mappings over disjoint ranges and a function $\typemapfunc: \variables \to \typemapset$
    exist with:
    \begin{itemize}
        \item $\barmu(\vx) = c(\type{\vx})$ for every variable $\vx$, with $c = \typemapfunc(\vx)$;
        \item $\typemapfunc(\vx) = \typemapfunc(\vy)$ whenever 
        \conn{\vx}{\vy}{D}; and,
    \item $\typemapfunc(\vx) \neq \typemapfunc(\vy)$
    for every constraint $\vx \neq \vy$ occurring in a template $\tau \in \transcopies{D}$. 
    \end{itemize}
\end{restatable}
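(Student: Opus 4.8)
The plan is to build an equivalence relation $\equiv$ on the variables occurring in $\transcopies{D}$ and to use its classes as the index set of $\typemapset$. Two requirements of the lemma already dictate how $\equiv$ must look: since $\barmu(\vx)=\typemapfunc(\vx)(\type{\vx})$ must hold and distinct type mappings have disjoint ranges, any two variables $\vx,\vy$ with $\type{\vx}=\type{\vy}$ and $\barmu(\vx)=\barmu(\vy)$ are forced into the \emph{same} type mapping, whereas two variables of the same type with different $\barmu$-image must go to \emph{different} type mappings. I therefore let $\equiv$ be the least equivalence relation containing both the connectedness relation $\conn{\vx}{\vy}{D}$ and the relation identifying $\vx$ and $\vy$ whenever $\type{\vx}=\type{\vy}$ and $\barmu(\vx)=\barmu(\vy)$. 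For each $\equiv$-class $E$ I define $c_E(R)=\barmu(\vx)$ when some $\vx\in E$ has $\type{\vx}=R$, and $c_E(R)$ a fresh tuple of type $R$ (used neither by $\barmu$ nor by any other class) otherwise; I then set $\typemapset=\{c_E\}$ and $\typemapfunc(\vx)=c_{[\vx]}$.

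The crux is that $c_E$ is well defined, i.e. the \emph{consistency claim}: if $\vx\equiv\vy$ and $\type{\vx}=\type{\vy}$ then $\barmu(\vx)=\barmu(\vy)$. Granting this, everything else is routine. Condition~1 holds by construction, and Condition~2 holds because $\conn{\vx}{\vy}{D}$ puts $\vx,\vy$ into one class. For Condition~3, a disequality $\vx\neq\vy$ in a template forces $\type{\vx}=\type{\vy}$ and, by admissibility of $\barmu$, $\barmu(\vx)\neq\barmu(\vy)$; the consistency claim then gives $\vx\not\equiv\vy$, so $\typemapfunc(\vx)\neq\typemapfunc(\vy)$. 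Disjointness of the ranges follows likewise: a tuple shared by $\range{c_E}$ and $\range{c_{E'}}$ cannot be a fresh filler, hence equals $\barmu(\vx)=\barmu(\vy)$ for some $\vx\in E$ and $\vy\in E'$ of the (unique) type of that tuple, whence $\vx\equiv\vy$ and $E=E'$.

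To prove the consistency claim I unroll a witnessing $\equiv$-chain $\vx=\vw_0,\dots,\vw_\ell=\vy$. Each step is either a connection (so by Lemma~\ref{lemma:mtbconnected} an implication $\vw_i\varimplies{F_i}{D}\vw_{i+1}$) or a same-type/same-tuple identification. Expanding each implication into its within-template function applications and its $\epsilon$-jumps across \pconflictquadruples{}, and noting that every non-function move preserves both type and $\barmu$-image, I obtain running tuples $T_0=\barmu(\vx),\dots,T_N=\barmu(\vy)$ in which every $T_k$ is a $\barmu$-image and every function move is an application $T_k=f^\db(T_{k-1})$ witnessed by a within-template constraint $\vz=f(\vz')$ with $T_{k-1}=\barmu(\vz')$ and $T_k=\barmu(\vz)$. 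Because $\type{\vx}=\type{\vy}$, the function names spell a closed walk $H$ from $R=\type{\vx}$ to $R$ in $\sg{\schRel,\schFunc}$ with $H^\db(T_0)=T_N$, so it suffices to show $H^\db(T_0)=T_0$.

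The main obstacle is exactly this closed-walk identity, and here the multi-tree bijectivity of $\workload$ enters. I argue by induction on the length $N$ of $H$. If $N>0$, take the smallest index $j$ at which the walk first revisits a node, say $R_j=R_i$; then $R_i\cdots R_j$ is a \emph{simple} cycle, so Lemma~\ref{lemma:mtbschemaprops}(\ref{lemma:mtbschemaprops:2}) forces $R_{i+1}=R_{j-1}$, which by simplicity means $j=i+2$ and the edges $h_{i+1},h_{i+2}$ form a pair $(f,g)\in P$ (and Lemma~\ref{lemma:mtbschemaprops}(\ref{lemma:mtbschemaprops:1}) rules out a single self-looping edge closing the walk). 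By the defining property of $\mtbtemplates$, the within-template constraint realizing the $f$-move carries its inverse $g$-constraint, so admissibility yields $g^\db(T_{i+1})=T_i$ and hence $T_{i+2}=g^\db(T_{i+1})=T_i$. Splicing out these two steps produces a strictly shorter closed walk from $R$ to $R$ whose running tuples are still $\barmu$-images, and the induction hypothesis gives $T_N=T_0$, i.e. $\barmu(\vx)=\barmu(\vy)$.
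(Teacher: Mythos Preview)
Your proof is correct. Both you and the paper construct the type mappings from a partition (one class per mapping, fresh filler tuples for the unused types), but you partition the \emph{variables} via your equivalence $\equiv$ whereas the paper partitions the \emph{tuples} occurring in the schedule via a coloring $\lambda$; these are dual views of the same partition, with your same-type/same-image generator being precisely what makes the correspondence one-to-one (and what delivers disjoint ranges). The substantive difference is in how well-definedness is argued. The paper justifies the existence of $\lambda$ by a direct appeal to Lemma~\ref{lemma:mtbconnected}, while you prove your consistency claim from scratch by a closed-walk cancellation argument based on Lemma~\ref{lemma:mtbschemaprops} and the $\mtbtemplates$ pairing property. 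That argument essentially reproduces the inductive core of the proof of the second half of Lemma~\ref{lemma:mtbconnected}, extended so that the chain may also contain your same-type/same-image steps (which is needed, since those steps break a single $\conn{}{}{D}$-block into several). So your route is longer but more self-contained, whereas the paper's is terser and leans on Lemma~\ref{lemma:mtbconnected} as a black box.
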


\begin{proof}
    \newcommand{\colorf}{\lambda}
    To aid the construction of $\typemapset$ and $\typemapfunc$, we first define a coloring function $\colorf$
    that assigns a color to each tuple occurring in the schedule $\schedule$
    such that the following holds:
    for every pair of tuples $\x$ and $\y$ occurring in $\schedule$:
    \begin{itemize}
        \item  connected tuples are mapped to the same color: if $\barmu(\vx) = \x$, $\barmu(\vy) = \y$ and $\conn{\vx}{\vy}{D}$ for some variables $\vx, \vy$ occurring in $\transcopies{D}$, then $\colorf(\x) = \colorf(\y)$; and
        \item different tuples of the same type are mapped to different colors: if $\type{\x} = \type{\y}$ and $\x \neq \y$, then $\colorf(\x) \neq \colorf(\y)$.
    \end{itemize}
    Note that we can always construct such a function $\colorf$ 
    as by Lemma~\ref{lemma:mtbconnected}, it cannot be the case that $\type{\x} = \type{\y}$, $\x \neq \y$ and there is a pair of variables $\vx, \vy$ with $\barmu(\vx) = \x$, $\barmu(\vy) = \y$, and $\conn{\vx}{\vy}{D}$ . 
    
    For $\alpha\in\range{\colorf}$, define the type mapping $c_{\alpha}$ as follows: for every type $R \in \schRel$:
        $$
        c_{\alpha}(R) =
        \begin{cases}
            \x & \text{if $\colorf(\x) = \alpha$ and $\type{\x} = R$,}\\
            \y_{c,R} & \text{otherwise,}
        \end{cases}
        $$
        where $\y_{c,R}$ is an arbitrary tuple of type $R$ not occurring in $\schedule$ or any other type mapping $c_{\beta}$ for $\beta\in\range{\colorf}$. Define $\typemapset=\{c_\alpha\mid \alpha\in\range{\colorf}\}$.
        By construction, every type mapping in $\typemapset$ is well defined and all type mappings are over disjoint ranges. 
            Furthermore, $c_\alpha \neq c_\beta$ whenever $\alpha\neq\beta$.
    
        We now construct $\typemapfunc$ as follows:
        $\typemapfunc(\vx) = c_{\alpha}$ with $\alpha = \colorf(\barmu(\vx))$
        for every variable $\vx$ occurring in $\transcopies{D}$. It remains to argue that $\typemapfunc$ indeed satisfies all properties stated in Lemma~\ref{lemma:mtbschedule:types}. By construction of $\typemapset$ and $\typemapfunc$, we have $\barmu(\vx) = c(\type{\vx})$ for every variable $\vx$, with $c  = \typemapfunc(\vx)$.
        Towards the second property, notice that $\conn{\vx}{\vy}{D}$ implies $\typemapfunc(\vx) = c_{\colorf(\barmu(\vx))} = c_{\colorf(\barmu(\vy))} = \typemapfunc(\vy)$ by definition of $\colorf$ and $\typemapfunc$.
        For the last property, assume $\vx \neq \vy$ occurs in a \shortptrans{} $\tau \in \transcopies{D}$ and $\type{\vx} = \type{\vy}$. Since $\barmu$ is admissible for database $\db$, $\barmu(\vx) \neq \barmu(\vy)$.
        Then, by definition of $\colorf$ and $\typemapfunc$, we have $\typemapfunc(\vx) = c_{\colorf(\barmu(\vx))} \neq c_{\colorf(\barmu(\vy))} = \typemapfunc(\vy)$.
\end{proof}

From $D = (\tau_1, o_1, p_2, \tau_2), \ldots, (\tau_m, o_m, p_1, \tau_1)$ and $\typemapfunc$ as in Lemma~\ref{lemma:mtbschedule:types}
we can derive a sequence of quintuples $E = (\tau_1, o_1, c_{o_1}, p_1, c_{p_1}), \ldots, (\tau_m, o_m, c_{o_m}, p_m, c_{p_m})$ such that 
$c_{o_i} = \typemapfunc(\myvar{o_i})$ and $c_{p_i} = \typemapfunc(\myvar{p_i})$ for $i \in [1,m]$.
Intuitively, this sequence of quintuples can be used to reconstruct the original multiversion split schedule $\schedule$.
The next Lemma shows that we can decide robustness against \mvrc over a set of \ptranss{} admitting multi-tree bijectivity by searching for a specific sequence of quintuples over at most four type mappings.

\newcounter{resumecounter}
\begin{restatable}{lem}{lemmamtbchar}\label{lemma:mtbchar}
    Let $\workload\in \mtbtemplates$
     and let $\typemapset = \{c_1, c_2, c_3, c_4\}$ be a set consisting of four type mappings with disjoint ranges. Then, $\workload$ is not robust against \mvrc iff there is a sequence of quintuples $E = (\tau_1, o_1, c_{o_1}, p_1, c_{p_1}), \ldots, (\tau_m, o_m, c_{o_m}, p_m, c_{p_m})$ with $m \geq 2$ such that for each quintuple $(\tau_i, o_i, c_{o_i}, p_i, c_{p_i})$ in $E$:
    \begin{enumerate}
        \item \label{e1} $o_i$ and $p_i$ are operations in $\tau_i$, and $c_{o_i}, c_{p_i} \in \typemapset$;
        \item \label{e2} $\notconn{\vx_i}{\vy_i}{\tau_i}$ for each constraint $\vx_i \neq \vy_i$ in $\tau_i$;
        \item \label{e3} $c_{o_i} = c_{p_i}$ if 
        $\conn{o_i}{p_i}{\tau_i}$;
        \item \label{e4} $c_{o_i} \neq c_{p_i}$ if there is a constraint $\vx_i \neq \vy_i$ in $\tau_i$ with
        $\conn{\vx_i}{\myvar{o_i}}{\tau_i}$ and $\conn{\vy_i}{\myvar{p_i}}{\tau_i}$;
        \item \label{e6} if $i \neq 1$ and $c_{q_i} = c_{q_1}$ for some $q_i \in \{o_i, p_i\}$ and $q_1 \in \{o_1, p_1\}$,
        then there is no operation $o_i'$ in $\tau_i$ potentially ww-conflicting with an operation $o_1'$ in $\prefix{\tau_1}{o_1}$ with $\conn{\myvar{o_i'}}{\myvar{q_i}}{\tau_i}$ and $\conn{\myvar{o_1'}}{\myvar{q_1}}{\tau_1}$.
        \setcounter{resumecounter}{\value{enumi}}
    \end{enumerate}
    Furthermore, for each pair of adjacent quintuples $(\tau_i, o_i, c_{o_i}, p_i, c_{p_i})$ and $(\tau_j, o_j, c_{o_j}, p_j, c_{p_j})$ in $E$ with $j = i + 1$, or $i = m$ and $j = 1$:
    \begin{enumerate}
        \setcounter{enumi}{\value{resumecounter}}
        \item \label{e8} $o_i$ is potentially conflicting with $p_j$ and $c_{o_i} = c_{p_j}$;
        \item \label{e9} if $i = 1$ and $j = 2$, then $o_1$ is potentially rw-conflicting with $p_2$; and
        \item \label{e10} if $i = m$ and $j = 1$, then $o_1 <_{\tau_1} p_1$ or $o_m$ is potentially rw-conflicting with $p_1$.
    \end{enumerate}
\end{restatable}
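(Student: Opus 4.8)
The plan is to prove the two implications by moving back and forth through Theorem~\ref{theo:characterization:split-shedules} and Lemma~\ref{lemma:mtbschedule:types}: the former relates non-robustness to the existence of a multiversion split schedule, and the latter relates the variable mapping of such a schedule to a decomposition into type mappings. A sequence of quintuples $E$ is simply a finite, ``colored'' encoding of a split schedule -- each quintuple names a template $\tau_i$, its two conflicting operations $o_i,p_i$, and the type mapping assigned to the variable of each, while Condition~(\ref{e8}) records how consecutive conflicts are glued along the cycle. The if-direction decodes such an $E$ into an actual schedule, the only-if-direction encodes a schedule into such an $E$, and the one genuinely new ingredient is a color reduction showing that four type mappings always suffice.

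For the if-direction I would start from a sequence $E$ meeting all the stated conditions and build a counterexample. Reading the templates and the potentially conflicting pairs $o_i,p_{i+1}$ off consecutive quintuples yields a \pcqsequence{} $D$ with $\transcopies{D}=\{\tau_1,\dots,\tau_m\}$. I then define a variable mapping $\barmu$ by assigning to each variable $\vx$ the tuple $c(\type{\vx})$, where $c$ is the type mapping dictated by $E$ on the connectedness component of $\vx$: variables connected to $\myvar{o_i}$ (resp.\ $\myvar{p_i}$) take $c_{o_i}$ (resp.\ $c_{p_i}$), and variables in components carrying no quintuple operation take fresh, pairwise disjoint type mappings. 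This is well defined because along any $D$-connectedness path every link forces an equality of type mappings -- Condition~(\ref{e8}) for the gluings and Condition~(\ref{e3}) for the within-template links between $o_i$ and $p_i$ -- so connected variables are never assigned conflicting type mappings. Condition~(\ref{e2}) places the two sides of each disequality in distinct components and, together with Condition~(\ref{e4}) and the disjointness of the fresh ranges, makes $\barmu$ admissible for a database $\db$ realizing the equality constraints. Setting $C=\barmu(D)$ and forming the schedule $\prefix{\trans[1]}{b_1}\cdot \trans[2]\cdots \trans[m]\cdot \postfix{\trans[1]}{b_1}$, I would check the three requirements of Definition~\ref{def:mvsplitschedule}: Condition~(\ref{e9}) supplies the rw-conflict of part~(3), Condition~(\ref{e10}) supplies part~(2), and Condition~(\ref{e6}) supplies the delicate part~(1), since any ww-conflict between $\prefix{\trans[1]}{b_1}$ and some $\trans[i]$ would require exactly the pair of potentially ww-conflicting, connected operations mapped to a common type mapping that Condition~(\ref{e6}) forbids. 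Theorem~\ref{theo:characterization:split-shedules} then certifies non-robustness.

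For the only-if-direction I would run this in reverse and then compress colors. Non-robustness yields, through Theorem~\ref{theo:characterization:split-shedules} and Lemma~\ref{lem:robustminimalcondition}, a multiversion split schedule whose dependencies run only between consecutive transactions; a \pcqsequence{} $D$ with $\barmu(D)=C$; and, by Lemma~\ref{lemma:mtbschedule:types}, a function $\typemapfunc$ decomposing $\barmu$ into type mappings. Reading off $c_{o_i}=\typemapfunc(\myvar{o_i})$ and $c_{p_i}=\typemapfunc(\myvar{p_i})$ gives a quintuple sequence $E'$, and each condition is verified directly: Conditions~(\ref{e3}) and~(\ref{e8}) because connected variables share a type mapping by Lemma~\ref{lemma:mtbschedule:types}; Conditions~(\ref{e2}) and~(\ref{e4}) because $\barmu$ is admissible and by Lemma~\ref{lemma:mtbconnected}; and Conditions~(\ref{e6}),~(\ref{e9}),~(\ref{e10}) because they are precisely the template-level shadows of parts~(1)--(3) of Definition~\ref{def:mvsplitschedule}.

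The remaining step, which I expect to be the main obstacle, is that $\typemapfunc$ may use arbitrarily many type mappings while the lemma permits only four. I would argue a color reduction on the cyclic structure imposed by Condition~(\ref{e8}). The gluings arrange the type mappings into a cyclic family of slots $u_1,\dots,u_m$ with $u_i=c_{p_i}=c_{o_{i-1}}$, in which each quintuple $i$ is the edge $u_i$--$u_{i+1}$ and the two slots of the split quintuple are $u_1=c_{p_1}$ and $u_2=c_{o_1}$. The only constraints on a coloring are: equalities on edges where $o_i,p_i$ are connected (Condition~(\ref{e3})); and inequalities, either on an edge where a disequality connects $o_i,p_i$ (Condition~(\ref{e4})) or between a slot and one of the split slots $u_1,u_2$ when the forbidden pattern of Condition~(\ref{e6}) is present. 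Crucially, every inequality of the second kind has an endpoint in $\{u_1,u_2\}$. I would first contract all forced-equal edges; this cannot merge two slots that some inequality forces apart, because $E'$ already satisfies all the conditions simultaneously, so the result is still a cycle carrying only inequalities. Reserving at most two colors for the images of $u_1,u_2$ and two-coloring the complementary path (the cycle minus the one or two split classes) with the remaining two colors keeps the split and non-split color classes disjoint; hence every Condition~(\ref{e6}) inequality (split versus non-split) and every boundary Condition~(\ref{e4}) inequality is met automatically, while the interior Condition~(\ref{e4}) inequalities hold by the proper two-coloring and Condition~(\ref{e3}) holds by construction of the contraction. This coloring uses at most four type mappings; renaming them $c_1,c_2,c_3,c_4$ produces the required $E$. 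The subtle points I would take care over are the consistency of the contraction just mentioned and the degenerate cases $m=2$ and quintuples one of whose slots coincides with a split slot.
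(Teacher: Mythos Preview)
Your proposal is correct and follows the same overall architecture as the paper: for the if-direction you extend the partial context assignment coming from $E$ to a full variable mapping (the paper does this via an explicit $\typemapfunc[\typemapset']$ and then builds $\db$ and checks Definition~\ref{def:mvsplitschedule} exactly as you do), and for the only-if direction you read off a quintuple sequence from the $\typemapfunc$ of Lemma~\ref{lemma:mtbschedule:types} and then reduce colors. The invocation of Lemma~\ref{lem:robustminimalcondition} is harmless but unnecessary here; the paper works directly from Lemma~\ref{lemma:mtbschedule:types}.

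The one genuine difference is the four-color reduction. The paper does not do a graph-coloring argument; instead it processes the quintuples sequentially and assigns colors so as to maintain two \emph{iff} invariants: $c_{o_i}=c_{p_i}$ iff $\typemapfunc(\myvar{o_i})=\typemapfunc(\myvar{p_i})$, and $c_{q_i}=c_{q_1}$ iff $\typemapfunc(\myvar{q_i})=\typemapfunc(\myvar{q_1})$ for $q_i\in\{o_i,p_i\}$, $q_1\in\{o_1,p_1\}$. Concretely, $c_1,c_2$ are reserved for slots whose original $\typemapfunc$-value agrees with that of $o_1$ or $p_1$, and $c_3,c_4$ alternate on the rest. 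With these invariants, Conditions~(\ref{e3}),~(\ref{e4}),~(\ref{e6}) transfer directly from the $\typemapfunc$-sequence. Your contraction-then-two-color argument reaches the same conclusion but via a weaker property (split classes and non-split classes are color-disjoint) rather than the full iff; this is enough because every Condition~(\ref{e6}) constraint has a split endpoint, and your observation that contraction never merges two slots separated by an inequality (since the original $\typemapfunc$-sequence already satisfies all conditions) is the crux that makes the argument go through. Both methods are short; the paper's is slightly more mechanical and avoids the degenerate-case analysis you flag, while yours makes the cyclic constraint structure more visible.
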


\noindent The items have the following meaning: (\ref{e2}) $\tau_i$ is satisfiable; (\ref{e3}) connected operations are assigned the same type mapping; (\ref{e4}) 
variables connected through an inequality are assigned a different type mapping;
(\ref{e6}) $\typemapfunc$ only assigns the same type mapping
to $o_1$ or $p_1$ in $\tau_1$ and $o_i$ or $p_i$ in $\tau_i$ if it does not introduce a dirty write in the resulting multiversion split schedule (cf. Condition~(\ref{c:1}) in Definition~\ref{def:mvsplitschedule});
(\ref{e8}) each pair of variables in operations used for conflicts are assigned the same type mapping;
(\ref{e9}, \ref{e10}) the operations used for conflicts between $\tau_1$, $\tau_2$ and $\tau_m$ are restricted to satisfy respectively Condition~(\ref{c:3}) and (\ref{c:2}) in Definition~\ref{def:mvsplitschedule} in the resulting multiversion split schedule.

We first present an additional Lemma derived from Lemma~\ref{lemma:mtbconnected} that will be used in the proof of Lemma~\ref{lemma:mtbchar}.

\begin{lem}\label{lemma:connectedclosure}
    Let $D$ be a \pcqsequence. If $\conn{\vx}{\vy}{D}$ and $\conn{\vy}{\vz}{D}$ then $\conn{\vx}{\vz}{D}$ for every triple of variables $\vx, \vy, \vz$ occurring in $\transcopies{D}$.
\end{lem}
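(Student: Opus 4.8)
The statement is precisely the transitivity of the connectedness relation $\conn{\cdot}{\cdot}{D}$, and the plan is to reduce it to the transitivity of the implication relation $\varimplies{}{D}$, which is built directly into the definition of the latter. The key observation is that Lemma~\ref{lemma:mtbconnected} already collapses connectedness to implication: $\conn{\vx}{\vy}{D}$ implies $\vx \varimplies{}{D} \vy$. Crucially, this part of Lemma~\ref{lemma:mtbconnected} holds regardless of whether $\type{\vx} = \type{\vy}$; the type condition is only needed for the conclusion about $\barmu$. Since $\workload \in \mtbtemplates$ is the ambient assumption throughout this section, Lemma~\ref{lemma:mtbconnected} is applicable.

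Concretely, I would argue as follows. From $\conn{\vx}{\vy}{D}$, Lemma~\ref{lemma:mtbconnected} yields $\vx \varimplies{F_1}{D} \vy$ for some sequence of function names $F_1$. From $\conn{\vy}{\vz}{D}$, the same lemma yields $\vy \varimplies{F_2}{D} \vz$ for some $F_2$. The composition clause in the definition of $\varimplies{}{D}$ (the one stating that two implications factoring through an intermediate variable compose), applied with $\vy$ as the intermediate variable, then gives $\vx \varimplies{F_1 \cdot F_2}{D} \vz$. Finally, the first clause of the definition of connectedness states that $\vx \varimplies{F}{D} \vz$ already witnesses $\conn{\vx}{\vz}{D}$, which is the desired conclusion.

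There is essentially no substantial obstacle here: the heavy lifting has been done in Lemma~\ref{lemma:mtbconnected}, and the remainder is just unwinding two definitional clauses. The only point requiring care is to invoke the correct, unconditional part of Lemma~\ref{lemma:mtbconnected}, namely that implication does not need type equality. As an alternative that avoids Lemma~\ref{lemma:mtbconnected}, and hence the $\mtbtemplates$ restriction, one could instead prove the statement by a direct structural induction on the derivation of $\conn{\vy}{\vz}{D}$: in the base case $\vy \varimplies{}{D} \vz$ or $\vz \varimplies{}{D} \vy$, one applies the second (inductive) clause of connectedness with $\vy$ as the intermediate variable to extend $\conn{\vx}{\vy}{D}$ into $\conn{\vx}{\vz}{D}$; in the inductive case, one first obtains $\conn{\vx}{\vw}{D}$ for the intermediate variable $\vw$ by the induction hypothesis and then extends once more. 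I would present the short route through Lemma~\ref{lemma:mtbconnected} as the main argument.
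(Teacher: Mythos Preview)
Your proposal is correct and follows essentially the same route as the paper's proof: both invoke Lemma~\ref{lemma:mtbconnected} to turn connectedness into implication, compose the two implications via the transitivity clause of $\varimplies{}{D}$, and then read off connectedness from its base clause. Your additional remark that a direct induction on the derivation of $\conn{\vy}{\vz}{D}$ would establish the result without relying on the $\mtbtemplates$ assumption is a valid and slightly more general alternative, though the paper does not pursue it.
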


\begin{proof}
    According to Lemma~\ref{lemma:mtbconnected}, $\conn{\vx}{\vy}{D}$ and $\conn{\vy}{\vz}{D}$ imply respectively $\vx \varimplies{}{D} \vy$ and $\vy \varimplies{}{D} \vz$.
    By definition, $\vx \varimplies{}{D} \vz$ and hence $\conn{\vx}{\vz}{D}$.
\end{proof}

\begin{proof}[Proof of Lemma~\ref{lemma:mtbchar}]
    \emph{(if)}  Let $D = (\tau_1, o_1, p_2, \tau_2), \ldots, (\tau_{m}, o_m, p_1, \tau_1)$ be the \pcqsequence{} derived from $E$. Notice in particular that $D$ is indeed a \pcqsequence{} by (\ref{e1}) and (\ref{e8}).
    We construct a variable mapping $\barmu$ for $D$ admissible for a database $\db$ such that the \cqsequence{} $C = \barmu(D)$ satisfies the conditions in Definition~\ref{def:mvsplitschedule}, thereby proving that $\workload$ is not robust against \mvrc.
    
    Let $\typemapfunc: \variables \to \typemapset$ be the (partial) function assigning a type mapping in $\typemapset$ to each variable occurring in an operation in $E$:
    $$
    \typemapfunc(\vx) = 
    \begin{cases}
        c_{o_i} & \text{if $\myvar{o_i} = \vx$ for some $(\tau_i, o_i, c_{o_i}, p_i, c_{p_i}) \in E$,}\\
        c_{p_i} & \text{if $\myvar{p_i} = \vx$ for some $(\tau_i, o_i, c_{o_i}, p_i, c_{p_i}) \in E$.}
    \end{cases}
    $$
    This function $\typemapfunc$ is well defined: if there is a $(\tau_i, o_i, c_{o_i}, p_i, c_{p_i}) \in E$ with $\myvar{o_i} = \myvar{p_i} = \vx$, then $\conn{o_i}{p_i}{\tau_i}$ and hence $c_{o_i} = c_{p_i}$ by (\ref{e3}).
    Recall that we assume that templates in $E$ are variable-disjoint. 
    We argue that $\typemapfunc(\vx) = \typemapfunc(\vy)$ if $\conn{\vx}{\vy}{D}$ for each pair of variables $\vx$ and $\vy$ for which $\typemapfunc$ is defined.
    From Lemma~\ref{lemma:mtbconnected}, it follows that $\vx \varimplies{}{D} \vy$ whenever $\conn{\vx}{\vy}{D}$. Let $\tau_i$ and $\tau_j$ be the \shortptrans{} in which respectively $\vx$ and $\vy$ occur. The argument is now by induction on the definition of $\vx \varimplies{}{D} \vy$:
    \begin{itemize}
        \item If $i = j$ and $\vx \varimplies{}{\tau_i} \vy$, then $\typemapfunc(\vx) = \typemapfunc(\vy)$ is immediate by (\ref{e3});
        \item If $(\tau_i, o_i, p_j, \tau_j) \in D$ with $\myvar{o_i} = \vx$ and $\myvar{p_j} = \vy$ (respectively $(\tau_j, o_j, p_i, \tau_i) \in D$ with $\myvar{o_j} = \vy$ and $\myvar{p_i} = \vx$), then $\typemapfunc(\vx) = \typemapfunc(\vy)$ is immediate by (\ref{e8});
        \item Otherwise, if $\vx \varimplies{}{D} \vz$ and $\vz \varimplies{}{D} \vy$ for some variable $\vz$, then by induction $\typemapfunc(\vx) = \typemapfunc(\vz) = \typemapfunc(\vy)$.
    \end{itemize}

    \noindent By Lemma~\ref{lemma:connectedclosure}, $\conn{}{}{D}$ is an equivalence relation. For $\vx$ occurring in $\transcopies{D}$, denote by $[\vx]$ the equivalence class of $\vx$. Let $\typemapset'$ be obtained by extending $\typemapset$ with a type mapping $c_{[\vx]}$ for each equivalence class where no variable $\vy\in[\vx]$ is defined in $\typemapfunc[\typemapset]$. Furthermore, each of the $c_{[\vx]}$ are picked such that all type mappings in $\typemapset'$ have disjoint ranges.
    
    Next, we extend $\typemapfunc$ to a function $\typemapfunc[\typemapset']: \variables \to \typemapset'$ assigning a type mapping to each variable $\vx$ occurring in $\transcopies{D}$ 
    as follows:
    $$
    \typemapfunc[\typemapset'](\vx) =
    \begin{cases}
        \typemapfunc(\vx) & \text{if $\typemapfunc$ is defined for $\vx$,}\\
        \typemapfunc[\typemapset](\vy) & \text{if $\typemapfunc$ is defined for $\vy$ but not for $\vx$ and $\conn{\vx}{\vy}{D}$,}\\
        c_{[\vx]} & \text{otherwise.}
    \end{cases}
    $$
    Notice, furthermore, that in the second case $\vx$ might be connected in $D$ to multiple variables for which $\typemapfunc$ is defined, say $\vy_1$ and $\vy_2$.
    Then, by Lemma~\ref{lemma:connectedclosure}, $\conn{\vy_1}{\vy_2}{D}$ and hence $\typemapfunc(\vy_1) = \typemapfunc(\vy_2)$. We therefore conclude that $\typemapfunc[\typemapset'](\vx)$ is well defined.
    We argue that $\typemapfunc[\typemapset'](\vx) = \typemapfunc[\typemapset'](\vy)$ if $\conn{\vx}{\vy}{D}$ for each pair of variables $\vx$ and $\vy$.
    If $\typemapfunc$ is defined for both $\vx$ and $\vy$, then the result is immediate by $\typemapfunc(\vx) = \typemapfunc(\vy)$.
    If $\typemapfunc$ is defined for one of these two variables, say $\vx$, then $\typemapfunc[\typemapset'](\vx) = \typemapfunc[\typemapset](\vx) = \typemapfunc[\typemapset'](\vy)$ by construction of $\typemapfunc[\typemapset']$.
    If $\typemapfunc$ is not defined for both $\vx$ and $\vy$, then either there exists a variable $\vz$ for which $\typemapfunc$ is defined and $\vz$ is connected in $D$ to one of these two variables, say $\vx$, or no such variable $\vz$ exists.
    In the former case, $\conn{\vz}{\vy}{D}$ follows from Lemma~\ref{lemma:connectedclosure}, implying $\typemapfunc[\typemapset'](\vx) = \typemapfunc[\typemapset](\vz) = \typemapfunc[\typemapset'](\vy)$.
    In the latter case, $\typemapfunc[\typemapset'](\vx) = c_{[\vx]}=c_{[\vy]}=\typemapfunc[\typemapset'](\vy)$ by construction of $\typemapfunc[\typemapset']$.

    We now define the variable mapping $\barmu$ from $\typemapfunc[\typemapset']$ as
    $\barmu(\vx) = c(\type{\vx})$ for each variable $\vx$, where $c = \typemapfunc[\typemapset](\vx)$.
    Next, we construct the database $\db$. For each \shortptrans{} $\tau_i$ and corresponding variable mapping $\mu_i$ in $\barmu$, we add all tuples in $\mu_i(\tau_i)$ to the database $\db$.
    Furthermore, for each constraint $\vx = f(\vy)$ in $\Cset(\tau_i)$, we have $f^\db(\mu_i(\vx)) = \mu_i(\vy)$ in $\db$.
    This is well defined for each function $f^\db$.
    Towards a contradiction, assume we have $f^\db(\barmu(\vx_i)) = \barmu(\vy_i)$ witnessed by a \shortptrans{} $\tau_i$ and $f^\db(\barmu(\vx_j)) = \barmu(\vy_j)$ witnessed by a \shortptrans{} $\tau_j$, where $\barmu(\vx_i) = \barmu(\vx_j)$, but $\barmu(\vy_i) \neq \barmu(\vy_j)$.
    Since $\conn{\vx_i}{\vy_i}{D}$, $\conn{\vx_j}{\vy_j}{D}$ and $\barmu(\vx_i) = \barmu(\vx_j)$, we have $\typemapfunc[\typemapset'](\vy_i) = \typemapfunc[\typemapset'](\vx_i) = \typemapfunc[\typemapset'](\vx_j) = \typemapfunc[\typemapset'](\vy_j)$.
    Then, $\barmu(\vy_i) = \barmu(\vy_j)$, leading to a contradiction.
    In order to argue that $\barmu$ is indeed admissible for $\db$, it remains to show that for each constraint $\vx \neq \vy$ in a template $\tau_i$, we have $\barmu(\vx) \neq \barmu(\vy)$. Again towards a contradiction, assume $\barmu(\vx) = \barmu(\vy)$, and let $(\tau_i, o_i, c_{o_i}, p_i, c_{p_i})$ be the corresponding quintuple in $E$.
    By definition of $\barmu$, we have $\typemapfunc[\typemapset'](\vx) = \typemapfunc[\typemapset'](\vy)$.
    It follows from $\typemapfunc[\typemapset']$ that either $\conn{\vx}{\vy}{\tau_i}$;
    or $\conn{\vx}{\myvar{o_i}}{\tau_i}$ (respectively $\conn{\vy}{\myvar{o_i}}{\tau_i}$), $\conn{\vy}{\myvar{p_i}}{\tau_i}$ (respectively $\conn{\vx}{\myvar{p_i}}{\tau_i}$) and $c_{o_i} = c_{p_i}$. However, the former is contradicted by (\ref{e2}) and the latter by (\ref{e4}).
    We therefore conclude that $\barmu$ is admissible for $\db$, as it satisfies all constraints.    
    
    It remains to argue that the \cqsequence{} $C = \barmu(D)$ satisfies all conditions stated in Definition~\ref{def:mvsplitschedule}. The second and third condition are immediate by respectively (\ref{e10}) and (\ref{e9}).
    Towards a contradiction, assume the first condition holds. Then, there is an operation $b_1'$ in $\prefix{\barmu(\tau_1)}{\barmu{o_1}}$ ww-conflicting with an operation $b_i'$ in a transaction $\barmu{\tau_i}$.
    Let $b_1' = \barmu(o_1')$ with $\myvar{o_1'} = \vx_1$ and $b_i' = \barmu(o_i')$ with $\myvar{o_i'} = \vx$, and let $(\tau_1, o_1, c_{o_1}, p_1, c_{p_1})$ and $(\tau_i, o_i, c_{o_i}, p_i, c_{p_i})$ be the corresponding quintuples in $E$.
    Note that $o_1'$ is potentially ww-conflicting with $o_i'$ and $\barmu(\vx_1) = \barmu(\vx_i)$. Then, $\contextfunc[\contextset'](\vx_1) = \contextfunc[\contextset'](\vx_i)$.
    By construction of $\contextfunc[\contextset']$, this can only hold if $\conn{\vx_1}{\myvar{q_1}}{\tau_1}$, $\conn{\vx_i}{\myvar{q_i}}{\tau_i}$ and $c_{q_1} = c_{q_i}$ for some $q_1 \in {o_1, p_1}$ and $q_i \in {o_i, p_i}$, thereby contradicting (\ref{e6}).
    
    \medskip
    \noindent
    \emph{(only if)} If $\workload$ is not robust against \mvrc, then there exists a multiversion split schedule $\schedule$ based on a \cqsequence{} $C$ over a set of transactions $\transset$ consistent with $\workload$ and a database $\db$.
    Let $\barmu$ be the variable mapping for a \pcqsequence{} $D = (\tau_1, o_1, p_1, \tau_2), \ldots, (\tau_m, o_m, p_1, \tau_1)$ over $\workload$ with $\barmu(C) = D$, and let $\typemapset$ and $\typemapfunc$ be as in Lemma~\ref{lemma:mtbschedule:types}.
    
    From this sequence $D$ and function $\typemapfunc$, we derive the sequence of quintuples $E = (\tau_1, o_1, \typemapfunc(\myvar{o_1}), p_1, \typemapfunc(\myvar{p_1})), \ldots, (\tau_m, o_m, \typemapfunc(\myvar{o_m}), p_m, \typemapfunc(\myvar{p_m}))$.
    Let $\typemapfunc' = \{c_1, c_2, c_3, c_4\}$ be a set consisting of four type mappings with disjoint ranges. We adapt each quintuple in $E$ in order, thereby creating a sequence $E'$ satisfying the properties stated in Lemma~\ref{lemma:mtbchar}. First, we add $(\tau_1, o_1, c_1, p_1, c_k)$ to $E'$, where $c_k = c_1$ if $\typemapfunc(\myvar{o_1})= \typemapfunc(\myvar{p_1})$, and $c_k = c_2$ otherwise.
    For each of the remaining quintuples in $E$, let $(\tau_{i-1}, o_{i-1}, c_{o_{i-1}}, p_{i-1}, c_{p_{i-1}})$ be the quintuple previously added to $E'$. We then add $(\tau_i, o_i, c_{o_i},\allowbreak p_i, c_{p_i})$ to $E'$ where $c_{o_i} = c_{p_{i-1}}$ and
    $$
    c_{p_i}=
    \begin{cases}
        c_{o_i} & \text{if $\typemapfunc(\myvar{o_i})= \typemapfunc(\myvar{p_i})$,}\\
        c_1 & \text{if $\typemapfunc(\myvar{o_i})= \typemapfunc(\myvar{o_1})$,}\\
        c_2 & \text{if $\typemapfunc(\myvar{o_i})= \typemapfunc(\myvar{p_1})$ and $\typemapfunc(\myvar{o_1}) \neq \typemapfunc(\myvar{p_1})$,}\\
        c_3 & \text{if $\typemapfunc(\myvar{o_i})\neq \typemapfunc(\myvar{p_i})$ and $c_{p_i} \neq c_3$,}\\
        c_4 & \text{otherwise.}
    \end{cases}
    $$
    By construction, for every quintuple $(\tau_i, o_i, c_{o_i}, p_i, c_{p_i})$ in $E'$ we now have
    \begin{itemize}
        \item $c_{o_i}$ = $c_{p_i}$ iff $\typemapfunc(\myvar{o_i}) = \typemapfunc(\myvar{p_i})$; and
        \item $c_{q_i}$ = $c_{q_1}$ iff $\typemapfunc(\myvar{q_i}) = \typemapfunc(\myvar{q_1})$ for every $q_i \in \{o_i, p_i\}$ and $q_1 \in \{o_1, p_1\}$.
    \end{itemize}
    
    It remains to argue that $E'$ indeed satisfies all required properties.
    (\ref{e1}) is trivial by construction.
    (\ref{e2}) If $\vx_i \neq \vy_i$ is a constraint in $\tau_i$, then $\typemapfunc(\vx) \neq \typemapfunc(\vy)$ and $\notconn{\vx}{\vy}{D}$ according to Lemma~\ref{lemma:mtbschedule:types}.
    (\ref{e3}) If $\conn{o_i}{p_i}{\tau_i}$, then $\typemapfunc(\myvar{o_i}) = \typemapfunc(\myvar{p_i})$ by Lemma~\ref{lemma:mtbschedule:types}, and hence $c_{o_i} = c_{p_i}$.
    (\ref{e4}) Assume there is a constraint $\vx_i \neq \vy_i$ in a \shortptrans{} $\tau_i$ with $\conn{\vx_i}{\myvar{o_i}}{\tau_i}$ and $\conn{\vy_i}{\myvar{p_i}}{\tau_i}$.
    By Lemma~\ref{lemma:mtbschedule:types}, $\typemapfunc(\vx_i) = \typemapfunc(\myvar{o_i}) \neq \typemapfunc(\myvar{p_i}) = \typemapfunc(\vy_i)$, and therefore $c_{o_i} \neq c_{p_i}$.
     (\ref{e6}) Let $c_{q_i} = c_{q_1}$ for some $q_i \in \{o_i, p_i\}$ and $q_1 \in \{o_1, p_1\}$, with $i \neq 1$.
     Assume towards a contradiction that there is an operation $o_i'$ in $\tau_i$ potentially ww-conflicting with an operation $o_1'$ in $\prefix{\tau_1}{o_1}$ with $\conn{\myvar{o_i'}}{\myvar{q_i}}{\tau_i}$ and $\conn{\myvar{o_1'}}{\myvar{q_1}}{\tau_1}$.
     But then $\typemapfunc(\myvar{o_i'}) = \typemapfunc(\myvar{q_i}) = \typemapfunc(\myvar{q_1}) = \typemapfunc(\myvar{o_1'})$, implying that $\barmu(o_i')$ is ww-conflicting with $\barmu(o_1')$, contradicting the properties of $C$ stated in Definition~\ref{def:mvsplitschedule}.
     (\ref{e8}) is again trivial by construction.
     (\ref{e9}) By Definition~\ref{def:mvsplitschedule}, $\barmu(o_1)$ is rw-conflicting with $\barmu(p_2)$ in $C$. Therefore, $o_1$ is potentially rw-conflicting with $p_2$.
     (\ref{e10}) By Definition~\ref{def:mvsplitschedule}, $\barmu(o_1) <_{\barmu(\tau_1)} \barmu(p_1)$ or $\barmu(o_m)$ is rw-conflicting with $\barmu(p_1)$ in $C$. As a result, $o_1 <_{\tau_1} p_1$ or $o_m$ is rw-conflicting with $p_1$.
\end{proof}

The characterization for \RobustnessTempTwo{\mtbtemplates}{\mvrc} in Lemma~\ref{lemma:mtbchar} implies an \NLOGSPACE algorithm guessing the counterexample sequence $E$, thereby proving Theorem~\ref{theo:1dettemplates}. 
Indeed, the algorithm guesses the sequence of quintuples $E$, verifying all conditions for each newly guessed quintuple while only requiring logarithmic space.
Notice in particular that we only need to keep track of two other quintuples when verifying all conditions for the newly guessed quintuple, namely the first quintuple over $\tau_1$ and the quintuple immediately preceding the newly guessed one.
As usual, we can think of the encoding of \shortptranss{} and operations mentioned in each quintuple as pointers referring to the corresponding \shortptranss{} and operations on the input tape.
Furthermore, we do not encode the four type mappings explicitly as such a representation of a mapping might require polynomial space.
Since we are only interested in (dis)equality between type mappings, an encoding where these four type mappings are represented by four arbitrary strings of constant size suffices.

\begin{proof}[Proof of Theorem~\ref{theo:mtbtemplates}]
The \NLOGSPACE algorithm goes as follows: We start by guessing three initial quintuples, representing respectively the first, second and last quintuple of the possible sequence of quintuples as in Lemma~\ref{lemma:mtbchar}. Consistent with previously used notation, we refer to these quintuples by $E_1$, $E_2$, and $E_m$, with $E_i = (\tau_i, o_i, c_{o_i}, p_i, c_{p_i})$. Note that the indices we use here are not part of the algorithm. They are only used to distinguish between the different considered quintuples in the proof argument. 

We store all three quintuples using a logarithmic amount of space, by storing pointers to the respective \ptranss{} in $\workload$, the positions of operations in the respective \ptranss{}, and the number $1,2,3$ or $4$ for the type mappings.

At this point, we verify that Condition~(7) and (8) are true, that Conditions~(1-5) are true for all chosen \ptranss{} and operations, and that Condition~(6) is true for $\tau_1$ and $\tau_2$, and $\tau_2$ and $\tau_m$. We reject the guessed quintuples if any of the conditions is false.

If all previous checks are true, we proceed by inserting another step. Let $i=2$. We guess a new quintuple $E_{i+1}$ and verify that Condition (5) is true for $\tau_{i}$ and $\tau_{i+1}$ and that Conditions (1-6) are true for $\tau_{i+1}$ and reject the entire construction if one of these conditions failed. Notice that all Conditions, including Condition~(5) can be checked easily, particularly because quintuple $E_1$ is stored. To proceed, we discard quintuple $E_i$ and store $E_{i+1}$ instead, thus without increasing the  amount of space we use. 

If $\tau_{i+1}$ and $\tau_m$ (from quintuple $E_m$) have Condition~(6), the algorithm emits an accept. Indeed, then the sequence $E_1, \ldots, E_i, E_{i+1}, E_m$ of guessed quintuples has all the properties of Lemma~\ref{lemma:mtbchar}. Otherwise, the algorithm proceeds with another insertion step, for $i=i+1$.
\end{proof}

\section{Robustness for \shortPtranss{} over Acyclic Schemas}
\label{sec:acyclic}

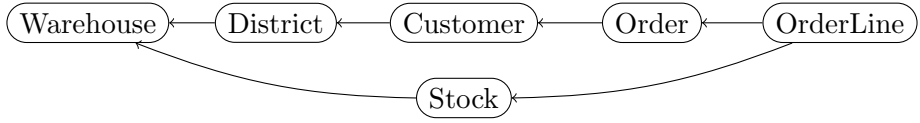
\begin{figure}[t]
    \centering
    \begin{tikzpicture}[scale=1]
    \node[draw, rounded rectangle,align=center] (W) at (-5,0) {\Warehouse};
    \node[draw, rounded rectangle,align=center] (D) at (-2.5,0) {\District};
    \node[draw, rounded rectangle,align=center] (C) at (0,0) {\Customer};
    \node[draw, rounded rectangle,align=center] (O) at (2.5,0) {\Order};
    \node[draw, rounded rectangle,align=center] (OL) at (5,0) {\OrderLine};
    \node[draw, rounded rectangle,align=center] (S) at (0,-1) {\Stock};
    \draw[->] (D) to (W) ;
    \draw[->] (C) to (D) ;
    \draw[->] (O) to (C) ;
    \draw[->] (OL) to (O) ;
    \draw[->] (S.west) to [bend left=10] (W) ;
    \draw[->] (OL) to [bend left=10] (S.east) ;
    \end{tikzpicture}
    \caption{{Acyclic schema graph for the TPC-C benchmark.}}
    \label{fig:tpcc:sg}
\end{figure}

We denote by \acyctemplates{} the class of all sets of \ptranss{} over acyclic schemas.
{As a concrete example, the schema graph for the TPC-C benchmark is given in Figure~\ref{fig:tpcc:sg}. Since this schema graph does not contain any cycles, the TPC-C benchmark is situated within \acyctemplates{}. Notice in particular how this acyclic schema graph corresponds to the hierarchical structure of many-to-one relationships inherent to the schema for this benchmark. For example, every orderline belongs to exactly one order, and every order is related to exactly one customer, but the opposite is never true (i.e., a customer can be related to multiple orders, each of which can be related to multiple orderlines). In general, the results presented in this section can be applied to all workloads over schemas with such a hierarchical structure.}

\begin{thm}\label{theo:asgexpspace}
    \RobustnessTempTwo{\acyctemplates}{\mvrc} is decidable in \EXPSPACE.
\end{thm}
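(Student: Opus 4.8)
The plan is to follow the same high-level strategy as for \RobustnessTempTwo{\mtbtemplates}{\mvrc}: by Theorem~\ref{theo:characterization:split-shedules}, $\workload$ fails to be robust against \mvrc iff some set of transactions consistent with $\workload$ admits a multiversion split schedule, so it suffices to search for a \pcqsequence{} $D$ together with an admissible variable mapping $\barmu$ whose induced \cqsequence{} $\barmu(D)$ meets the conditions of Definition~\ref{def:mvsplitschedule}. The difficulty, relative to the multi-tree-bijective case, is that the function names need no longer behave as bijections, so connectedness in $D$ no longer collapses $\barmu$ into a constant number of type mappings as in Lemma~\ref{lemma:mtbschedule:types}: knowing the tuple assigned to a variable $\vy$ in a constraint $\vy = f(\vx)$ determines the image of $\barmu(\vx)$ under $f$, but not $\barmu(\vx)$ itself.

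First I would replace type mappings by the richer notion of a \emph{context} of a tuple $\x$: the labelled forward closure obtained by applying all sequences of function names admissible from $\type{\x}$, recording for every such path the tuple it reaches and which paths reaching the same type coincide. Because the schema graph $\sg{\schRel,\schFunc}$ is acyclic, every such path has length bounded by the longest path in $\sg{\schRel,\schFunc}$, so a context is a finite object; however, an acyclic graph may contain exponentially many distinct paths, so a single context may have size exponential in $|\schFunc|$. The functional constraints of a template together with the equalities forced at the conflicts of $D$ are then captured entirely at the level of contexts: a constraint $\vy = f(\vx)$ says the context of $\barmu(\vy)$ is the sub-closure of the context of $\barmu(\vx)$ rooted at the $f$-successor, and a conflict forcing $\barmu(\myvar{o_i}) = \barmu(\myvar{p_j})$ forces the two contexts to be identical.

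Next I would prove a context-annotated characterization analogous to Lemma~\ref{lemma:mtbchar}: $\workload$ is not robust against \mvrc iff there is a sequence of \pconflictquadruples{} in which every relevant variable carries a context such that (i) the contexts are internally consistent with the functional constraints of each template and jointly realizable by a common database; (ii) every disequality constraint is witnessed by distinct tuples inside the contexts; (iii) adjacent quadruples agree, i.e.\ carry matching contexts, on their conflicting variables; and (iv) the global split-schedule conditions hold (no ww-conflict between the prefix of $\trans[1]$ and the intermediate transactions, and the required rw-conflicts at the two split points). Soundness and completeness rely crucially on acyclicity: it guarantees that contexts are finite and of bounded depth, so that a family of pairwise-consistent contexts can always be amalgamated into one database. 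This is precisely the step that collapses in the undecidable setting of Theorem~\ref{theo:equaltemplates}, where cyclic dependencies between functions produce contexts of unbounded depth.

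Finally, the characterization yields a nondeterministic algorithm that guesses the annotated sequence one quadruple at a time, as in the proof of Theorem~\ref{theo:mtbtemplates}, verifying each condition on the fly while keeping in memory only a constant number of context-annotated quadruples (the first, the previous, and the guessed last). Each context uses at most exponential space, so the whole search runs in nondeterministic exponential space; by Savitch's theorem ($\NEXPSPACE = \EXPSPACE$) this gives the claimed \EXPSPACE{} bound. I expect the main obstacle to be the amalgamation argument in the characterization — showing that pairwise-consistent, exponentially large contexts can be glued into a single database while simultaneously respecting all disequalities — together with verifying that only a constant number of contexts must be retained at once, so that the total space remains single-exponential.
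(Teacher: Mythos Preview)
Your proposal is correct and follows essentially the same route as the paper: replace type mappings by contexts (the paper's \emph{tuple-contexts}) recording the forward closure of a tuple along all paths in the acyclic schema graph, establish a quintuple-based characterization analogous to Lemma~\ref{lemma:mtbchar} (the paper's Lemma~\ref{lemma:asgcharacterization}), and run a nondeterministic search storing only a constant number of context-annotated quintuples to obtain a \textsc{nexpspace}/\EXPSPACE bound. The paper's version adds one technical refinement you may want to anticipate: for quintuples after the first it works with contexts that mix concrete tuples (those appearing in $c_{o_1},c_{p_1}$) with fresh \emph{variables} encoding only the coincidence pattern, which is precisely what makes the amalgamation step (your anticipated obstacle) go through cleanly via Lemmas~\ref{lemma:extendcontextassignment} and~\ref{lemma:contextimpliesbarmu}.
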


We first provide some intuition for the proof. For a given acyclic schema graph $\sgname$,
$R \asgimplies{F}{\sgname} S$ denotes the directed path from node $R$ to node $S$ in $\sgname$ with $F$ the sequence of edge labels on the path.
The next lemma relates implication between variables to paths in $\sgname$.

\begin{restatable}{lem}{lemmavarimpliestosgpath}\label{lemma:varimpliestosgpath}
    Let $D$ be a \pcqsequence{} over a set of \ptranss{} $\workload\in \acyctemplates$. For every pair of variables $\vx, \vy$ occurring in $\transcopies{D}$, if $\vx \varimplies{F}{D} \vy$, 
    then $\type{\vx} \asgimplies{F}{SG} \type{\vy}$, with $\sgname$ the corresponding schema graph.
\end{restatable}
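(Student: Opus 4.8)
The plan is to prove the statement by structural induction on the definition of $\vx \varimplies{F}{D} \vy$, which has three clauses; in each case I exhibit a directed path in $\sgname$ from $\type{\vx}$ to $\type{\vy}$ whose edge-label sequence is exactly $F$, and the transitive clause is handled by concatenating such paths. The two observations feeding the base cases are that an equality constraint $\vy = f(\vz)$ forces $\dom{f} = \type{\vz}$ and $\range{f} = \type{\vy}$, so $f$ labels an edge from $\type{\vz}$ to $\type{\vy}$ in $\sgname$, and that operations paired in a \pconflictquadruple{} are, by definition, over variables of the same type.

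First I would treat the within-template base case, where $\vx$ and $\vy$ occur in the same $\tau_i$ and $\vx \varimplies{F}{\tau_i} \vy$, by an inner induction on the definition of $\varimplies{}{\tau_i}$. If $\vx = \vy$ and $F = \epsilon$, the empty path suffices. In the recursive clause there is a $\vz$ with $\vy = f(\vz)$ in $\Cset(\tau_i)$, $\vx \varimplies{F'}{\tau_i} \vz$ and $F = F' \cdot f$; the inner hypothesis yields $\type{\vx} \asgimplies{F'}{SG} \type{\vz}$, and appending the edge labelled $f$ (from $\type{\vz}$ to $\type{\vy}$, as noted above) gives $\type{\vx} \asgimplies{F' \cdot f}{SG} \type{\vy}$. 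For the cross-template base case, $F = \epsilon$ and some \pconflictquadruple{} pairs an operation over $\vx$ with one over $\vy$; since potentially conflicting operations are over variables of the same type, $\type{\vx} = \type{\vy}$ and the empty path witnesses $\type{\vx} \asgimplies{\epsilon}{SG} \type{\vy}$.

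For the transitive clause there is a $\vz$ with $\vx \varimplies{F_1}{D} \vz$ and $\vz \varimplies{F_2}{D} \vy$ and $F = F_1 \cdot F_2$; the outer induction hypothesis supplies paths $\type{\vx} \asgimplies{F_1}{SG} \type{\vz}$ and $\type{\vz} \asgimplies{F_2}{SG} \type{\vy}$, which I glue at $\type{\vz}$, reading $F_1$ and then $F_2$.

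The step I expect to be the real obstacle is precisely this concatenation: gluing two directed paths at a shared endpoint only produces a directed \emph{walk}, which in general may revisit a node and therefore fail to be a (simple) path as the $\asgimplies{}{}$ notation demands. This is exactly where acyclicity of the schema is used: in a DAG the nodes admit a topological order that strictly increases along every directed edge, so any directed walk is automatically simple and hence a genuine path. I would state this observation explicitly and invoke it to conclude that $\type{\vx} \asgimplies{F_1 \cdot F_2}{SG} \type{\vy}$, closing the induction.
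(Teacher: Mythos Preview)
Your proposal is correct and follows essentially the same structural induction as the paper's proof: the same three cases (within-template implication, cross-template $\epsilon$-step via a \pconflictquadruple{}, and transitive concatenation), with the same key facts feeding the base cases. The one place you are more explicit than the paper is the walk-versus-path issue in the concatenation step; the paper simply concatenates and concludes $\type{\vx} \asgimplies{F}{\sgname} \type{\vy}$ without comment, whereas you spell out that acyclicity of $\sgname$ is what guarantees the concatenated walk is a genuine (simple) directed path. That extra care is appropriate given that the paper defines $\asgimplies{F}{\sgname}$ as a directed path, so your version arguably closes a small gap the paper leaves implicit.
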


\begin{proof}
    Let $\tau_i$ and $\tau_j$ be the \shortptranss{} in $D$ in which $\vx$ and $\vy$ occur, respectively.
    The proof is by induction on the definition of $\vx \varimplies{F}{D} \vy$.
    
    \emph{(Implication within the same \shortptrans)} If 
    $i=j$ and $\vx \varimplies{F}{\tau_i} \vy$, then either $F = \varepsilon$ and $\vx = \vy$, or there is a variable $\vz$ such that $\vy = f(\vz)$ is a constraint in $\Cset(\tau_i)$, $\vx \varimplies{F'}{\tau_i} \vz$ and $F = F' \cdot f$.
    In the former case, $\type{\vx} = \type{\vy}$, so $\type{\vx} \asgimplies{\varepsilon}{\sgname} \type{\vy}$ is immediate. In the latter case, it follows by induction that $\type{\vx} \asgimplies{F'}{\sgname} \type{\vz}$.
    Since $\dom{f} = \type{\vz}$ and $\range{f} = \type{\vy}$, it follows by definition that $\type{\vz} \asgimplies{f}{\sgname} \type{\vy}$ and furthermore $\type{\vx} \asgimplies{F}{\sgname} \type{\vz}$ holds.
    
    \emph{(Implication between \shortptranss)} If $F = \varepsilon$ and $(\tau_i, o_i, p_j, \tau_j)$ (respectively $(\tau_j, o_j, p_i, \tau_i)$) is a \pconflictquadruple{} in $D$, with $\myvar{o_i} = \vx$ and $\myvar{p_j} = \vy$ (respectively $\myvar{p_i} = \vx$ and $\myvar{o_j} = \vy$), then $\type{\vx} = \type{\vy}$ by definition of potentially conflicting operations. So, $\type{\vx} \asgimplies{\varepsilon}{\sgname} \type{\vy}$ is again immediate.
    
    \emph{(Inductive case)} If $\vx \varimplies{F_1}{D} \vz$ and $\vz \varimplies{F_2}{D} \vy$ for some variable $\vz$ with $F = F_1 \cdot F_2$,
    then $\type{\vx} \asgimplies{F_1}{\sgname} \type{\vz}$ and $\type{\vz} \asgimplies{F_2}{\sgname} \type{\vy}$ follow by induction. We conclude that $\type{\vx} \asgimplies{F}{\sgname} \type{\vy}$.
\end{proof}

Notice that an assignment of a tuple to a variable $\vx$ determines the tuples assigned to all variables $\vy$ with $\vx \varimplies{F}{D} \vy$ for some sequence of function names $F$. From Lemma~\ref{lemma:varimpliestosgpath} it follows that each such implied tuple is witnessed by a path in the corresponding schema graph $\sgname$. Therefore, the maximal number of different tuples implied by $\vx$ corresponds to the number of paths in $\sgname$ starting in $\type{\vx}$, which is finite when $\sgname$ is acyclic.
Because there can be multiple paths between nodes in the schema graph, it is no longer the case as in the previous section that variables of the same type connected in $D$ must be assigned the same value. 
So, instead of using type mappings, we introduce \emph{tuple-contexts}
to represent the sets of all tuples implied by the assignment of a given variable.
Formally, a \emph{tuple-context for a type} $R \in \schRel$ is a function from paths with source $R$ in  $\sg{\schRel, \schFunc}$ to tuples in $\objects$ of the appropriate type. That is, for each tuple-context $c$ for type $R$ and for each path $R \asgimplies{F}{\sgname} S$ in $\sgname$, $\type{c(R \asgimplies{F}{\sgname} S)} = S$.

Similar to Lemma~\ref{lemma:mtbschedule:types}, we show 
that we can represent a counterexample schedule based on $D$ 
by assigning a tuple-context to each variable in $\transcopies{D}$, taking special care when assigning contexts to variables connected in $D$ to make sure that they are properly related to each other. For this, 
we introduce a (partial) function $\contextfunc: \variables \to \contextset$ mapping (a subset of) variables in $\transcopies{D}$ to tuple-contexts in $\contextset$ (for $\contextset$ a set of tuple-contexts)and refer to it as a \emph{(partial) context assignment for $D$ over $\contextset$}. In a sequence of lemmas, we show that $\contextfunc$ can always be expanded into a total function and an approach based on enumeration of quintuples analogous to Lemma~\ref{lemma:mtbchar} suffices to decide robustness. A major difference with the previous section is that there is no longer a constant bound on the number of tuple-contexts that are needed and consistency between tuple-contexts in connected variables needs to be maintained.

We call node $S$ a \emph{descendant} of node $R$ and $R$ an \emph{ancestor} of $S$ in an acyclic schema graph $\sgname$. We write $R \asgimplies{\varepsilon}{\sgname}S$, with $\varepsilon$ denoting the empty labeling, for the case $R=S$. This means that a node is a descendant and ancestor of itself. When $F$ is not relevant, we simply write $R \asgimplies{}{\sgname} S$. 

Let $c_R$ and $c_{S}$ be two tuple-contexts for types $R$ and $S$, respectively, such that $S$ is a descendant of $R$ in $\sgname$, witnessed by the path $R \asgimplies{F}{\sgname} S$ in $\sgname$.
We then say that \emph{$c_{S}$ is a tuple-subcontext of $c_R$ witnessed by $F$} if $c_{S}(S \asgimplies{F'}{\sgname} S') = c_{R}(R \asgimplies{F \cdot F'}{\sgname} S')$ for every path $S \asgimplies{F'}{\sgname} S'$ in $\sgname$.
It should be noted that $R \asgimplies{F \cdot F'}{\sgname} S'$ is indeed a valid path in $\sgname$, as it concatenates the paths $R \asgimplies{F}{\sgname} S$ and $S \asgimplies{F'}{\sgname} S'$. For a given tuple-context $c$ for a type $R$ in the schema graph $\sgname$, we will often write $c(F)$ as a shorthand notation for $c(R \asgimplies{F}{\sgname} S)$.

Similar to Lemma~\ref{lemma:mtbschedule:types} for sets of \ptranss{} admitting multi-tree bijectivity,
Lemma~\ref{lemma:acycschedule:contexts} shows that we can represent a counterexample schedule based on a \pcqsequence{} $D$ over an acyclic schema by assigning a tuple-context to each variable in $\transcopies{D}$, taking special care when assigning contexts to variables connected in $D$ to make sure that they are properly related to each other.
For a set of tuple-contexts $\contextset$, we refer to a (partial) function $\contextfunc: \variables \to \contextset$ mapping (a subset of) variables in $\transcopies{D}$ to tuple-contexts in $\contextset$ as a \emph{(partial) context assignment for $D$ over $\contextset$}. We furthermore say that $\contextfunc$ is a \emph{total context assignment for $D$ over $\contextset$} if $\contextfunc$ is defined for every variable in $\transcopies{D}$.

{Two variables $\vx$ and $\vy$ occurring in $\transcopies{D}$ are \emph{equivalent in $D$}, denoted $\vx \varequiv{D} \vy$ if
\begin{itemize}
    \item $\vx = \vy$;
    \item there exists a pair of variables $\vz$ and $\vw$ and a sequence of function names $F$ with $\vz \varequiv{D} \vw$, $\vz \varimplies{F}{D} \vx$ and $\vw \varimplies{F}{D} \vy$; or
    \item there exists a variable $\vz$ with $\vx \varequiv{D} \vz$ and $\vz \varequiv{D} \vy$.
\end{itemize}
Similarly, two variables $\vx$ and $\vy$ occurring in a \ptrans{} $\tau$ are \emph{equivalent in $\tau$}, denoted $\vx \varequiv{\tau} \vy$ if
\begin{itemize}
    \item $\vx = \vy$;
    \item there exists a pair of variables $\vz$ and $\vw$ in $\tau$ and a sequence of function names $F$ with $\vz \varequiv{\tau} \vw$, $\vz \varimplies{F}{\tau} \vx$ and $\vw \varimplies{F}{\tau} \vy$; or
    \item there exists a variable $\vz$ with $\vx \varequiv{\tau} \vz$ and $\vy \varequiv{\tau} \vz$.
\end{itemize}
}

\noindent Intuitively, every variable mapping admissible for a given database will assign the same tuple to equivalent variables (see Lemma~\ref{lemma:barmuvarequal}).
Due to these equivalent variables, the assignment of a tuple to a variable $\vx$ for a given database might imply the tuple assigned to a variable $\vy$, even if $\vx \varimplies{}{D} \vy$ does not hold.
We capture this observation by introducing \emph{variable determination}, which is stronger than the previously defined variable implication.
Formally, a variable $\vx$ \emph{determines} a variable $\vy$ in $D$ witnessed by a sequence of function names $F$, denoted $\vx \vardet{F}{D} \vy$ if:
\begin{itemize}
    \item $\vx \varimplies{F}{D} \vy$;
    \item $F = \varepsilon$ and $\vx \varequiv{D} \vy$; or
    \item there exists a variable $\vz$ with $\vx \vardet{F_1}{D} \vz$, $\vz \vardet{F_2}{D} \vy$ and $F = F_1 \cdot F_2$.
\end{itemize}
For two variables $\vx$ and $\vy$ in a \shortptrans{} $\tau \in \transcopies{D}$ we furthermore say that $\vx$ \emph{determines} $\vy$ in $\tau$ witnessed by a sequence of function names $F$, denoted $\vx \vardet{F}{\tau} \vy$ if:
\begin{itemize}
    \item $\vx \varimplies{F}{\tau} \vy$;
    \item $F = \varepsilon$ and $\vx \varequiv{\tau} \vy$; or
    \item there exists a variable $\vz$ with $\vx \vardet{F_1}{\tau} \vz$, $\vz \vardet{F_2}{\tau} \vy$ and $F = F_1 \cdot F_2$.
\end{itemize}

\begin{lem}\label{lemma:equalbarmu}
    For a multiversion split schedule $\schedule$ based on a \cqsequence{} $C$ over a set of transactions $\transset$ consistent with a set of \ptranss{} $\workload$ and a database $\db$, let $\barmu$ be the variable mapping for a \pcqsequence{} $D$ over $\workload$ with $\barmu(D) = C$.
    Then, for every combination of variables $\vw, \vx, \vy, \vz$ occurring in $\transcopies{D}$, if $\vz \varimplies{F}{D} \vx$, $\vw \varimplies{F}{D} \vy$ and $\barmu(\vz) = \barmu(\vw)$, then $\barmu(\vx) = \barmu(\vy)$.
\end{lem}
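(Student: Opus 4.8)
The plan is to isolate the key observation that the relation $\vx \varimplies{F}{D} \vy$ records not merely \emph{that} the value of $\vy$ is forced by the value of $\vx$, but that it is forced \emph{uniformly}: $\barmu(\vy)$ is obtained from $\barmu(\vx)$ by applying, in order, the interpretations under $\db$ of exactly the function names listed in $F$. Concretely, for $F = f_1 \cdots f_n$ write $F^{\db}$ for the composition $f_n^{\db} \circ \cdots \circ f_1^{\db}$ (and for the identity when $F = \epsilon$). First I would establish the auxiliary claim: \emph{whenever $\vx \varimplies{F}{D} \vy$ for variables occurring in $\transcopies{D}$, we have $\barmu(\vy) = F^{\db}(\barmu(\vx))$}. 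Note that the intermediate types match along the chain, so $F^{\db}$ is a well-typed function on tuples: each function name in $F$ is introduced by a constraint $\vy = f(\vq)$ with $\dom{f} = \type{\vq}$ and $\range{f} = \type{\vy}$, so consecutive functions compose.

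Once this claim is available, the lemma is immediate. From $\vz \varimplies{F}{D} \vx$ we obtain $\barmu(\vx) = F^{\db}(\barmu(\vz))$, and from $\vw \varimplies{F}{D} \vy$ we obtain $\barmu(\vy) = F^{\db}(\barmu(\vw))$. Since the \emph{same} sequence $F$ occurs in both hypotheses and $\barmu(\vz) = \barmu(\vw)$ is literally the same tuple, applying the single function $F^{\db}$ to this common tuple yields $\barmu(\vx) = F^{\db}(\barmu(\vz)) = F^{\db}(\barmu(\vw)) = \barmu(\vy)$, as required.

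The auxiliary claim itself I would prove by induction following the recursive definition of $\varimplies{F}{D}$, after first treating the within-template relation $\varimplies{F}{\tau}$ by its own induction. For $\varimplies{F}{\tau}$: the base case $\vx = \vy$, $F = \epsilon$ is trivial; in the inductive step there is a variable $\vq$ with $\vy = f(\vq) \in \Cset(\tau)$, $\vx \varimplies{F'}{\tau} \vq$ and $F = F' \cdot f$, and the induction hypothesis gives $\barmu(\vq) = (F')^{\db}(\barmu(\vx))$, whence admissibility of $\barmu$ forces $\barmu(\vy) = f^{\db}(\barmu(\vq)) = F^{\db}(\barmu(\vx))$. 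For $\varimplies{F}{D}$ the three cases are handled as follows: the within-template case is covered by the previous induction; in the $F = \epsilon$ conflict-quadruple case the definition of a variable mapping for $D$ guarantees $\barmu(\myvar{o_i}) = \barmu(\myvar{p_j})$, so $\barmu(\vy) = \barmu(\vx) = \epsilon^{\db}(\barmu(\vx))$; and in the composition case, where $\vx \varimplies{F_1}{D} \vq$ and $\vq \varimplies{F_2}{D} \vy$ with $F = F_1 \cdot F_2$, the result follows by chaining the two induction hypotheses.

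The only point requiring a little care — and where I expect the main (though modest) obstacle — is making precise that $\barmu$ is genuinely admissible for $\db$, so that every constraint $\vy = f(\vq)$ translates into the exact equality $\barmu(\vy) = f^{\db}(\barmu(\vq))$ used in the induction. This holds because $\transset$ is consistent with $\workload$ and $\db$, which means each component assignment $\mu_i$ of $\barmu$ is admissible, and hence $\barmu$ satisfies all functional constraints of every template in $\transcopies{D}$. Beyond this bookkeeping, the argument is a routine structural induction that uses neither disequalities nor any acyclicity assumption, so the statement holds for an arbitrary set of \ptranss{} $\workload$.
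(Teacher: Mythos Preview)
Your proof is correct and takes a genuinely different route from the paper's. The paper unfolds the two derivations $\vz \varimplies{F}{D} \vx$ and $\vw \varimplies{F}{D} \vy$ into explicit chains of variables $\vx_1,\ldots,\vx_n$ and $\vy_1,\ldots,\vy_m$, introduces a ``suffix of $F$'' function on positions in these chains, and then shows by a two-pointer induction that whenever positions in the two chains have the same $F$-suffix, the corresponding $\barmu$-values agree. Your approach instead extracts the semantic content of the relation once and for all: $\vx \varimplies{F}{D} \vy$ implies $\barmu(\vy) = F^{\db}(\barmu(\vx))$, after which the lemma is a one-line functional argument. Your version is cleaner and yields a reusable auxiliary fact; the paper's version is more explicit about how the two chains synchronise (and avoids having to argue that the composition $F^{\db}$ is well-typed, since it only ever applies one $f^{\db}$ at a time to an actual $\barmu$-value). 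Both arguments use exactly the same ingredients---admissibility of each $\mu_i$ for the equality constraints, and the requirement $\barmu(\myvar{o_i}) = \barmu(\myvar{p_j})$ for quadruples in $D$---so neither needs acyclicity or disequalities, as you note.
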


\begin{proof}
    \newcommand{\fsuffix}{\textit{suffix}_F}
    By definition of $\vz \varimplies{}{D} \vx$, there is a sequence of variables $\vx_1, \vx_2\, \ldots, \vx_n$ with $\vx_1 = \vz$ and $\vx_n = \vx$ such that for each pair of adjacent variables $\vx_i$ and $\vx_{i+1}$:
    \begin{itemize}
        \item[$(\dagger)$] $\vx_i$ and $\vx_{i+1}$ both occur in the same \shortptrans{} $\tau \in \transcopies{D}$ and $\vx_{i+1} = f(\vx_i) \in \Cset(\tau)$ for some function name $f$; or 
        \item[$(\ddagger)$] $\type{\vx_i} = \type{\vx_{i+1}}$ and there is a \pconflictquadruple{} $(\tau_j, o_j, p_k, \tau_k)$ in $D$
        with either $\myvar{o_j} = \vx_i$ and $\myvar{p_k} = \vx_{i+1}$ or $\myvar{p_k} = \vx_i$ and $\myvar{o_j} = \vx_{i+1}$.
    \end{itemize}
    Furthermore, the sequence $F$ corresponds to the function names used in applications of $(\dagger)$. Analogously, $\vw \varimplies{}{D} \vy$, implies a sequence of variables $\vy_1, \vy_2\, \ldots, \vy_n$ with $\vy_1 = \vw$ and $\vy_m = \vy$ with the same properties. Notice that the lengths of these two sequences of variables, namely $n$ and $m$, are not necessarily equal to each other and to the length of $F$ due to possible applications of $(\ddagger)$.
    For a variable $\vx_i$ in the sequence $\vx_1, \vx_2, \ldots, \vx_n$, we denote the sequence of function names derived from applications of $(\dagger)$ in the subsequence $\vx_i, \ldots, \vx_n$ by $\fsuffix(\vx_i)$. Notice that $\fsuffix(\vx_i)$ is indeed always a suffix of $F$, and that $\fsuffix(\vx_1) = \fsuffix(\vy_1)=F$ and $\fsuffix(\vx_n)=\fsuffix(\vy_n) = \varepsilon$.
    
    We argue by induction that for every $i\in[1,n]$ and $j\in [1,m]$, if $\fsuffix(\vx_i)=\fsuffix(\vy_j)$ then $\barmu(\vx_i) = \barmu(\vy_j)$. This then implies $\barmu(\vx) = \barmu(\vx_n) = \barmu(\vy_m) = \barmu(\vy)$.
    \emph{(base case)} Note that $\barmu(\vx_1) = \barmu(\vz) = \barmu(\vw) = \barmu(\vy_1)$ and $\fsuffix(\vx_1) = F = \fsuffix(\vy_1)$.
    \emph{(inductive case)} 
     Let $\fsuffix(\vx_{i+1}) = \fsuffix(\vy_{j+1})$. Then, 
     we distinguish the following cases:
    \begin{itemize}
        \item $\fsuffix(\vx_{i}) = \fsuffix(\vx_{i+1})$: This means that $(\ddagger)$ applies to $\vx_i$ and $\vx_{i+1}$, and there is a \pconflictquadruple{} $(\tau_k, o_k, p_\ell, \tau_\ell)$ in $D$
        with either $\myvar{o_k} = \vx_i$ and $\myvar{p_\ell} = \vx_{i+1}$ or $\myvar{p_\ell} = \vx_i$ and $\myvar{o_k} = \vx_{i+1}$. By definition of $\barmu$, we have $\barmu(\vx_{i+1}) = \barmu(\vx_{i})$ and by induction that $\barmu(\vx_{i+1}) = \barmu(\vy_{j+1})$ implying that 
        $\barmu(\vx_{i+1}) = \barmu(\vx_{j+1})$.

        \item $\fsuffix(\vy_j) = \fsuffix(\vy_{j+1})$: similar as previous argument; 

        \item $\fsuffix(\vx_{i}) \neq \fsuffix(\vx_{i+1})$ and $\fsuffix(\vy_j) \neq \fsuffix(\vy_{j+1})$: Then, $(\dagger)$ applies to both $\vx_i$ and $\vx_{i+1}$, and $\vy_j$ and $\vy_{j+1}$. Furthermore, $\fsuffix(\vx_i) = \fsuffix(\vy_{j}) = f\cdot F'$ for some $f$ and $F'$. By induction, $\barmu(\vx_i)=\barmu(\vy_j)$. Then, $\vx_{i+1} = f(\vx_i)$ is a constraint in some \shortptrans{} $\tau_k \in \transcopies{D}$ and $\vy_{j+1} = f(\vy_j)$ is a constraint in some \shortptrans{} $\tau_\ell\in \transcopies{D}$. 
        Since $\barmu$ is admissible for database $\db$ and $\barmu(\vx_i) = \barmu(\vy_j)$, it follows that $\barmu(\vx_{i+1}) = f^\db(\barmu(\vx_i)) = f^\db(\barmu(\vy_j)) = \barmu(\vy_{j+1})$. \qedhere
    \end{itemize}
\end{proof}

The next Lemma shows that every variable mapping admissible for a given database will assign the same tuple to equivalent variables.

\begin{lem}\label{lemma:barmuvarequal}
    For a multiversion split schedule $\schedule$ based on a \cqsequence{} $C$ over a set of transactions $\transset$ consistent with a set of \ptranss{} $\workload$ and a database $\db$, let $\barmu$ be the variable mapping for a \pcqsequence{} $D$ over $\workload$ with $\barmu(D) = C$.
    Then, for every pair of variables $\vx$ and $\vy$ occurring in \shortptranss{} $\tau_i$ and $\tau_j$ in $\transcopies{D}$ respectively, if $\vx \varequiv{D} \vy$, then $\barmu(\vx) = \barmu(\vy)$.
\end{lem}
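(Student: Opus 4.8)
The plan is to prove the statement by structural induction on the derivation of $\vx \varequiv{D} \vy$, using Lemma~\ref{lemma:equalbarmu} as the single workhorse for the only non-trivial clause. Throughout, $\barmu$ is the fixed variable mapping for $D$ admissible for $\db$, so the whole argument reduces to checking that each of the three defining clauses of $\varequiv{D}$ preserves the property $\barmu(\vx) = \barmu(\vy)$.

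First I would dispatch the base clause $\vx = \vy$, for which $\barmu(\vx) = \barmu(\vy)$ is immediate. For the transitivity clause, where there is a variable $\vz$ with $\vx \varequiv{D} \vz$ and $\vz \varequiv{D} \vy$, the induction hypothesis applied to these two strictly smaller derivations gives $\barmu(\vx) = \barmu(\vz)$ and $\barmu(\vz) = \barmu(\vy)$, whence $\barmu(\vx) = \barmu(\vy)$ by transitivity of equality. The substantive clause is the middle one: there exist variables $\vz, \vw$ and a sequence of function names $F$ with $\vz \varequiv{D} \vw$, $\vz \varimplies{F}{D} \vx$ and $\vw \varimplies{F}{D} \vy$. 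Here the induction hypothesis applied to $\vz \varequiv{D} \vw$ yields $\barmu(\vz) = \barmu(\vw)$, and I would then invoke Lemma~\ref{lemma:equalbarmu} directly: from the two implications $\vz \varimplies{F}{D} \vx$ and $\vw \varimplies{F}{D} \vy$ sharing the common label sequence $F$, together with $\barmu(\vz) = \barmu(\vw)$ for their sources, the lemma delivers exactly $\barmu(\vx) = \barmu(\vy)$.

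I expect no genuine obstacle here, since the middle clause is tailored precisely to the hypotheses of Lemma~\ref{lemma:equalbarmu}, which asserts that two $F$-implications starting from variables with equal $\barmu$-image terminate at variables with equal $\barmu$-image. The only point requiring care is that the induction must be run over the inductively defined relation $\varequiv{D}$ itself (so that the recursive appeals to $\vz \varequiv{D} \vw$, $\vx \varequiv{D} \vz$, and $\vz \varequiv{D} \vy$ are all to strictly smaller derivations), rather than over any structural measure of the variables; once this is arranged, the three cases close as above and the proof is complete.
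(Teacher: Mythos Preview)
Your proposal is correct and follows essentially the same approach as the paper: a structural induction on the definition of $\varequiv{D}$, with the base clause trivial, the transitivity clause handled by two applications of the induction hypothesis, and the middle clause closed by the induction hypothesis on $\vz \varequiv{D} \vw$ followed by a direct appeal to Lemma~\ref{lemma:equalbarmu}.
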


\begin{proof}
    The proof is by induction on the definition of $\vx \varequiv{D} \vy$. \emph{(base case)} If $\vx = \vy$, then the result is immediate.
    \emph{(inductive cases)}
    If there are two variables $\vz$ and $\vw$ and a sequence of function names $F$ such that $\vz \varequiv{D} \vw$, $\vz \varimplies{F}{D} \vx$ and $\vw \varimplies{F}{D} \vy$, then by induction we have $\barmu(\vz) = \barmu(\vw)$.
    The proof that $\barmu(\vx) = \barmu(\vy)$ is now immediate by application of Lemma~\ref{lemma:equalbarmu}.
    If instead there is a variable $\vz$ with $\vx \varequiv{D} \vz$ and $\vy \varequiv{D} \vz$, then we can argue by induction that $\barmu(\vx) = \barmu(\vz)$ and $\barmu(\vy) = \barmu(\vz)$, and hence $\barmu(\vx) = \barmu(\vy)$.
\end{proof}

\begin{defi}\label{def:respectingconstraints}
    Let $D$ be a \pcqsequence{}, $\contextset$ a set of tuple-contexts and $\contextfunc$ a partial context assignment for $D$ over $\contextset$.
    We say that $\contextfunc$ \emph{respects the constraints of $D$} if,
    for every two (not necessarily different) variables $\vx$ and $\vy$ occurring in $D$ that $\contextfunc$ is defined for,
    the following conditions are true, where $c_\vx = \contextfunc(\vx)$ and $c_\vy = \contextfunc(\vy)$:
    \begin{enumerate}
        \item\label{d:rc:1} $c_\vx$ is a tuple-context for $\type{\vx}$;
        \item\label{d:rc:2} for every $\vx \vardet{F_1}{D} \vz$ and $\vy \vardet{F_2}{D} \vz$, $c_\vx(F_1) = c_\vy(F_2)$;
        \item\label{d:rc:3} for every $\vx \vardet{F_1}{D} \vw$ and $\vy \vardet{F_2}{D} \vz$ with $\vw \neq \vz$ a constraint in a \shortptrans{} $\tau$, $c_\vx(F_1) \neq c_\vy(F_2)$;
        \item\label{d:rc:4} for every $\vx \vardet{F_1}{D} \vw$ and $\vy \vardet{F_2}{D} \vz$, if $c_\vx(F_1) = c_\vy(F_2)$, then there is no constraint $\vw \neq \vz$ in a \shortptrans{} $\tau \in \transcopies{D}$;
        \item\label{d:rc:5} if $\vx \vardet{F}{D} \vy$, then $c_\vy$ is a tuple-subcontext of $c_\vx$ witnessed by $F$; and
        \item\label{d:rc:6} for every pair of tuple-subcontexts $c_\vx'$ and $c_\vy'$ of $c_\vx$ and $c_\vy$ witnessed by respectively $F_\vx$ and $F_\vy$, if $c_\vx(F_\vx) = c_\vy(F_\vy)$, then $c_\vx' = c_\vy'$.
    \end{enumerate}
\end{defi}

The next Lemma shows that we can represent a counterexample schedule based on a sequence of potentially conflicting quadruples D over an acyclic schema by assigning a tuple-context to each variable in Trans(D).

\begin{lem}\label{lemma:acycschedule:contexts}
    For a multiversion split schedule $\schedule$ based on a \cqsequence{} $C$ over a set of transactions $\transset$ consistent with a set of \ptranss{} $\workload \in \acyctemplates$ and a database $\db$, let $\barmu$ be the variable mapping for a \pcqsequence{} $D$ over $\workload$ with $\barmu(D) = C$.
    Then a set $\contextset$ of tuple-contexts and a total context assignment $\contextfunc$ for $D$ over $\contextset$
    exist with: 
    \begin{itemize}
        \item $\contextfunc$ respects the constraints of $D$; and
        \item $\barmu(\vx) = c_\vx(\epsilon)$ for every variable $\vx$, with $c_\vx = \contextfunc(\vx)$.
    \end{itemize}
\end{lem}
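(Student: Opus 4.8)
The plan is to define the context of a variable to be the \emph{unfolding of its $\barmu$-image in the database}, and to show that this single uniform choice already validates every clause of Definition~\ref{def:respectingconstraints}. For a tuple $t$ and a path $\type{t} \asgimplies{F}{\sgname} S$, write $F^{\db}(t)$ for the tuple obtained by applying the $\db$-interpretations of the function names in $F$ to $t$ in order; because each $f^\db$ is a total map $\dom{f}^\db \to \range{f}^\db$ and the schema is acyclic, once $t$ lies in the database $F^\db(t)$ is defined for every path out of $\type{t}$, and $(F\cdot G)^\db(t) = G^\db\bigl(F^\db(t)\bigr)$. Let $\hat{c}_t$ denote the tuple-context $F \mapsto F^\db(t)$. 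I may assume $\barmu(\vx) \in \type{\vx}^\db$ for every variable $\vx$: any variable occurring in a functional constraint already has its image in $\db$ by admissibility (the constraint $\vx = f(\vy)$ forces $\barmu(\vy)\in\dom{f}^\db$ and $\barmu(\vx)\in\range{f}^\db$), and for the remaining isolated variables I add $\barmu(\vx)$ to $\db$ and extend the functions with fresh tuples, which leaves $\barmu$ admissible and does not change $F^\db(\barmu(\vx))$ on any already-present path. I then set $\contextfunc(\vx) := \hat{c}_{\barmu(\vx)}$ and $\contextset := \{\hat{c}_{\barmu(\vx)} \mid \vx \text{ occurs in } \transcopies{D}\}$, so that $\barmu(\vx) = \hat{c}_{\barmu(\vx)}(\varepsilon) = c_\vx(\varepsilon)$ is immediate.

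The technical heart is a compatibility statement for variable determination: for all variables $\vx,\vy$ in $\transcopies{D}$, if $\vx \vardet{F}{D} \vy$ then $\type{\vx} \asgimplies{F}{\sgname} \type{\vy}$ and $\barmu(\vy) = F^\db(\barmu(\vx))$. I would prove this by induction on the definition of $\vardet{}{D}$. In the implication case $\vx \varimplies{F}{D} \vy$ the path is given by Lemma~\ref{lemma:varimpliestosgpath}, and $\barmu(\vy) = F^\db(\barmu(\vx))$ follows by a routine inner induction on the definition of $\varimplies{}{D}$, each within-template step invoking a constraint $\vy = f(\vz)$ together with admissibility ($\barmu(\vy) = f^\db(\barmu(\vz))$) and each empty-path between-template step using $\barmu(\myvar{o_i}) = \barmu(\myvar{p_j})$. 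In the case $F = \varepsilon$ with $\vx \varequiv{D} \vy$, Lemma~\ref{lemma:barmuvarequal} gives $\barmu(\vx) = \barmu(\vy)$ and hence $\type{\vx} = \type{\vy}$; the transitive case composes the two halves via $(F_1\cdot F_2)^\db = F_2^\db \circ F_1^\db$. An immediate corollary, used throughout the verification, is that on a determined position the database value coincides with the $\barmu$-image: if $\vx \vardet{F}{D} \vy$ then $c_\vx(F) = F^\db(\barmu(\vx)) = \barmu(\vy)$.

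With this corollary the clauses of Definition~\ref{def:respectingconstraints} follow. Clause~\ref{d:rc:1} holds since $\barmu(\vx) \in \type{\vx}^\db$. Clauses~\ref{d:rc:2}--\ref{d:rc:4} concern determined positions: if $\vx \vardet{F_1}{D} \vz$ and $\vy \vardet{F_2}{D} \vz$ then $c_\vx(F_1) = \barmu(\vz) = c_\vy(F_2)$, giving~\ref{d:rc:2}; if $\vx \vardet{F_1}{D} \vw$, $\vy \vardet{F_2}{D} \vz$ and $\vw \neq \vz$ is a constraint, admissibility forces $\barmu(\vw) \neq \barmu(\vz)$, i.e.\ $c_\vx(F_1) \neq c_\vy(F_2)$, which is~\ref{d:rc:3}, and~\ref{d:rc:4} is its contrapositive. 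Clauses~\ref{d:rc:5} and~\ref{d:rc:6} are exactly the algebraic content of the uniform construction: the compositional identity shows that the unique tuple-subcontext of $\hat{c}_t$ witnessed by a path $G$ is $\hat{c}_{G^\db(t)}$. For~\ref{d:rc:5}, $\vx \vardet{F}{D} \vy$ yields the witnessing path and $\barmu(\vy) = F^\db(\barmu(\vx))$, so $c_\vy = \hat{c}_{\barmu(\vy)} = \hat{c}_{F^\db(\barmu(\vx))}$ is precisely the subcontext of $c_\vx$ witnessed by $F$; for~\ref{d:rc:6}, if $c_\vx(F_\vx) = c_\vy(F_\vy) = u$ then both named subcontexts equal $\hat{c}_u$ and hence coincide.

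I expect the determination-compatibility statement, and the coincidence corollary it yields, to be the main obstacle, since it must correctly thread the interaction between variable implication (controlled by admissibility of $\barmu$) and variable equivalence (controlled by Lemma~\ref{lemma:barmuvarequal}) through the mutually recursive definitions of $\varimplies{}{D}$, $\varequiv{D}$ and $\vardet{}{D}$. The only other delicate point is securing the assumption $\barmu(\vx) \in \type{\vx}^\db$, which makes every unfolding $\hat{c}_{\barmu(\vx)}$ a total function on all paths out of $\type{\vx}$; once this is in place, clauses~\ref{d:rc:5} and~\ref{d:rc:6} become automatic from the compositional identity, and the remaining clauses are short, so no separate fresh-tuple bookkeeping for undetermined positions is required.
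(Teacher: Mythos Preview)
Your proposal is correct and follows essentially the same approach as the paper: both define $\contextfunc(\vx)$ as the tuple-context obtained by unfolding $\barmu(\vx)$ along all paths in $\db$ (the paper builds this recursively via a topological order on types, you state the closed form $F \mapsto F^{\db}(\barmu(\vx))$), and both hinge on the key claim that $\vx \vardet{F}{D} \vy$ implies $c_\vx(F) = \barmu(\vy)$, proved by the same induction on the definition of $\vardet{}{D}$. Your explicit treatment of the assumption $\barmu(\vx) \in \type{\vx}^{\db}$ is a point the paper glosses over, so your version is slightly more careful there.
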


\begin{proof}
    We first assign a tuple-context to each tuple in database $\db$, based on the functions in $\db$. Let $(\schRel, \schFunc)$ be the schema over which $\workload$ is defined.
    Since the schema graph $\sg{\schRel, \schFunc}$ is acyclic, a total order $<_\sgname$ over $\schRel$ exists such that there is no path from type $R$ to type $S$ in $\sgname$ if $R <_\sgname S$.
    We now assign tuple-contexts to tuples based on the order implied by $<_\sgname$. That is, we first consider all tuples of the type that is ordered first by $<_\sgname$, then all tuples of the type that is ordered second, etc. If there are multiple tuples of the same type, the relative order in which we handle them is not important.
    For each tuple $\x$, we construct a tuple-context $c_\x$ with $c_\x(\varepsilon) = \x$, and for each path $F = f \cdot F'$ in $\sgname$ starting in $\type{\x}$, set $c_\x(F) = c_\y(F')$, with $\y = f^\db(\x)$.
    Notice that $c_\y$ is already defined for $\y$, as there is a path from $\type{\x}$ to $\type{\y}$ in $\sgname$ and, hence, $\type{\y} <_\sgname \type{\x}$.
    By construction, $c_\y$ is a tuple-subcontext of $c_\x$ witnessed by $f$.
    
    Next, we construct $\contextfunc$ as follows: $\contextfunc(\vx) = c_\x$ with $\barmu(\vx) = \x$ for every variable $\vx$ occurring in $\transcopies{D}$.
    We argue by induction on the definition of  $\vardet{F}{D}$ that
    \begin{center}
    \hfill $c_\x(F) = \barmu(\vy)$ for every $\vx \vardet{F}{D} \vy$ (with $c_\x = \contextfunc(\vx)$). \hfill $(\dagger)$
    \end{center}
    If $\vx \varimplies{F}{D} \vy$, then
    by construction of $\contextfunc$ and since $\barmu$ is admissible for $\db$, we have $c_\x(F) = \barmu(\vy)$. If $F = \varepsilon$ and $\vx \varequiv{D} \vy$, then $c_\x(\varepsilon) = \barmu(\vx) = \barmu(\vy)$ by Lemma~\ref{lemma:barmuvarequal}.
    Otherwise, if there exists a variable $\vz$ with $\vx \vardet{F_1}{D} \vz$, $\vz \vardet{F_2}{D} \vy$ and $F = F_1 \cdot F_2$,
    then by induction $c_\x(F_1) = \barmu(\vz) = \y$ and $c_\y(F_2) = \barmu(\vy)$, with $\contextfunc(\vz) = c_\y$. By construction of $c_\x$ and $c_\y$, the desired $c_\x(F) = \barmu(\vy)$ now follows.
    
    It remains to verify that $\contextfunc$ indeed satisfies all required properties. By construction, $\barmu(\vx) = c_\x(\varepsilon)$ with $c_\x = \contextfunc(\vx)$, so we only need to show that $\contextfunc$ respects the constraints of $D$ by verifying all properties in Definition~\ref{def:respectingconstraints}.
    To this end, let $\vx$ and $\vy$ be two variables occurring in $\transcopies{D}$, and let $c_\x = \contextfunc(\vx)$ and $c_\y = \contextfunc(\vy)$.
    \emph{(\ref{d:rc:1})} By construction, $c_\x$ is a tuple-context for $\type{\vx}$.
    \emph{(\ref{d:rc:2})} $c_\x(F_1) = \barmu(\vz) = c_\y(F_2)$ by $(\dagger)$.
    \emph{(\ref{d:rc:3})} If $\vw \neq \vz$ is a constraint, then $\barmu(\vw) \neq \barmu(\vz)$ as $\barmu$ is admissible for $\db$. By $(\dagger)$, it follows that $c_\x(F_1) = \barmu(\vw) \neq \barmu(\vz) = c_\y(F_2)$.
    \emph{(\ref{d:rc:4})} If $c_\x(F_1) = c_\y(F_2)$, then $\barmu(\vw) = c_\x(F_1) = c_\y(F_2) = \barmu(\vz)$ by $(\dagger)$. Since $\barmu$ is admissible for $\db$, there cannot be a constraint $\vw \neq \vz$.
    \emph{(\ref{d:rc:5})} If $\vx \vardet{F}{D} \vy$, then by construction of the tuple-contexts $c_\x$ and $c_\y$ it follows that $c_\y$ is a tuple-subcontext of $c_\x$ witnessed by $F$.
    \emph{(\ref{d:rc:6})} Let $c_{\x'}$ and $c_{\y'}$ be tuple-subcontexts of $c_{\x}$ and $c_{\y}$ witnessed by respectively $F_\vx$ and $F_\vy$.
    If $c_{\x}(F_\vx) = c_{\y}(F_\vy) = \z$ for some tuple $\z$, then by construction of $c_{\x}$ and $c_{\y}$ we have $c_{\x'} = c_{\y'} = c_\z$, with $c_\z$ the tuple-context assigned to this tuple $\z$.
\end{proof}

From $D = (\tau_1, o_1, p_2, \tau_2), \ldots, (\tau_m, o_m, p_1, \tau_1)$ and $\contextfunc$ as in Lemma~\ref{lemma:acycschedule:contexts}
we can derive a sequence of quintuples $E = (\tau_1, o_1, c_{o_1}, p_1, c_{p_1}), \ldots, (\tau_m, o_m, c_{o_m}, p_m, c_{p_m})$ such that $c_{o_i} = \contextfunc(\vx_i)$ (respectively $c_{p_i} = \contextfunc(\vy_i)$) for $i \in [1,m]$ with $o_i$ (respectively $p_i$) an operation over variable $\vx_i$ (respectively $\vy_i$). Intuitively, this sequence of quintuples can be used to reconstruct the original multiversion split schedule $\schedule$.
To this end, notice that we can derive the original \pcqsequence{} $D$ and a partial context assignment $\contextfunc'$ from $E$ that is defined for each variable $\vx_i$ occurring in either an operation $o_i$ or $p_i$ in $\tau_i$. We first show that we can extend this partial context assignment $\contextfunc'$ to a total context assignment respecting the constraints in $D$ (Lemma~\ref{lemma:extendcontextassignment}),
and then prove that such a total context assignment respecting the constraints in $D$ implies a variable assignment $\barmu$ such that the $C = \barmu(D)$ is a valid \cqsequence{} (Lemma~\ref{lemma:contextimpliesbarmu}).

\begin{lem}\label{lemma:extendcontextassignment}
    Let $D = (\tau_1, o_1, p_2, \tau_2), \ldots, (\tau_m, o_m, p_1, \tau_1)$ be a \pcqsequence{} over a set of \ptranss{} $\workload \in \acyctemplates$ and $\contextfunc$ a partial context assignment defined for every variable $\vx_i$ of $o_i$ and $\vy_i$ of $p_i$ in every $\tau_i$.
    If
    \begin{itemize}
        \item $\contextfunc$ respects the constraints of $D$; and
        \item for every pair of variables $\vx$ and $\vy$ in a \shortptrans{} $\tau_i$ with $\vx \varequiv{\tau_i} \vy$, there is no constraint $\vx \neq \vy$ in $\tau_i$;
    \end{itemize}
    then we can extend $\contextfunc$ to a total context assignment $\contextfunc'$ for $D$ respecting the constraints of $D$.
\end{lem}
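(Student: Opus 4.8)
The plan is to extend $\contextfunc$ by letting the determination structure dictate every value that is already forced and then filling in all remaining values with pairwise fresh tuples. Write $V_0$ for the set of variables on which $\contextfunc$ is defined. Because the schema is acyclic, every type has only finitely many outgoing paths in $\sgname$, so each tuple-context is a finite object and the construction below terminates. It is convenient to speak of \emph{cells}: a cell is a pair $(\vy, F)$ with $\vy$ a variable in $\transcopies{D}$ and $F$ a path of $\sgname$ starting at $\type{\vy}$, and $c_\vy(F)$ is the value attached to it.

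First I would isolate the forced part. For a variable $\vy \notin V_0$, the value $c_\vy(F)$ is forced as soon as the cell $(\vy,F)$ is tied to an assigned cell $(\vx, G)$ with $\vx \in V_0$ through the equalities mandated by Definition~\ref{def:respectingconstraints}: condition~\ref{d:rc:5} ties the contexts of determination-related variables in both directions (if $\vx \vardet{F}{D} \vy$ then $c_\vy$ is the subcontext of $c_\vx$ witnessed by $F$, and symmetrically a variable $\vy \vardet{G_1}{D} \vx$ with $\vx \in V_0$ has $c_\vy(G_1 \cdot G_2) = c_\vx(G_2)$ fixed), condition~\ref{d:rc:2} forces $c_\vy(F_1) = c_\vx(F_2)$ whenever $\vy$ and $\vx$ determine a common variable along $F_1$ and $F_2$, and condition~\ref{d:rc:6} propagates any such equality to all common path-extensions. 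I would set $c_\vy(F)$ to the corresponding $\contextfunc$-value on every forced cell. The crucial point is well-definedness: if one cell is tied to two assigned cells $(\vx_1, G_1)$ and $(\vx_2, G_2)$, then $c_{\vx_1}(G_1) = c_{\vx_2}(G_2)$ must already hold, and this follows from $\contextfunc$ respecting conditions~\ref{d:rc:2} and \ref{d:rc:6}, argued by induction on the chain of forcing rules linking the two cells.

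For the remaining, genuinely free cells I would assign fresh tuples. To keep the subcontext conditions~\ref{d:rc:5} and \ref{d:rc:6} intact among the new values, I process types in a topological order of $\sgname$ (descendants first, exactly as in the proof of Lemma~\ref{lemma:acycschedule:contexts}): a free cell $(\vy, f\cdot F')$ is set equal to $c_{\vy'}(F')$ whenever $\vy \vardet{f}{D} \vy'$ for an already-handled variable $\vy'$, otherwise it reuses a single fresh descendant context created for that edge, and each free cell $(\vy,\varepsilon)$ receives a brand-new tuple of type $\type{\vy}$ used nowhere else. This yields a total context assignment $\contextfunc'$ extending $\contextfunc$, and conditions~\ref{d:rc:1}, \ref{d:rc:2}, \ref{d:rc:5} and \ref{d:rc:6} then hold by construction, since types are preserved along all forcing rules, every forced equality and subcontext relation is realised, and no unintended equality among the fresh tuples is created.

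The main obstacle is verifying the disequality conditions~\ref{d:rc:3} and \ref{d:rc:4}. Suppose $c_\vx(F_1) = c_\vy(F_2)$ with $\vx \vardet{F_1}{D} \vw$, $\vy \vardet{F_2}{D} \vz$ and a constraint $\vw \neq \vz$ in some $\tau$. If the common cell is anchored to $V_0$, the equality contradicts the fact that $\contextfunc$ already respects~\ref{d:rc:3} and \ref{d:rc:4}, by tracing the equality back to the assigned cells. If the cell is free, the equality can only arise because the two cells were forced equal, and this in turn forces $\vw$ and $\vz$ to be equivalent; since a disequality constraint lives inside a single template, $\vw$ and $\vz$ must then occur in the same $\tau$ with $\vw \varequiv{\tau} \vz$, so the second hypothesis of the lemma guarantees there is no constraint $\vw \neq \vz$, the desired contradiction. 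Relating cell-equivalence back to $\varequiv{\tau}$ and $\varequiv{D}$ (via Lemma~\ref{lemma:barmuvarequal} and the definition of determination) is the delicate step that makes the second hypothesis do its work, and acyclicity is precisely what keeps all the involved paths, contexts, and inductions finite.
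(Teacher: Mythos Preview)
Your approach is essentially the same as the paper's: extend $\contextfunc$ by processing variables in a topological order on types, use subcontexts of already-assigned contexts wherever determination forces a value and fresh tuples elsewhere, and then verify the conditions of Definition~\ref{def:respectingconstraints}. The paper organizes the extension via $\varequiv{D}$-equivalence classes (two cases: the class is determined by some already-assigned variable, or it is not) and checks each condition as an invariant after every single extension step, which makes the disequality argument slightly more localized than your forced/free cell dichotomy, but the substance of the argument is the same; one small remark is that your appeal to Lemma~\ref{lemma:barmuvarequal} is misplaced, since that lemma concerns $\barmu$ rather than contexts---the relevant bridge is the definition of $\varequiv{D}$ and the fact that cross-template determination must pass through the $V_0$ variables.
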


\begin{proof}
    \newcommand{\varequivclass}[1]{[#1]}
    By definition of equivalence, $\varequiv{D}$ partitions all variables occurring in $\transcopies{D}$ in \emph{equivalence classes}. That is, two variables $\vx$ and $\vy$ are in the same equivalence class iff $\vx \varequiv{D} \vy$. For a given variable $\vx$, we denote the equivalence class $\vx$ belongs to by $\varequivclass{\vx}$.
    Note that for any pair of variables $\vx$ and $\vy$ occurring  in $\transcopies{D}$, if $\vx \vardet{F}{D} \vy$, then $\vx' \vardet{F}{D} \vy'$ for any pair of variables $\vx' \in \varequivclass{\vx}$ and $\vy' \in \varequivclass{\vy}$.
    By slight abuse of notation, we use $\vx \vardet{F}{D} \varequivclass{\vy}$ and $\varequivclass{\vx} \vardet{F}{D} \vy$ to denote that
    $\vx \vardet{F}{D} \vy'$ for every $\vy' \in \varequivclass{\vy}$ and $\vx' \vardet{F}{D} \vy$ for every $\vx' \in \varequivclass{\vx}$, respectively.
    
    Let $(\schRel, \schFunc)$ be the schema over which $\workload$ is defined.
    Since the schema graph $\sg{\schRel, \schFunc}$ is acyclic, a total order $<_\sgname$ over $\schRel$ exists such that there is no path from type $R$ to type $S$ in $\sgname$ if $R <_\sgname S$.
    We now define $\contextfunc'$ for variables in $\transcopies{D}$ according to the order implied by $<_\sgname$. If there are multiple variables of the same type, the relative order in which we handle them is not important.

    The proof is as follows. Assume $\contextfunc$ respects the constraints of $D$ and is at least defined for every variable $\vx_i$ of $o_i$ and $\vy_i$ of $p_i$ in every $\tau_i$.
    We extend $\contextfunc$ towards $\contextfunc'$ by defining $\contextfunc'$ for the whole equivalence class $\varequivclass{\vx}$ of the first (according to $<_\sgname$) variable $\vx$ for which $\contextfunc$ is not defined.
    The precise construction is by case. In the first case, the tuple-context that should be assigned to variables in $\varequivclass{\vx}$ is already implied, as it is the tuple-subcontext of an existing tuple-context. In the second case, we construct a fresh tuple-context, including existing tuple-contexts as tuple-subcontexts where we need to make sure that $\contextfunc'$ respects the constraints in $D$.
    In each case, we then argue that $\contextfunc'$ still respects the constraints in $D$.
    By repeating this argument, we can extend the context assignment to a total context assignment defined for all variables occurring in $\transcopies{D}$.

    \medskip
    \noindent
    \emph{(Case 1)} If a variable $\vy$ exists with $\contextfunc$ defined for $\vy$ and $\vy \vardet{F}{D} \varequivclass{\vx}$, then $\contextfunc'(\vx') = c_{\vx'}$ for every variable $\vx' \in \varequivclass{\vx}$, with $c_{\vx'}$ the tuple-subcontext of $c_\vy = \contextfunc(\vy)$ witnessed by $F$.
    Notice that this is well defined, even if there are multiple such $\vy$, as they all agree on $c_{\vx'}$ by Definition~\ref{def:respectingconstraints}~\emph{(\ref{d:rc:2},~\ref{d:rc:6})}.
    Also note that the special case where $\contextfunc$ is already defined for at least one variable $\vx' \in \varequivclass{\vx}$ is covered by this case as well, as $\vx' \vardet{\varepsilon}{D} \varequivclass{\vx}$ follows from $\vx' \in \varequivclass{\vx}$. In this special case, the tuple-subcontext of $\contextfunc(\vx')$ witnessed by $\varepsilon$ (i.e., $\contextfunc(\vx')$ itself) will be assigned to each variable in $\varequivclass{\vx}$.
    
    We show that $\contextfunc'$ indeed respects the constraints in $D$ according to the properties stated in Definition~\ref{def:respectingconstraints}.
    To this end, let $\vx'$ and $\vy'$ be two variables, with $c_{\vx'} = \contextfunc'(\vx')$ and $c_{\vy'} = \contextfunc'(\vy')$.
    \emph{(\ref{d:rc:1})} By construction, $c_{\vx'}$ is a tuple-context for $\type{\vx'}$.
    \emph{(\ref{d:rc:2}-\ref{d:rc:4})} Note that if $\vx'' \vardet{F'}{D} \vz''$ with $\vx'' \in \varequivclass{\vx}$, then $\vy \vardet{F \cdot F'}{D} \vz''$
    and $c_{\vx''} = \contextfunc'(\vx'')$ is the tuple-subcontext of $c_\vy = \contextfunc(\vy)$ witnessed by $F'$, implying that $c_{\vx''}(F') = c_{\vy}(F \cdot F')$. If $\vx'$ and/or $\vy'$ are in $\varequivclass{\vx}$, then we can apply this substitution and use the fact that $\contextfunc$ respects the constraints in $\tau$ to conclude that the desired properties hold for $\contextfunc'$.
    \emph{(\ref{d:rc:5})} Assume $\vx' \vardet{F'}{D} \vy'$. If both $\vx' \in \varequivclass{\vx}$ and $\vy' \in \varequivclass{\vx}$, then $F' = \varepsilon$ as otherwise the schema graph is not acyclic.
    Since $c_{\vy'} = c_{\vx'}$, it follows that $c_{\vy'}$ is a tuple-subcontext of $c_{\vx'}$ witnessed by $\varepsilon$.
    If $\vx' \in \varequivclass{\vx}$ and $\vy' \not\in \varequivclass{\vx}$, then $\vy \vardet{F \cdot F'}{\tau} \vy'$ and $c_{\vy'}$ is a tuple-subcontext of $c_{\vy}$ witnessed by $F \cdot F'$ as $\contextfunc$ respects the constraints of $\tau$.
    Since $c_{\vx'}$ is the tuple-subcontext of $c_\vy$ witnessed by $F$, it follows that $c_{\vy'}$ is a tuple-subcontext of $c_{\vx'}$ witnessed by $F'$.
    If $\vx' \not\in \varequivclass{\vx}$ and $\vy' \in \varequivclass{\vx}$, then $\vy \vardet{F}{\tau} \vy'$.
    Since $\contextfunc$ respects the constraints in $D$, we apply Definition~\ref{def:respectingconstraints}~\emph{(\ref{d:rc:2},~\ref{d:rc:5},~\ref{d:rc:6})} to conclude that $c_{\vy'}$ is a tuple-subcontext of $c_{\vx'}$ witnessed by $F'$.
    \emph{(\ref{d:rc:6})} Assume $c_{\vx'}(F_{\vx'}) = c_{\vy'}(F_{\vy'})$, and let $c_{\vx''}$ and $c_{\vy''}$ be the tuple-subcontexts of respectively $c_{\vx'}$ witnessed by $F_{\vx'}$ and $c_{\vy'}$ witnessed by $F_{\vy'}$.
    We argue that $c_{\vx''} = c_{\vy''}$.
    Note that, if $\vx' \in \varequivclass{\vx}$, then $c_{\vx''}$ is the tuple-subcontext of $c_\vy$ witnessed by $F \cdot F_{\vx'}$. The reasoning for $\vy' \in \varequivclass{\vx}$ is analogous. Since $\contextfunc$ respects the constraints in $D$, it follows that $c_{\vx''} = c_{\vy''}$.
    
    \medskip
    \noindent
    \emph{(Case 2)} Otherwise, we construct a fresh tuple-context $c_\vx$ and define $\contextfunc'(\vx') = c_\vx$ for every variable $\vx' \in \varequivclass{\vx}$. This tuple-context $c_\vx$ is constructed as follows: $c_\vx(\varepsilon) = \x_\vx$, with $\x_\vx$ a fresh tuple of the appropriate type.
    For every path $F = f \cdot F'$ in $\sgname$ starting in $\type{\vx}$, if there is a variable $\vy$ with $\varequivclass{\vx} \vardet{f}{D} \vy$, then $c_\vx(F) = c_\vy(F')$, with $c_\vy = \contextfunc(\vy)$. In other words, $c_\vy$ is the tuple-subcontext of $c_\vx$ witnessed by $f$.
    Note that due to the order $<_\sgname$, $\contextfunc(\vy)$ has to be defined already.
    Also note that this is well defined, even if multiple such $\vy$ exist. In that case, all these $\vy$ are equivalent to each other by definition of $\varequiv{D}$, and by construction of $\contextfunc$ they are assigned the same tuple-context. If instead no such variable $\vy$ exists, we define $c_\vx(F) = \x_F$, with $\x_F$ a fresh tuple of the appropriate type.
    
    We show that $\contextfunc'$ indeed respects the constraints in $D$ according to the properties stated in Definition~\ref{def:respectingconstraints}.
    To this end, let $\vx'$ and $\vy'$ be two variables occurring in $\transcopies{D}$, with $c_{\vx'} = \contextfunc'(\vx')$ and $c_{\vy'} = \contextfunc'(\vy')$.
    \emph{(\ref{d:rc:1})} By construction, $c_{\vx'}$ is a tuple-context for $\type{\vx'}$.
    \emph{(\ref{d:rc:2})} Assume $\vx' \vardet{F_1}{D} \vz$ and $\vy' \vardet{F_2}{D} \vz$ for some variable $\vz$.
    We argue that there exists a pair of variables $\vx''$ and $\vy''$ and two sequences of function names $F_1'$ and $F_2'$ such that $c_{\vx'}(F_1) = c_{\vx''}(F_1') = c_{\vy''}(F_2') = c_{\vy'}(F_2)$, where $c_{\vx''} = \contextfunc'(\vx'')$ and $c_{\vy''} = \contextfunc'(\vy'')$.
    If $\vx' \in \varequivclass{\vx}$,
    then either $F_1 = f \cdot F_1'$ or $F_1 = \varepsilon$. In the former case there is a variable $\vx''$ with $\vx'' \vardet{f}{D} \vz$ such that $c_{\vx'}(F_1) = c_{\vx''}(F_1')$, where $c_{\vx''} = \contextfunc(\vx'')$.
    In the later case, $\vz \in \varequivclass{\vx}$, and we simply take $\vx'' = \vx'$ and $F_1' = F_1$.
    If $\vx' \not\in \varequivclass{\vx}$, we take $\vx'' = \vx'$ and $F_1' = F_1$.
    For $\vy' \in \varequivclass{\vx}$ and $\vy' \in \not\varequivclass{\vx}$, the reasoning is analogous.
    It remains to argue that $c_{\vx''}(F_1') = c_{\vy''}(F_2')$. By choice of $\vx''$ and $\vy''$, either $\vx'' \not\in \varequivclass{\vx}$ and $\vy'' \not\in \varequivclass{\vx}$; or $\vz \in \varequivclass{\vx}$.
    In te former case, $c_{\vx''}(F_1') = c_{\vy''}(F_2')$ follows by the fact that $\contextfunc$ respects the constraints of $D$. In the latter case, both $\vx'' \in \varequivclass{\vx}$ and $\vy'' \in \varequivclass{\vx}$, as otherwise (Case 1) would apply to $\varequivclass{\vx}$ instead. Then, $c_{\vx''}(F_1') = c_{\vy''}(F_2') = c_{\vx}(\varepsilon) = \x_\vx$.
    \emph{(\ref{d:rc:3},~\ref{d:rc:4})} The reasoning is analogous to the previous property. Note in particular that by construction of the new $c_\vx$ we have $c_{\vx'}(F_1) = c_{\vy'}(F_2)$ if $\vw \varequiv{D} \vz$.
    Since $\vw \varequiv{D} \vz$ implies that there is no constraint $\vw \neq \vz$ by the assumptions on $\contextfunc$ and on the disequality constraints in each \shortptrans{} $\tau \in \transcopies{D}$, this does not lead to contradictions.
    \emph{(\ref{d:rc:5})} If $\vx' \vardet{F'} \vy'$, then $\vy' \in \varequivclass{\vx}$ only if $\vx' \in \varequivclass{\vx}$, as otherwise (Case 1) would apply to $\varequivclass{\vx}$ instead. We argue by case that $c_{\vy'}$ is a tuple-subcontext of $c_{\vx'}$ witnessed by $F'$.
    If $\vx' \not\in \varequivclass{\vx}$ and $\vy' \not\in \varequivclass{\vx}$, the result is immediate by the fact that $\contextfunc$ respects the constraints of $D$.
    If $\vx' \in \varequivclass{\vx}$ and $\vy' \not\in \varequivclass{\vx}$, then $c_{\vx'} = c_\vx$ and a variable $\vz$ exists such that $F' = f \cdot F''$, $\vx' \vardet{f}{D} \vz$, $\vz \vardet{F''}{D} \vy'$,
    and, by construction of $c_\vx$, $c_{\vy'}$ is a tuple-subcontext of $\contextfunc(\vz)$ witnessed by $F''$. It now follows that $c_{\vy'}$ is a tuple-subcontext of $c_\vx$ witnessed by $F'$.
    Lastly, If both $\vx' \in \varequivclass{\vx}$ and $\vy' \in \varequivclass{\vx}$, then $F' = \varepsilon$, as otherwise the schema graph is not acyclic. The result is immediate, as $c_{\vy'} = c_{\vx'} = c_{\vx}$ is by definition a tuple-subcontext of itself witnessed by $\varepsilon$.
    \emph{(\ref{d:rc:6})} Assume $c_{\vx''}(F_1) = c_{\vy''}(F_1)$ for some pair of tuple-contexts $c_{\vx''}$ and $c_{\vy''}$ that are tuple-subcontexts of respectively $c_{\vx'}$ witnessed by $F_1$ and $c_{\vy'}$ witnessed by $F_2$. We argue that $c_{\vx''} = c_{\vy''}$.
    If both $c_{\vx'}$ and $c_{\vy'}$ are different from $c_\vx$, the result is immediate as $\contextfunc$ respects the constraints of $D$.
    Otherwise, since the construction of $c_\vx$, either copies existing tuple-contexts as tuple-subcontexts, or introduces fresh variables. the result holds if $c_{\vx'}$ and/or $c_{\vy'}$ are equal to $c_\vx$.
\end{proof}

Given a total context assignment respecting the constraints in $D$ as in Lemma~\ref{lemma:extendcontextassignment}, we show in the next Lemma that such a total context assignment implies a variable assignment $\barmu$ such that the $C = \barmu(D)$ is a valid \cqsequence{}.

\begin{lem}\label{lemma:contextimpliesbarmu}
    Let $D = (\tau_1, o_1, p_2, \tau_2), \ldots, (\tau_m, o_m, p_1, \tau_1)$ be a \pcqsequence{} over a set of \ptranss{} $\workload$ and $\contextfunc$ a total context assignment for $D$ respecting the constraints of $D$.
    The variable mapping $\barmu$ obtained by defining $\barmu(\vx) = c_\vx(\epsilon)$ for every variable $\vx$ in $\transcopies{D}$ with $c_\vx = \contextfunc(\vx)$ then is 
    a valid variable mapping admissible for some database $\db$.
\end{lem}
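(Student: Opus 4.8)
The plan is to build an explicit database $\db$ out of the tuple-contexts supplied by $\contextfunc$ and then verify admissibility of $\barmu$ constraint by constraint. Concretely, I would take as the tuples of $\db$ all values $c_\vx(F)$, where $\vx$ ranges over the variables occurring in $\transcopies{D}$ and $F$ over the (finitely many, since $\sgname$ is acyclic) paths with source $\type{\vx}$; a value obtained from a path ending at node $S$ is placed in $S^\db$. For each function name $f$ with $\dom{f} = R$ and each tuple $\x$ of type $R$ in $\db$, I fix a witness $\x = c_\vx(F)$ with $F$ a path from $\type{\vx}$ to $R$ and set $f^\db(\x) = c_\vx(F \cdot f)$; since $F \cdot f$ is again a valid path, this value lies in $\db$, so the functions stay inside $\db$ and each $f^\db$ is total on $\dom{f}^\db$. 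Because $c_\vx$ is a tuple-context for $\type{\vx}$ by Definition~\ref{def:respectingconstraints}~(\ref{d:rc:1}), the base value $\barmu(\vx) = c_\vx(\epsilon)$ has type $\type{\vx}$, so $\barmu$ is a well-typed variable mapping, and acyclicity of $\sgname$ guarantees that $\db$ is finite.

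The crux --- and the only step that is not a routine unfolding --- is that $f^\db$ must not depend on the chosen witness, which is exactly what condition~(\ref{d:rc:6}) was designed to provide. Suppose $\x = c_\vx(F_\vx) = c_\vy(F_\vy)$ for two witnesses (possibly with $\vx = \vy$ and $F_\vx \neq F_\vy$). Then Definition~\ref{def:respectingconstraints}~(\ref{d:rc:6}) forces the tuple-subcontexts $c_\vx'$ and $c_\vy'$ of $c_\vx$ and $c_\vy$ witnessed by $F_\vx$ and $F_\vy$ to coincide; writing $c'$ for this common context, the defining equation of a tuple-subcontext gives $c_\vx(F_\vx \cdot f) = c'(f) = c_\vy(F_\vy \cdot f)$. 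Hence each $f^\db$ is a well-defined function from $\dom{f}^\db$ to $\range{f}^\db$ and $\db$ is a legitimate database over $(\schRel, \schFunc)$.

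What remains is to check that $\barmu$ satisfies every constraint of every template in $\transcopies{D}$ over this $\db$. For an equality constraint $\vx = f(\vy)$ in some $\Cset(\tau_i)$, the definition of implication gives $\vy \varimplies{f}{D} \vx$ and hence $\vy \vardet{f}{D} \vx$; by~(\ref{d:rc:5}) the context $c_\vx$ is then the tuple-subcontext of $c_\vy$ witnessed by $f$, so $\barmu(\vx) = c_\vx(\epsilon) = c_\vy(f)$. On the other hand, $f^\db(\barmu(\vy)) = f^\db(c_\vy(\epsilon)) = c_\vy(\epsilon \cdot f) = c_\vy(f)$ by the construction of $f^\db$, and therefore $\barmu(\vx) = f^\db(\barmu(\vy))$, as the constraint demands. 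For a disequality constraint $\vx \neq \vy$ in $\Cset(\tau_i)$, I would apply~(\ref{d:rc:3}) to the witnesses $\vx \vardet{\epsilon}{D} \vx$ and $\vy \vardet{\epsilon}{D} \vy$ (each obtained from the base case $\vx \varequiv{D} \vx$ of variable equivalence): since $\vx \neq \vy$ is a constraint, this yields $\barmu(\vx) = c_\vx(\epsilon) \neq c_\vy(\epsilon) = \barmu(\vy)$. Thus $\barmu$ is admissible for $\db$, establishing the lemma; the only non-routine points are the finiteness of $\db$, which relies on acyclicity of $\sgname$, and the well-definedness of the functions, which relies on condition~(\ref{d:rc:6}), everything else following directly from Definition~\ref{def:respectingconstraints}.
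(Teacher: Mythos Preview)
Your argument is largely sound, and in fact your database construction is arguably tidier than the paper's: the paper populates $\db$ only with the base values $\barmu(\vx)=c_\vx(\epsilon)$ and defines $f^\db$ pointwise from the functional constraints in $\transcopies{D}$, whereas you populate $\db$ with all values $c_\vx(F)$ and define $f^\db$ by path extension. Your version makes each $f^\db$ total on $\dom{f}^\db$ by construction, at the price of needing acyclicity for finiteness; the paper's version keeps $\db$ small but must argue well-definedness of $f^\db$ (which it also does via condition~(\ref{d:rc:6})). Both verify disequality constraints via~(\ref{d:rc:3}) in the same way; the paper gets equality constraints for free from the construction, while you recover them from~(\ref{d:rc:5}).

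There is, however, one genuine omission. The lemma asserts that $\barmu$ is a \emph{valid} variable mapping, and in this paper a variable mapping \emph{for $D$} is required to satisfy $\barmu(\myvar{o_i}) = \barmu(\myvar{p_j})$ for every potentially conflicting quadruple $(\tau_i,o_i,p_j,\tau_j)$ in $D$; otherwise $\barmu(D)$ is not a sequence of conflicting quadruples at all. You only check that $\barmu$ is well-typed and admissible, but never this condition. The paper addresses it first: from the quadruple one has $\myvar{o_i} \varimplies{\epsilon}{D} \myvar{p_j}$, hence $\myvar{o_i} \vardet{\epsilon}{D} \myvar{p_j}$, and by~(\ref{d:rc:5}) the context $c_{\myvar{p_j}}$ is the tuple-subcontext of $c_{\myvar{o_i}}$ witnessed by $\epsilon$, giving $c_{\myvar{o_i}}(\epsilon)=c_{\myvar{p_j}}(\epsilon)$. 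This is a two-line fix, but as written your proof does not establish what ``valid'' is meant to cover.
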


\begin{proof}
    We first argue that $\barmu$ is valid by showing for each \conflictquadruple{} $(\tau_i, o_i, p_j, \tau_j)$ in $D$ that $\barmu(\vx) = \barmu(\vy)$ with $\vx = \myvar{o_i}$ and $\vy = \myvar{p_j}$.
    By definition, $\vx \varimplies{\varepsilon}{D} \vy$, and hence $\vx \vardet{\varepsilon}{D} \vy$. Let $c_\vx = \contextfunc(\vx)$ and $c_\vy = \contextfunc(\vy)$.
    Since $\contextfunc$ respects the constraints of $D$, $c_\vy$ is a tuple-subcontext of $c_\vx$ witnessed by $\varepsilon$.
    By definition of tuple-subcontexts, $c_\vx(\varepsilon) = c_\vy(\varepsilon)$, and, as a result, $\barmu(\vx) = c_\vx(\varepsilon) = c_\vy(\varepsilon) = \barmu(\vy)$.
    
    Next, we construct a database $\db$ and show that $\barmu$ is admissible for $\db$. To this end, we add the tuple $\barmu(\vx)$ to $\db$ for each variable $\vx$ occurring in $\transcopies{D}$. 
    For each 
    functional constraint $\vy = f(\vx)$ in a \ptrans{} in $\transcopies{D}$, we define $\barmu(\vy) = f^\db(\barmu(\vx))$ for the corresponding function $f^\db$ in $\db$. 
    Note that this is well defined.
    Towards a contradiction, assume that we have both $\barmu(\vy) = f^\db(\barmu(\vx))$ and $\barmu(\vw) = f^\db(\barmu(\vz))$, with $\barmu(\vx) = \barmu(\vz)$ but $\barmu(\vy) \neq \barmu(\vw)$.
    Let $c_\vx = \contextfunc(\vx)$, $c_\vy = \contextfunc(\vy)$, $c_\vz = \contextfunc(\vz)$ and $c_\vw = \contextfunc(\vw)$.
    By construction of $\barmu$, we have $c_\vx(\varepsilon) = \barmu(\vx) = \barmu(\vz) = c_\vz(\varepsilon)$ and $c_\vy(\varepsilon) = \barmu(\vy) \neq \barmu(\vw) = c_\vw(\varepsilon)$.
    By Definition~\ref{def:respectingconstraints}~\emph{(\ref{d:rc:6})}, it now follows that $c_\vx = c_\vz$, since $c_\vx$ (respectively $c_\vz$) is a tuple-subcontext of itself witnessed by $\varepsilon$ and $c_\vx(\varepsilon) = c_\vz(\varepsilon)$.
    Since we defined $\barmu(\vy) = f^\db(\barmu(\vx))$, there is a constraint $\vy = f(\vx)$ in some \shortptrans{} in $\transcopies{D}$, and hence $\vx \vardet{f}{D} \vy$. Analogously, $\vz \vardet{f}{D} \vw$.
    By Definition~\ref{def:respectingconstraints}~\emph{(\ref{d:rc:5})}, $c_\vy$ and $c_\vw$ are tuple-subcontexts of respectively $c_\vx$ and $c_\vz$ witnessed by $f$.
    As $c_\vx = c_\vz$, it immediately follows that $c_\vy = c_\vw$, and in particular $c_\vy(\varepsilon) = c_\vw(\varepsilon)$, leading to the desired contradiction.
    
    To conclude the proof, we show that $\barmu$ is indeed admissible for $\db$.
    By construction of $\db$ based on $\barmu$, $\barmu(\vy) = f^\db(\barmu(\vx))$ is immediate for each constraint $\vy = f(\vx)$ in a \shortptrans{} $\tau \in \transcopies{D}$.
    We still need to argue that $\barmu(\vx) \neq \barmu(\vy)$ for each constraint $\vx \neq \vy$ in a \shortptrans{} $\tau \in \transcopies{D}$.
    Let $c_\vx = \contextfunc(\vx)$ and $c_\vy = \contextfunc(\vy)$.
    By construction of $\barmu$ we have $\barmu(\vx) = c_\vx(\varepsilon)$ and $\barmu(\vy) = c_\vy(\varepsilon)$.
    Note that $\vx \vardet{\varepsilon}{D} \vx$ and $\vy \vardet{\varepsilon}{D} \vy$.
    Therefore, we can apply Definition~\ref{def:respectingconstraints}~\emph{(\ref{d:rc:3})} to conclude that $c_\vx(\varepsilon) \neq c_\vy(\varepsilon)$, and hence $\barmu(\vx) \neq \barmu(\vy)$.
\end{proof}

In order to decide robustness against \mvrc, one can now construct a sequence of quintuples $E$ and derive the \pcqsequence{} $D$ and partial context assignment $\contextfunc$ from it.
If $\contextfunc$ respects the constraints in $D$, then it follows from Lemma~\ref{lemma:extendcontextassignment} and Lemma~\ref{lemma:contextimpliesbarmu} that we can construct a variable assignment $\barmu$ such that $C = \barmu(D)$ is a valid \cqsequence{}.
However, in this construction of $E$, care should be taken to guarantee that $\contextfunc$ indeed respects the constraints in $D$, and that the resulting multiversion split schedule based on $C$ indeed satisfies all properties in Definition~\ref{def:mvsplitschedule}.

In the algorithm that we are about to propose, we search for such a sequence of quintuples $E$, but without fixating all the tuples in each context.
For this, we generalize our definition of tuple-contexts to allow variables: A \emph{context for a type $R$} is a function from paths with source $R$ in  $\sg{\schRel, \schFunc}$ to variables in $\variables$ and 
tuples in $\objects$ of the appropriate type. The purpose of variables is to encode equalities and disequalities within each context, without being explicit about the precise tuples. That is, if two paths ending in the same node in $\sgname$ are mapped on the same variable, then they will represent the same tuple; if they are mapped on different variables, then they represent a different tuple. We remark that a same variable occurring in different contexts can still represent different tuples.
Analogous to tuple-subcontexts, for two types $R$ and $S$ with $R \asgimplies{F}{\sgname} S$, we say that a context $c_{S}$ for type $S$ is a \emph{subcontext} of a context $c_R$ for type $R$ witnessed by $F$ if:
\begin{itemize}
    \item for every path $S \asgimplies{F'}{\sgname} S'$ in $\sgname$, if $c_{R}(F \cdot F')$ is a tuple, then $c_{S}(F') = c_{R}(F \cdot F')$; otherwise, $c_{S}(F')$ is a variable; and
    \item for every pair of paths $S \asgimplies{F_1}{\sgname} S_1$ and $S \asgimplies{F_2}{\sgname} S_2$ in $\sgname$ with $c_{R}(F \cdot F_1)$ and $c_{R}(F \cdot F_2)$ variables, $c_{S}(F_1) = c_{S}(F_2)$ iff $c_{R}(F \cdot F_1) = c_{R}(F \cdot F_2)$.
\end{itemize}
We call a context a \emph{variable-context} if all paths are mapped on variables.

For a \ptrans{} $\tau$, tuple-context $c_p$ for $p$ and $c_o$ for $o$ in $\tau$, we consider the set $\contextdomain[SG,\tau,p,c_p, o,c_o]$ of all different (not-necessarily tuple-) contexts $c$ (up to isomorphisms over the variables in $c$) that can be obtained, starting from a variable-context $c'$, by performing substitutions of subcontexts of $c'$ with subcontexts of $c_p$ and/or $c_o$.
More formally, these substitutions are of the form:
For a path $R_c \asgimplies{F}{\sgname} S$ (here $R_c$ is the type that $c$ is for) and $R_p \asgimplies{F'}{\sgname} S$ (with $R_p$ the type that $c_p$ is for) then $c(F \cdot F'') = c_p(F' \cdot F'')$ for every path $S \asgimplies{F''}{\sgname} S'$ in $\sgname$ and with $c(F \cdot F'') = c'(F \cdot F'')$ otherwise. (The substitution rule can be applied for $c_p$ as well as for $c_o$.)

\begin{lem}\label{lemma:asgcharacterization}
    Let $\workload$ be a set of \ptranss{} over an acyclic schema.
    Then, $\workload$ is not robust against \mvrc if, and only if, there is a sequence of quintuples $E = (\tau_1, o_1, c_{o_1}, p_1, c_{p_1}),\allowbreak \ldots,\allowbreak (\tau_m, o_m, c_{o_m}, p_m, c_{p_m})$
    with $m \geq 2$ such that for each quintuple $(\tau_i, o_i, c_{o_i}, p_i, c_{p_i})$ in $E$, with $q_i$ and $r_i$ two (not necessarily different) operations in $\{o_i, p_i\}$,
    \newcounter{resumecountertwo}
    \begin{enumerate}
        \item\label{asgc:1} if $i = 1$, then $c_{o_1}$ and $c_{p_1}$ are tuple-contexts for $\type{\myvar{o_1}}$ and $\type{\myvar{p_1}}$.
        Furthermore, for every pair of tuple-subcontexts $c_{o_1}'$ and $c_{p_1}'$ of $c_{o_1}$ and $c_{p_1}$ witnessed by respectively $F$ and $F'$, if $c_{o_1}(F) = c_{p_1}(F')$, then $c_{o_1}' = c_{p_1}'$;
        \item\label{asgc:2} if $i \neq 1$, then $c_{o_i}, c_{p_i} \in \contextdomain[\sgname,\tau_1,p_1,c_{p_1}, o_1,c_{o_1}]$ are contexts for $\type{\myvar{o_i}}$ and $\type{\myvar{p_i}}$;
        \item\label{asgc:3} for every pair of variables $\vw_i$ and $\vz_i$ in $\tau_i$ with $\vw_i \varequiv{\tau_i} \vz_i$, there is no constraint $\vw_i \neq \vz_i$ in $\tau_i$;
        \item\label{asgc:4} for every $\myvar{q_i} \vardet{F_1}{\tau_i} \vz_i$ and $\myvar{r_i} \vardet{F_2}{\tau_i} \vz_i$, {the subcontext of $c_{q_i}$ witnessed by $F_1$ is equal (up to isomorphisms over variables) to the subcontext of $c_{r_i}$ witnessed by $F_2$.}
        \item\label{asgc:5} for every $\myvar{q_i} \vardet{F_1}{\tau_i} \vw_i$ and $\myvar{r_i} \vardet{F_2}{\tau_i} \vz_i$ with $\vw_i \neq \vz_i$ a constraint in $\tau_i$,
        $c_{q_i}(F_1) \neq c_{r_i}(F_2)$ or $c_{q_i}(F_1)$ and $c_{r_i}(F_2)$ are both variables;
        \item\label{asgc:6} for every $\myvar{q_i} \varimplies{F_1}{\tau_i} \vw_i$ and $\myvar{r_i} \varimplies{F_2}{\tau_i} \vz_i$, if $c_{q_i}(F_1)$ and $c_{r_i}(F_2)$ are the same tuple, then there is no constraint $\vw_i \neq \vz_i$ in $\tau_i$;
        \item\label{asgc:7} if $\myvar{q_i} \vardet{F}{\tau_i} \myvar{r_i}$, then $c_{r_i}$ is a subcontext of $c_{q_i}$ witnessed by $F$; and
        \item\label{asgc:8} If $i \neq 1$ and $c_{q_i}(F) = c_{q_1}(F')$ is a tuple for some $q_1 \in \{o_1, p_1\}$ and some sequence of function names $F$ and $F'$,
        then there is no operation $o_i' \in \tau_i$ potentially ww-conflicting with an operation $o_1' \in \prefix{\tau_1}{o_1}$ with
        $\myvar{q_i} \vardet{F}{\tau_i} \myvar{o_i'}$ and $\myvar{q_1} \vardet{F'}{\tau_1} \myvar{o_1'}$.
        \setcounter{resumecountertwo}{\value{enumi}}
    \end{enumerate}
    Furthermore, for each pair of adjacent quintuples $(\tau_i, o_i, c_{o_i}, p_i, c_{p_i})$ and $(\tau_j, o_j, c_{o_j}, p_j, c_{p_j})$ in $E$ with $j = i + 1$, or $i = m$ and $j = 1$:
    \begin{enumerate}
        \setcounter{enumi}{\value{resumecountertwo}}
        \item\label{asgc:9} $o_i$ is potentially conflicting with $p_j$ and $c_{o_i} = c_{p_j}$;
        \item\label{asgc:10} if $i = 1$ and $j = 2$, then $o_1$ is potentially rw-conflicting with $p_2$; and
        \item\label{asgc:11} if $i = m$ and $j = 1$, then $o_1 <_{\tau_1} p_1$ or $o_m$ is potentially rw-conflicting with $p_1$.
    \end{enumerate}
\end{lem}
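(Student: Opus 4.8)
\emph{Overall approach.} The plan is to establish both implications through Theorem~\ref{theo:characterization:split-shedules}, which reduces non-robustness to the existence of a multiversion split schedule, and to organize the argument in close analogy to the proof of Lemma~\ref{lemma:mtbchar}. The essential difference is that the four type mappings used there are now replaced by \emph{tuple}-contexts for the \emph{only-if} direction and by the variable-carrying contexts of $\contextdomain[\sgname,\tau_1,p_1,c_{p_1},o_1,c_{o_1}]$ for the \emph{if} direction. Most of the heavy lifting is already done by Lemmas~\ref{lemma:acycschedule:contexts}, \ref{lemma:extendcontextassignment} and \ref{lemma:contextimpliesbarmu}, so the proof reduces to careful bookkeeping between the six properties of Definition~\ref{def:respectingconstraints} and the eleven conditions listed here.

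For the \emph{only-if} direction I would start from a multiversion split schedule based on a \cqsequence{} $C$ over a set of transactions consistent with $\workload$ and some $\db$ (guaranteed by non-robustness and Theorem~\ref{theo:characterization:split-shedules}), take the \pcqsequence{} $D$ and variable mapping $\barmu$ with $\barmu(D)=C$, and invoke Lemma~\ref{lemma:acycschedule:contexts} to obtain a total context assignment $\contextfunc$ of tuple-contexts respecting the constraints of $D$ with $\barmu(\vx)=c_\vx(\epsilon)$. Reading off $c_{o_i}=\contextfunc(\myvar{o_i})$ and $c_{p_i}=\contextfunc(\myvar{p_i})$ yields a candidate sequence $E$. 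Conditions (\ref{asgc:1}) and (\ref{asgc:3})--(\ref{asgc:7}) then follow from the matching items of Definition~\ref{def:respectingconstraints} (e.g.\ (\ref{asgc:1}) from \emph{(\ref{d:rc:1})}/\emph{(\ref{d:rc:6})}, (\ref{asgc:4}) from \emph{(\ref{d:rc:2})}/\emph{(\ref{d:rc:6})}, (\ref{asgc:5}) from \emph{(\ref{d:rc:3})}, (\ref{asgc:6}) from \emph{(\ref{d:rc:4})}, (\ref{asgc:7}) from \emph{(\ref{d:rc:5})}); conditions (\ref{asgc:9})--(\ref{asgc:11}) are read off the three requirements of Definition~\ref{def:mvsplitschedule} using that $\barmu$ turns potential conflicts into conflicts; and condition (\ref{asgc:8}) is exactly the no-dirty-write requirement (\ref{c:1}). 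The one step that genuinely needs work is condition (\ref{asgc:2}): the $c_{o_i},c_{p_i}$ produced above are tuple-contexts, whereas membership in $\contextdomain[\sgname,\tau_1,p_1,c_{p_1},o_1,c_{o_1}]$ requires contexts built from a variable-context by substituting subcontexts of $c_{p_1},c_{o_1}$. Here I would \emph{abstract} every tuple of $c_{o_i}$ (for $i\neq 1$) that is not reachable from $\myvar{o_1}$ or $\myvar{p_1}$, i.e.\ not equal to some $c_{o_1}(F)$ or $c_{p_1}(F)$, to a fresh variable, done consistently along the linked contexts $c_{p_2}=c_{o_1},\dots,c_{o_m}=c_{p_1}$ and within each template so that internal equalities are preserved. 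Because the only tuples shared with the prefix of $\tau_1$ are those reachable from $c_{o_1},c_{p_1}$, this abstraction preserves (\ref{asgc:4})--(\ref{asgc:8}) and keeps the schedule a counterexample.

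For the \emph{if} direction I would take a sequence $E$ satisfying (\ref{asgc:1})--(\ref{asgc:11}), read off $D$ and the partial assignment sending $\myvar{o_i}\mapsto c_{o_i}$ and $\myvar{p_i}\mapsto c_{p_i}$, and first \emph{instantiate}: replace the variables occurring in the $\contextdomain$-contexts by fresh tuples, reusing the fixed tuples of $c_{o_1},c_{p_1}$, and identifying variables that are forced equal by the cross-context subcontext relationships (condition (\ref{asgc:7})) and by the adjacency identifications $c_{o_i}=c_{p_{i+1}}$ (condition (\ref{asgc:9})). This produces a partial context assignment of tuple-contexts, and conditions (\ref{asgc:4})--(\ref{asgc:7}) are designed to give precisely items \emph{(\ref{d:rc:2})}--\emph{(\ref{d:rc:6})} of Definition~\ref{def:respectingconstraints}, so that the assignment respects the constraints of $D$; condition (\ref{asgc:3}) supplies the second hypothesis of Lemma~\ref{lemma:extendcontextassignment}. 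Applying Lemma~\ref{lemma:extendcontextassignment} extends it to a total assignment respecting the constraints, and Lemma~\ref{lemma:contextimpliesbarmu} then yields a variable mapping $\barmu$ admissible for a database $\db$ for which $C=\barmu(D)$ is a valid \cqsequence{}. It remains to check that the multiversion split schedule based on $C$ satisfies Definition~\ref{def:mvsplitschedule}: condition (\ref{c:1}) from (\ref{asgc:8}), condition (\ref{c:3}) from (\ref{asgc:10}), and condition (\ref{c:2}) from (\ref{asgc:11}); Theorem~\ref{theo:characterization:split-shedules} then gives non-robustness.

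I expect the main obstacle to be exactly the translation between tuple-contexts and the variable-carrying contexts of $\contextdomain$ in both directions: proving that the abstraction in the only-if direction lands in $\contextdomain$ without destroying any of the (dis)equalities recorded by conditions (\ref{asgc:4})--(\ref{asgc:8}), and, symmetrically, that the instantiation in the if direction can always be carried out consistently, so that the variables forced equal across contexts never clash with a disequality constraint. Getting this equivalence-class and instantiation bookkeeping right, so that conditions (\ref{asgc:4})--(\ref{asgc:7}) correspond faithfully to \emph{(\ref{d:rc:2})}--\emph{(\ref{d:rc:6})}, is where the routine-looking but error-prone work lies.
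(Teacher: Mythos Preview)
Your proposal is correct and follows essentially the same approach as the paper. The only notable packaging difference is that the paper factors the instantiation step of the \emph{if} direction into a separate helper lemma (Lemma~\ref{lemma:asgcharifhelper}), which builds the partial tuple-context assignment by processing the quintuples in order and, for each $i$, defining $\contextfunc(\myvar{p_i})=\contextfunc(\myvar{o_{i-1}})$ and $\contextfunc(\myvar{o_i})=\lambda_i\circ c_{o_i}$ with an explicit $\lambda_i$ that copies tuples from $c'_{p_i}$ wherever a common determined variable forces it and introduces fresh tuples otherwise; this is precisely the concrete realization of the instantiation you sketch, and your abstraction map in the \emph{only-if} direction coincides with the paper's $\lambda$.
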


\begin{proof}
    \emph{(if)} Let $D = (\tau_1, o_1, p_2, \tau_2), \ldots, (\tau_m, o_m, p_1, \tau_1)$ be the \pcqsequence{} derived from $E$.
    Note that each $(\tau_i, o_i, p_j, \tau_j) \in D$ is indeed a valid \pcqsequence{}, as $o_i$ is potentially conflicting with $p_j$ by \emph{(\ref{asgc:9})}.
    We show in Lemma~\ref{lemma:asgcharifhelper} that a partial context assignment $\contextfunc$ over a set of tuple-contexts $\contextset$ exists such that
    \begin{itemize}
        \item for every pair of operations $o_i$ and $p_i$ occurring in $D$, $\contextfunc$ is defined for  $\myvar{o_i}$ and $\myvar{p_i}$;
        \item $\contextfunc$ respects the constraints in $D$; and
        \item for every \shortptrans{} $\tau_i$ in $D$ with $i \neq 1$ and for every $q_i \in \{o_i, p_i\}$ and $q_1 \in \{o_1, p_1\}$, let $c_{q_i} = \contextfunc(\myvar{q_i})$ and  $c_{q_1} = \contextfunc(\myvar{q_1})$.
        If $c_{q_i}(F) = c_{q_1}(F')$ for some sequence of function names $F$ and $F'$,
        then there is no operation $o_i' \in \tau_i$ potentially ww-conflicting with an operation $o_1' \in \prefix{\tau_1}{o_1}$ with
        $\myvar{q_i} \vardet{F}{\tau_i} \myvar{o_i'}$ and $\myvar{q_1} \vardet{F'}{\tau_1} \myvar{o_1'}$.
    \end{itemize}
    Because of \emph{(\ref{asgc:3})}, we can now apply Lemma~\ref{lemma:extendcontextassignment} extending $\contextfunc$ to a total context assignment defined for all variables occurring in $\transcopies{D}$, without losing the property that $\contextfunc$ respects all constraints in $D$.
    Let $\barmu$ be the variable mapping obtained by defining $\barmu(\vx) = c_\vx(\varepsilon)$ for every variable $\vx$ in $\transcopies{D}$ with $c_\vx = \contextfunc(\vx)$.
    By Lemma~\ref{lemma:contextimpliesbarmu}, $\barmu$ is a valid variable mapping and a database $\db$ exists such that $\barmu$ is admissible for $\db$.
    
    We now prove that the \cqsequence{} $C = \barmu(D)$ satisfies the conditions stated in Definition~\ref{def:mvsplitschedule} to show that $\workload$ is indeed not robust against \mvrc.
    Condition~(\ref{c:2}) and Condition~(\ref{c:3}) are immediate by respectively \emph{(\ref{asgc:11})} and \emph{(\ref{asgc:10})}.
    Towards a contradiction, assume Condition~(\ref{c:1}) does not hold.
    Then, there is an operation $o_i'$ in a \shortptrans{} $\tau_i$ potentially ww-conflicting with an operation $o_1' \in \prefix{\tau_1}{o_1}$,
    and $\barmu(\myvar{o_i'}) = \barmu(\myvar{o_1'})$. Let $c_{o_i'} = \contextfunc(\myvar{o_i'})$ and $c_{o_1'} = \contextfunc(\myvar{o_1'})$.
    By construction of $\barmu$, we have $c_{o_i'}(\varepsilon) = c_{o_1'}(\varepsilon)$.
    By construction of the total context assignment in Lemma~\ref{lemma:extendcontextassignment}, this is only the case if for some $q_i \in \{o_i, p_i\}$ and $q_1 \in \{o_1, p_1\}$
    it holds that $\myvar{q_i} \vardet{F}{\tau_i} \myvar{o_i'}$ with $c_{q_i}(F) = c_{o_i'}(\varepsilon)$ and $\myvar{q_1} \vardet{F}{\tau_1} \myvar{o_1'}$ with $c_{q_1}(F') = c_{o_1'}(\varepsilon)$.
    Consequently, $c_{q_i}(F) = c_{q_1}(F')$, contradicting Lemma~\ref{lemma:asgcharifhelper}.

    \medskip
    \emph{(only if)} Since $\workload$ is not robust against \mvrc, a multiversion split schedule $\schedule$ exists
    based on a \cqsequence{} $C$ over a set of transactions $\transset$ consistent with a set of \ptranss{} $\workload \in \acyctemplates$ and a database $\db$.
    Let $\barmu$ be the variable mapping for a \pcqsequence{} $D = (\tau_1, o_1, p_2, \tau_2), \ldots, (\tau_m, o_m, p_1, \tau_1)$ with $\barmu(D) = C$.
    By Lemma~\ref{lemma:acycschedule:contexts} a set $\contextset$ of tuple-contexts and a total context assignment $\contextfunc$ for $D$ over $\contextset$
    exist with:
    \begin{itemize}
        \item $\contextfunc$ respects the constraints of $D$; and
        \item $\barmu(\vx) = c_\vx(\epsilon)$ for every variable $\vx$, with $c_\vx = \contextfunc(\vx)$.
    \end{itemize}
    
    Let $\lambda: \objects \to \objects \cup \variables$ with $\lambda(\x) = \x$ if $\x$ occurs in $\contextfunc(\myvar{o_1})$ or $\contextfunc(\myvar{p_1})$;
    and $\lambda(\x) \in \variables$ otherwise, such that $\lambda(\x) \neq \lambda(\y)$ if $\x \neq \y$.
    From $D$ and $\contextfunc$ we derive the sequence of quintuples $E = (\tau_1, o_1, c_{o_1}, p_1, c_{p_1}), \ldots, (\tau_m, o_m, c_{o_m}, p_m, c_{p_m})$ with $c_{o_i} = \lambda \circ c'_{o_i}$ and $c_{p_i} = \lambda \circ c'_{p_i}$ for each $o_i$ and $p_i$,
    where $c'_{o_i} = \contextfunc(\myvar{o_i})$ and $c'_{p_i} = \contextfunc(\myvar{p_i})$.
    Intuitively, we modify the tuple-contexts for each $\myvar{o_i}$ and $\myvar{p_i}$ as defined by $\contextfunc$ into contexts over tuples and variables by replacing all tuples that do not occur in $\contextfunc(\myvar{o_1})$ and $\contextfunc(\myvar{p_1})$ with unique variables.
    Note that by construction $c_{o_1} = \contextfunc(\myvar{o_1})$ and $c_{p_1} = \contextfunc(\myvar{p_1})$.
    
    Next, we show that $E$ satisfies all properties.
    In the argumentation below, we denote $\contextfunc(\myvar{o_i})$ by $c'_{o_i}$ and $\contextfunc(\myvar{p_i})$ by $c'_{p_i}$ for each $o_i$ and $p_i$ in $E$.
    \emph{(\ref{asgc:1})} By construction, $c_{o_1} = c'_{o_1}$ is a tuple-context for $\type{\myvar{o_1}}$, and $c_{p_1} = c'_{p_1}$ is a tuple-context for $\type{\myvar{p_1}}$.
    By Condition~(\ref{d:rc:6}) of Definition~\ref{def:respectingconstraints}, we have $c_{o_1}' = c_{p_1}'$ if $c_{o_1}(F) = c_{p_1}(F')$
    for every pair of tuple-subcontexts $c_{o_1}'$ and $c_{p_1}'$ of $c_{o_1}$ and $c_{p_1}$ witnessed by respectively $F$ and $F'$.
    \emph{(\ref{asgc:2})}~This property follows by construction of $c_{o_i}$ and $c_{p_i}$ based on $\lambda$ and $\contextfunc$. For completeness sake, one should note that for each group of contexts up to isomorphisms over variables, $\contextdomain[\sgname,\tau_1,p_1,c_{p_1}, o_1,c_{o_1}]$
    contains only one context.
    W.l.o.g.\ we can implicitly assume that each $c_{o_i}$ and $c_{p_i}$ in $E$ is replaced by the same context in $\contextdomain[\sgname,\tau_1,p_1,c_{p_1}, o_1,c_{o_1}]$ up to isomorphisms,
    as we will never directly test for equality or disequality between two variables of different contexts.
    \emph{(\ref{asgc:3})} Assume $\vw_i \varequiv{\tau_i} \vz_i$, then $\vw_i \vardet{\varepsilon}{D} \vz_i$.
    Since $\contextfunc$ respects the constraints in $D$, $c_{\vz_i}$ is a tuple-subcontext of $c_{\vw_i}$ witnessed by $\varepsilon$, with $c_{\vz_i} = \contextfunc(\vz_i)$ and $c_{\vw_i} = \contextfunc(\vw_i)$.
    Then, $c_{\vz_i} = c_{\vw_i}$, and in particular $c_{\vz_i}(\varepsilon) = c_{\vw_i}(\varepsilon)$.
    By Lemma~\ref{lemma:acycschedule:contexts}, $\barmu(\vz_i) = c_{\vz_i}(\varepsilon) = c_{\vw_i}(\varepsilon) = \barmu(\vw_i)$. Since $\barmu$ is admissible for $\db$, the constraint $\vw_i \neq \vz_i$ cannot exist.
    \emph{(\ref{asgc:4})}~Since $\contextfunc$ respects the constraints in $D$, {it follows from Conditions~(\ref{d:rc:2})~and~(\ref{d:rc:6}) in Definition~\ref{def:respectingconstraints} that the tuple-subcontext of $c'_{q_i}$ witnessed by $F_1$ is equal to the tuple-subcontext of $c'_{r_i}$ witnessed by $F_2$.
    As a result, the subcontext of $c_{q_i} = \lambda \circ c'_{r_i}$ witnessed by $F_1$ is equal (up to isomorphisms over variables) to the subcontext of $c_{r_i} = \lambda \circ c'_{r_i}$ witnessed by $F_2$.}
    \emph{(\ref{asgc:5})} Analogous to the previous case, we can conclude that $c_{q_i}(F_1) = \lambda \circ c'_{q_i}(F_1)$ and $c_{r_i}(F_2) = \lambda \circ c'_{r_i}(F_2)$ are either two different tuples, or both a variable.
    \emph{(\ref{asgc:6})} If $c_{q_i}(F_1) = \lambda \circ c'_{q_i}(F_1) = \lambda \circ c'_{r_i}(F_2) = c_{r_i}(F_2)$ is a tuple, then $c'_{q_i}(F_1) = c'_{r_i}(F_2)$. Since $\contextfunc$ respects the constraints in $D$, it follows that there is no constraint $\vw_i \neq vz_i$.
    \emph{(\ref{asgc:7})} If $\myvar{q_i} \vardet{F}{\tau_i} \myvar{r_i}$, then $c'_{r_i}$ is a tuple-subcontext of $c'_{q_i}$, as $\contextfunc$ respects the constraints of $D$. By construction of $c_{q_i}$ and $c_{r_i}$ based on respectively $c'_{q_i}$ and $c'_{r_i}$, it immediately follows that $c_{r_i}$ is a subcontext of $c_{q_i}$.
    The case for $\vy_i \vardet{F}{\tau_i} \vx_i$ is analogous.
    \emph{(\ref{asgc:8})}~Assume towards a contradiction that $c_{q_i}(F) = c_{q_1}(F')$ is a tuple for some $q_1 \in \{o_1, p_1\}$ and some sequence of function names $F$ and $F'$,
    and there is an operation $o_i' \in \tau_i$ potentially ww-conflicting with an operation $o_1' \in \prefix{\tau_1}{o_1}$ with
    $\myvar{q_i} \vardet{F}{\tau_i} \myvar{o_i'}$ and $\myvar{q_1} \vardet{F'}{\tau_1} \myvar{o_1'}$.
    Since $c_{q_i}(F) = c_{q_1}(F')$ is a tuple, $c'_{q_i}(F) = c_{q_i}(F) = c_{q_1}(F') = c'_{q_1}(F')$.
    By definition of $\barmu$ and since $\contextfunc$ respects the constraints in $D$, we conclude that $\barmu(o_i')$ in $\barmu(\tau_i)$ is ww-conflicting with $\barmu(o_1')$ in $\prefix{\barmu(\tau_1)}{\barmu(o_1)}$, thereby contradicting Condition~(\ref{c:1}) of Definition~\ref{def:mvsplitschedule}.
    \emph{(\ref{asgc:9})} Since $E$ is based on $C$, the operation $o_i$ is potentially conflicting with $p_j$. Furthermore, since $\myvar{o_i} \varequiv{D} \myvar{p_j}$ and since $\contextfunc$ respects the constraints of $D$, $c'_{o_i} = c'_{p_j}$, and hence $c_{o_i} = c_{p_j}$.
    \emph{(\ref{asgc:10})} Immediate by Condition~(\ref{c:3}) of Definition~\ref{def:mvsplitschedule}.
    \emph{(\ref{asgc:11})} Immediate by Condition~(\ref{c:2}) of Definition~\ref{def:mvsplitschedule}.
\end{proof}

Central to the correctness of the if-direction of Lemma~\ref{lemma:asgcharacterization} is the observation that for any sequence $E$ satisfying the stated conditions, we can assign tuples to variables in each context, thereby replacing contexts with tuple-contexts, in such a way that the resulting context assignment over tuple-contexts respects the constraints of the corresponding \pcqsequence{} $D$. This observation is formalized in the following Lemma:

\begin{lem}\label{lemma:asgcharifhelper}
    Let $E = (\tau_1, o_1, c_{o_1}, p_1, c_{p_1}), \ldots, (\tau_m, o_m, c_{o_m}, p_m, c_{p_m})$ be a sequence of quintuples satisfying the conditions stated in Lemma~\ref{lemma:asgcharacterization}, and let $D = (\tau_1, o_1, p_2, \tau_2), \ldots, (\tau_m, o_m,\allowbreak p_1, \tau_1)$ be the \pcqsequence{} derived from $E$.
    Then a partial context assignment $\contextfunc$ over a set of tuple-contexts $\contextset$ exists such that
    \begin{itemize}
        \item for every pair of operations $o_i$ and $p_i$ occurring in $D$, $\contextfunc$ is defined for  $\myvar{o_i}$ and $\myvar{p_i}$;
        \item $\contextfunc$ respects the constraints in $D$; and
        \item for every \shortptrans{} $\tau_i$ in $D$ with $i \neq 1$ and for every $q_i \in \{o_i, p_i\}$ and $q_1 \in \{o_1, p_1\}$, let $c_{q_i} = \contextfunc(\myvar{q_i})$ and  $c_{q_1} = \contextfunc(\myvar{q_1})$.
        If $c_{q_i}(F) = c_{q_1}(F')$ for some sequence of function names $F$ and $F'$,
        then there is no operation $o_i' \in \tau_i$ potentially ww-conflicting with an operation $o_1' \in \prefix{\tau_1}{o_1}$ with
        $\myvar{q_i} \vardet{F}{\tau_i} \myvar{o_i'}$ and $\myvar{q_1} \vardet{F'}{\tau_1} \myvar{o_1'}$.
    \end{itemize}
\end{lem}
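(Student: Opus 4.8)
The plan is to construct the desired partial context assignment $\contextfunc$ by \emph{instantiating} the variables that occur in the contexts $c_{o_i}, c_{p_i}$ of $E$ into concrete tuples, while keeping the tuples already present in the tuple-contexts $c_{o_1}, c_{p_1}$ of the split template $\tau_1$ fixed. Concretely, I would first fix, for each context in $E$, a representative inside its isomorphism class so that all equalities \emph{forced} between contexts become syntactic identities: by (\ref{asgc:9}) a conflict-linked pair $c_{o_i}$ and $c_{p_{i+1}}$ must coincide, by (\ref{asgc:7}) a context determined within a template is literally a subcontext of the determining one, and by (\ref{asgc:4}) two within-template determinations reaching a common variable yield the same subcontext. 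Once these forced identifications are made syntactic, I would pick a single injective map $\gamma$ sending every remaining variable to a distinct fresh tuple of the appropriate type and acting as the identity on the tuples of $\tau_1$, and set $\contextfunc(\myvar{o_i}) \mydef \gamma\circ c_{o_i}$ and $\contextfunc(\myvar{p_i}) \mydef \gamma\circ c_{p_i}$. By construction $\contextfunc$ is then defined for every $\myvar{o_i}$ and $\myvar{p_i}$, and $\contextfunc(\myvar{o_1}) = c_{o_1}$, $\contextfunc(\myvar{p_1}) = c_{p_1}$, since these are already tuple-contexts.

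The bulk of the work is to show that $\contextfunc$ \emph{respects the constraints of $D$} in the sense of Definition~\ref{def:respectingconstraints}. Conditions~(\ref{d:rc:1}) and (\ref{d:rc:3})--(\ref{d:rc:6}), restricted to variables lying in a single template, transfer directly from the corresponding per-template conditions (\ref{asgc:3})--(\ref{asgc:7}) on $E$, because $\gamma$ preserves the pattern of equalities recorded in each context and maps fresh variables to distinct tuples. The genuine difficulty is the global Condition~(\ref{d:rc:2}): whenever two anchor variables satisfy $\myvar{o_i} \vardet{F_1}{D} \vz$ and $\myvar{p_j} \vardet{F_2}{D} \vz$ for a common variable $\vz$, we must obtain $\contextfunc(\myvar{o_i})(F_1) = \contextfunc(\myvar{p_j})(F_2)$, and $D$-determination may weave across many templates. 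The plan here is an induction on the definition of $\vardet{}{D}$ that tracks the context value along a determination path: each cross-template step is an $\epsilon$-hop between a conflict pair, where (\ref{asgc:9}) guarantees the two contexts are equal so the value is preserved, and each within-template segment is handled by the subcontext identity from (\ref{asgc:7}); reaching the common variable $\vz$ inside some template $\tau_k$ from two different anchors then forces equality of the two readings by (\ref{asgc:4}). Since cross-template hops contribute only $\epsilon$ to the labeling, the accumulated sequences $F_1, F_2$ are exactly paths in $\sgname$ witnessed by Lemma~\ref{lemma:varimpliestosgpath}, so the lookups are well defined. The disequality Conditions~(\ref{d:rc:3})--(\ref{d:rc:4}) then follow from (\ref{asgc:5})--(\ref{asgc:6}) together with the freshness and injectivity of $\gamma$, which prevents any coincidental identification of tuples beyond those already present in the contexts.

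Finally, the third bullet (the dirty-write property) follows from Condition~(\ref{asgc:8}) essentially for free once $\gamma$ is fixed. Because every instantiated variable is mapped to a fresh tuple distinct from every tuple of $\tau_1$, an entry $\contextfunc(\myvar{q_i})(F)$ of a context with $i\neq 1$ (for $q_i \in \{o_i, p_i\}$) can equal a necessarily tuple-valued entry $\contextfunc(\myvar{q_1})(F')$ of a $\tau_1$-context (for $q_1 \in \{o_1, p_1\}$) only when the corresponding entry $c_{q_i}(F)$ was already that very $\tau_1$-tuple in $E$. Hence the equality $\contextfunc(\myvar{q_i})(F) = \contextfunc(\myvar{q_1})(F')$ becomes a tuple exactly when the hypothesis of (\ref{asgc:8}) holds, and (\ref{asgc:8}) then rules out the offending potentially ww-conflicting pair $o_i', o_1'$ with $\myvar{q_i}\vardet{F}{\tau_i}\myvar{o_i'}$ and $\myvar{q_1}\vardet{F'}{\tau_1}\myvar{o_1'}$.

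I expect the main obstacle to be the first step: exhibiting a \emph{globally consistent} choice of representatives and a single instantiation $\gamma$ that simultaneously realizes all the ``up to isomorphism'' identifications demanded by (\ref{asgc:4}), (\ref{asgc:7}) and (\ref{asgc:9}) around the cycle of $D$, without accidentally equating two variables separated by a disequality constraint. The acyclicity of the schema graph is what keeps these identifications coherent (no path can force a variable to be determined from itself along two incompatible routes), and Conditions~(\ref{asgc:3}), (\ref{asgc:5}), (\ref{asgc:6}) are what guarantee that the resulting quotient never collapses a disequality. Making this coherence argument precise, rather than the routine per-condition verifications, is where the real content lies.
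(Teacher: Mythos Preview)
Your overall strategy---instantiate the variables in the contexts $c_{o_i},c_{p_i}$ to tuples while keeping the $\tau_1$-tuples fixed, then verify Definition~\ref{def:respectingconstraints} condition by condition---is exactly the paper's strategy. The difference is in how the instantiation is organised. The paper does it \emph{iteratively} along $E$: it sets $\contextfunc(\myvar{p_i}):=\contextfunc(\myvar{o_{i-1}})$ and then builds $\contextfunc(\myvar{o_i}):=\lambda_i\circ c_{o_i}$ with a \emph{per-quintuple} map $\lambda_i$ that (a) keeps tuples fixed, (b) copies the subcontext from the already-constructed $\contextfunc(\myvar{p_i})$ wherever Condition~(\ref{asgc:4}) forces it, and (c) sends every remaining variable to a fresh tuple $\x_{i,\vq}$ indexed by the step $i$. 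This sequential dependency---$\lambda_i$ looks at the tuple-context produced at step $i-1$---is precisely what your ``choice of representatives'' step must encode, and you rightly flag it as the hard part.

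Where your framing is risky is the phrase ``fix, for each context in $E$, a representative inside its isomorphism class'' combined with a \emph{single} injective $\gamma$. Read literally (one representative per isomorphism class), this forces any two occurrences of the same context class, say $c_{o_i}$ and $c_{o_j}$, to receive the \emph{same} tuple-context under $\gamma$. That can break Condition~(\ref{d:rc:6}): if the within-template constraints at step $i$ force a subcontext of $c_{o_i}$ to coincide with some tuples coming from $c'_{p_i}$, while at step $j$ the same positions are linked to different tuples from $c'_{p_j}$, a single representative cannot satisfy both. The paper avoids this by never reusing: fresh tuples are indexed by $i$, so two occurrences of the same context class become distinct tuple-contexts and the only equalities that survive are exactly those propagated forward through the chain (this is the $(\ddagger)$ property in the paper's argument).

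If instead you choose representatives \emph{per occurrence} in $E$, using fresh variable names at each step except where forced to match the previous one by (\ref{asgc:4}) and (\ref{asgc:7}), your global $\gamma$ becomes well-defined and your argument goes through---but at that point you have reconstructed the paper's iterative $\lambda_i$ in different notation. So your proposal is correct once the representative choice is made precise, and the precise version is the paper's construction; the global-$\gamma$ packaging buys no simplification and carries the pitfall above if taken at face value.
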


\begin{proof}
    The general proof idea is as follows. We iteratively extend $\contextfunc$ by deriving $\contextfunc(\myvar{o_i})$ and $\contextfunc(\myvar{p_i})$ from contexts $c_{o_i}$ and $c_{p_i}$ for each quintuple $(\tau_i, o_i, c_{o_i}, p_i, c_{p_i})$ in the order that they appear in $E$.
    Afterwards, we argue that $\contextfunc$ respects the constraints in $D$
    and for every $q_i \in \{o_i, p_i\}$ with $i \neq 1$ and $q_1 \in \{o_1, p_1\}$ with $c_{q_i} = \contextfunc(\myvar{q_i})$ and  $c_{q_1} = \contextfunc(\myvar{q_1})$, and for every pair of sequences of function names $F$ and $F'$ with $c_{q_i}(F) = c_{q_1}(F')$,
    there is no operation $o_i' \in \tau_i$ potentially ww-conflicting with an operation $o_1' \in \prefix{\tau_1}{o_1}$ with
    $\myvar{q_i} \vardet{F}{\tau_i} \myvar{o_i'}$ and $\myvar{q_1} \vardet{F'}{\tau_1} \myvar{o_1'}$.
    
    Let $(\tau_1, o_1, c_{o_1}, p_1, c_{p_1})$ be the first quintuple in the sequence $E$. We initiate $\contextfunc$ by defining $\contextfunc(\myvar{o_1}) = c_{o_1}$ and $\contextfunc(\myvar{p_1}) = c_{p_1}$.
    Next, we iteratively extend $\contextfunc$ by considering the remaining quintuples in $E$ in order.
    To this end, let $(\tau_{i-1}, o_{i-1}, c_{o_{i-1}}, p_{i-1}, c_{p_{i-1}})$ be the last considered quintuple and $(\tau_i, o_i, c_{o_i}, p_i, c_{p_i})$ the next quintuple in $E$.
    We define $\contextfunc(\myvar{p_i}) = c'_{p_i} = \contextfunc(\myvar{o_{i-1}})$ and $\contextfunc(\myvar{o_i}) = c'_{o_i} = \lambda_i \circ c_{o_i}$,
    where $\lambda_i: \objects \cup \variables \to \objects$ is a function mapping tuples and variables occurring in $c_{o_i}$ to tuples such that\footnote{Note that $\lambda_i$ is defined over variables in the context $c_{o_i}$. These variables are unrelated to the variables occurring in $\transcopies{D}$. We therefore denote these variables by $\vq$ instead of the usual $\vw, \vx, \vy, \vz$ to avoid confusion.}
    \begin{itemize}
        \item $\lambda_i(\x) = \x$ for each tuple $\x$ occurring in $c_{o_i}$;
        \item $\lambda_i(\vq) = \y$ for each variable $\vq$ occurring in $c_{o_i}$ for which there is a variable $\vz$ in $\tau_i$ and sequences of function names $F$, $F'$ and $F''$
        with $\myvar{p_i} \vardet{F}{\tau_i} \vz$, $\myvar{o_i} \vardet{F'}{\tau_i} \vz$, $c'_{p_i}(F \cdot F'') = \y$ and $c_{o_i}(F' \cdot F'') = \vq$; and
        \item $\lambda_i(\vq) = \x_{i,\vq}$, for the remaining variables $\vq$ in $c_{o_i}$ where $\x_{i,\vq}$ is a fresh tuple.
    \end{itemize}
    Note that this $\lambda_i$ is well defined. In particular, the second rule intuitively states that the tuple-subcontext of the resulting $c'_{o_i}$ witnessed by $F'$ is equal to the tuple-subcontext of $c'_{p_i}$ witnessed by $F$, given that there is a variable $\vz$ with $\myvar{o_i} \vardet{F'}{\tau_i} \vz$ and $\myvar{p_i} \vardet{F}{\tau_i} \vz$.
    This substitution is well defined since in this case the subcontext of $c_{o_i}$ witnessed by $F'$ is equal (up to isomorphisms over variables) to the subcontext of $c_{p_i}$ witnessed by $F$ according to Condition~\ref{asgc:4} of Lemma~\ref{lemma:asgcharacterization}.
    
    It remains to show that $\contextfunc$ indeed satisfies the conditions stated in Lemma~\ref{lemma:asgcharifhelper}. To this end, note that by definition of variable determination,
    if $\vx \vardet{F}{D} \vy$ with $\vx$ in a \shortptrans{} $\tau_i$ and $\vy$ in a \shortptrans{} $\tau_i$ in $\transcopies{D}$,
    then a sequence of variables $\vx_{k_1}, \vy_{k_1}, \ldots, \vx_{k_m}, \vy_{k_m}$ exists such that $(\dagger)$:
    \begin{itemize}
        \item $\vx_{k_1} = \vx$ and $\vy_{k_m} = \vy$;
        \item each pair of (not necessarily different) variables $\vx_{k_i}, \vy_{k_i}$ occur in the same \shortptrans{} $\tau_{k_i}$ in $\transcopies{D}$ and $\vx_{k_i}, \vardet{F_i}{\tau_{k_i}} \vy_{k_i}$ with $F = F_1 \cdot \ldots, F_m$;
        \item in the implied sequence of \shortptranss{} $\tau_{k_1}, \ldots, \tau_{k_m}$, these $\tau_{k_i}, \ldots, \tau_{k_{i+1}}$ are neighbouring in $E$ (where we assume that $\tau_1$ is neighbouring to $\tau_n$ in $E$); and
        \item for each pair of variables $\vy_{k_i}, \vx_{k_{i+1}}$, there is a sequence of function names $F'$ such that $\myvar{o_{k_i}} \vardet{F'}{\tau_{k_i}} \vy_{k_i}$ and $\myvar{p_{k_{i+1}}} \vardet{F'}{\tau_{k_{i+1}}} \vx_{k_{i+1}}$
        (i.e., equivalence of $\vy_{k_i}$ and $\vx_{k_{i+1}}$ in $D$ is implied by equivalence of $\myvar{o_{k_i}}$ and $\myvar{p_{k_{i+1}}}$).
    \end{itemize}
    In other words, $\vx \vardet{F}{D} \vy$ can be broken down into a sequence of $\vx_{k_i} \vardet{F_i}{\tau_{k_i}} \vy_{k_i}$ through a sequence of neighbouring \shortptranss{}, where equivalence between each $\vy_{k_i}$ and $\vx_{k_{i+1}}$ is implied by the variables in the potentially conflicting operations $o_{k_i}$ and $p_{k_{i+1}}$.
    For ease of exposition, we implicitly assumed that $\tau_{k_1}, \ldots, \tau_{k_m}$ agrees with the order in $E$. If the order is opposite to the order in $E$ instead, the above still holds, but the occurrences of $o_{k_i}$ and $p_{k_{i+1}}$ should be replaced with $p_{k_i}$ and $o_{k_{i+1}}$.
    
    We argue by construction of $\contextfunc$ that for every pair of variables $\vx$ and $\vy$ for which $\contextfunc$ is defined with $c_\vx = \contextfunc(\vx)$ and $c_\vy = \contextfunc(\vy)$,
    if $c_\vx(F) = c_\vy(F')$ for some sequence of function names $F$ and $F'$, then $c'_\vx = c'_\vy$, where $c'_\vx$ and $c'_\vy$ are the tuple-subcontexts of $c_\vx$ witnessed by $F$ and $c_\vy$ witnessed by $F'$, respectively $(\ddagger)$.
    If $\vx = \myvar{o_1}$ and $\vy =\myvar{p_1}$ (or the other way around), the result is immediate by Lemma~\ref{lemma:asgcharacterization}~(\ref{asgc:1}).
    Otherwise, let $\vx$ be the variable in a \shortptrans{} $\tau_i$ and $\vy$ the variable in a \shortptrans{} $\tau_j$ such that $j \leq i$ (i.e., $\tau_i$ does not occur before $\tau_j$ in $E$).
    W.l.o.g., we assume that $\vx = \myvar{o_i}$ with $i \neq 1$ (the case where $\vx = \myvar{p_i}$ is analogous, as $\contextfunc(\myvar{p_i}) = \contextfunc(\myvar{o_{i-1}})$ by construction).
    By construction of each $c'_{o_i}$ based on $c'_{p_i}$ and $\lambda_i$, if $c'_{o_i}(F_i) = c'_{p_i}(F_i')$, then the whole tuple-subcontext of $c'_{o_i}$ witnessed by $F_i$ is copied over from the tuple-subcontext of $c'_{p_i}$ witnessed by $F_i'$. Indeed $\lambda_i$ introduces fresh tuples whenever the tuple for $c'_{o_i}(F_i)$ is not implied by $c'_{p_i}$.
    
    The desired properties now follow from $(\dagger)$ and $(\ddagger)$ as well as the conditions in Lemma~\ref{lemma:asgcharacterization}.
    In particular, $\contextfunc$ respecting the constraints of $D$ can now be derived from Conditions~(\ref{asgc:1},~\ref{asgc:2},~\ref{asgc:4}-\ref{asgc:7}) in Lemma~\ref{lemma:asgcharacterization},
    and the last condition of Lemma~\ref{lemma:asgcharifhelper} follows from Condition~(\ref{asgc:8}) in Lemma~\ref{lemma:asgcharacterization}. 
\end{proof}

\begin{proof}[Proof of Theorem~\ref{theo:asgexpspace}]
A \NEXPSPACE algorithm proving the correctness of Theorem~\ref{theo:asgexpspace} is now immediate by Lemma~\ref{lemma:asgcharacterization}, as we can iteratively guess and verify quintuples in $E$ while only keeping track of the very first quintuple and the previous quintuple. Since in an acyclic schema graph the number of paths starting in a given type is at most exponential in the total number of types, each context is defined over at most an exponential number of paths. However, to formally argue that these contexts can be encoded in exponential space, we still need to show that each tuple or variable used in a context can be encoded in at most exponential space. Since the only tuples used are those mentioned in $c_{o_1}$ and $c_{p_1}$, and since we can reuse the same variables over all contexts, both the maximal number of tuples and the maximal number of variables needed are exponential in the total number of types.
\end{proof}

\subsection{Lowering complexity}

Next, we consider restrictions that lower the complexity. To this end, 
we say that two variables $\vx$ and $\vy$ occurring in a \ptrans{} $\tau$ are \emph{equivalent in $\tau$}, denoted $\vx \varequiv{\tau} \vy$ if
\begin{itemize}
    \item $\vx = \vy$;
    \item there exists a pair of variables $\vz$ and $\vw$ in $\tau$ and a sequence of function names $F$ with $\vz \varequiv{\tau} \vw$, $\vz \varimplies{F}{\tau} \vx$ and $\vw \varimplies{F}{\tau} \vy$; or
    \item there exists a variable $\vz$ with $\vx \varequiv{\tau} \vz$ and $\vy \varequiv{\tau} \vz$.
\end{itemize}
Then, a \ptrans{} $\tau$ is \emph{\restricted} if for every combination of variables $\vx, \vy, \vw, \vz$ in $\tau$ with $\vx \varimplies{}{\tau} \vw$ and $\vy \varimplies{}{\tau} \vz$,
either $\vw \varequiv{\tau} \vz$, $\vw \varimplies{}{\tau} \vz$ or $\vz \varimplies{}{\tau} \vw$.
We denote by $\acycrestemplates$ the class of all sets of \restricted \ptranss{} over acyclic schemas.

\begin{thm}\label{theo:asgpspace:exptime}
\begin{enumerate}
        \item \RobustnessTempTwo{\acycrestemplates}{\mvrc} is decidable in \EXPTIME.
        \item \RobustnessTempTwo{\acyctemplates}{\mvrc} is decidable in \PSPACE when the number of paths between any two nodes in the schema graph is bounded by a constant $k$.
\end{enumerate}
\end{thm}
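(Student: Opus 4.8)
The plan is to reuse the characterization of non-robustness from Lemma~\ref{lemma:asgcharacterization} together with the guess-and-verify strategy underlying Theorem~\ref{theo:asgexpspace}: a witness for non-robustness is a cyclic sequence of quintuples $E = (\tau_1, o_1, c_{o_1}, p_1, c_{p_1}), \ldots, (\tau_m, o_m, c_{o_m}, p_m, c_{p_m})$, and all of its conditions are \emph{local}, in the sense that verifying a quintuple and the transition into it only refers to that quintuple, to its predecessor, and to the fixed first quintuple $E_1$ (the latter entering through conditions (\ref{asgc:2}), (\ref{asgc:8}), (\ref{asgc:10}) and (\ref{asgc:11})). Hence one never needs to store more than $E_1$ and one running quintuple. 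The two claimed bounds are obtained by separately controlling the \emph{size} of a single context and the \emph{number} of distinct contexts that can occur.

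For the \PSPACE bound in part~(2), I would observe that when the number of paths between any two nodes of $\sg{\schRel,\schFunc}$ is bounded by the constant $k$, the total number of paths leaving a fixed type is at most $k\cdot|\schRel|$, i.e.\ polynomial. Consequently every context is a function on a polynomial domain and admits a polynomial-size encoding, so each quintuple occupies polynomial space. The nondeterministic procedure that guesses $E$ one quintuple at a time, retaining only $E_1$ and the current quintuple and checking the conditions of Lemma~\ref{lemma:asgcharacterization} on the fly, therefore runs in \NPSPACE, and Savitch's theorem yields \PSPACE.

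For the \EXPTIME bound in part~(1) the individual contexts are still of exponential size (the number of paths can be exponential in $|\schRel|$), so the naive procedure only gives \EXPSPACE. The plan is instead to reduce to reachability in a graph of exponential size. Fix the first quintuple $E_1$; its two components are tuple-contexts for $\type{\myvar{o_1}}$ and $\type{\myvar{p_1}}$, and by Lemma~\ref{lemma:asgcharacterization}(\ref{asgc:2}) every later context must lie in $\contextdomain[\sgname,\tau_1,p_1,c_{p_1}, o_1,c_{o_1}]$. Since the only conditions coupling later quintuples to the rest of $E$ are relative to $E_1$, for a fixed $E_1$ the search for the remainder of $E$ is exactly the search for a suitable cyclic path in the directed graph whose nodes are the quintuples satisfying conditions (\ref{asgc:2})--(\ref{asgc:8}) and whose edges are the pairs satisfying condition (\ref{asgc:9}), with the first and last steps additionally constrained by (\ref{asgc:10}) and (\ref{asgc:11}). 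Reachability in such a graph is solvable in time polynomial in its number of nodes. The key step is therefore to show that, for \restricted templates, both the number of admissible first quintuples $E_1$ (up to isomorphism over tuples) and the cardinality of $\contextdomain[\sgname,\tau_1,p_1,c_{p_1}, o_1,c_{o_1}]$ (up to isomorphism over variables) are at most exponential; multiplying the exponentially many choices of $E_1$ by the exponential-time reachability test for each then yields \EXPTIME.

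The main obstacle is precisely this counting bound, and it is where the \restricted condition is used. The restriction forces, for any variable $\vx$ in a template $\tau$, the set $\{\vw \mid \vx \varimplies{}{\tau} \vw\}$ to be totally ordered by $\varimplies{}{\tau}$ up to $\varequiv{\tau}$ (take $\vx=\vy$ in the definition), so that the determination relation $\vardet{}{\tau}$ emanating from any variable is chain-like rather than branching. I would exploit this to argue that the tuple-fixed skeleton of each relevant context is confined to a single chain of the schema graph, hence to a path of length at most $|\schRel|$, and that the equalities a context must respect (Definition~\ref{def:respectingconstraints}(\ref{d:rc:6})) are thereby pinned down by polynomially much data, leaving only exponentially many isomorphism types of contexts. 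Making the passage from ``chain-like determination structure'' to ``exponentially many context isomorphism types'' precise, and verifying that the grafting operation defining $\contextdomain$ cannot blow this up, is the delicate part; the remaining bookkeeping is inherited directly from the proof of Theorem~\ref{theo:asgexpspace}.
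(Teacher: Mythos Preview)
Your overall architecture matches the paper exactly: part~(2) is argued precisely as you describe (polynomial-size contexts, the guess-and-check procedure from Theorem~\ref{theo:asgexpspace}, Savitch), and for part~(1) the paper likewise fixes $E_1$, builds the graph on quintuples satisfying (\ref{asgc:2})--(\ref{asgc:8}) with edges given by (\ref{asgc:9}), and reduces to reachability (via a transitive-closure computation) subject to (\ref{asgc:10})--(\ref{asgc:11}).

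Where your sketch diverges is in the mechanism by which \restricted{} yields the exponential bound on the number of contexts. You suggest that the tuple-fixed skeleton of each context is ``confined to a single chain of the schema graph, hence to a path of length at most $|\schRel|$''; this is not quite right, since contexts are still functions on the full (possibly exponential) set of paths out of a type, and the \restricted{} condition says nothing about the schema graph itself. The paper's counting argument is different: it bounds the number of elements of $\contextdomain[\sgname,\tau_1,p_1,c_{p_1},o_1,c_{o_1}]$ by $(2TP)^{\ell}$ times the number $B^*$ of base variable-contexts, where $T,P$ are the maximal number of paths from a root and between a fixed pair, and $\ell$ is the maximal size of a \emph{prefix-free} set of substitution points. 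The role of \restricted{} is precisely to force $\ell\le 2$: since any two variables reachable by $\varimplies{}{\tau}$ are comparable (up to $\varequiv{\tau}$), at most two independent grafts---one coming from $c_{o_1}$, one from $c_{p_1}$---can occur in any context actually arising from the determination structure of a template. With $\ell$ constant, $(2TP)^{\ell}$ is single-exponential, and the transitive-closure computation on a graph with $|\workload|\cdot t^2\cdot (B^*\cdot(2TP)^{\ell})^2$ nodes runs in \EXPTIME. So your ``chain-like determination'' intuition is pointed at the right phenomenon, but the payoff is a bound on the number of substitution points, not a restriction of contexts to a single chain of the schema graph.
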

\noindent Regarding (1), all templates in TPC-C with the exception of NewOrder are restricted. Regarding (2), when the schema graph is a multi-tree then $k=1$ and for TPC-C $k=2$ (recall that in general there can be an exponential number of paths), leading to a more practical algorithm for robustness in those cases.

The \PSPACE result in Theorem~\ref{theo:asgpspace:exptime} for workloads over a schema $(\schRel, \schFunc)$ where the number of paths between any two nodes in the schema graph is bounded by a constant $k$ is immediate by the nondeterministic algorithm based on Lemma~\ref{lemma:asgcharacterization} presented for Theorem~\ref{theo:asgexpspace}. Indeed, in this case, the total number of paths starting in a given type is at most $k . |\schRel|$ and therefore each context is defined over at most a polynomial number of paths, instead of an exponential number of paths for the general case in Theorem~\ref{theo:asgexpspace}.

The \EXPTIME result in Theorem~\ref{theo:asgpspace:exptime} for workloads in \acycrestemplates follows from a deterministic algorithm based on Lemma~\ref{lemma:asgcharacterization}. In the remainder of this section, we first present the algorithm, and then discuss its complexity.

\paragraph*{A deterministic algorithm}

Towards a deterministic algorithm, assume the first quintuple $(\tau_1, o_1, c_{o_1}, p_1, c_{p_1})$ of $E$ is fixed. We now translate the problem of deciding whether we can extend $E$ such that it satisfies all properties to a graph problem over a graph $G(\tau_1, o_1, c_{o_1}, p_1, c_{p_1})$. This graph is constructed as follows:
\begin{itemize}
    \item each quintuple $(\tau_i, o_i, c_{o_i}, p_i, c_{p_i})$ satisfying Conditions~(\ref{asgc:2}-\ref{asgc:8}) of Lemma~\ref{lemma:asgcharacterization} is added as a node to $G(\tau_1, o_1, c_{o_1}, p_1, c_{p_1})$; and
    \item there is an edge from a node $(\tau_i, o_i, c_{o_i}, p_i, c_{p_i})$ to a node $(\tau_j, o_j, c_{o_j}, p_j, c_{p_j})$ if $o_i$ is potentially conflicting with $p_j$ and $c_{o_i} = c_{p_j}$ (c.f.\ Condition~(\ref{asgc:9}) of Lemma~\ref{lemma:asgcharacterization}).
\end{itemize}
By construction, it is now easy to see that there is a sequence $E$ satisfying Lemma~\ref{lemma:asgcharacterization} if there is a path from a quintuple $(\tau_2, o_2, c_{o_2}, p_2, c_{p_2})$ to a quintuple $(\tau_m, o_m, c_{o_m}, p_m, c_{p_m})$ in $G(\tau_1, o_1, c_{o_1}, p_1, c_{p_1})$ (where we allow a zero-length path with $2 = m$), such that $(\dagger)$
\begin{itemize}
    \item $c_{o_1} = c_{p_2}$ and $c_{o_m} = c_{p_1}$ (c.f.\ Condition~(\ref{asgc:9}) of Lemma~\ref{lemma:asgcharacterization});
    \item $o_1$ is potentially rw-conflicting with $p_2$ (c.f.\ Condition~(\ref{asgc:10}) of Lemma~\ref{lemma:asgcharacterization}); and
    \item $o_1 <_{\tau_1} p_1$ or $o_m$ is potentially rw-conflicting with $p_1$ (c.f.\ Condition~(\ref{asgc:11}) of Lemma~\ref{lemma:asgcharacterization}).
\end{itemize}

Given a set of \ptranss{} $\workload$ over a schema $(\schRel, \schFunc)$,
the algorithm iterates over all possible quintuples $(\tau_1, o_1, c_{o_1}, p_1, c_{p_1})$ satisfying Condition~(\ref{asgc:1},~\ref{asgc:3}-\ref{asgc:7}) of Lemma~\ref{lemma:asgcharacterization},
where we consider all possible tuple-contexts $c_{o_1}$ and $c_{p_1}$ up to isomorphisms.
For each such quintuple, the graph $G(\tau_1, o_1, c_{o_1}, p_1, c_{p_1})$ is constructed. Let $TC$ be the reflexive-transitive closure of $G$.
If there is a pair of quintuples $(\tau_2, o_2, c_{o_2}, p_2, c_{p_2})$ and $(\tau_m, o_m, c_{o_m}, p_m, c_{p_m})$ in $TC$ satisfying $(\dagger)$, the algorithm emits a reject, indicating that $\workload$ is not robust against $\mvrc$.
Otherwise, it proceeds with a new choice for $(\tau_1, o_1, c_{o_1}, p_1, c_{p_1})$. If, the algorithm didn't reject after considering all such quintuples, it accepts, indicating that $\workload$ is indeed robust against \mvrc.
The correctness of this algorithm is immediate by Lemma~\ref{lemma:asgcharacterization}.

\paragraph*{Complexity analysis}

We show the complexity of the presented algorithm.
For this, first, notice that we have defined contexts $c$ based on a type $S$ (with $S$ not necessarily a root of \sgname{}). For encoding purposes it makes sense to encode these as contexts for a root type $R$ in combination with the intended type $S$. The context as defined in the previous section can then be derived by taking the left-most subtree with root $S$. Notice that this is purely an encoding choice that will simplify the analysis.

\newcommand{\pathr}[1][R,S]{\text{Paths}_\text{SG}(#1)}
\newcommand{\pathrcount}{{|\pathr|}}
\newcommand{\roots}{\textsf{roots}}

For a schema graph $\sg{\schRel, \schFunc}$ the total number of non-isomorphic tuple-contexts can be expressed using Bell's number $B(n)$, denoting the number of partitions for a set of size $n$, and the set $\pathr = \{ (R,F,S) \mid R \asgimplies{F}{\sgname} S \}$ expressing the different paths from one node $R$ to another node $S$ in $\sgname$. 
Concretely, 
\[
    \ssize{\tcontextdomain[SG]} \le \sum_{R\in \roots(\sgname)}\prod_{S\in \schRel} B(\pathrcount) = B^*
\]
where $\tcontextdomain[SG]$ denotes the set containing all different tuple-contexts (up to isomorphisms).

\newcommand{\pathcount}{{|\text{Paths}_\text{SG}(R,S)|}}
\newcommand{\indicator}[1][]{\boldsymbol{1}_{#1}}

Now let $c_1$ and $c_2$ be two fixed contexts for types that are descendants of roots $R_1$ and root $R_2$, respectively, in \sgname, and let $c$ be a context for a type descending from root $R$.
To express a bound on the number of substitutions in $c$ from (parts of) $c_1$ and $c_2$, we need some additional terminology: Let $\pathr[R,*] = \bigcup_{S\in \schRel}\pathr$. We say that a path $R \asgimplies{F_1}{\sgname} S$ is a \emph{prefix} of a path $R \asgimplies{F}{\sgname} S'$ in \sgname{} if there is a (possibly empty) sequence of function names $F_2$ with $F = F_1 \cdot F_2$.
The number of substitutions in $c$ from (parts of) $c_1$ and $c_2$ is now bounded by 
\begin{equation*}
    \begin{split}
         \sum_{\text{Part} \subseteq \pathr[R,*]}\indicator[PFP]\prod_{R \asgimplies{F}{\sgname} S \in \text{Part}} \left(|\pathr[R_1, S]| + |\pathr[R_2,S]|\right)
         &\le T^\ell.(2P)^\ell \\
         &\le (2TP)^\ell
    \end{split}
\end{equation*}

In the above expression, $\indicator[PFP]$ is an indicator variable that equals $1$ if no path in Part is a prefix of another path in Part and that equals $0$ otherwise. Further:
$P$ denotes the maximum number of different paths between a particular root and a particular node in \sgname, $T$ denotes the maximum number of different paths from a particular root to nodes in SG, and $\ell$ denotes the maximal size of a set in which no path is a prefix of another path in the set. The latter is trivially bounded by $T$.

A special cases exists if all templates $\tau$ in $\workload$ are \restricted.
In that case, the size of sets $\text{Part}$ is bounded by $2$, hence $\ell \le 2$.

With the above bounds, the complexity of the presented algorithm is rather straightforward.
The iteration over all possible quintuples $(\tau_1, o_1, c_{o_1}, p_1, c_{p_1})$ requires at most $\ssize{\workload} . t^2.\left(B^*\right)^2$ iterations, with $t$ denoting the maximal number of operations in a \ptrans{} of $\workload$.
The remainder of the computation is dominated by the transitive-closure computation. Since the constructed graph $G(\tau_1, o_1, c_{o_1}, p_1, c_{p_1})$ has at most
\[\ssize{\workload}.t^2.\left(B^*.(2TP)^\ell\right)^2\]
nodes, the transitive closure computation requires
\[\left(\ssize{\workload}.t^2.\left(B^*.(2TP)^\ell\right)^2\right)^3\]
steps. Putting these numbers together, we obtain:
\[\mathcal{O}(\ssize{\workload}^4.t^8.\left(B^*\right)^8.(2TP)^{6\ell}).
\]
Since $\ell$ is bounded by a constant if all \shortptrans{} are \restricted, and since $B^*$, $T$ and $P$ can be exponential in the size of the input, the presented algorithm indeed decides \RobustnessTempTwo{\acycrestemplates}{\mvrc} in \EXPTIME.

\section{Related Work}
\label{sec:relwork}

\paragraph*{Transaction Programs} Previous work on static robustness testing~\cite{DBLP:journals/tods/FeketeLOOS05,DBLP:conf/aiccsa/AlomariF15} for transaction programs is based on the following key insight: when a \emph{schedule} is not serializable, then the dependency graph constructed from that schedule contains a cycle satisfying a condition specific to the isolation level at hand (\emph{dangerous structure} for \snapshot and the presence of a \emph{counterflow edge} for \MVRC). That insight is extended to a workload of \emph{transaction programs} through the construction of a so-called static dependency graph where each program is represented by a node, and there is a conflict edge from one program to another if there can be a schedule that gives rise to that conflict.  The absence of a cycle satisfying the condition specific to that isolation level then guarantees robustness while the presence of a cycle does not necessarily imply non-robustness. 

Other work studies robustness within a framework for uniformly specifying different isolation levels in a declarative way~\cite{DBLP:conf/concur/Cerone0G15,DBLP:conf/concur/0002G16,Cerone:2018:ASI:3184466.3152396}. A key assumption here is \emph{atomic visibility} requiring that either all or none of the updates of each transaction are visible to other transactions.
{These approaches aim at higher isolation levels and cannot be used for \MVRC, as \MVRC does not admit \emph{atomic visibility}.}

\paragraph*{Transaction Templates}
The static robustness approach based on transaction templates~\cite{fullversion} differs in two ways. First, it makes more underlying assumptions explicit within the formalism of transaction templates (whereas previous work departs from the static dependency graph that should be constructed in some way by the dba). Second, it allows for a decision procedure that is sound and complete for robustness testing
 against \MVRC, allowing to detect larger subsets of transactions to be robust~\cite{fullversion}. 

 The formalization of transactions and conflict serializability in \cite{fullversion} and this paper is based on \cite{DBLP:conf/pods/Fekete05}, generalized to operations over attributes of tuples and extended with $\myUP$-operations that combine $\myR$- and $\myW$-operations into one atomic operation. These definitions are closely related to the formalization presented by Adya et al.~\cite{DBLP:conf/icde/AdyaLO00}, but we assume a total rather than a partial order over the operations in a schedule. 
 There are also a few restrictions to the model: there needs to be a fixed set of read-only attributes that cannot be updated and which are used to select tuples for update. The most typical example of this are primary key values passed to transaction templates as parameters. 
 The inability to update primary keys is not an important restriction in many workloads, where keys, once assigned, never get changed, for regulatory or data integrity reasons. 

 In \cite{fullversion}, a \PTIME decision procedure is obtained for robustness against RC for templates without functional constraints and the present paper improves that result to \NLOGSPACE. 
 In addition,
 an experimental study was performed showing how an approach based on robustness and making transactions robust through promotion can improve transaction throughput.

 \paragraph*{Transactions}
\new{The work by} Fekete~\cite{DBLP:conf/pods/Fekete05} is the first work that provides a necessary and sufficient condition for deciding robustness against \snapshot for a workload of concrete transactions (not transaction programs). 
That work provides a characterization for acceptable allocations when every transaction runs under either 
\snapshot
or strict two-phase locking (S2PL).
The allocation then is acceptable when every possible execution respecting the allocated isolation levels is serializable. As a side result, this work indirectly provides a necessary and sufficient condition for robustness against \snapshot, since robustness against \snapshot holds iff the allocation where each transaction is allocated to \snapshot is acceptable. 
Ketsman et al.~\cite{DBLP:conf/pods/Ketsman0NV20} provide full characterizations for robustness against \readcom and \readun under lock-based semantics. In addition, it is shown that the corresponding decision problems are complete for \coNP and \LOGSPACE, respectively, which should be contrasted with the polynomial time characterization obtained in \cite{fullversion} for robustness against {\it multiversion} read committed.

\section{Conclusion}
\label{sec:concl}

This paper falls within a more general research line 
investigating how transaction throughput can be improved through an approach based on robustness testing that can be readily applied  without making any changes to the underlying database system. As argued in Section~\ref{sec:example}, incorporating functional constraints can detect larger sets of templates to be robust and requires less $\myR$-operations to be promoted to $\myUP$-operations. In future work, we plan to look at lower bounds,  restrictions that lower complexity, and consider other referential integrity constraints to further enlarge the modelling power of transaction templates.
\new{For example, the current formalism allows to express dependencies between tuples, but it is not suited to express the fact that a transaction accesses \emph{all} tuples depending on a specific tuple, such as all \emph{Order} tuples depending on a specific \emph{Customer} tuple.}

\section*{Acknowledgment}
\noindent This work is partly funded by FWO-grant G019921N.



\bibliographystyle{alphaurl}
\bibliography{references}

\appendix

\clearpage

\section{SmallBank Benchmark SQL Code}
\label{sec:app:smallbank}
{
\footnotesize
\begin{minipage}[t]{\textwidth/2-2ex}
\begin{verbatim}
Balance(N):
    SELECT CustomerId INTO :x
      FROM Account
     WHERE Name=:N;
        
    SELECT Balance INTO :a 
      FROM Savings
     WHERE CustomerId=:x;
     
    SELECT Balance + :a 
      FROM Checking
     WHERE CustomerId=:x;
    COMMIT; 

Amalgamate(N1,N2):
    SELECT CustomerId INTO :x1
      FROM Account
     WHERE Name=:N1;
    
    SELECT CustomerId INTO :x2
      FROM Account
     WHERE Name=:N2;
    
    UPDATE Savings AS new
       SET Balance = 0
      FROM Savings AS old
     WHERE new.CustomerId=:x1
           AND old.CustomerId
           = new.CustomerId
    RETURNING old.Balance INTO :a;
    
    UPDATE Checking AS new
       SET Balance = 0
      FROM Checking AS old
     WHERE new.CustomerId=:x1
           AND old.CustomerId
           = new.CustomerId
    RETURNING old.Balance INTO :b;
    
    UPDATE Checking
       SET Balance = Balance + :a + :b
     WHERE CustomerId=:x2;

DepositChecking(N,V):
    SELECT CustomerId INTO :x
      FROM Account
     WHERE Name=:N;
     
    UPDATE Checking
       SET Balance = Balance + :V 
     WHERE CustomerId=:x;
    COMMIT;
\end{verbatim}
\end{minipage}   
\begin{minipage}[t]{\textwidth/2}
\begin{verbatim}
TransactSavings(N,V):
    SELECT CustomerId INTO :x
      FROM Account
     WHERE Name=:N;
        
    UPDATE Savings
       SET Balance = Balance + :V 
     WHERE CustomerId=:x;
    COMMIT;

WriteCheck(N,V):
    SELECT CustomerId INTO :x
      FROM Account
     WHERE Name=:N;
        
    SELECT Balance INTO :a 
      FROM Savings
     WHERE CustomerId=:x;
     
    SELECT Balance INTO :b 
      FROM Checking
     WHERE CustomerId=:x;
     
    IF (:a + :b) < :V THEN 
        UPDATE Checking
           SET Balance = Balance - (:V+1) 
         WHERE CustomerId=:x;
    ELSE
        UPDATE Checking
           SET Balance = Balance - :V
         WHERE CustomerId=:x;
    END IF;
    COMMIT;

GoPremium(N):
    UPDATE Account
       SET IsPremium = TRUE
     WHERE Name=:N
    RETURNING CustomerId INTO :x;
    
    SELECT InterestRate INTO :a 
      FROM Savings
     WHERE CustomerId=:x;
     
    :rate = computePremiumRate(:x,:a);
     
    UPDATE Savings
       SET InterestRate = :rate
     WHERE CustomerId=:x;
    COMMIT;
\end{verbatim}
\end{minipage}
}

\end{document}